\documentclass[a4paper,USenglish,thm-restate,numberwithinsect,cleveref]{lipics-v2021}

\nolinenumbers
\pdfoutput=1
\hideLIPIcs

\title{Where Treewidth and Pathwidth Diverge: Towards a Uniform Kernel for \texorpdfstring{Pathwidth-$\eta$ Deletion}{Pathwidth-eta Deletion}}

\titlerunning{Towards a Uniform Kernel for \texorpdfstring{Pathwidth-$\eta$ Deletion}{Pathwidth-eta Deletion}}

\author{Ahmed Ghazy}{CISPA Helmholtz Center for Information Security, Saarbrücken, Germany \and Saarland University, Saarbrücken, Germany}{ahmed.ghazy@cispa.de}{https://orcid.org/0009-0009-7414-5871}{}
\author{Jakob Greilhuber}{CISPA Helmholtz Center for Information Security, Saarbrücken, Germany \and Saarland University, Saarbrücken, Germany}{jakob.greilhuber@cispa.de}{https://orcid.org/0009-0001-8796-6400}{}
\author{Tim A.\ Hartmann}{CISPA Helmholtz Center for Information Security, Saarbrücken, Germany}{hartmann@algo.rwth-aachen.de}{https://orcid.org/0000-0002-1028-6351}{}
\author{Roohani Sharma}{Discrete Mathematics Group, Institute for Basic Science, Daejeon, South Korea}{roohani@ibs.re.kr}{https://orcid.org/0000-0003-2212-1359}{Supported by the Young Scientist Fellowship of the Institute for Basic Science (IBS-R029-Y8).}

\authorrunning{A.~Ghazy, J.~Greilhuber, T.~A.~Hartmann, and R.~Sharma}

\Copyright{Ahmed Ghazy, Jakob Greilhuber, Tim A.\ Hartmann, and Roohani Sharma}

\ccsdesc[500]{Theory of computation~Parameterized complexity and exact algorithms}

\keywords{Uniform kernelization, pathwidth deletion, treedepth, elimination distance}

\category{}

\relatedversion{Full version of ESA 2026 contribution.}

\acknowledgements{A significant part of this work was done while A.\ Ghazy, J.\ Greilhuber, and T.~A.\ Hartmann were visiting R.\ Sharma at the Institute for Basic Science, Daejeon.}

\usepackage{mathtools}
\usepackage[noadjust]{cite}
\usepackage[most]{tcolorbox}
\usepackage{tabularx}
\usepackage{bm}
\usepackage{xparse}
\usepackage{xspace}

\usepackage[ruled,linesnumbered]{algorithm2e}
\DontPrintSemicolon
\SetKwInput{Input}{Input}
\SetKw{Continue}{continue}

\makeatletter
\let\original@algocf@latexcaption\algocf@latexcaption
\long\def\algocf@latexcaption#1[#2]{\@ifundefined{NR@gettitle}{\def\@currentlabelname{#2}}{\NR@gettitle{#2}}\original@algocf@latexcaption{#1}[{#2}]}
\makeatother

\crefname{claim}{Claim}{Claims}
\crefname{observation}{Observation}{Observations}
\crefname{mtheorem}{Main Theorem}{Main Theorems}
\Crefname{case}{Case}{Cases}

\newcommand{\refline}[1]{Line~\ref{#1}}
\newcommand{\reflines}[2]{Lines~\ref{#1} to~\ref{#2}}

\newcommand{\np}{\mathsf{NP}}
\newcommand{\conppoly}{\mathsf{coNP}/\mathsf{poly}}

\newcommand{\Oh}{{O}}

\newcommand{\range}[2][1]{\IfStrEq{#1}{1}{[#2]}{[#1,#2]}}
\newcommand{\nat}{\mathbb{N}}

\newcommand{\problemname}[1]{\textnormal{\textsc{#1}}\xspace}
\newcommand{\upsf}[1]{\textnormal{\textsf{#1}}}
\newcommand{\algofont}[1]{\textnormal{\textsf{#1}}}
\newcommand{\parameterfont}[1]{\textnormal{\texttt{#1}}}

\newcommand{\CC}{\mathcal{C}}
\newcommand{\FF}{\mathcal{F}}
\newcommand{\GG}{\mathcal{G}}
\newcommand{\HH}{\mathcal{H}}
\newcommand{\PP}{\mathcal{P}}
\newcommand{\XX}{\mathcal{X}}
\newcommand{\II}{\mathcal{I}}

\DeclareMathOperator{\pw}{pw}
\DeclareMathOperator{\vc}{\textnormal{\texttt{vc}}}
\DeclareMathOperator{\tw}{tw}
\DeclareMathOperator{\td}{td}
\DeclareMathOperator{\anc}{anc}
\DeclareMathOperator{\pl}{pl}
\DeclareMathOperator{\desc}{desc}

\DeclareMathOperator{\depth}{depth}

\DeclarePairedDelimiter\floor{\lfloor}{\rfloor}

\newcommand{\leftBagAny}{\ell_\exists}
\newcommand{\rightBagAny}{r_\exists}
\newcommand{\leftBagAll}{\ell_\forall}
\newcommand{\rightBagAll}{r_\forall}
\newcommand{\elimDist}[2][\HH]{\upsf{ed}_{#1}(#2)}
\newcommand{\pwElimDist}[1]{\elimDist[{\pwEtaBoundClass[1]{}}]{#1}}

\newcommand{\pendants}{\ensuremath{\mathsf{pendants}}}

\newcommand{\defparproblem}[4]{\begin{tcolorbox}[
            enhanced,
            colback=white,
            colframe=black,
            boxrule = 0.5pt,
            coltitle=black,
            title=#1,
            rounded corners,
            attach boxed title to top left={yshift=-10pt, xshift=6pt},
            bottom=-0.5mm,
            boxed title style={
                    interior style={fill=white},
                    frame hidden
                }
        ]\begin{tabularx}{12.5cm}{r X p {0.5cm}}
            Input:     & #2 \\
            Parameter: & #3 \\
            Question:  & #4
        \end{tabularx}
    \end{tcolorbox}
}

\newcommand{\pwEtaDeletion}{\problemname{Pathwidth-$\eta$ Deletion}}
\newcommand{\twEtaDeletion}{\problemname{Treewidth-$\eta$ Deletion}}
\newcommand{\tdEtaDeletion}{\problemname{Treedepth-$\eta$ Deletion}}

\newcommand{\fDeletion}{\problemname{$\mathcal{F}$-Minor-free Deletion}}

\newcommand{\pwEtaDelElimGeneral}{\hyperref[problem:pw_eta_delim_dist_to_g]{\problemname{Pathwidth-$\eta$ Deletion}/\parameterfont{dist-$\GG_{\pw \leq \eta}^\beta$}}}
\newcommand{\pwEtaDeletionEta}{\problemname{Pathwidth-$\eta$ Deletion/\parameterfont{dist-$\GG_{\pw \leq \eta}$}}}

\newcommand{\pwEtaDeletionDistToG}[1][\GG]{\hyperref[problem:pw_eta_delim_dist_to_g]{\problemname{Pathwidth-$\eta$ Deletion}/\parameterfont{dist-\ensuremath{#1}}}\xspace}
\newcommand{\vertexcover}{\problemname{Vertex Cover}}
\newcommand{\feedbackvertexset}{\problemname{Feedback Vertex Set}}
\newcommand{\outerplanardeletion}{\problemname{Outerplanar Deletion}}
\newcommand{\planarVertexDeletion}{\problemname{Planar Vertex Deletion}}

\newcommand{\degreeBound}[1][|M|,k]{\hyperref[def:important_bounds_degree_reduction]{\upsf{B}_{\upsf{deg}}(#1)}} \newcommand{\degreeBoundCC}[1][|M|]{\hyperref[def:important_bounds_degree_reduction]{\upsf{B}_{\upsf{deg in comp}}(#1)}} \newcommand{\degreeBoundTreedepthCase}{\hyperref[def:important_bounds]{\upsf{B}_{\FF}}} \newcommand{\degreeBoundCaterpillarCase}[1][|M|]{\hyperref[def:important_bounds_degree_reduction]{{\upsf{B}}_{\upsf{cat}}(#1)}} \newcommand{\ccBound}[1][|M|,k]{\hyperref[def:important_bounds_degree_reduction]{\upsf{B}_\upsf{\#comp}(#1)}} \newcommand{\polishingSetSize}[1][|M|,k]{\hyperref[def:important_bounds_degree_reduction]{\upsf{B}_{\upsf{polish}}(#1)}} \newcommand{\BoundCaterpillar}[1][|N_G(C)|]{\ensuremath{f_{\ref{thm:caterpillar_case}}(#1)}} \newcommand{\BoundCaterpillarOverall}[1][|M|]{\ensuremath{(8(\eta+1)(#1+2(\eta+1)+\beta+5))^{8}}} 

\newcommand{\numNeighborsInDepth}{{\hyperref[def:important_bounds]{\upsf{B}_3}}}\newcommand{\markedNumb}{{\hyperref[def:important_bounds]{\upsf{B}_\upsf{\#marked}}}}
\newcommand{\markedNumbChildren}{{\hyperref[def:important_bounds]{\upsf{B}_2}}}
\newcommand{\numbChildren}{{\hyperref[def:important_bounds]{\upsf{B}_1}}}
\newcommand{\boundOnS}{{\hyperref[def:important_bounds]{\upsf{B}_{0}}}}

\newcommand{\BoundProtrusionCaterpillarNeighborhood}{\hyperref[def:important_bounds_protrusion_replacement]{\upsf{B}_{9}}}
\newcommand{\BoundChildrenInForest}{\hyperref[def:important_bounds_protrusion_replacement]{\upsf{B}_{10}}}
\newcommand{\BoundProtrusionModulator}[1][|M|,k]{\hyperref[bound:protrusion_modulator]{\upsf{B}_{\upsf{PM}}(#1)}}
\newcommand{\BoundProtrusionSize}{{\ensuremath{\upsf{B}_{\ref{thm:reduce_protrusions}}}}}

\newcommand{\polishingSetAlgo}{\hyperref[thm:polish_instance]{\algofont{PolishingSet}}}
\newcommand{\protrusionAlgo}{\hyperref[thm:create_protrusions]{\algofont{CreateProtrusions}}}

\newtheoremstyle{casestyle}
	{\topsep}
	{2pt}
	{\itshape}
	{}
	{\color{lipicsGray}\sffamily\bfseries}
	{. }
	{0pt}{\thmname{#1}\thmnumber{ #2}\thmnote{ (#3)}}
\theoremstyle{casestyle}
\newtheorem{case}{Case}

\newcommand{\pwEtaBoundClass}[1][1]{\mathcal{G}_{\pw \leq #1}} 
\begin{document}

\maketitle

\begin{abstract}
  For a constant $\eta \geq 0$, \pwEtaDeletion{} is the problem of deciding whether, for a given graph $G$ and integer $k$, there is a set $S \subseteq V(G)$ of size at most $k$ such that the pathwidth of $G - S$ is at most $\eta$.
The problems \twEtaDeletion{} and \tdEtaDeletion{} are defined similarly for the parameters treewidth and treedepth, respectively.
A landmark result of Fomin et al.\ [{FOCS}, 2012] shows that, for any constant $\eta$, all three problems admit a kernel on $O(k^{c(\eta)})$ vertices,
where $c(\eta)$ is a constant depending on $\eta$.

Giannopoulou et al.\ [{ACM TALG}, 2017] 
show that, in some sense, this result is optimal for \twEtaDeletion{}: for $\eta \geq 2$ and even when parameterizing by the size of a vertex cover $M$ of the input graph, there is no kernel of size $O(|M|^{\frac{\eta+1}{2}-\varepsilon})$, for any $\varepsilon > 0$.
Contrasting this result, they prove that \tdEtaDeletion{} admits a uniform polynomial kernel, that is, a kernel of size $O(k^c)$ for a constant $c$ that is independent of $\eta$.

In comparison, the question whether \pwEtaDeletion{} admits a uniform polynomial kernel
has been neglected in the literature. As treewidth and pathwidth tend to behave similarly, it is natural to expect that no uniform kernel exists when parameterizing by the size of a vertex cover.
Surprisingly, we show this not to be the case.
More concretely, we prove the existence of a uniform polynomial kernel for \pwEtaDeletion{} when parameterizing by
\begin{bracketenumerate}
    \item the solution size $k$ plus the size of a set $M$ such that $G - M$ has bounded treedepth,
    \item the (vertex-deletion) distance to pathwidth-$1$ graphs,
    \item the distance to the class of graphs with treedepth at most $\eta + 1$.
\end{bracketenumerate}

This pinpoints a striking difference between \pwEtaDeletion{} and \twEtaDeletion{} and leads us to conjecture that \pwEtaDeletion{} admits a uniform kernel when parameterizing by the solution size $k$. \end{abstract}

\newpage

\section{Introduction}
\label{sec:intro}
The problem \fDeletion{} is defined for each finite family $\mathcal{F}$ of undirected graphs.
The input is an undirected graph $G$ and a non-negative integer $k$;
the question is whether there exists a set of at most $k$ vertices of $G$ whose deletion results in an $\FF$-minor free graph,
that is, a graph that has no graph from $\mathcal{F}$ as a minor.
A main objective in the field of parameterized complexity is to design
\emph{fixed-parameter tractable} (FPT) algorithms, which are algorithms with
running times of the form $f(p) \cdot n^{\Oh(1)}$ where $p$ is a parameter
given in the input and $n$ the input size.
One of the central goals in the field is finding the smallest parameter
for which an FPT algorithm exists for the considered problem.
The canonical starting point is to consider parameterizing the problem by the
solution size $k$.
Given the expressive power of the \fDeletion{} problem, determining the parameterized complexity
for every choice of $\FF$ is nothing short of a daunting task.

Indeed, the \fDeletion{} framework encompasses any vertex-deletion problem to a minor-closed graph class.\footnote{This is due to the seminal result of Robertson and Seymour~\cite{DBLP:journals/jct/RobertsonS04} who showed that minor-closed graph classes are well-quasi-ordered and hence admit a finite minor characterization~\cite[Corollary~2.1.1]{BIENSTOCK1995481}.
}
This includes classical $\np{}$-hard problems like \vertexcover{} (or equivalently vertex-deletion to treewidth $0$),
\feedbackvertexset{} (or equivalently vertex-deletion to treewidth at most $1$), \twEtaDeletion{}, \pwEtaDeletion{}, \tdEtaDeletion{} for any $\eta \geq 0$, \outerplanardeletion{} (vertex-deletion to outer-planarity), \planarVertexDeletion{} (vertex-deletion to planarity) and many more.

Parameterized by the solution size $k$,
a non-uniform\footnote{in the sense that the algorithm might be different for every $k$}
FPT algorithm for \fDeletion{} can be easily obtained via any minor-testing algorithm~\cite{DBLP:journals/jct/RobertsonS95b,DBLP:conf/focs/KorhonenPS24} because of the well-quasi-ordering of the graph class of positive instances.
A uniform FPT algorithm was later given by Fomin, Lokshtanov, Misra and Saurabh~\cite{fominPlanarFdeletionApproximation2012} for the case when $\mathcal{F}$ contains a planar graph.
This was later generalized by Sau, Stamoulis and Thilikos~\cite{DBLP:conf/icalp/SauST20} to every family~$\mathcal{F}$.

\paragraph*{Uniform Kernelization}
Another main objective in the field of parameterized complexity is to design
preprocessing algorithms (\emph{kernels}) that compress a given instance into an equivalent instance
of size bounded by some function (ideally polynomial) of the parameter $p$.
Again, it is desirable to find the smallest choice of parameter for which the
problem admits a polynomial kernel.
For the solution size parameterization, in contrast to the FPT landscape,
it is still a major open problem whether \fDeletion{} admits a polynomial kernel for
every $\FF$~\cite{fominPlanarFdeletionApproximation2012,giannopoulouUniformKernelizationComplexity2017,jansenLossyPlanarizationConstantfactor2025,bougeretKernelizationDichotomiesHitting2025a}.

On the upside, assuming that $\FF$ contains a planar graph,
a seminal work by Fomin, Lokshtanov, Misra and
Saurabh~\cite{fominPlanarFdeletionApproximation2012} gives a polynomial kernel for \fDeletion.
More concretely, they show that every \fDeletion{} problem admits a kernel of size
$\Oh(k^{f(\FF)})$ for some function $f$. 
Note that the kernel size is not explicit in the sense that not all involved constants are known.
Interestingly, the kernel size has a dependency on $\FF$ in the exponent.
This raises the question of which
\fDeletion{} problems admit a \emph{uniform} polynomial kernel;
that is, a kernel of size $g(\FF) \cdot k^c \in O(k^c)$ for some constant $c$ independent of $\FF$.

In this setting, due to the celebrated Grid Minor
Theorem~\cite{DBLP:journals/jct/RobertsonS86,DBLP:journals/jacm/ChekuriC16},
the graph obtained after removing a solution has treewidth bounded by a function
that depends on the size of the excluded planar graph,
making the family of \twEtaDeletion{} problems a prototypical subclass of \fDeletion{} when $\mathcal{F}$ contains a planar graph.

For any $\eta \geq 0$,
in the \twEtaDeletion{} problem, given an undirected graph $G$ and a non-negative integer $k$, the goal is to determine if there exists a set of at most $k$ vertices whose deletion results in a graph of treewidth at most $\eta$.
\tdEtaDeletion{} and \pwEtaDeletion{} are the same as \twEtaDeletion{} where treewidth is replaced with treedepth and pathwidth, respectively.
The aforementioned work of Fomin et al.~\cite{fominPlanarFdeletionApproximation2012} provides a (non-uniform) polynomial kernel, of size $g(\eta) \cdot k^{O(\eta)}$ for some function $g$, for \twEtaDeletion{}.

Does \textsc{Treewidth-$\eta$ Deletion} admit a uniform polynomial kernel?
For small $\eta$,
that misleadingly appears to be the case.
For $\eta=0$ (which is \vertexcover{}) as well as for $\eta=1$ (which is \feedbackvertexset{}),
there is a kernel with $O(k^2)$ edges, which is also essentially tight, unless $\np{} \subseteq \conppoly{}$~\cite{bussNondeterminism1993,DBLP:conf/wg/ChorFJ04,DBLP:journals/talg/Thomasse10,DellM2014}.
However, this does not extend to larger values of $\eta$.
Indeed, Giannopoulou et al.~\cite{giannopoulouUniformKernelizationComplexity2017} provide the following lower bound that rules out a uniform kernel even when parameterizing by the size of a minimum vertex cover $\vc$ of the input graph.

\begin{theorem}[{Giannopoulou et al.~\cite[Theorem 1.1]{giannopoulouUniformKernelizationComplexity2017}}]
  \label{theorem:treewidth:lb}
  For any function $g$ and any $\eta \geq 2$, $\varepsilon > 0$, \textnormal{\textsc{Treewidth-$\eta$ Deletion}} does not admit a kernel of size $g(\eta) \cdot \vc^{\frac{\eta+1}{2}-\varepsilon}$, unless $\np{} \subseteq \conppoly{}$.
\end{theorem}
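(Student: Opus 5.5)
The plan is the standard route for kernel lower bounds of the form $\vc^{d-\varepsilon}$: a degree-$d$ cross-composition, or equivalently a linear-parameter transformation from a problem with a known lower bound. The natural source is \problemname{$d$-Hitting Set} or \problemname{$d$-CNF-SAT} with $n$ variables. By Dell and van Melkebeek~\cite{DellM2014}, neither admits a compression of size $O(n^{d-\varepsilon})$ unless $\np \subseteq \conppoly$. We take $d=\lfloor(\eta+1)/2\rfloor$, or more carefully work with $\eta+1$ to reach the exponent $\frac{\eta+1}{2}$. We then build, in polynomial time, an equivalent \twEtaDeletion{} instance $(G,k)$ whose minimum vertex cover has size $O_\eta(n)$. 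A kernel of size $g(\eta)\cdot \vc^{\frac{\eta+1}{2}-\varepsilon}$ composed with this reduction would give a forbidden compression.

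The construction has three parts.

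\textbf{Variable gadgets.} For each variable $x$ we take a constant-size set of vertices inside the vertex cover $M$. Deleting one of two designated vertices $x^{\top}$ or $x^{\bot}$ encodes the truth value. We attach $k+1$-fold twins or pendant cliques so that exactly one deletion per variable is forced, setting $k=n$.

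\textbf{Clause gadgets.} For each clause $C$ on $d$ variables we add independent-set vertices outside $M$. Each of them is adjacent to the literal vertices of $C$, and together with them forms a graph of treewidth exactly $\eta+1$, for instance a structure containing $K_{\eta+2}$ as a minor. The key design goal is that treewidth drops to $\eta$ precisely when at least one literal vertex of $C$ is deleted. Because the clause vertices lie outside $M$, their number, which is up to $n^{d}$, does not inflate $|M|$.

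\textbf{Parameter counting.} Using roughly two cover vertices per literal lets a single independent vertex see about $2d\approx \eta+1$ cover vertices. This is where the exponent $\frac{\eta+1}{2}$ comes from: each clause is witnessed by a set of at most $\eta+1$ cover vertices, so the clause count is bounded by $|M|^{(\eta+1)/2}$.

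Correctness splits into two directions. For soundness, given a satisfying assignment, delete the corresponding literal vertices. Every clause gadget is then broken, and we exhibit an explicit tree decomposition of width $\eta$ built around $M$ minus the deletions, with the independent vertices hanging off bags. This uses the fact that each remaining independent vertex has at most $\eta$ surviving neighbours forming a bag-coverable set. For completeness, take any solution of size at most $k$. We argue via the forcing gadgets that it may be assumed to pick one of $x^{\top},x^{\bot}$ per variable, using an exchange argument that replaces deleted independent vertices with cover vertices. Since an unhit clause gadget retains treewidth $\eta+1$, the solution must hit every clause.

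The main obstacle is the exchange and normalization step. Deleting independent clause vertices is cheap in isolation, but the $k+1$ copies per clause ensure this never pays off. The harder part is guaranteeing that the residual graph, with many overlapping clause gadgets sharing cover vertices, has treewidth at most $\eta$ \emph{globally}, not just per gadget. I would first try to make the cover part after deletion a disjoint union of small cliques of size at most $\eta$, which is true per clause but not across clauses. If that fails, I would route the clause vertices' adjacency through fresh per-clause cover vertices, paying only $O(1)$ per variable pair, in order to keep the global decomposition simple.
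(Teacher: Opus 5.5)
Your plan has a genuine gap, and it sits exactly at the step you flag as the main obstacle. The yes-direction fails for the construction itself, not merely for its proof.

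After deleting one literal vertex per variable, $M\setminus S$ still contains at least one vertex per variable, so it cannot be a bag. Any width-$\eta$ decomposition must therefore handle the connections that the clause gadgets create among these survivors, and those connections are strong. Suppose a satisfied clause still has surviving literal vertices $a_1,\dots,a_p$ with $2\le p\le\eta$. Contract one of its private independent copies (there are at least $k+1>p$ of them) into each $a_j$. Doing this for every clause shows that $G-S$ has as a minor the graph $P$ on the surviving literal vertices in which the survivors of each clause form a clique. Equivalently, by the argument of \cref{obs:disjoint_paths_yield_clique}, each such pair must share a bag. Hence $\tw(G-S)\ge\tw(P)$, where $P$ is the co-occurrence graph of false literals under the encoded assignment, and nothing in your design bounds its treewidth.

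It is easy to write satisfiable formulas in which $P$ contains an expander under every satisfying assignment. Force all $x_u$ equal and a helper variable $y$ true, and add $(x_u\vee x_v\vee y)$ and $(\neg x_u\vee\neg x_v\vee y)$ for every edge $uv$ of an expander. By your own normalization every solution encodes an assignment, so such a formula is mapped to a no-instance.

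Neither fallback repairs this:
\begin{itemize}
\item Small disjoint cover cliques are precisely what shared variables rule out.
\item Fresh per-clause cover vertices put as many vertices into $M$ as there are clauses, up to $n^d$, which destroys the parameter bound the transformation rests on.
\item If ``$O(1)$ per variable pair'' means $|M|=\Theta(n^2)$, the exponent becomes $d/2$, so you would need clause width $\eta+1$ rather than $(\eta+1)/2$. This still gives no handle on $\tw(P)$.
\end{itemize}
The exponent bookkeeping has a separate problem. From $d$-CNF-SAT with $|M|=O_\eta(n)$ you get at best $|M|^{d-\varepsilon}$ with $d=\lfloor(\eta+1)/2\rfloor$. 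For even $\eta$ this is $\eta/2<(\eta+1)/2$, and for $\eta=2$ it is trivial. A half-integral exponent needs a different accounting, for example a source problem with an $n^{\eta+1-\varepsilon}$ bound mapped to a cover of size $O(n^2)$, or a cross-composition of rational degree.

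For comparison, the paper does not prove this statement; it only quotes it from Giannopoulou et al. The one related construction it contains is the obstruction in \cref{thm:treewidth_obstructions_non_uniform}: a clique on $k+\eta+1$ cover vertices plus one independent vertex per $(\eta+1)$-subset. That construction illustrates the mechanism your design lacks. All deletions land inside a clique of the cover, so the survivors form a single central bag of size $\eta+1$. Every surviving independent vertex with at most $\eta$ surviving neighbours then hangs off that bag as a leaf, which makes the global decomposition trivial.
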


Observe that $\vc \geq k$ for non-trivial problem instances, and therefore this lower bound extends to $k$ as the parameter.
In contrast, the same authors give a fully explicit uniform polynomial kernel for \textsc{Treedepth-$\eta$ Deletion}~\cite{giannopoulouUniformKernelizationComplexity2017}.
They obtain a polynomial kernel with $2^{O(\eta^2)} \cdot k^6 \in O(k^6)$ vertices for each $\eta$.
This shows a striking difference between treewidth and treedepth.

Given this contrast between these two width measures,
it is natural to explore whether similar problem families admit uniform polynomial kernels.
As the parameter {\em pathwidth} lies in between treewidth and treedepth,
our work addresses the following question:

\begin{quote}
  \centering
  Does \pwEtaDeletion{} admit a uniform polynomial kernel?
\end{quote}

When $\eta = 0$ the problem \pwEtaDeletion{} is the same as \vertexcover{}, which is well-known to admit kernels with a linear number of vertices, see e.g.,~\cite{DBLP:conf/wg/ChorFJ04,chenVertexCoverFurther2001,lampisKernelOrder22011,soleimanfallahKernelOrder2kc2011,fellowsWhatKnownVertex2018,abu-khzamCrownStructuresVertex2007}.
The case $\eta = 1$ has also been studied: a kernel with a quartic number of vertices was given by Philip et al.~\cite{philipQuarticKernelPathwidthone2010b}, which was subsequently improved to a kernel with a quadratic number of vertices~\cite{cyganImprovedFPTAlgorithm2012,tsurSmallerKernelsTwo2024}.
For each value of $\eta$, \pwEtaDeletion{} admits a (non-uniform) polynomial kernel simply because it is an \fDeletion{} problem
where $\FF$ contains a planar graph, namely a forest~\cite{DBLP:journals/jct/RobertsonS83}; hence, the result of Fomin et al.~\cite{fominPlanarFdeletionApproximation2012} applies.
As is well-known, treewidth and pathwidth behave quite similarly from an algorithmic perspective, and
there are only a few known scenarios where they behave differently (see also the discussion by Belmonte et al.~\cite{belmonteGrundyDistinguishesTreewidth2022}).
As such, it is tempting to conjecture that, just like \twEtaDeletion{}, also \pwEtaDeletion{} does not admit a uniform polynomial kernel, even when parameterizing by the vertex cover size.
In this work, contrary to this expectation, we show that \pwEtaDeletion{} \emph{does} admit a uniform polynomial kernel when parameterizing by vertex cover number.
In fact, we provide results that indicate that the problem might even admit a uniform polynomial kernel when parameterizing by the solution size.

\paragraph*{Towards a Uniform Polynomial Kernel}
Let us illustrate how our work makes progress on whether \pwEtaDeletion{} parameterized by the solution size $k$ admits a uniform kernel.
That is, the following algorithmic approach might help in eventually resolving this question.

We start with an approximate solution $M \subseteq V(G)$ of size $O(k)$
which can be obtained by using known approximation algorithms for \pwEtaDeletion{} \cite{guptaLosingTreewidthSeparating2019,fominPlanarFdeletionApproximation2012}.
Then, $G-M$ has a pathwidth bounded in $\eta$, but there is no bound on its size.
The goal is to design rules that let us reduce the following:
\begin{romanenumerate}
  \item the number of connected components of $G-M$, and
  \item the size of an arbitrary connected component of $G-M$.
\end{romanenumerate}
By bounding (i) and (ii) by polynomials in $k$, one immediately obtains a polynomial kernel.
This is also the general approach for the kernelization of \twEtaDeletion{} by Fomin et al.~\cite{fominPlanarFdeletionApproximation2012}.
They obtain \emph{non-uniform} polynomial bounds.

To obtain a \emph{uniform} polynomial kernel we need to bound (i) and (ii) by \emph{uniform} polynomials, that is by
$g(\eta) \cdot k^c$ for some function $g$ and universal constant $c$ independent of $\eta$.
This is something Giannopoulou et al.~\cite{giannopoulouUniformKernelizationComplexity2017} achieve for their uniform kernel for the \textsc{Treedepth-$\eta$ Deletion} problem.

Regarding (i) and (ii) for \pwEtaDeletion{} we show the following.
\begin{romanenumerate}
  \item We resolve (i).
  That is, we provide a preprocessing algorithm that reduces the number of connected components of $G - M$ to a \emph{uniform polynomial}.
  As a corollary, we obtain that \pwEtaDeletion{}, for $\eta \geq 0$, admits a kernel of size $g(\eta) \cdot {\vc}^{3} \in O({\vc}^3)$.
  This already shows a striking contrast between treewidth and pathwidth since, by \cref{theorem:treewidth:lb}, no such result is possible for \textsc{Treewidth-$\eta$ Deletion}.

  \item
  We make substantial progress towards (ii).
  That is, we provide a preprocessing algorithm that, in many albeit not all cases, bounds the size of a connected component of $G-M$ by a uniform polynomial.
  To explain this, let $C$ be a connected component of $G - M$.
  We give reduction rules that can reduce the size of $C$ assuming that
  (a) $C$ contains a long path in which every vertex has degree $2$
or (b) the treedepth of $C$ is bounded by \emph{any function} of $\eta$.
  Therefore, a component that we cannot deal with has unbounded treedepth and does not contain a long path in which every vertex has degree $2$.
\end{romanenumerate}

These findings lead us to conjecture that \pwEtaDeletion{} admits a uniform polynomial kernel.
\begin{conjecture}
  \label{conj:uniform_kernel}
  There is a function $g$ and a constant $c$ such that for each $\eta \geq 0$, \pwEtaDeletion{} has a kernel of size $g(\eta) \cdot k^c$.
\end{conjecture}

Our results already show that the kernelization complexity of \pwEtaDeletion{} is closer to \tdEtaDeletion{} than it is to \twEtaDeletion{}.
We believe that the approaches we develop in this work are useful in settling \cref{conj:uniform_kernel}, positively or negatively.
\begin{itemize}
  \item
        A positive resolution might follow the aforementioned path and extend our uniform polynomial bound for (ii) to all connected components of $G-M$ without needing any further restrictions.
  \item
        A negative resolution might derive a hardness result based on large connected components that are not covered by our approach for bounding (ii).
        In that case, it would be interesting to find the smallest parameter for which uniform kernels are possible.
        Our results already provide a good start.
\end{itemize}
\paragraph*{Our Results}

Let us present our main results formally.
For a graph class $\GG$, we consider \pwEtaDeletion{} parameterized by the size of a vertex set $M$ (called a \emph{modulator}) such that $G-M$ is in the graph class $\GG$.
For technical reasons we assume that $M$ is part of the input, but we remark that typically this assumption is without loss of generality in the sense that for many graph classes, an approximately optimal modulator can be computed in polynomial-time.

\defparproblem{\pwEtaDeletionDistToG\phantomsection{}\label{problem:pw_eta_delim_dist_to_g}}
{Graph $G$, integer $k \geq 0$, a vertex set $M \subseteq V(G)$ such that $G-M$ is in $\GG$}{$|M|$}{Is there a set $S \subseteq V(G)$ of size at most $k$ such that $G - S$ has pathwidth at most $\eta$?}

Ideally, one would obtain a uniform polynomial kernel for \pwEtaDeletion parameterized by the solution size $k$,
i.e., for \pwEtaDeletionDistToG where $\GG=\pwEtaBoundClass[\eta]$ is the set of graphs of pathwidth at most $\eta$.
In this work, we obtain a uniform polynomial kernel if we additionally restrict the graph class $\GG$ to graphs that have bounded elimination distance to graphs of pathwidth $1$.
Graphs with pathwidth at most one are forests in which each connected component is a caterpillar~\cite[Lemma 2.4]{arnborgMonadicSecondOrder1990}.

Here, the \emph{elimination distance} to $\HH$~\cite{bulianGraphIsomorphismParameterized2016,bulianFixedparameterTractableDistances2017} is a generalization of the parameter treedepth.
Recall that the treedepth is the minimum depth of an elimination forest whose leaves correspond to single isolated vertices.
Now, roughly speaking, the \emph{elimination distance to a graph class $\HH$} follows the same concept, but the leaves of the elimination forest now correspond to graphs in $\HH$.
We refer to \cref{def:elimination_trees_forests_distance} for details.
For a graph class $\HH$ and graph $G$, we write $\elimDist[\HH]{G}$ to denote the elimination distance of $G$ to $\HH$.
Next, we define the graph class used in our main result.

\begin{restatable}{definition}{defPWTwGraphClasses}
  For integers $i ,j\geq 0$, let $\GG_{\tw \leq i}$ ($\GG_{\pw \leq i}$, $\GG_{\td \leq i}$) be the class of graphs with treewidth (pathwidth, treedepth) at most $i$.
  Moreover, let $\GG^j$ be the class of graphs $G$ with $\pwElimDist{G} \leq j$
  and let $\GG_{\pw \leq i}^{j} = \GG_{\pw \leq i} \cap \GG^{j}$.
\end{restatable}

Now we are ready to state our main technical result.
\begin{restatable}{mtheorem}{mThmGeneral}
  \label{main_thm:uniform_kernel_general}
  For some function $g$ and all constants $\eta, \beta \geq 0$, \pwEtaDelElimGeneral{} admits a kernel on $g(\eta,\beta) \cdot |M|^{60}$ vertices.
  
  Moreover, given an instance $(G,k,M)$ of \pwEtaDelElimGeneral{}, the kernelization algorithm outputs an equivalent instance $(G',k',M)$ where $G'$ is a minor of $G$, and if $(G,k,M)$ is a no-instance then $k = k'$.
\end{restatable}
We stress that our kernel size is uniformly polynomial in $\eta$ and $\beta$.
While the constant in the exponent is quite large, our goal was not to optimize it, but just to obtain an exponent that is independent of $\eta$.

It is tempting to think that \cref{main_thm:uniform_kernel_general} can be obtained
by applying an iterative paradigm similarly to polynomial kernels for other \fDeletion{} problems,
as in e.g.,~\cite{jansen2020polynomial,hols2022elimination,bougeretKernelizationDichotomiesHitting2025a}.
That is, to build upon (1) a reduction rule to reduce the number of connected components of $G-M$ and (2) a uniform polynomial kernel for \pwEtaDeletion{} parameterized by the size of a modulator to $\pwEtaBoundClass[1]$ only.
Indeed, then one could bound the number of connected components of $G-M$ and lift (a bounded number of) roots of the elimination forest of $G - M$ into the modulator.
The result is an instance of $\pwEtaDeletionDistToG[\GG_{\pw \leq 1}^{\beta - 1}]$ where the modulator remains of bounded size.
After repeating this step for at most $\beta$ rounds, this would yield an instance of $\pwEtaDeletionDistToG[\GG_{\pw \leq 1}]$.
Then applying the kernel for $\pwEtaDeletionDistToG[\GG_{\pw \leq 1}]$ would result in a polynomial kernel.

However, this powerful lifting approach {fails} when aiming for a \emph{uniform} polynomial kernel
because each round increases the exponent of the kernel size.
After $\beta$ rounds, the size bound of the resulting kernel depends on $\beta$ in the exponent.
Therefore, it is highly non-trivial to obtain a uniform kernel for the parameterization	we consider, even when already having the two ingredients (1) and (2) at hand (which are themselves non-trivial tasks).

\paragraph*{Consequences}

While the graph class $\GG_{\pw \leq \eta}^\beta$ might seem artificial at first glance, our \cref{main_thm:uniform_kernel_general} implies multiple results for more standard graph classes.

\begin{restatable}[Of \cref{main_thm:uniform_kernel_general}]{corollary}{corollaryUniformKernelsNaturalClasses}
  \label{corollary:uniform_kernels_natural_classes}
  Each of the following problems admits a uniform polynomial kernel in $\eta, \gamma$:
  \begin{bracketenumerate}
    \item \pwEtaDeletion{} parameterized by $k + |M|$ where $M$ is a modulator to $\GG_{\td \leq \gamma}$ given in the input.

    \item \pwEtaDeletionDistToG[{\pwEtaBoundClass[1]}] if $\eta \geq 1$.

    \item \pwEtaDeletionDistToG[\GG_{\td \leq \eta + 1}]{}.

    \item \pwEtaDeletionDistToG[\GG^{\eta - 1}] if $\eta \geq 1$.

    \item \pwEtaDeletionDistToG[{\GG_{\pw \leq \eta}^{\pl \leq \gamma}}]{} where $\GG_{\pw \leq \eta}^{\pl \leq \gamma}$ is the class of graphs with pathwidth at most $\eta$ that do not contain a path on $\gamma + 1$ vertices as a subgraph.
  \end{bracketenumerate}
\end{restatable}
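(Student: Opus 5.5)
Each item follows by showing that the given modulator $M$ (possibly enlarged by a set of size polynomial in $|M|+k$) is a modulator to $\GG_{\pw\le\eta}^{\beta}$ for some $\beta$ depending only on $\eta$ and $\gamma$. We then invoke \cref{main_thm:uniform_kernel_general}, which yields size $g(\eta,\beta)\cdot|M'|^{60}$, uniformly polynomial. Two facts are used throughout. First, elimination distance to $\pwEtaBoundClass[1]$ is at most treedepth, since single vertices lie in $\pwEtaBoundClass[1]$. Second, we may assume $k\le|M|$ whenever $G-M$ already has pathwidth at most $\eta$: otherwise $M$ is itself a solution, and we output a trivial yes-instance.

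\emph{Item (3).} If $G-M\in\GG_{\td\le\eta+1}$, then $\pwElimDist{G-M}\le\eta+1$ and $\pw(G-M)\le\td(G-M)-1\le\eta$. So $G-M\in\GG_{\pw\le\eta}^{\eta+1}$, and we apply the main theorem with $\beta=\eta+1$.

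\emph{Item (2).} A graph in $\pwEtaBoundClass[1]$ has elimination distance $0$, and its pathwidth is at most $1\le\eta$. So $\beta=0$ suffices.

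\emph{Item (4).} If $\pwElimDist{G-M}\le\eta-1$, a standard induction along the elimination forest gives $\pw(G-M)\le(\eta-1)+1=\eta$: each elimination level adds the eliminated vertex to every bag, and the leaves have pathwidth at most $1$. The only delicate point is that adding one vertex to all bags raises pathwidth by at most one, and disjoint unions do not increase pathwidth. Hence $\beta=\eta-1$.

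\emph{Item (5).} A graph with no path on $\gamma+1$ vertices has treedepth at most $\gamma$, via a DFS tree of depth at most $\gamma$. Hence $\pwElimDist{\cdot}\le\gamma$, and pathwidth is at most $\eta$ by assumption, so $\beta=\gamma$.

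\emph{Item (1).} Here $G-M$ has treedepth at most $\gamma$ but its pathwidth may exceed $\eta$. This is the step I expect to need care. I would use a $c$-approximation for \pwEtaDeletion{} with $c$ independent of $\eta$, of the kind cited in the introduction~\cite{guptaLosingTreewidthSeparating2019}, to compute a set $X$. If $|X|>c\cdot k$, we output a trivial no-instance; otherwise we set $M'=M\cup X$. Then $G-M'$ is a subgraph of $G-M$, so its treedepth is at most $\gamma$ and its pathwidth is at most $\eta$. Thus $G-M'\in\GG_{\pw\le\eta}^{\gamma}$ with $|M'|\le|M|+ck$, and the main theorem gives a kernel of size $g(\eta,\gamma)(|M|+ck)^{60}$.

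The approximation ratio must not depend on $\eta$ in the exponent, which matters for uniformity. If no such approximation is available, I would instead run a simple greedy, because treedepth at most $\gamma$ gives a bounded elimination structure. In each component of $G-M$, take a minimal subtree of the elimination forest whose graph has pathwidth exceeding $\eta$. Delete its root-path ancestors, at most $\gamma$ vertices, and charge this to some optimal solution vertex lying inside that subtree. Such subtrees can be chosen disjoint, so this yields $|X|\le\gamma k$, which is still uniform since the factor $\gamma$ goes into $g$.
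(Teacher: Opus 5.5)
Your proposal is correct and matches the paper's proof item by item: items (2)--(5) use the same inclusions into $\GG_{\pw\le\eta}^{\beta}$ with $\beta = 0$, $\eta+1$, $\eta-1$ and $\gamma$ respectively, and item (1) adds a Gupta et al.\ approximate solution to $M$ (or outputs a trivial no-instance) before invoking \cref{main_thm:uniform_kernel_general}. Your worry about the approximation ratio is unnecessary, since a ratio $c(\eta)$ only enters the multiplicative constant $g(\eta,\gamma)$; the greedy fallback is therefore not needed, although it does work when the deleted root path includes the minimal node itself.
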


If we restrict ourselves to vertex cover parameterization, we obtain the following result with a much better kernel size as a corollary of our preprocessing rule to reduce the number of connected components.

\begin{restatable}{corollary}{thmUniformKernelVcParameterization}
  \label{thm:uniform_kernel_vc_parameterization}
  There is a function $g$ such that for every constant $\eta \geq 0$, \pwEtaDeletion{} parameterized by the size of a vertex cover $M$ given in the input admits a kernel on $g(\eta) \cdot |M|^3$ vertices.
  
  Moreover, given an instance $(G,k,M)$ of the problem, the kernelization algorithm outputs an instance $(G',k',M)$ where $G'$ is a subgraph of $G$, and if $(G,k,M)$ is a no-instance then $k = k'$.
\end{restatable}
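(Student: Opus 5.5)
When $M$ is a vertex cover, $G-M$ is an independent set, so every connected component of $G-M$ is a single vertex. The plan is therefore to bound the number of components of $G-M$ by $g(\eta)\cdot|M|^2$ using only vertex deletions (plus possibly decreasing $k$ in yes-cases). This gives $|V(G')| \le |M| + g(\eta)|M|^2$, which is within the claimed bound $g(\eta)\cdot|M|^3$. In the paper this would be a direct application of the component-reduction procedure of part (i). I sketch a self-contained version.

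First, trivial cases. If $k\ge |M|$, the instance is a yes-instance, since $G-M$ is edgeless and $\eta\ge 0$; output a constant-size yes-instance that is a subgraph. Otherwise $k<|M|$. For the no-instance clause, deleting vertices outside the solution must keep $k$ unchanged, so all rules below only delete vertices and keep $k$.

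Next, the marking. Classify each independent vertex $v\in V(G)\setminus M$ by its neighbourhood $N(v)\subseteq M$.
\begin{itemize}
\item For $|N(v)|\le \eta$: mark only a few representatives. Vertices with few neighbours are nearly irrelevant to pathwidth obstructions, but they are not entirely harmless (e.g.\ pendant vertices matter for pathwidth through caterpillar-like structure). So for every set $A\subseteq M$ with $|A|\le 2$ (pairs suffice for the forbidden-structure arguments), keep up to $f(\eta)\cdot(k+1)$ vertices whose neighbourhood contains $A$.
\item Generally: for every pair $\{a,b\}\subseteq M$ (and every singleton), greedily mark up to $h(\eta)+k+1$ unmarked vertices of $G-M$ adjacent to both (resp.\ to the single vertex). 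Here $h(\eta)$ is a threshold such that if $a,b$ have more than $h(\eta)$ common neighbours in the remaining graph, then any solution avoiding both $a$ and $b$ leaves a large $K_{2,t}$ or a similar high-pathwidth configuration.
\end{itemize}
Then delete all unmarked vertices of $G-M$. The number of kept vertices is $O(|M|^2(h(\eta)+k))\le g(\eta)|M|^3$, using $k<|M|$. This is where the cube comes from.

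Correctness has two directions. The forward direction is immediate: a subgraph of a graph with pathwidth at most $\eta$ has pathwidth at most $\eta$. The backward direction is the main obstacle. Given a solution $S'$ of size at most $k$ in $G'$, I must show it also works in $G$, i.e.\ that reinserting each deleted vertex $v$ preserves $\pw\le\eta$ (possibly after swapping some solution vertices). The idea is to use the surplus marked vertices. Each pair/singleton class of $v$ has at least $h(\eta)+1$ marked representatives not in $S'$ that see exactly the relevant part of $N(v)\setminus S'$.

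I would then show a "twin-extension" lemma. If a vertex $u$ outside $M$ has $N(u)\setminus S'$ covered by a family of marked vertices with many twins per pair, then a path decomposition of $G'-S'$ of width $\le\eta$ can be modified to host $u$. The plan is to place $u$ in a new bag next to a bag containing $N(u)\setminus S'$. Such a bag should exist because a set $X$ of vertices in $M\setminus S'$ that pairwise have many common independent neighbours must lie together in some bag; this is a Helly-type argument on intervals of a path decomposition, using that $K_{2,t}$-like structures force their two-sides to share bags for large $t$. Since $u$ is independent outside $M$, the new bag is $(N(u)\setminus S')\cup\{u\}$, which has size at most $\eta+1$ provided $|N(u)\setminus S'|\le\eta$.

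If instead $N(u)\setminus S'$ is larger than $\eta$, then the many marked copies with pairwise common neighbourhoods would already form a clique-like minor exceeding pathwidth $\eta$, which is a contradiction. The Helly property for intervals (pairwise intersecting intervals have a common point) is what makes pairs, rather than larger subsets, suffice. Keeping the marking at pairs is exactly what yields the uniform $|M|^3$ bound, independent of $\eta$ in the exponent.
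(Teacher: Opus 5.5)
Your marking scheme and size count are fine: keeping $h(\eta)+k+1$ common neighbours per pair and singleton of $M$ leaves at least $h(\eta)+1$ of them outside any $S'$ with $|S'|\le k$, and $k<|M|$ gives the cubic bound. The Helly step producing a bag that contains $A:=N(u)\setminus S'$ is also correct.

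The gap is the insertion of $u$. A new bag $A\cup\{u\}$ cannot simply be placed next to a bag $X_i\supseteq A$. Any vertex of the adhesion $X_i\cap X_{i+1}$ that is not in $A$ would then occur in a non-contiguous set of bags. What can be inserted after $X_i$ is $X_i\cup\{u\}$. So you need a bag that contains $A$ \emph{and has at most $\eta$ vertices}, or, as in the paper, a bag containing $A$ from which some vertex can be swapped out. Knowing $|A|\le\eta$ does not provide this.

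Your justification of $|A|\le\eta$ also does not work as stated. Contracting one common neighbour per pair only gives a $K_{|A|}$-minor, of pathwidth $|A|-1$. The multi-subdivided $K_{\eta+1}$ has treewidth $\eta$, so the analogous claim is false for treewidth, which is why pair-marking cannot work there (cf.\ \cref{thm:treewidth_obstructions_non_uniform}). Any correct argument must use the linear order of the decomposition.

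The missing idea, which is the core of \cref{thm:reduce_ccs}, is to choose the right pair. Let $m_\ell\in A$ be the vertex of $A$ introduced last and $m_r\in A$ the one forgotten first. Then every bag containing both $m_\ell$ and $m_r$ contains all of $A$. Since $u$ was not marked for $\{m_\ell,m_r\}$ (a singleton if $m_\ell=m_r$), at least $h(\eta)+1$ of their common neighbours $w$ survive. By the Helly property for the intervals of $w$, $m_\ell$, $m_r$, each such $w$ lies in a bag together with both, hence with all of $A$. In particular $|A|\le\eta$.

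The paper then picks such a $w$ none of whose bags introduces or forgets a vertex of $N\setminus S$ (\cref{thm:find_stable_graph}; this is why it marks $2|N|(\eta+2)+1$ per pair). Then every bag of $w$ contains $A\cup(N(w)\setminus S)$, and the paper places $u$ into a copy of such a bag with $w$ removed (\cref{thm:embedding_with_twinning_graph}).

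With your $h(\eta)\ge\eta+1$, a simpler finish also works. Take a nice decomposition. If only one bag contained $A$, all surviving $w$ would lie in it, giving it more than $\eta+1$ vertices. So two consecutive bags contain $A$, and one of them, $X_j$, has at most $\eta$ vertices. Then $X_j\cup\{u\}$ can be inserted right after $X_j$.

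With this step added, your direct marking gives the same $g(\eta)|M|^3$ kernel that the paper obtains by exhaustively applying \cref{thm:reduce_ccs} to the single-vertex components of $G-M$. Reinserting deleted vertices one at a time is fine, since marked vertices are never removed.
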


Moreover, because our kernel outputs a minor of the input graph, our result has interesting consequences which are purely graph-theoretical.
These follow from a connection between kernels (that output a minor of the input graph) and the size of such obstructions, which has also been observed in other settings previously~\cite{fominPlanarFdeletionApproximation2012,giannopoulouUniformKernelizationComplexity2017,donkersPreprocessingOuterplanarVertex2022}.
To explain these, we first require the notion of $k$-apices and obstructions.

\begin{restatable}[Apices and Obstructions~\cite{DBLP:journals/jctb/SauST23}]{definition}{defApicesObstructions}
  \label{def:apices_obstructions}
  Let $\GG$ be a minor-closed graph class.
  For each integer $k \geq 0$, define $\GG_k$ to be the set of graphs that have a modulator to $\GG$ of size at most $k$, clearly $\GG_0 = \GG$.
  A graph $H \in \GG_k$ is a $k$-\emph{apex} of $\GG$.

  A graph $H \notin \GG$ is a \emph{minor-minimal obstruction} to $\GG$ if any proper minor of $H$ is in $\GG$.
\end{restatable}

Note that, for minor-closed class $\GG$ and any $k \geq 0$, the class $\GG_k$ is again minor-closed.
A consequence of the famous Graph Minor Theorem of Robertson and Seymour~\cite{DBLP:journals/jct/RobertsonS04} is that
for any minor-closed graph class $\GG$, the set of minor-minimal obstructions to
$\GG$ is finite~\cite[Corollary~2.1.1]{BIENSTOCK1995481}.
Using our kernel, we can obtain uniform bounds on the size of minor-minimal obstructions to the class of $k$-apices.

\begin{restatable}{theorem}{mainThmObstructionBound}
  \label{main_thm:obstruction_bound}
  There is a function $f: \nat \times \nat \rightarrow \nat$, such that
  for any $\eta, \beta \geq 0$ and any $k \geq 0$, any minor-minimal obstruction $H$ to the class of $k$-apices of $\pwEtaBoundClass[\eta]$ has at most $f(\eta,\beta) \cdot |M|^{60}$ vertices, where $M \subseteq V(H)$ is a smallest set such that $\pwElimDist{H - M} \leq \beta$ and $\pw(H -M) \leq \eta$.
\end{restatable}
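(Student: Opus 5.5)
We derive this from the Main Theorem (kernel on $g(\eta,\beta)\cdot|M|^{60}$ vertices that outputs a minor of the input and keeps $k$ on no-instances) by the standard "kernels that output minors bound obstructions" argument. Let $H$ be a minor-minimal obstruction to the class of $k$-apices of $\pwEtaBoundClass[\eta]$. Take $M \subseteq V(H)$ a smallest set with $\pwElimDist{H-M}\le\beta$ and $\pw(H-M)\le\eta$. Then $(H,k,M)$ is a valid instance of \pwEtaDelElimGeneral{}, and it is a no-instance, since $H$ is not a $k$-apex.

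Run the kernelization algorithm on $(H,k,M)$. It outputs an equivalent instance $(H',k',M)$ with $H'$ a minor of $H$. Because the input is a no-instance, the Main Theorem guarantees $k'=k$. Equivalence means $(H',k,M)$ is also a no-instance, so $H'$ is not a $k$-apex of $\pwEtaBoundClass[\eta]$, that is, $H'\notin(\GG_{\pw\le\eta})_k$.

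By minor-minimality of $H$, every proper minor of $H$ lies in $(\GG_{\pw\le\eta})_k$. Since $H'$ is a minor of $H$ outside this class, $H'$ cannot be a proper minor, so $H'=H$ up to isomorphism. Hence $|V(H)|=|V(H')|\le g(\eta,\beta)\cdot|M|^{60}$, and we set $f=g$.

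One technical point needs care: the kernel might drop isolated vertices, or otherwise produce an $H'$ with $|V(H')|<|V(H)|$. This is harmless, because then $H'$ is automatically a proper minor, which contradicts the argument above; so in fact $|V(H)|=|V(H')|$. We should also check that $M\subseteq V(H')$ is consistent with the problem statement, which holds since the output triple keeps $M$.

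The only genuine obstacle is making sure the Main Theorem's guarantee $k=k'$ on no-instances is available, together with the minor property. Both are stated explicitly, so the argument is essentially immediate. If the kernel size bound were only achieved under some nondegeneracy assumption, for example $k\le|M|$, we would handle the remaining trivial cases separately. In this case, $M$ itself is a solution of size $|M|\le k$ when $\pw(H-M)\le\eta$, contradicting that $H$ is a no-instance. So every obstruction satisfies $|M|>k$, and the bound applies uniformly.
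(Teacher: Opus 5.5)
Your proposal is correct and follows essentially the same argument as the paper. It applies the kernel to the no-instance $(H,k,M)$, uses that the output is a minor of $H$ with $k'=k$ and is still a no-instance, and concludes from minor-minimality that $H'=H$, which gives the size bound.
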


Similarly, by using our kernel of \cref{thm:uniform_kernel_vc_parameterization}, we can bound the size of such graphs $H$ in terms of their \emph{vertex cover number} $\vc(H)$.

\begin{restatable}{theorem}{thmObstructionBoundVertexCover}
  \label{thm:obstruction_bound_vertex_cover}
  There is a function $f: \nat \rightarrow \nat$, such that
  for any $\eta \geq 0$ and any $k \geq 0$, any minor-minimal obstruction $H$ to the class of $k$-apices of $\pwEtaBoundClass[\eta]$ has at most $f(\eta) \cdot \vc(H)^{3}$ vertices.
\end{restatable}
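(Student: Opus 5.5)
We prove \cref{thm:obstruction_bound_vertex_cover} by feeding $H$ into the kernel of \cref{thm:uniform_kernel_vc_parameterization} and using minimality, which is the standard kernel-to-obstruction argument. Let $H$ be a minor-minimal obstruction to the class of $k$-apices of $\pwEtaBoundClass[\eta]$, and let $M$ be a minimum vertex cover of $H$, so $|M| = \vc(H)$. Consider the instance $(H,k,M)$. Since $H$ is not a $k$-apex, this is a no-instance.

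Run the kernelization algorithm of \cref{thm:uniform_kernel_vc_parameterization} on $(H,k,M)$. It outputs an equivalent instance $(H',k',M)$ such that:
\begin{itemize}
  \item $H'$ is a subgraph, hence a minor, of $H$;
  \item because the input is a no-instance, $k'=k$;
  \item $|V(H')| \le g(\eta)\cdot|M|^3$.
\end{itemize}
By equivalence, $(H',k,M)$ is also a no-instance, so $H'$ is not a $k$-apex of $\pwEtaBoundClass[\eta]$. Every proper minor of $H$ is a $k$-apex, because $H$ is minor-minimal and the class of $k$-apices is minor-closed. Hence $H'$ cannot be a proper minor of $H$, so $H' = H$. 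Therefore $|V(H)| = |V(H')| \le g(\eta)\cdot \vc(H)^3$, and we take $f = g$.

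The main point to get right is the use of $k'=k$. Without it, $H'$ would only be known to be a no-instance for a possibly smaller $k'$, that is, not a $k'$-apex. That says nothing about whether $H'$ is a $k$-apex, and the minimality argument breaks. This is exactly why the corollary guarantees $k=k'$ for no-instances.

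Two smaller checks remain. First, the problem requires $M$ to be a vertex cover of the input graph, which holds by our choice of $M$. Second, the kernel must be applied as an algorithm on an arbitrary instance, with no assumption on $k$ relative to $|M|$. The bound $g(\eta)|M|^3$ holds for every instance, so no case distinction on $k$ is needed.
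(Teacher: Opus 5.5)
Your proof is correct and follows the paper's route. The paper proves this theorem exactly as it proves \cref{main_thm:obstruction_bound}: it applies the kernel of \cref{thm:uniform_kernel_vc_parameterization} to $(H,k,M)$ with $M$ a minimum vertex cover, uses that no-instances keep $k'=k$ and that the output is a subgraph of $H$, and concludes $H'=H$ by minor-minimality.
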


\cref{thm:obstruction_bound_vertex_cover} is significant because the same statement is \emph{not true} for treewidth, as the following result shows.
Note that this result was informally stated by Giannopoulou et al.~\cite{giannopoulouUniformKernelizationComplexity2017}
as it indeed follows quite easily from a construction they use in a kernelization lower bound.
For the sake of completeness, we provide a proof of the theorem in \cref{sec:consequences}.

\begin{restatable}{theorem}{thmTreewidthObstructionsNonUniform}
  \label{thm:treewidth_obstructions_non_uniform}
  For any $\eta \geq 0$, and any $k \geq 0$, there is a minor-minimal obstruction $H$ to the graph class of $k$-apices of $\GG_{\tw \leq \eta}$ with $|V(H)| = k + \eta + 1 + \binom{\vc(H)}{\eta + 1}$ and $\vc{(H)} = k + \eta + 1$.
\end{restatable}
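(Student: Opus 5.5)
The plan is to build $H$ explicitly; the vertex count and $\vc(H) = k + \eta + 1$ will then determine the shape. Let $n = k+\eta+1$. Take a set $A$ of $n$ vertices forming a clique. For every $(\eta+1)$-subset $T \subseteq A$, add a new vertex $v_T$ with $N_H(v_T) = T$; the vertices $v_T$ form an independent set. Then $|V(H)| = n + \binom{n}{\eta+1}$, and once $\vc(H) = n$ is shown this is exactly $k+\eta+1+\binom{\vc(H)}{\eta+1}$.

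\emph{Step 1: vertex cover.} $A$ is a vertex cover, so $\vc(H) \le n$. Any vertex cover contains all of $A$ except at most one vertex $a$. If it omits some $a \in A$, it must contain every $v_T$ with $a \in T$. There are $\binom{n-1}{\eta} \ge 1$ such vertices, so the cover still has size at least $n$. Hence $\vc(H) = n$.

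\emph{Step 2: $H$ is not a $k$-apex of $\GG_{\tw \le \eta}$.} Let $S \subseteq V(H)$ with $|S| \le k$, and put $a = |S \cap A|$ and $m = k - a$. Then $A \setminus S$ is a clique of size at least $m+\eta+1$, so it contains at least $\binom{m+\eta+1}{\eta+1} \ge m+1$ distinct $(\eta+1)$-subsets $T$. At most $m$ of the corresponding vertices $v_T$ lie in $S$. Hence some $T \subseteq A\setminus S$ has $v_T \notin S$, and $T \cup \{v_T\}$ induces a $K_{\eta+2}$ in $H - S$. Therefore $\tw(H-S) \ge \eta+1$.

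\emph{Step 3: minimality.} Every vertex of $H$ has positive degree, so every proper minor of $H$ is a minor of $H - e$ or of $H/e$ for some edge $e$. It therefore suffices to exhibit, for each such one-step minor, a deletion set of size at most $k$ leaving treewidth at most $\eta$. There are three edge types to handle.
\begin{itemize}
\item \emph{A pendant-type edge $v_T a$ with $a \in T$.} Deleting or contracting it destroys the clique $T \cup \{v_T\}$. Take $S = A \setminus T$, which has size $k$. The rest is the clique $T$ of size $\eta+1$ together with $v_T$, now adjacent to at most $\eta$ vertices of $T$ (after contraction, $v_T$ merges into $a$). Each remaining $v_{T'}$ with $T' \ne T$ is adjacent to only $|T' \cap T| \le \eta$ vertices of the clique $T$. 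So there is a tree decomposition with central bag $T$ and one leaf bag $(N(v) \cap T) \cup \{v\}$ per remaining vertex $v$, of width $\eta$.
\item \emph{Contracting a clique edge $a_1a_2$.} This merges $a_1$ and $a_2$ into a clique $A'$ of size $n-1$. Deleting $k$ vertices of $A'$ leaves a clique of size $\eta$, and each surviving $v_T$ is adjacent only to vertices of it. The star-shaped decomposition above then has width at most $\eta$.
\item \emph{Deleting a clique edge $a_1a_2$.} This is the delicate case. Choosing $S$ as $k$ vertices of $A$ avoiding $a_1,a_2$ does \emph{not} work in general. For $\eta = 2$, the surviving $v_T$ with $T \cap A' = \{a_1,a_2\}$ together with $v_{A'}$ and the third vertex $c$ of $A'$ yields a $K_4$ minor. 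So $S$ must be chosen more carefully. I would first try letting $S$ use some of its budget on the vertices $v_T$ whose neighbourhood contains both $a_1$ and $a_2$. If that does not work, I would look for a decomposition that exploits the missing edge, with bags $A'\setminus\{a_1\}$ and $A'\setminus\{a_2\}$ glued along $A'\setminus\{a_1,a_2\}$.
\end{itemize}

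I expect minimality under deleting a clique edge to be the main obstacle. If it genuinely fails, the construction has to be adjusted. The natural fallback is to let $\binom{A}{\eta+1}$ index the vertices $v_T$ while making $A$ not complete, for instance by replacing $A$ by $k$ apex vertices plus an $(\eta+1)$-set. Such an adjustment must keep both the vertex count and $\vc(H) = k+\eta+1$ exactly as in the statement, which Steps 1 and 2 would then be re-checked against.
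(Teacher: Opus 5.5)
\textbf{Gap: Step~3 fails for your graph.} The clique-edge deletion case cannot be closed for the graph $H_0$ you built (with $A$ complete), because $H_0$ is in general \emph{not} minor-minimal. Take $k=1$, $\eta=2$, $A=\{a_1,\dots,a_4\}$, and delete the edge $a_1a_2$. Check each single-vertex deletion:
\begin{itemize}
\item Removing $a_1$ leaves the $K_4$ on $\{a_2,a_3,a_4,v_{\{2,3,4\}}\}$.
\item Removing $a_2$ leaves the $K_4$ on $\{a_1,a_3,a_4,v_{\{1,3,4\}}\}$.
\item Removing any $v_T$ leaves at least one of these two $K_4$'s.
\item Removing $a_3$ leaves a $K_4$-minor with branch sets $\{a_1,v_{\{1,2,3\}}\}$, $\{a_2\}$, $\{a_4\}$, $\{v_{\{1,2,4\}}\}$; removing $a_4$ is symmetric.
\end{itemize}
So $H_0-a_1a_2$ is still not a $1$-apex of $\GG_{\tw\le 2}$. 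Neither spending budget on the $v_T$'s nor a cleverer decomposition can rescue this case. Your fallback of making $A$ non-complete points in the right direction, but as stated it is only a hope, not an argument.

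\textbf{The missing idea.} Let $E'$ be a \emph{maximal} set of edges inside $A$ such that $H_0-E'$ is still not a $k$-apex; $E'=\emptyset$ qualifies by your Step~2, so such a maximal set exists. Take $H=H_0-E'$. Then:
\begin{itemize}
\item Minimality under deleting an $A$-edge holds by the maximality of $E'$.
\item Non-apexness of $H$ holds by definition, and the vertex count is unchanged.
\item Your star-shaped decompositions for pendant-type edges and for contractions still work, since the central bag covers all pairs inside it whether or not those edges are present.
\item Vertex deletion reduces to edge deletion because $H$ has no isolated vertices.
\end{itemize}

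\textbf{Step~1 must be redone.} Your argument (``any vertex cover omits at most one vertex of $A$'') uses that $A$ is a clique, which is no longer true in $H$. Argue via non-apexness instead. Suppose $M$ is a vertex cover of $H$ with $|M|\le k+\eta$.
\begin{itemize}
\item If $|M|<k$, then $H-M$ is edgeless, so $M$ itself shows $H$ is a $k$-apex, a contradiction.
\item If $|M|\ge k$, removing any $k$ vertices of $M$ leaves a graph with a vertex cover of size at most $\eta$, hence treewidth at most $\eta$, again a contradiction.
\end{itemize}
Together with the vertex cover $A$ of size $k+\eta+1$, this gives $\vc(H)=k+\eta+1$.
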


\cref{thm:treewidth_obstructions_non_uniform} shows that a result analogous to \cref{thm:obstruction_bound_vertex_cover} is \emph{impossible} for treewidth: the exponent in the size of obstructions in terms of their vertex cover size must always depend on $\eta$ in a non-uniform manner.

\subparagraph*{Related work.}
To the best of our knowledge, only few cases of uniform polynomial kernels for \fDeletion{} problems parameterized by the solution size are known.
Apart from the uniform kernel for \textsc{Treedepth-$\eta$ Deletion} \cite{giannopoulouUniformKernelizationComplexity2017}, uniform polynomial kernels exist when $\mathcal{F}$ contains a $\theta_p$~\cite{fominHittingForbiddenMinors2016} or, more generally, a $K_{2,p}$~\cite{DBLP:conf/isaac/LochetS24}.
Moreover, the problem family where for each fixed integer $d \geq 0$ the family $\FF$ contains all connected graphs on $d+1$ vertices also admits a uniform polynomial kernel \cite{kumar2lkKernelLComponent2016,xiaoLinearKernelsSeparating2017}.

Another problem for which uniform kernelization has been studied is $d$-\textsc{Set Packing}.
In this problem, given a universe $U$, a family $\mathcal{F}$ of $d$-sized subsets of $U$, and a positive integer $k$,
the goal is to decide if there exist $k$ sets in the family $\mathcal{F}$ which are pairwise disjoint. This problem admits a (non-uniform) polynomial kernel of size $f(d) \cdot k^d$, for every fixed $d$ and some function $f$~\cite{DBLP:journals/algorithmica/FellowsKNRRSTW08}.
A matching lower bound shows that there is no kernel for this problem which has size $g(d) \cdot k^{d-\varepsilon}$, for any $\varepsilon >0$, and any function $g$, unless $\np{}\subseteq \conppoly{}$~\cite{DBLP:conf/soda/DellM12}.
For the special case  $P_d$-\textsc{Set Matching},
where the universe is the vertex set of some underlying graph $G$ and $\mathcal{F}$ is the (only implicitly given) family of all paths of $G$ of length $d$, Dell and Marx~\cite{DBLP:journals/corr/abs-1812-03155} showed that a uniform kernel with respect to $d$ exists.

\paragraph*{Organization}
We present a high-level overview of the used techniques in \cref{sec:tech_overview}.
The preliminaries are given in \cref{sec:prelims}.
In \cref{sec:kernel:bounding_ccs} we obtain a uniform bound on the number of connected components.
The rest of the proof of \cref{main_thm:uniform_kernel_general} is given in \cref{sec:kernel:bounding_mod_degree,sec:kernel_reducing_cc_size}.
Further, \cref{sec:consequences} presents some consequences of our main result.
\cref{sec:appendix_constants} provides an overview of
	constants (depending on $\eta,\beta$) and functions used in our intermediate results. 
\section{Technical Overview}
\label{sec:tech_overview}
We now present a high-level overview of the proof of \cref{main_thm:uniform_kernel_general}.
The main tasks are to give uniform polynomial bounds for
\begin{romanenumerate}
    \item the number of connected components of $G-M$, and
    \item the size of an arbitrary connected component of $G-M$.
\end{romanenumerate}

\subsection{Bounding the Number of Connected Components}
As the first step, we bound the number of connected components by a uniform polynomial.
We stress that this preprocessing step also works for \pwEtaDeletionEta{},
a problem that is essentially the solution-size parameterization of \pwEtaDeletion{}.

Kernelization enthusiasts might know similar notions to the one we will exploit, called ``simplicial components'' and ``simplicial vertices''.
A simplicial component is a connected component of $G - M$ whose neighborhood is a clique, and a simplicial vertex is a vertex whose neighborhood is a clique.
These notions have been exploited in many papers, for example~\cite{bodlaenderKernelBoundsStructural2012,giannopoulouUniformKernelizationComplexity2017,bodlaenderPreprocessingTreewidthCombinatorial2013,bodlaenderPreprocessingRulesTriangulation2005,bodlaenderLineartimeAlgorithmFinding1996,clautiauxNewLowerUpper2003,cyganHardnessLosingWidth2014}, and the way we make sure we obtain connected components that are essentially simplicial follows the ideas of previous work on \fDeletion{} problems~\cite{fominPlanarFdeletionApproximation2012,giannopoulouUniformKernelizationComplexity2017,cyganHardnessLosingWidth2014}.
More concretely, our approach can be seen as a generalization of a reduction rule used to remove simplicial components by Bodlaender et al.~\cite[Rule 6]{bodlaenderKernelBoundsStructural2012} that is adjusted to work for \pwEtaDeletion{}.

The goal of our reduction rule is as follows:
we find a suitable connected component $C^\star$ of $G-M$ and remove $C^\star$ from the input graph to obtain the graph $G' = G - C^\star$ of our output instance.
For the correctness, we have to make sure that removing $C^\star$ does not affect
whether there is a solution $S$,
that is, a set of at most $k$ vertices whose removal results in a graph of pathwidth at most $\eta$.
More concretely, if there is a set $S$ of size at most $k$ such that $G' - S$ has pathwidth at most $\eta$, then we need to ensure that also $G - S$ has pathwidth at most $\eta$.

On a high level, we make sure that there exists a connected component $C$ of $G-M$ not intersected by the solution $S$ that is `similar' to $C^\star$.
The basic idea is that, if a path decomposition of $G'-S$ could accommodate $C$, then it should also accommodate
the similar component $C^\star$.
What we mean by similarity here is that (1) $C$ should have a pathwidth at least
$\pw(C^\star)$, and (2) given a path decomposition $\PP$ of $G'-S$, we would like
for $C$ to appear in bags that \emph{all} contain the entire neighborhood of $C^\star$.
In particular, the bags containing $C$ in $\PP$ serve as a `blueprint' that lets
us add similar bags that represent~$C^\star$ without increasing the width.
In that case, we obtain a path decomposition of $G-S$, as desired.

In order to enforce the neighborhood of $C^\star$ to appear in one bag,
the basic idea is that a large number of disjoint paths between two vertices $x, y$
forces them to appear together in a bag, \emph{virtually} simulating an edge $xy$.
If those disjoint paths intersect sufficiently many components,
then a simple property of path decompositions ensures that one of them (our `blueprint' $C$)
appears only in bags that contain both $x$ and $y$.

We find the components $C^\star$ and $C$ with a simple marking scheme (\cref{rule:reduce_ccs}) which we sketch now:
For each $X \in \binom{M}{2} \cup \binom{M}{1}$, we consider the connected components $C'$ of $G - M$ with $X \subseteq N_G(C')$ and mark the $2|M|(\eta + 2) + 1$ such components having the largest pathwidth.
If the number of connected components of $G - M$ is larger than $|M|^2 \cdot (2|M|(\eta + 2) + 1)$, at least one component is not marked; this component is the sought-after component $C^\star$.

Let us now explain why this marking scheme allows us to find a suitable component~$C$.
For this purpose, fix any $X = \{m_1,m_2\} \subseteq N_G(C^\star) \setminus S$.\footnote{Possibly all of $N_G(C^\star)$ is selected by $S$, but we ignore this trivial case here.}
As $X$ did not mark the component~$C^\star$, it marked $2|M|(\eta + 2) + 1$ components $C'$ of $G - M$ with $X \subseteq N_G(C')$.
Since $M$ is a solution, we may assume $|S| \leq |M|$, and therefore at least $2|M|(\eta + 1) + 1$ components marked by $X$ do not intersect $S$.
Thus, there are many disjoint $m_1$-$m_2$ paths in $G' - S$, forcing $m_1$ and $m_2$ to appear together in a bag.
From Helly's theorem for trees~\cite[Theorem 4.1]{hornThreeResultsTrees1972}, applied in the same way as a classic proof showing each clique of a graph appears in a bag of any path decomposition of it~\cite{bodlaenderPathwidthTreewidthCographs1993a}, we then obtain that some bag of $\PP$ contains all vertices of $N_G(C^\star)$.
So, the component $C^\star$ is essentially a simplicial component.

Now, let $m_\ell$ be the vertex of $N_G(C^\star) \setminus S$ that is introduced last, and $m_r$ be the vertex of $N_G(C^\star) \setminus S$ that is forgotten first in $\PP$.
Then, also $X = \{m_\ell,m_r\}$ marked $2|M|(\eta + 2) + 1$ components $C'$ of $G - M$ with $X \subseteq N_G(C')$; of which again $2|M|(\eta + 1) + 1$ do not intersect $S$.
For each vertex of $M$ there is some bag of $\PP$ in which it is introduced, or forgotten.
In particular, at most $2|M|$ bags forget or introduce a vertex of $M$, and these contain at most $2|M|(\eta + 1)$ vertices overall.
Therefore, one of the components $C$ marked by $X$ is such that each bag of $\PP$ that contains $C$ contains all vertices of $(N_G(C) \cup N_G(C^\star)) \setminus S$.
Since $\pw(C) \geq \pw(C^\star)$ due to our marking scheme, we can use (copies of) a bag $X_i$ that contains $\pw(C) + 1$ vertices of $C$ to embed $C^\star$ into the path decomposition.

Formally, we obtain the following result.

\begin{restatable}{theorem}{thmReduceCCs}
    \label{thm:reduce_ccs}
    Define $f_{\ref{thm:reduce_ccs}}(x) = x^2(2x(\eta +2)+1) \in O(x^3)$.
    \cref{rule:reduce_ccs}~\nameref{rule:reduce_ccs} runs in polynomial time,
    and given instance \((G,k,M)\) of \pwEtaDeletionEta{}, where
    $G-M$ has more than $f_{\ref{thm:reduce_ccs}}( |N_G(G-M)| )$ connected components,
    it outputs an instance \((G',k,M)\) of \pwEtaDeletionEta{} such that
    \begin{bracketenumerate}
        \item \((G,k,M)\) is a yes-instance if and only if \((G',k,M)\) is a yes-instance, and
        \item \(G'\) is a proper subgraph of \(G\).
    \end{bracketenumerate}
\end{restatable}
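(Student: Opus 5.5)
Following the marking scheme sketched above, let $N = N_G(G-M)$, so every component $C'$ of $G-M$ has $N_G(C') \subseteq N$. For every $X \in \binom{N}{2} \cup \binom{N}{1}$ the rule marks the $2|M|(\eta+2)+1$ components $C'$ with $X \subseteq N_G(C')$ of largest pathwidth, ties broken arbitrarily. It also marks every component with empty neighbourhood if such a component has largest pathwidth overall; alternatively, such components are handled separately, since deleting all but one of maximum pathwidth is trivially safe. Pathwidth of a component is only needed up to $\eta+1$, so it can be computed in polynomial time for fixed $\eta$ via the known linear-time algorithm (for comparing values, pathwidth $>\eta$ can be treated as ``$\eta+1$''). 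In total at most $f_{\ref{thm:reduce_ccs}}(|N|)$ components are marked, so some unmarked component $C^\star$ exists. We output $G' = G - C^\star$. Running time and properness are then immediate, as is the direction ``$(G,k,M)$ yes $\Rightarrow$ $(G',k,M)$ yes'', since pathwidth is subgraph-monotone.

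For the converse, fix $S$ with $|S|\le k$ and $\pw(G'-S)\le\eta$. We may assume $|S|\le |M|$: if $k\ge|M|$ the instance is trivially yes because $M$ itself is a solution in $G$ (assuming $G-M \in \GG_{\pw\le\eta}$). Take a path decomposition $\PP$ of $G'-S$ of width at most $\eta$, and set $Y = N_G(C^\star)\setminus S$. If $Y=\emptyset$, append a path decomposition of $C^\star$ and we are done, because some marked component has pathwidth at least $\pw(C^\star)$ and is contained in $G'-S$; if it is hit by $S$, we use that $\pw(C^\star)\le\eta$ anyway, since $M$ is a modulator. Otherwise, for each pair $\{x,y\}\subseteq Y$ there are $2|M|(\eta+2)+1$ marked components adjacent to both, and at least $2|M|(\eta+1)+1 \ge \eta+2$ of them avoid $S$. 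Each of these gives an internally disjoint $x$--$y$ path in $G'-S$. Hence $x,y$ share a bag, since otherwise a separating bag of size at most $\eta+1$ would have to hit all of these paths. By Helly for subpaths, some bag contains all of $Y$. In particular, the interval $I$ of bags common to all of $Y$ is nonempty; its left end is determined by the last-introduced vertex $m_\ell$ and its right end by the first-forgotten vertex $m_r$.

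Now use $X=\{m_\ell,m_r\}$, or the singleton if $m_\ell=m_r$. Among its at least $2|M|(\eta+1)+1$ marked components avoiding $S$, I pick one, $C$, that contains no vertex of the at most $2|M|$ bags in which a vertex of $N$ is introduced or forgotten. Those bags hold at most $2|M|(\eta+1)$ vertices and the components are disjoint, so such a $C$ exists. Here the counting should be with respect to $|N|\le|M|$. Since $C$ is connected, adjacent to $m_\ell$ and $m_r$, and never shares a bag with an introduce/forget event of $N$, the bags containing $C$ form a contiguous stretch. This stretch must meet bags containing $m_\ell$ and $m_r$ and have no change of $N$-membership inside it, so every bag containing a vertex of $C$ contains $m_\ell$ and $m_r$ and hence lies within $I$. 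Pick a bag $B_i$ containing a vertex of $C$, so $B_i\supseteq Y$. Since $\pw(C)\ge\pw(C^\star)$ by the marking and $C$ lies entirely in the bags within $I$, I want $|B_i\setminus V(C)|$ small. I expect this to be the main obstacle: one needs a bag containing at least $\pw(C)+1$ vertices of $C$, which exists because the restriction of $\PP$ to $C$ is a path decomposition of $C$. Then $|B_i\setminus V(C)|\le \eta-\pw(C)\le\eta-\pw(C^\star)$. Finally, insert after $B_i$ the bags $(B_i\setminus V(C))\cup Q_j$, where $(Q_j)_j$ is an optimal path decomposition of $C^\star$, and then repeat $B_i$. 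Every bag has size at most $\eta+1$, the edges of $C^\star$ to $Y$ are covered since $Y\subseteq B_i\setminus V(C)$ (as $Y\subseteq N$ is disjoint from $C$), and connectivity of occurrences is preserved. This gives $\pw(G-S)\le\eta$.
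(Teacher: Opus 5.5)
There are two genuine gaps. First, you count in terms of $|M|$, but the theorem is stated in terms of $N=N_G(G-M)$. With a quota of $2|M|(\eta+2)+1$ per set $X$, your claim that at most $f_{\ref{thm:reduce_ccs}}(|N|)$ components are marked is false. For example, take $N=\{x\}$ and $2|M|(\eta+2)+1>f_{\ref{thm:reduce_ccs}}(1)$ components, all adjacent to $x$: then everything is marked and no $C^\star$ exists. The $|N|$-dependence is essential, not cosmetic, because the rule is later applied with modulators such as $V(G-L)$ in \cref{lem:reduce_star}, which are huge while $N$ is tiny.

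If you lower the quota to $2|N|(\eta+2)+1$ (as \cref{rule:reduce_ccs} does), then your reduction to $|S|<|M|$ via ``$k\ge|M|$ is trivial'' no longer guarantees that enough marked components avoid $S$. The missing idea is to choose, among all solutions of $G-C^\star$ of size at most $k$, one that minimizes $|S\cap V(G-M)|$. The exchange $S\mapsto (S\setminus V(G-M))\cup N$ (\cref{obs:mostly_intact} applied to $G-M-C^\star$ inside $G-C^\star$, whose neighbourhood lies in $N$) then gives $|S\cap V(G-M)|<|N|$. Consequently every relevant pair keeps at least $2|N|(\eta+1)+1$ marked components disjoint from $S$, and all your later counts go through with $|N|$.

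Separately, your extra marking of components with empty neighbourhood is unnecessary. It can even leave nothing unmarked when $N=\emptyset$. Such a $C^\star$ is harmless anyway, because $\pw(C^\star)\le\eta$.

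Second, your final construction is not a path decomposition. The $\pw(C)+1\ge 1$ vertices of $B_i\cap V(C)$ lie in $B_i$, are missing from every inserted bag $(B_i\setminus V(C))\cup Q_j$, and reappear in the repeated copy of $B_i$, so their occurrences are not contiguous. Keeping them in the inserted bags instead would exceed width $\eta$.

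You must relocate $C$ as well. Delete $V(C)$ from all bags, then after $B_i$ insert $(B_i\setminus V(C))\cup P_j$ for an optimal path decomposition $(P_j)$ of $C$, followed by $(B_i\setminus V(C))\cup Q_j$. This is valid only if $N_{G-S}(C)\subseteq B_i$. That condition does hold for your choice of $C$, for three reasons:
\begin{itemize}
\item $N_{G-S}(C)\subseteq N\setminus S$;
\item each such neighbour shares a bag with $C$;
\item no bag meeting $C$ introduces or forgets a vertex of $N\setminus S$.
\end{itemize}
Hence every bag meeting $C$ contains all of $N_{G-S}(C)$. However, your argument never establishes or uses this stability of $C$. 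Re-embedding both $C$ and $C^\star$ into copies of a single bag is exactly what \cref{thm:embedding_with_twinning_graph} provides in the paper's proof.
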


In \cref{thm:reduce_ccs}, we state the bound on the number of connected components as a function of the size of $N_G(G - M)$.
Note that clearly $|N_G(G - M)| \leq |M|$, but it can also be significantly smaller, which is actually exploited in a later stage of our kernelization routine (we omit the details in this overview).
\cref{thm:reduce_ccs} yields \cref{thm:uniform_kernel_vc_parameterization}.

\thmUniformKernelVcParameterization*

\subsection{Bounding the Size of Connected Components}
We proceed to the technically more challenging part of our work: bounding the size of connected components.
Here, we return to \pwEtaDelElimGeneral{}, where components of $G - M$ not only have pathwidth at most $\eta$, but additionally have elimination distance at most $\beta$ to the class of graphs with pathwidth one.

\subparagraph*{The high-level strategy.}
Our approach combines a known high-level strategy and highly non-trivial new insights.
Let us first discuss the known high-level strategy, which has been, at least partly, used in previous works such as \cite{fominPlanarFdeletionApproximation2012,giannopoulouUniformKernelizationComplexity2017,DBLP:conf/isaac/LochetS24,fominHittingForbiddenMinors2016,donkersPreprocessingOuterplanarVertex2022,kimLinearKernelsSingleexponential2016,DBLP:journals/jacm/BodlaenderFLPST16}.
In this strategy, the task of bounding the size of connected components reduces to the task of reducing the number of neighbors that vertices $m \in M$ have in ``nice'' connected components.

The eventual goal of this strategy is to obtain an instance $(G,k,M')$, where the modulator $M'$ has bounded size, such that each connected component of $G - M'$ is a \emph{protrusion}.
That is, each connected component of $G - M'$ is a graph with constant treewidth and a boundary of constant size to the rest of the graph.
Such graphs can be immediately replaced with graphs of constant size using the technique of protrusion replacement \cite{DBLP:journals/jacm/BodlaenderFLPST16,DBLP:journals/siamcomp/FominLST20,fominPlanarFdeletionApproximation2012}.

Towards this goal, it is useful to work with \emph{near-protrusions} (formally defined in \cref{def:near_protrusion}), which are not protrusions, but essentially become protrusions after removing a solution.
Concretely, we preprocess the input instance $(G,k,M)$ such that each component $C$ of $G - M$ has a neighborhood that can be decomposed into a set $X$ of size at most $2(\eta + 1)$ and a set $Y$, which may be large, but any optimal solution must delete all apart from at most $\eta + 1$ vertices of $Y$.
We skip the details of this standard step and refer to \cref{subsec:key_ingredients_modulator_degree_reduction} for the details.
One useful property of such a near-protrusion $C$ is that an optimal solution selects at most $3(\eta + 1)$ vertices of $C$.

Therefore, to bound the size of the components, it remains to turn these near-protrusions into actual protrusions.
Here, the known strategy again provides a recipe, so that it
actually suffices to bound the total degree of the vertices in $M$.
Let us explain why that is helpful.
Assume that the total degree of the vertices in $M$ is, say, $g(\eta, \beta) \cdot |M|^c$, for some constant $c$ independent of $\eta,\beta$ and some function $g$.
Then, let $X_1,\dots,X_t$ be a path decomposition of width $\eta$ of $G - M$.
We can create the modulator $M'$ by lifting vertices from $G-M$ as follows:
for every $v \in N_G(M)$, we lift a bag that contains $v$ into the modulator.
Then, each connected component of $G - M'$ has at most $2(\eta + 1)$ neighbors in $M'$, and treewidth (even pathwidth) at most $\eta$.
Moreover, the size of $M'$ is still bounded by a uniform polynomial in the parameter $|M|$.

To summarize, a uniform kernel immediately follows from bounding the total degree of the vertices in $M$, and we may even assume that each component of $G - M$ is a near-protrusion.

\subparagraph*{Bounding the degree of vertices in $M$.}
As we can already bound the number of connected components of $G - M$, bounding the total degree of the vertices in $M$ can be achieved by bounding, for each vertex $m \in M$ and each component $C$ of $G - M$,
the number of neighbors of $m$ in $C$.

In this part of our approach, we exploit that $C$ has elimination distance at most $\beta$ to the class of graphs of pathwidth at most $1$.
To state this more formally, we first define a few notions.
In a rooted forest $F$ containing node $n \in V(F)$, we write $\desc_F(n)$ for the set of \emph{descendants} of $n$ in $F$, including $n$ itself.
Now, we proceed to the second structural parameter we utilize, the $\pwEtaBoundClass[1]$-elimination distance.
This parameter is an instance of the general parameter elimination distance to some graph class $\HH$~\cite{bulianGraphIsomorphismParameterized2016,bulianFixedparameterTractableDistances2017}.
The definition we state is similar to the one of Bougeret et al.~\cite[Definition 8]{bougeretKernelizationDichotomiesHitting2025a} for a concrete instance of this parameter.

\begin{restatable}[Elimination Distance]{definition}{defElimDistance}
    \label{def:elimination_trees_forests_distance}
    Let $G$ be a graph, and $\HH$ a graph class.
    An $\HH$-\emph{elimination forest} of $G$ is a tuple $\FF = (F,\{Y_n\}_{n \in V(F)})$, where $F$ is a rooted forest, the sets $\{Y_n\}_{n \in V(F)}$ partition $V(G)$ and
    \begin{bracketenumerate}
        \item for each node $n \in V(F)$ that is not a leaf of $F$, we have $Y_n = \{n\}$,
        \item for each leaf $n \in V(F)$, we have that $G[Y_n] \in \HH$,
        and
        \item for each edge $v_1 v_2 \in E(G)$, where $v_1 \in Y_{n_1}$ and $v_2 \in Y_{n_2}$, we have that $n_1$ and $n_2$ are in an ancestor-descendant relationship in $F$.
    \end{bracketenumerate}
    The depth of $\FF$ is the depth of $F$.
    The $\HH$-\emph{elimination distance} $\elimDist[\HH]{G}$ of $G$ is the minimum depth of any $\HH$-elimination forest of $G$.
    For a node $n \in V(F)$, we write $V^{\FF}_n= \bigcup_{n' \in \desc_F(n)} Y_{n'}$.
    An $\HH$-elimination forest $\FF$ of $G$ is \emph{nice} if for each $n \in V(F)$ we have that $G[V^{\FF}_n]$ is a connected graph.
    The $\HH$-elimination forest $\FF$ is an $\HH$-\emph{elimination tree} if $F$ is a rooted tree.
\end{restatable}

When $\HH$ is the class of graphs on a single vertex, then a nice $\HH$-elimination forest corresponds to a nice treedepth decomposition as introduced by Reidl et al.~\cite[Definition 2]{reidlFasterParameterizedAlgorithm2014}.
We also define the related notion of $\FF$-neighbors.

\begin{restatable}[Neighbors in an Elimination Forest]{definition}{defNeighborsInElimForest}
    \label{def:neighbor_in_elim_forest}
    Let $G'$ be a subgraph of a graph $G$, $\HH$ a graph class, and
    $\mathcal{F} = (F,\{Y_n\}_{n \in V(F)})$ an $\HH$-elimination forest of $G'$.
    A vertex $v \in V(G) \setminus V(G')$ and a node $n \in V(F)$ are \emph{$\FF$-neighbors} if $N_G(v) \cap Y_n \neq \emptyset$.
We may simply call $v$ a neighbor of $n$ if $\FF$ is clear from the context.
\end{restatable}

In polynomial time, we can compute a nice $\pwEtaBoundClass[1]$-elimination forest $\FF = (F,\{Y_n\}_{n \in V(F)})$ of $C$ that has depth at most $\beta$ by observing that the class of graphs with elimination distance at most $\beta$ to the class $\pwEtaBoundClass[1]$ is again a minor-closed graph class (see \cite{bulianFixedparameterTractableDistances2017}) and by using a minor-testing algorithm~\cite{DBLP:journals/jct/RobertsonS95b,DBLP:conf/focs/KorhonenPS24}; or by directly employing an algorithm of Bulian et al.~\cite[Corollary 1]{bulianFixedparameterTractableDistances2017}, see also the proof of Theorem 1 in Bougeret et al.~\cite{bougeretKernelizationDichotomiesHitting2025a}.
Consider a vertex $m$ in the modulator $M$.
If $m$ has many neighbors in $C$, then
\begin{bracketenumerate}
    \item
    $m$ has many $\FF$-neighbors, or
    \item
    $m$ has many neighbors in a caterpillar of $C$ that corresponds to a leaf of $\FF$.
\end{bracketenumerate}
We obtain uniform polynomial bounds for these two cases in the following.

\subparagraph*{(1) Bounding the number of $\FF$-neighbors.}
Let $m \in M$.
We begin by elaborating on how we bound the number of $\FF$-neighbors of $m$.
Here, the idea is that we want to find some neighbor $v \in V(C)$ of $m$ such that we can safely delete the edge from $m$ to $v$.
In other words, we want to find an \emph{irrelevant edge}: an edge whose deletion does not change the answer to the instance.\footnote{This is in the same spirit as the notion of an \emph{irrelevant vertex}, which was introduced by Robertson and Seymour~\cite{DBLP:journals/jct/RobertsonS95b}. Irrelevant vertices and irrelevant edges have since been used in more works on parameterized algorithms and kernelization, for example~\cite{fominPlanarFdeletionApproximation2012,DBLP:journals/siamdm/HeggernesHLP13,DBLP:conf/isaac/LochetS24,DBLP:journals/jacm/KratschW20}.}
Next, we sketch how our algorithm finds this vertex $v$, then we explain why $v$ has useful properties.

Recall that, for any $n \in V(F)$, $V_n^\FF$ is the set of vertices that are introduced in the bags below and including node $n$.
Moreover, for a node $n \in  V(F)$, we call the set $N_G(V_n^\FF) \cap V(C)$ the \emph{ancestor-type} of $n$; observe that each node in the ancestor-type of $n$ is an ancestor of $n$ in $F$.

We begin with a marking scheme reminiscent of the one used to reduce the number of connected components.
For each $X = \{m_1,m_2\} \in \binom{M}{2} \cup \binom{M}{1}$ and each depth $d$ of the elimination forest, we mark a constant (depending on $\eta,\beta$) number of nodes $n \in V(F)$ at depth $d$, such that $X \subseteq N_G(V_{n}^\FF)$.
Since the depth of $F$ is bounded by the constant $\beta$, the overall number of marked nodes is uniformly bounded, and some $\FF$-neighbors of $m$ remain unmarked.

After applying the marking scheme, a basic graph-theoretic argument lets us find a parent node $p \in V(F)$ with $\alpha(\eta,\beta)$ (again a constant depending on $\eta,\beta$) number of children such that each child $c$ is not marked, but $m$ is a neighbor of $G[V_c^\FF]$.
By choosing $\alpha(\eta,\beta)$ large enough, we can even ensure that sufficiently many of these children have the same ancestor-type.
We then choose such a child $z$ such that the pathwidth of $G[V_z^\FF]$ is small (relative to the pathwidth of $G[V_c^\FF]$ of the other children $c$).
Let $v$ be a neighbor of $m$ in $V_z^\FF$.

Our reduction rule then proceeds to delete the edge $mv$, that is, the output instance is $(G' = G - mv, k, M)$.
The forward direction of safety of this rule is trivial.
For the backward direction, consider some minimum solution $S$ of the output instance and some width-$\eta$ path decomposition $\PP = X_1,\dots,X_t$ of $G' - S$.
If one of $m$, $v$ is selected, or if $m$ and $v$ appear in some bag together, then $\PP$ is also a path decomposition of $G - S$.
Therefore, we assume that this is not the case.
Our goal now is showing that the path decomposition $\PP$ can be modified without increasing the width so that $m$ and $v$ do appear in some bag together.

Consider the connected component $H$ of the graph $G[V_z^\FF \setminus S]$ that contains $v$.
Our algorithm has chosen $v$ so that there are \emph{two} graphs in $G' - S$ that are ``similar'' to $H$.
Let us elaborate in more detail on what we mean by similar in this case.

The neighborhood of $H$ can be split into the neighborhood in $M \setminus S$ and the neighborhood in $C - S$.
The first of the two graphs that is similar to $H$, call it $H_1$, is used to ensure that each vertex of $H$ meets its neighbors in $M \setminus S$, and the second of the two graphs, call it $H_2$, is used to ensure that each vertex of $H$ meets its neighbors in $C - S$.
More concretely, we guarantee that
\begin{enumerate}
    \item no vertex of $H_1$ is deleted by $S$,
    \item each bag that contains a vertex of $H_1$ contains $N_G(H_1) \setminus S$,
    \item each bag that contains a vertex of $H_1$ contains $N_{G-S}(H) \cap M$ (which includes $m$), and
    \item $\pw(H_1) \geq \pw(H)$.
\end{enumerate}
The existence of graph $H_1$ follows from our marking scheme; $H_1 = G[V_{\hat z}^\FF]$ for a marked node~$\hat z$.

The second graph $H_2$ is going to be $G[V_{z'}^\FF]$ for a node $z'$ that is a sibling of node $z$ in the elimination forest $\FF$.
We ensure that
\begin{enumerate}
    \item no vertex of $H_2$ is deleted by $S$,
    \item each bag that contains a vertex of $H_2$ contains $N_G(H_2) \setminus S$,
    \item $z'$ has the same ancestor-type as $z$,
    \item $\pw(H_2) \geq \pw(H)$, and
    \item $m$ is a neighbor of $H_2$.
\end{enumerate}
The existence of $H_2$ follows from our choice of $z$ as one of many siblings with similar properties.

Having these graphs $H_1$, $H_2$ at hand, we eventually find a bag $X_i$ that we can copy to embed (a part of) $H$.
Since the details are rather involved, we do not present them here.
Overall, we obtain \cref{thm:treedepth_case}.
We strengthen the result by actually allowing $\FF$ to be an $\HH$-elimination tree of $C$ for any graph class $\HH$.
The bound on the number of $\FF$-neighbors is expressed in terms of a function $\degreeBoundTreedepthCase(x) \in O(x^2)$, where the $O$-notation hides constant factors depending on the constants $\eta, \beta$.

\begin{restatable}{lemma}{thmTreedepthCase}
    \label{thm:treedepth_case}
    Let $(G,k,M)$ be an instance of \pwEtaDeletionEta{}, and $C \subseteq G- M$ be a near-protrusion of $(G,k)$.
    Let $\mathcal{F}$ be a nice $\HH$-elimination tree of $C$ of depth at most $\beta$ for some arbitrary graph class $\HH$, and assume that some vertex $m \in N_G(C)$ has more than $\degreeBoundTreedepthCase(|N_G(C)|)$ $\FF$-neighbors.

    \cref{rule:degree_reduction_elimination_forest} \nameref{rule:degree_reduction_elimination_forest}  runs in polynomial time, and given $(G,k,M),C,\FF$ as input, the algorithm outputs an instance $(G',k,M)$ of \pwEtaDeletionEta{} equivalent to $(G,k,M)$ where $G'$ is a proper subgraph of $G$.
\end{restatable}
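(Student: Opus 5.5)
We follow the irrelevant-edge strategy sketched in the technical overview. Fix $C$, $\FF=(F,\{Y_n\})$ and $m$ as in the statement, and write $N=N_G(C)$.

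\textbf{Marking.} For every $X\in\binom{N}{2}\cup\binom{N}{1}$ and every depth $d\le\beta$, mark $\mu=\mu(\eta,\beta)$ nodes $n$ at depth $d$ with $X\subseteq N_G(V_n^\FF)$, choosing those of largest $\pw(G[V_n^\FF])$. The number of marked nodes is $O(|N|^2)$. Set $\degreeBoundTreedepthCase(|N|)$ to be this number times a large constant $\alpha(\eta,\beta)$.

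\textbf{Finding the edge.} Suppose $m$ has more $\FF$-neighbors than this bound. Restrict attention to the subtree of $F$ spanned by $m$'s $\FF$-neighbors and their ancestors. This subtree has depth at most $\beta$, and fewer marked nodes than $\FF$-neighbors by a factor of $\alpha$. A pigeonhole argument over depths then yields a node $p$ with at least $\alpha'$ children $c$ that are unmarked, none of whose descendants are marked, and with $m\in N_G(V_c^\FF)$. Ancestor-types are subsets of $\anc(p)\cup\{p\}$, which has size at most $\beta$, so there are at most $2^{\beta}$ of them. Hence many of these children share an ancestor-type. Among those, pick $z$ minimizing $\pw(G[V_z^\FF])$, let $v\in N(m)\cap V_z^\FF$, and output $G'=G-mv$. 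This takes polynomial time, since pathwidth of these bounded-width pieces can be computed by a fixed-parameter algorithm.

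\textbf{Correctness.} The forward direction is immediate because $G'$ is a subgraph of $G$. For the backward direction, take an optimal solution $S$ of $G'$ and a width-$\eta$ path decomposition $\PP$ of $G'-S$. Assume $m,v\notin S$ and that $m$ and $v$ never share a bag. Since $C$ is a near-protrusion, $S$ meets $C$ in at most $3(\eta+1)$ vertices and $|S\cap N|$ is controlled. Choosing $\mu,\alpha$ large then leaves the following objects untouched by $S$:
\begin{itemize}
\item a marked node $\hat z$ for the pair $\{m,m'\}$, for suitable $m'$, giving $H_1=G[V_{\hat z}^\FF]$;
\item a sibling $z'$ of $z$ with the same ancestor-type, giving $H_2=G[V_{z'}^\FF]$.
\end{itemize}
Because there are many vertex-disjoint paths through such untouched pieces, pairs of vertices of $(N_G(V_z^\FF))\setminus S$ must co-occur in bags. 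By the Helly property, their whole neighbourhood appears in a common bag.

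As in the proof of \cref{thm:reduce_ccs}, at most $2(|N|+\beta)(\eta+1)$ vertices appear in bags that introduce or forget a neighbourhood vertex. Therefore some untouched copy has all its bags containing its full neighbourhood, with $m$ included for $H_1$ and the ancestor-type included for $H_2$. Let $H$ be the component of $G[V_z^\FF\setminus S]$ containing $v$. Remove $H$ from $\PP$, then reinsert it next to the copy: duplicate a bag containing the copy and replace the copy's vertices by a width-$\pw(H)$ decomposition of $H$ together with the neighbourhood of $H$. This keeps the width at most $\eta$, because $\pw(H)\le\pw(H_i)$ and $N_{G-S}(H)\subseteq N(\text{copy})$. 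The new decomposition makes $m$ and $v$ co-occur, so it is a path decomposition of $G-S$.

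\textbf{Main obstacle.} The neighbourhood of $H$ is split between $M$ and the ancestors in $C$, and no single copy dominates both parts; this is exactly why both $H_1$ and $H_2$ are needed. The hardest step is to combine them into one bag containing $N_{G-S}(H)$ and having enough room. I would first try the following. Take the interval of $\PP$ where $H_2$ lives: it contains the ancestor part. Show that it must also contain $m$, using the edges from $m$ to $H_2$ together with the Helly argument applied to $m$ and the ancestor-type vertices, which are forced to co-occur via $H_1$. Then insert $H$ there, using that $H_2$ spans $\pw(H_2)+1\ge\pw(H)+1$ vertices of some bag.
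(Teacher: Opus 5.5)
Your outline up to the choice of $z$, $v$ and the two ``twins'' $H_1,H_2$ matches the paper. The final re-embedding step, however, has a genuine gap. You move the \emph{whole} component $H$ of $G[V_z\setminus S]$ containing $v$ into copies of a single bag, and this needs a bag containing all of $N_{G-S}(H)$. That set consists of ancestor-type vertices $A$ of $z$ together with vertices of $N_G(C)$ adjacent to $V_z$. Among the latter, only $m$ is guaranteed to be adjacent to the siblings of $z$.

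Your Helly claim that the whole neighbourhood of $V_z\setminus S$ lies in a common bag is therefore unjustified. Your pair-marking only ranges over pairs in $N$. So for $a\in A$ and $x\in N_G(C)\setminus\{m\}$ adjacent to $V_z$, nothing forces $a$ and $x$ to share a bag; they may, for instance, attach to opposite ends of a long path inside $V_z$. Your proposed fix, inserting $H$ in the interval of $H_2=G[V_{z'}]$, fails for exactly such $x$, which need not be adjacent to $V_{z'}$. (That this interval contains $m$ is immediate from stability and is not the difficulty.) Symmetrically, $H_1$ does not see $A$.

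The missing idea is to re-embed only part of $H$. The paper sets $M_z=N_{G-S}(V_z\setminus S)\cap N_G(C)$, which contains $m$ and lies in a common bag by the pair-marking at the depth of $z$. For $i<\ell_\exists(\{v\})$ it moves only the \emph{rest} $R_i$: the component containing $v$ of $G'[V_z\setminus(S\cup X_1\cup\dots\cup X_i)]$. Every neighbour of $R_i$ that is introduced at or before bag $i$ then automatically lies in $X_i$. The proof then splits on whether $\ell_\forall(M_z)\le\ell_\exists(V_{z'})$.
\begin{itemize}
\item If so, take a bag of $z'$ holding $\pw(G[V_{z'}])+1$ of its vertices. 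It contains $A\setminus S$ by stability, and all of $M_z$ has already been introduced by then.
\item Otherwise, take such a bag of $\hat z$, which contains all of $M_z$. Here $A$ is introduced before $\ell_\exists(V_{z'})<\ell_\forall(M_z)\le i$.
\end{itemize}

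For the bags of $\hat z$ to contain all of $M_z$, and not just $m$ and some $m'$, $\hat z$ must come from the pair $\{m_1,m_2\}$. Here $m_1$ is the vertex introduced at $\ell_\forall(M_z)$ and $m_2$ is the vertex forgotten right after $r_\forall(M_z)$.

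There are two further problems with your stability step for $H_1$. First, the nodes marked for that pair lie in different branches of $F$, so your bag-counting argument does not give stability directly. Second, the count must use $|N_G(C)\setminus S|\le 3(\eta+1)$ rather than $|N|$, since otherwise it is not a constant compatible with a constant number of marks. The paper first passes to siblings via \cref{thm:tree_lemma_to_find_siblings} and picks a stable child $\hat z$ of $\hat p$. It then has to rule out that $\hat z$ is an ancestor of $z$, which it does using the assumption that $m$ and $v$ share no bag.
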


\subparagraph*{(2) Bounding the number of neighbors in caterpillars.}
Let $m \in M$, and let $C$ be a caterpillar graph corresponding to a leaf in the elimination forest $\FF$.
Our task is to bound the number of neighbors that $m$ has in $C$.
We provide the following much stronger result that can immediately bound the size of caterpillars in terms of their neighborhood size.
Here, we again express the obtained bounds in terms of a function $f_{\ref{thm:caterpillar_case}}(x) \in O(x^{8})$, where the $O$-notation hides constant factors depending on the constant $\eta$.

\begin{restatable}{lemma}{thmCaterpillarCase}
    \label{thm:caterpillar_case}
    Define $f_{\ref{thm:caterpillar_case}}(x) = (8(\eta+1)(x+5))^{8}$.
    Let $(G,k,M)$ be an instance of \pwEtaDeletionEta, $C$ be an induced subgraph of $G-M$ that is a caterpillar
    satisfying $|V(C)| > \BoundCaterpillar$.

    \cref{rule:degree_reduction_caterpillar}~\nameref{rule:degree_reduction_caterpillar}  runs in polynomial time and,
    given $(G,k,M)$, $C$ as input, it outputs an equivalent instance $(G',k,M)$ where
    $G'$ is a proper minor of $G$.
\end{restatable}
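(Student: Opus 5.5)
Since $|V(C)|$ is huge compared to $x = |N_G(C)|$ (note $N_G(C)$ includes neighbours inside $G-M$ as well as in $M$), I would locate a long stretch of the caterpillar that is ``locally uniform'' and compress it. Write the caterpillar as a spine $P = p_1 p_2 \dots p_s$ with pendant leaves. The plan has two cases.

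\textbf{Case 1: some spine vertex carries many leaves.} Suppose some spine vertex $p_i$ has more than roughly $(8(\eta+1)(x+5))^{4}$ pendant leaves. Group these leaves by their neighbourhood in $N_G(C)$. The ``similar components'' argument from \cref{thm:reduce_ccs} handles the case where leaves have neighbours outside $C$: repeated copies force co-occurrence in bags. If no group is large, then few leaves have outside neighbours at all, so many leaves are pure pendants attached only to $p_i$. Deleting one of these pendant leaves is safe. Indeed, given a width-$\eta$ path decomposition of $G'-S$, a deleted pendant leaf $u$ can be re-inserted via a copy of a bag containing $p_i$ together with another untouched sibling. This works whenever $\eta \ge 1$; the case $\eta = 0$ is vertex cover and is handled by the same degree counting. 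I expect the real work is when the leaves do have outside neighbours. There I would mark, for each $X \in \binom{N_G(C)}{2}\cup\binom{N_G(C)}{1}$, about $2x(\eta+2)+1$ leaves adjacent to $X$. Any unmarked leaf $u$ then has a ``blueprint'' twin surviving $S$, since $|S\cap (N_G(C)\cup C)|$ is bounded by near-protrusion-like arguments or simply $|S| \le |M|$. Deleting $u$, or contracting it into $p_i$, is then safe by the same Helly-type argument as in \cref{thm:reduce_ccs}.

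\textbf{Case 2: every spine vertex has boundedly many leaves.} Then the spine is long, $s \gtrsim (8(\eta+1)(x+5))^{4}$. Consider the set $A$ of vertices of $C$ having a neighbour in $N_G(C)$. The spine positions attached to $A$ number at most about $x \cdot |A|$. If many spine vertices are attached to $A$, I fall back to a marking scheme on pairs of boundary vertices, as in Case 1, to find an irrelevant edge or vertex. Otherwise $P$ contains a long subpath $Q$, of length more than $8(\eta+1)(x+5)$, none of whose vertices or leaves touch $N_G(C)$. On $Q$ I contract a spine edge $p_jp_{j+1}$ in the middle of $Q$, after deleting its pendant leaves and reattaching them to a neighbour, keeping $k$ unchanged. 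The resulting $G'$ is a proper minor of $G$, so the forward direction is immediate because pathwidth is minor-monotone and the solution maps over.

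For the backward direction, take a solution $S$ of $G'$ and a decomposition $\PP$. Since at most $k \le |M|$ vertices are deleted, and at most $2x$ bags introduce or forget boundary vertices, a long piece of $Q$ survives untouched by $S$ and sits between consecutive ``events''. Across that stretch the bag contents outside $Q$ are constant. There the caterpillar segment is routed through a monotone sequence of bags, so I can subdivide it by duplicating a bag and inserting the uncontracted edge together with its leaves. This uses width at most $\max(\eta,1)$, and re-inserting a pendant requires only one extra slot beside its spine vertex. The choice of exponent $8$ comes from nesting two pigeonhole steps, each squared.

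\textbf{Main obstacle.} The main obstacle is this re-expansion step. Showing that within a long boundary-free stretch of $Q$ the path decomposition can be ``straightened'' without raising the width needs a careful exchange argument. I would first try to prove that in $\PP$ the subpath restricted to the relevant bag interval can be replaced by a canonical width-1 caterpillar decomposition appended to a fixed background bag. This is justified because $Q$ is a separator-like segment whose only attachments are its two endpoints, each a single spine vertex.
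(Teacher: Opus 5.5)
Your Case~1 and the boundary-free part of Case~2 are essentially what the paper does. Pendant leaves are removed by applying \cref{thm:reduce_ccs} to the leaves of a star, and a long spine stretch without outside neighbours is shortened by contracting its middle edge (\cref{lem:contract_spine}).

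\textbf{The gap.} The genuine gap is the single sentence ``if many spine vertices are attached to $A$, I fall back to a marking scheme on pairs of boundary vertices, as in Case~1''. This case is unavoidable and is the heart of the lemma. Once $\Delta(C)$ is bounded, nothing bounds $|A| = |N_G(G-C)|$ in terms of $x = |N_G(C)|$. Think of a long path all of whose vertices are adjacent to one $m \in M$; this is exactly the situation in which the lemma is invoked in \cref{thm:reduce_modulator_degree}. The paper needs all of \cref{lem:delete_edge_to_caterpillar} for this case.

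\textbf{Why the argument of \cref{thm:reduce_ccs} does not transfer.} That argument works because the marked objects are whole connected components of $G-M'$ whose entire neighbourhood lies in $M'$. A deleted component can therefore be re-inserted wholesale into copies of a bag of a stable twin. A boundary vertex $v^\star$ of the caterpillar is not such an object. Suppose you delete its edges to $N_G(C)$ and are handed a decomposition of $G'-S'$ in which $v^\star$ lies to the right of every bag containing its outside neighbourhood. You cannot move $v^\star$ alone: its caterpillar neighbours drag along the whole piece $C'$ of the caterpillar containing $v^\star$, and $C'$ may run arbitrarily far through the decomposition. A twin with the same outside neighbourhood does not help either. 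A bag containing the twin and the outside neighbourhood need not contain the caterpillar neighbours of $v^\star$, nor have a free slot.

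\textbf{What the paper supplies and your plan lacks.}
\begin{itemize}
\item A two-level use of \cref{lem:popular_neighborhood}: first over blocks $B_i$, each containing many stars touching $N_G(C)$, then over the stars inside a popular block. This chooses $v^\star$ so that the outside neighbourhoods $M_0 \subseteq M_1$ of its star and of its block are forced into a common bag.
\item \cref{lem:pd_hierarchy}, applied to stars and to blocks. It yields spine vertices $s_\ell, s_r$ on both sides of $v^\star$ lying strictly inside the interval of bags containing $M_0$. It also yields forget bags with at most $\eta$ vertices containing $M_0$ and $M_1$, respectively.
\item The one-sided embedding of \cref{lem:one_sided_embedding}: a decomposition of $C'[R]$ of width $|L\cap R|$ whose first bag is the separator $L\cap R$, proved via the $K_3$/subdivided-claw characterisation of pathwidth-$1$ graphs. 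It is used to re-lay out the part of $C'$ beyond a cut bag $X_J$ alongside $M_0$.
\item An exchange argument for the case where $S'$ cuts $v^\star$ off from both $s_\ell$ and $s_r$.
\end{itemize}

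\textbf{Two smaller problems.} First, you control $S$ via $|S|\le k\le |M|$. But all thresholds depend on $x$ only, $|M|$ is not bounded in terms of $x$, and $C$ is not assumed to be a near-protrusion. The right tool is \cref{obs:mostly_intact}. It lets you assume $|S\cap L|<|N_G(L)|$ for the leaves $L$ of a star. It also lets you assume $|S\cap V(C[s_2,s_{h-1}])|\le 1$ for the interior of a boundary-free stretch, whose neighbourhood is $\{s_1,s_h\}$.

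Second, the claim that bag contents outside $Q$ are constant between ``events'' is false as stated, since non-neighbours of $Q$ enter and leave freely. The paper instead takes two spine vertices of the stretch that share no bag. Every bag between them then contains a vertex of the segment in between. Deleting that segment frees the slot needed to insert a width-$1$ decomposition from \cref{obs:natural_pw_1}.
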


Therefore, for each $m \in M$, we bound the number of neighbors that $m$ has in caterpillars corresponding to leaves of $\FF$, as well as the number of $\FF$-neighbors of $m$.
As a result, we bound the number of neighbors each $m \in M$ has in $G - M$, which bounds the total degree of $M$.

This concludes the construction of our kernel since now we have all the properties to apply the high-level strategy mentioned earlier.
In our actual kernel, we do not rely on the big hammer of protrusion replacers in the following sense: we create our own explicit approach to reduce the size of protrusions.
This allows us to obtain a uniform polynomial kernel with exact knowledge of the constants involved in the kernel size. 
\section{Preliminaries}
\label{sec:prelims}
\subsection{Basic Notation}

We use $\nat$ to denote the set of natural numbers, and $0 \in \nat$.
For integers $c_1,c_2 \in \mathbb{Z}$, we write $\range[c_1]{c_2} = \{x \in \mathbb{Z} \mid c_1 \leq x \leq c_2\}$ and use $[c_2]$ as a shorthand for $[1, c_2]$;
we call $[c_1, c_2]$ an \textit{interval}.
For a set $X$ and $c \in \nat$ we write $\binom{X}{c}$ for the set of all subsets of $X$ of size exactly $c$.
We write $\binom{X}{\leq c}$ for the set of all subsets of $X$ of size at most $c$.
For integers $c_1,c_2 \in \nat$ we write $\binom{c_1}{\leq c_2}$ for the number $\big| \binom{\range{c_1}}{\leq c_2} \big|$.

\subsection{Graphs}

For graphs $G,H$ we write $H \subseteq G$ if $H$ is a subgraph of $G$.
A subgraph $H \subseteq G$ is \emph{proper} if $H \neq G$.
Usually we assume that graphs have at least one vertex, but in some cases, for example when preprocessing trivial problem instances, it is convenient to also allow the empty graph to be output.
Given a graph $G$ and a set $S \subseteq V(G)$ we denote $N_G(S) = \bigcup_{v \in S} N_G(v) \setminus S$.
When $H \subseteq G$ we set $N_G(H) = N_G(V(H))$.
For graphs $G,H$ the graph $G \cup H$ is the graph with vertex set $V(G) \cup V(H)$ and edge set $E(G) \cup E(H)$.
For graph $G$ and set $X \subseteq V(G)$ we define $G - X = G[V(G) \setminus X]$.
We use $\Delta(G)$ to denote the maximum degree of $G$.

Similarly, for $G$ and $H \subseteq G$ we set $G - H = G - V(H)$.

Given a graph $G$ and edge $uv \in E(G)$, the operation of \emph{contracting the edge $uv$} results in the graph $G'$ obtained from $G$ by adding a fresh vertex $v'$ with neighborhood $N_G(v) \cup N_G(u)$,
and then deleting the vertices $u$ and $v$.
A graph $H$ is a \emph{minor} of a graph $G$ if (a graph isomorphic to) $H$ can be obtained from $G$ by (repeatedly) deleting vertices, edges, and contracting edges.
A minor $H$ of $G$ is a \emph{proper minor} if $H \neq G$.

\subsection{Path Decompositions}

We continue by defining the notion of a path decomposition (see~\cite[Chapter 7]{cyganParameterizedAlgorithms2015} for a textbook introduction to this topic).

\begin{definition}[Path Decomposition and Pathwidth]
    \label{def:path_decomposition}
    A \emph{path decomposition} of a graph $G$ is a sequence of sets $\mathcal{P} = X_1,\dots,X_t$,  such that
    \begin{bracketenumerate}
        \item for each edge $uv \in E(G)$, there exists an index $i$ such that $\{u,v\} \subseteq X_i$, and
        \item for each $v \in V(G)$, there is a nonempty interval $\range[i]{j} \subseteq \range[1]{t}$ such that $v \in X_r$ if and only if $r \in \range[i]{j}$.
    \end{bracketenumerate}
    The width of decomposition $\mathcal{P}$ is $\max_{i \in [t]} |X_i| - 1$.
    The \emph{pathwidth} of $G$, denoted by $\pw(G)$, is the minimum width of any path decomposition of $G$.

    The sets $X_i$ ($i \in \range[1]{t}$) are usually referred to as \emph{bags}.
    We say that vertex $v \in V(G)$ is \emph{introduced} in bag $X_i$ where $i$ is the smallest index of a bag that contains $v$.
    Similarly, we say that $v$ is \emph{forgotten} in bag $X_i$ if $v \in X_{i-1}$ but not $v \in X_i$.

    Path decomposition $\mathcal{P}$ is \emph{nice} if each bag $X_i$ with $i > 1$ either fulfills $X_i \cup \{v\} = X_{i-1}$ for some $v \notin X_{i}$, in which case $X_i$ is a \emph{forget bag}, or $X_i \setminus \{v\} = X_{i-1}$ for some $v \in X_i$, in which case $X_i$ is an \emph{introduce bag}.
\end{definition}

It is a standard fact that when $H$ is a minor of $G$, then we have $\pw(H) \leq \pw(G)$, and if a graph $G$ is the disjoint union of graphs $H_1$ and $H_2$ we have $\pw(G) = \max\{\pw(H_1),\pw(H_2)\}$.
We will implicitly make use of these facts in the paper.
We also require the following standard
notion of a graph separation, see, for example, Cygan et al.~\cite[Section 7.6.1]{cyganParameterizedAlgorithms2015}.

\begin{definition}[Separation and Separators]
    \label{def:separation_separators}
    Let $G$ be a connected graph and $V_1, V_2 \subseteq V(G)$.
    We say that $(V_1, V_2)$ is a
    separation of $G$ of order $|V_1 \cap V_2|$ if $V_1 \cup V_2 = V(G)$ and there are no edges in $G$ between
    $V_1 \setminus V_2$ and $V_2 \setminus V_1$.
    The set $V_1 \cap V_2$ is the associated \emph{separator}.
\end{definition}

One of the key properties of path decompositions is that the intersection of subsequent bags, called \emph{adhesions}, give rise to separations.
See \cite[Lemma 7.1]{cyganParameterizedAlgorithms2015} for the following lemma.

\begin{lemma}[{\cite[Lemma 7.1]{cyganParameterizedAlgorithms2015}}]
    \label{thm:adhesions_are_separators}
    Let $G$ be a graph with path decomposition $X_1,\dots,X_t$.
    For any $i \in [t-1]$ let $L_i = \bigcup_{j \in [i]} X_j$, and $R_i = \bigcup_{j \in [i+1,t]} X_j$.
    Then, $(L_i,R_i)$ is a separation of $G$ with separator $X_i \cap X_{i+1}$.
\end{lemma}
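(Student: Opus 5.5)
We must prove that $(L_i,R_i)$ is a separation with separator $X_i \cap X_{i+1}$. The plan is to check the two defining conditions of \cref{def:separation_separators} directly from the interval property (2) and the edge property (1) of \cref{def:path_decomposition}.

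\emph{Covering.} Every vertex $v$ lies in some bag $X_j$, since its interval is nonempty. If $j\le i$ then $v\in L_i$, and otherwise $v\in R_i$. Hence $L_i\cup R_i=V(G)$.

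\emph{The separator.} Next we identify $L_i\cap R_i$.
\begin{itemize}
\item Clearly $X_i\cap X_{i+1}\subseteq L_i\cap R_i$.
\item Conversely, take $v\in L_i\cap R_i$. Then $v\in X_a$ for some $a\le i$ and $v\in X_b$ for some $b\ge i+1$.
\item Since the indices of bags containing $v$ form an interval $[a',b']$ with $a'\le a\le i<i+1\le b\le b'$, both $i$ and $i+1$ lie in it.
\item Therefore $v\in X_i\cap X_{i+1}$.
\end{itemize}
This also yields the order of the separation as $|X_i\cap X_{i+1}|$.

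\emph{No crossing edges.} Suppose $uv\in E(G)$ with $u\in L_i\setminus R_i$ and $v\in R_i\setminus L_i$.
\begin{itemize}
\item Since $u\notin R_i$, every bag containing $u$ has index at most $i$.
\item Since $v\notin L_i$, every bag containing $v$ has index at least $i+1$.
\item Property (1) gives a bag $X_j$ containing both $u$ and $v$. Then $j\le i$ and $j\ge i+1$, a contradiction.
\end{itemize}

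The only point requiring care is using the interval property correctly in the separator identification; the rest is immediate. (The definition of separation states $G$ connected, but connectivity is not used anywhere.)
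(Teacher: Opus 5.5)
Your proof is correct and is the standard argument. The paper gives no proof of its own; it cites Cygan et al.\ (Lemma 7.1), whose proof (stated there for tree decompositions) is yours specialized to paths: the interval (subtree) property forces every vertex of $L_i \cap R_i$ into both $X_i$ and $X_{i+1}$, and the edge-covering property rules out edges between $L_i \setminus R_i$ and $R_i \setminus L_i$. Your remark that connectivity of $G$ is never used is apt, since the paper applies this lemma to arbitrary, possibly disconnected, graphs, for instance in the observation that a connected subgraph occupies an interval of bags.
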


In our proofs, we use the following convenient notation that allows us to concisely obtain certain indices of bags of a path decomposition.

\begin{definition}[Set Indices]
    Let $G$ be a graph with path decomposition $\mathcal{P} = X_1,\dots,X_t$, and $Y$ be a set.
    Set $\II_\forall = \{i \in \range[1]{t} \mid Y \subseteq X_i\}$, and $\II_\exists = \{i \in \range[1]{t} \mid Y \cap X_i \neq \emptyset\}$.

    For $Q \in \{\exists,\forall\}$ we set $\ell_Q^{\mathcal{P}}(Y) = \min \II_Q$ and $r_Q^{\mathcal{P}}(Y) = \max \II_Q$.
    We note that these values might be undefined if $\II_Q = \emptyset$.
\end{definition}

Note that $\leftBagAll^{\mathcal{P}}(Y)$ is the lowest index of a bag that contains all vertices of $Y$, similarly $\leftBagAny^{\mathcal{P}}(Y)$ is the lowest index of a bag such that there exists a vertex of $Y$ in the bag.
It is well-known and easy to check that path decompositions fulfill the following property, we provide a proof purely for completeness.

\begin{observation}
    \label{thm:connected_subgraphs_have_interval}
    Let $G$ be a graph with path decomposition $\mathcal{P} = X_1,\dots,X_t$, and $H \subseteq G$ be a connected graph.
    Then, there exist $i,j \in \range[1]{t}$ with $i \leq j$ such that, for any bag $X_r$ of the path decomposition we have $X_r \cap V(H) \neq \emptyset$ if and only if $r \in \range[i]{j}$.
    Therefore, for each $i \in \range[\leftBagAny^\mathcal{P}(V(H))]{\rightBagAny^\mathcal{P}(V(H))}$
    we have $X_i \cap V(H) \neq \emptyset$.
\end{observation}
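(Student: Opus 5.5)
Set $i = \leftBagAny^{\mathcal{P}}(V(H))$ and $j = \rightBagAny^{\mathcal{P}}(V(H))$. Both are defined because $H$ has at least one vertex and every vertex lies in some bag. By definition, every $r$ with $X_r \cap V(H) \neq \emptyset$ satisfies $r \in \range[i]{j}$. So the whole statement reduces to the converse: every index $r \in \range[i]{j}$ has $X_r \cap V(H) \neq \emptyset$. The second sentence of the observation is then just a restatement of this with the chosen $i,j$.

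I will prove the converse by contradiction. Suppose some $r$ with $i < r < j$ has $X_r \cap V(H) = \emptyset$. The endpoints $r=i$ and $r=j$ cannot fail, by the choice of $i$ and $j$. Partition $V(H)$ into two sets:
\begin{itemize}
\item $A$, the vertices of $H$ whose interval of bags lies entirely before $r$;
\item $B$, the vertices of $H$ whose interval lies entirely after $r$.
\end{itemize}
Every vertex of $H$ falls into exactly one of these. Its bag indices form a nonempty interval by property (2) of \cref{def:path_decomposition}, and that interval avoids $r$. Moreover, $A \neq \emptyset$ because it contains a vertex occurring in $X_i$, and $B \neq \emptyset$ because it contains a vertex occurring in $X_j$.

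Since $H$ is connected, some edge $uv \in E(H) \subseteq E(G)$ has $u \in A$ and $v \in B$. By property (1) of \cref{def:path_decomposition}, some bag $X_s$ contains both $u$ and $v$. But $u \in X_s$ forces $s < r$, while $v \in X_s$ forces $s > r$, a contradiction.

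Alternatively, one can phrase the argument via \cref{thm:adhesions_are_separators}: the adhesion $X_{r-1} \cap X_r$ separates vertices appearing only before $r$ from those appearing only after $r$, and connectivity of $H$ yields the same contradiction. The direct argument above is simpler, though.

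There is no real obstacle in this proof. The only point needing care is that the partition into $A$ and $B$ is well defined, which rests on the interval property of each individual vertex.
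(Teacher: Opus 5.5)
Your proof is correct and follows essentially the same argument as the paper: assume some bag strictly between the first and last bags meeting $H$ misses $V(H)$, and derive that $H$ falls apart, contradicting connectivity. The only difference is in how the split is obtained. The paper invokes the adhesion-separator lemma (\cref{thm:adhesions_are_separators}) for $X_y \cap X_{y+1}$, whereas you argue directly from the interval and edge-covering properties of \cref{def:path_decomposition}. Your version is slightly more self-contained, and it avoids the small unstated check that the two chosen vertices lie strictly on opposite sides of that separation.
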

\begin{proof}
    Towards a contradiction, assume that the observation does not hold.
    Hence, for connected $H$, there are three distinct indices $x < y < z$, such that $X_x \cap V(H) \neq \emptyset$, $X_y \cap V(H) = \emptyset$, and $X_z \cap V(H) \neq \emptyset$.
    Let $v$ be a vertex in $X_x \cap V(H)$ and $w$ be a vertex in $X_z \cap V(H)$, by the properties of a path decomposition and the fact that $X_y$ contains no vertex of $H$ we have $v \neq w$.
    By \cref{thm:adhesions_are_separators}, we have that $X_y \cap X_{y+1}$ is a separator of $G$.
    This means that each path from $v$ to $w$ must contain a vertex of $X_y \cap X_{y+1}$, and therefore a vertex of $X_y$.
    However, since $V(H) \cap X_y = \emptyset$, this implies that each path from $v$ to $w$ must use a vertex outside $H$.
    This contradicts that $H$ is connected.
\end{proof}

Similarly, a basic property is that every clique is a subset of some bag~\cite[Lemma 3.1]{bodlaenderPathwidthTreewidthCographs1993a}.
This directly follows from the well-known Helly's theorem for trees, which we quickly recap here.

\begin{theorem}[Helly's Theorem for Trees {\cite[Theorem 4.1]{hornThreeResultsTrees1972}}]
    \label{thm:helly_property_of_trees}
    Let $T$ be a tree, and $\mathcal{S} \subseteq 2^{V(T)}$ such that for each $S \in \mathcal{S}$ the graph $T[S]$ is connected, and such that for any $S,S' \in \mathcal{S}$ we have $S \cap S' \neq \emptyset$.
    Then, $\bigcap_{S \in \mathcal{S}} S \neq \emptyset$.
\end{theorem}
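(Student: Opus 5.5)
The plan is the standard proof of the Helly property for subtrees, by induction on $|\mathcal{S}|$. We may assume $\mathcal{S}$ is finite and nonempty, since $V(T)$ is finite.

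\emph{Base cases.} For $|\mathcal{S}| \le 2$ the claim is immediate from the pairwise-intersection hypothesis.

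\emph{Key step: three subtrees.} I would first establish the case of three subtrees $S_1,S_2,S_3$. Pick $a \in S_1 \cap S_2$, $b \in S_2 \cap S_3$ and $c \in S_1 \cap S_3$. Since $T$ is a tree, the three unique paths $P_{ab}$, $P_{bc}$ and $P_{ac}$ share a common vertex $w$, the median of $a,b,c$. Because each $T[S_i]$ is connected and $T$ has unique paths, $S_i$ contains the path between any two of its vertices. Hence $P_{ac} \subseteq S_1$, $P_{ab} \subseteq S_2$ and $P_{bc} \subseteq S_3$, so $w \in S_1 \cap S_2 \cap S_3$. The existence of the median $w$ is the main point to justify. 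One way is to take $w$ to be the vertex of $P_{ab}$ closest to $c$; then the paths from $c$ to $a$ and from $c$ to $b$ both pass through $w$ by uniqueness of paths.

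\emph{Induction step.} Let $|\mathcal{S}| \ge 3$ and replace two members $S_1, S_2$ by $S_1 \cap S_2$. This new set still induces a connected subgraph, because an intersection of subtrees of a tree is a subtree: the unique path between any two common vertices lies in both. It is nonempty by hypothesis. By the three-subtree case, it intersects every other $S \in \mathcal{S}$. The new family therefore satisfies the hypotheses, has one fewer member, and by induction has nonempty total intersection. This intersection equals $\bigcap_{S \in \mathcal{S}} S$, which proves the theorem.

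The only real obstacle is the three-set case, that is, the median argument; everything else is bookkeeping.
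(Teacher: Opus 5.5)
Your proof is correct and complete. The three-set case works for two reasons. First, the median $w$ of $a,b,c$, taken as the vertex of $P_{ab}$ closest to $c$, lies on all three paths: the $c$--$w$ path meets $P_{ab}$ only in $w$, so it extends to the unique $c$--$a$ and $c$--$b$ paths. Second, a connected induced subgraph of a tree contains the unique path between any two of its vertices, so each $S_i$ contains the relevant path and hence $w$. In the induction step, replacing $S_1,S_2$ by $S_1\cap S_2$ keeps the set connected and pairwise intersecting with the others, and leaves the total intersection unchanged.

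The paper does not prove this statement; it imports it from Horn. So there is no in-paper argument to compare against.

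Two small bookkeeping points. The new family can be two members smaller, for instance if $S_1\cap S_2$ already belongs to $\mathcal{S}$, so phrase the induction as strong induction on $|\mathcal{S}|$. Also, nonemptiness of a single member relies on reading the hypothesis as including $S=S'$.

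The paper only ever applies the theorem with $T$ a path (the sequence of bags) and the sets $I_u$ being intervals of bag indices. There, pairwise intersection of intervals $[l_u,r_u]$ means $l_u\le r_{u'}$ for all $u,u'$, so $\max_u l_u$ lies in every interval and no median lemma is needed. Your argument gives the full statement for arbitrary trees, which is what the theorem as written asserts.
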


Using this theorem in the same way as in the proof of \cite[Lemma 3.1]{bodlaenderPathwidthTreewidthCographs1993a}, we prove the following simple and well-known observation (see, e.g.,~\cite[Lemma 4.1]{bodlaenderLineartimeAlgorithmFinding1996}), we present a proof purely for completeness.
\begin{observation}
    \label{obs:disjoint_paths_yield_clique}
    Let $G$ be a graph with path decomposition $X_1,\dots,X_t$ of width at most $\eta$, and $Y \subseteq V(G)$ be a set of vertices, such that for any distinct $u,v \in Y$ there are $\eta + 2$ internally vertex-disjoint $u$-$v$ paths in $G$.
    Then, there exists a bag $X_i$ with $Y \subseteq X_i$.
\end{observation}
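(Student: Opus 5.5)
We want a single bag of the decomposition that contains all of $Y$. The plan is to apply Helly's Theorem for Trees (\cref{thm:helly_property_of_trees}) to the path $T$ on nodes $1,\dots,t$ with edges $\{i,i+1\}$, viewing the decomposition as indexed by this path.

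For each $v \in Y$ let $S_v = \{i \in [t] \mid v \in X_i\}$. By property (2) of \cref{def:path_decomposition}, $S_v$ is a nonempty interval, so $T[S_v]$ is connected. If $|Y| \leq 1$ the claim is immediate: take any bag containing $v$, or any bag at all if $Y=\emptyset$. So it remains to show that $S_u \cap S_v \neq \emptyset$ for all distinct $u,v \in Y$. Once this holds, \cref{thm:helly_property_of_trees} gives an index $i \in \bigcap_{v\in Y} S_v$, and then $Y \subseteq X_i$.

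The main step is this pairwise intersection, which we prove by contradiction. Suppose $S_u \cap S_v = \emptyset$. By symmetry, every index in $S_u$ is smaller than every index in $S_v$. Let $j = \max S_u$, so $j < \min S_v \le t$ and $j \in [t-1]$. Consider the separation $(L_j,R_j)$ from \cref{thm:adhesions_are_separators}, with separator $Z = X_j \cap X_{j+1}$. We have $u \in L_j$, and $u \notin X_{j+1}$ because $j = \max S_u$, so $u \in L_j \setminus R_j$. Also $v \in R_j$, and $v \notin L_j$ because $\min S_v > j$, so $v \in R_j \setminus L_j$. Since there are no edges between $L_j\setminus R_j$ and $R_j\setminus L_j$, every $u$-$v$ path contains an internal vertex in $Z$. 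The internally vertex-disjoint paths use pairwise distinct internal vertices, so $|Z| \ge \eta+2$. But $Z \subseteq X_j$ and $|X_j| \le \eta+1$, a contradiction.

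The only delicate point is checking that $u$ and $v$ lie strictly on opposite sides of the separation, so that the $\eta+2$ paths each need their own vertex of $Z$. This follows directly from the choice $j = \max S_u$ as shown above, so there is no real obstacle.
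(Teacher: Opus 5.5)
Your proof is correct and follows essentially the same route as the paper. Both show that any two vertices of $Y$ share a bag, since an adhesion $X_j \cap X_{j+1}$ of size at most $\eta+1$ cannot separate two vertices joined by $\eta+2$ internally vertex-disjoint paths (\cref{thm:adhesions_are_separators}), and both then apply Helly's theorem for trees to the index intervals $S_v$. Your version is simply more explicit than the paper's: you choose $j=\max S_u$, check that $u$ and $v$ lie strictly on opposite sides of the separation, and handle $|Y|\le 1$ separately.
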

\begin{proof}
    We show that for any distinct $u,v \in Y$, there must be a bag $X_j$ that contains $u$ and $v$.
    As $u,v$ have $\eta + 2$ internally vertex-disjoint paths between them, there is no set of size at most $\eta + 1$ that separates $u$ and $v$.
    If $u$ and $v$ did not appear in a bag together, then due to \cref{thm:adhesions_are_separators}, some adhesion would need to separate $u$ and $v$, which is not possible since each bag, and therefore also each adhesion, contains at most $\eta + 1$ vertices.

    Now, for each vertex $u \in Y$ let $I_u$ be the indices of the bags that contain $u$.
    Applying \cref{thm:helly_property_of_trees} on the path given by the path decomposition and these sets $I_u$ yields the existence of bag containing all of $Y$.
\end{proof}

We remark that path decompositions can be computed quickly using a famous algorithm by Bodlaender~\cite{bodlaenderLineartimeAlgorithmFinding1996}.
\begin{theorem}[{{\cite[Theorem 7.2]{bodlaenderLineartimeAlgorithmFinding1996}}}]
    \label{thm:path_decomposition_algorithm}
    Let $\eta$ be a constant.
    There is a linear-time algorithm that, given a graph $G$ with pathwidth at most $\eta$ as input, computes a path decomposition of $G$ of minimum width.
\end{theorem}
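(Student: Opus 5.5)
The plan is to follow Bodlaender's classical two-phase scheme. First we compute, in linear time, \emph{some} tree decomposition of $G$ whose width is bounded by a function of $\eta$. Then we run a dynamic programming algorithm over this decomposition that decides whether $\pw(G) \leq w$ for each $w \leq \eta$ and constructs an optimal path decomposition. Since $\pw(G) \leq \eta$ implies $\tw(G) \leq \eta$, the first phase can be carried out with treewidth bound $\eta$. The second phase only ever works with a decomposition of width $O(\eta)$, so every table has size bounded by a function of $\eta$ and each node is processed in constant time.

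For the first phase I would use the recursive reduction of Bodlaender. Graphs of treewidth at most $\eta$ have at most $\eta n$ edges. A counting argument then shows that either there is a maximal matching of size $\Omega(n/\eta^{c})$, or there are $\Omega(n/\eta^{c})$ ``friendly'' low-degree vertices.
\begin{itemize}
\item In the matching case, we contract the matching to obtain a minor $G'$ with at most $(1-\Omega(\eta^{-c}))n$ vertices, recurse on $G'$, and uncontract. This gives a decomposition of $G$ of width at most $2\eta+1$.
\item In the other case, we build the improved graph (adding edges between pairs with more than $\eta$ common neighbours of low degree) and remove the simplicial low-degree vertices. We recurse on the rest and reinsert each removed vertex as a bag containing its neighbourhood, which is a clique in the improved graph.
\end{itemize}
In both cases the recursive result has width $2\eta+1$, and we reduce it back to width $\eta$ using the second phase (for treewidth) applied to this width-$(2\eta+1)$ decomposition. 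The recursion shrinks $n$ by a constant factor, so the total time is a geometric series, hence linear.

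For the second phase, following Bodlaender and Kloks, we root a nice tree decomposition of width $O(\eta)$ and describe a partial path decomposition of $G_x$ (the graph below node $x$) by its \emph{characteristic}: the sequence of restrictions of bags to $X_x$ together with compressed typical sequences of the partial bag sizes. The key lemma is that the set of characteristics of width-$\leq \eta$ partial path decompositions at each node is a family of size bounded by a function of $\eta$. This set can be computed from the children's sets for introduce, forget and join nodes. A full characteristic at the root certifies $\pw(G)\leq\eta$, and by backtracking through the tables we reconstruct an actual decomposition. Trying $w=0,1,\dots,\eta$ gives minimum width.

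The main obstacle is the join node together with the typical-sequence compression. One must show that interleaving two partial path decompositions (merging their bag sequences) can be done optimally with respect to the compressed information alone, and that the number of typical sequences stays bounded. My first approach would be to prove the ``domination'' lemma: if characteristic $c$ dominates $c'$ (pointwise smaller typical sequences after suitable extensions), then any completion valid for $c'$ remains valid for $c$. This lemma justifies keeping only the finite set of minimal characteristics at each node.
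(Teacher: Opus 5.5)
Your outline is the argument behind the cited result: Bodlaender's linear-time reduction (contract a large matching, or delete low-degree vertices that are simplicial in the improved graph), with the treewidth version of the Bodlaender--Kloks dynamic program bringing width $2\eta+1$ back to $\eta$, followed by the Bodlaender--Kloks characteristic/typical-sequence dynamic program for pathwidth run for each $w \le \eta$. The paper does not reprove any of this and simply cites Bodlaender's theorem. One terminological slip: in Bodlaender's dichotomy the \emph{friendly} vertices are the ones that produce the large matching, while the other case produces many simplicial vertices of the improved graph, and reinserting those already gives width $\eta$, so the $2\eta+1$ detour is only needed in the matching case.
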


We also make use of the following basic observation, letting us assume that
a solution to an instance of \pwEtaDeletionDistToG does not disturb the structure
of low-pathwidth components too much.

\begin{observation}\label{obs:mostly_intact}
    Let $G$ be a graph, $\eta \geq 0$ an integer, $H$ an induced subgraph of $G$ with $\pw(H) \leq \eta$ and $N_G(H) \neq \emptyset$,
    and $S \subseteq V(G)$ be such that $\pw(G-S) \leq \eta$.
    Then there exists $S' \subseteq V(G)$ such that $\pw(G-S') \leq \eta$, where $|S'| \leq |S|$ and $|S' \cap V(H)| < |N_G(H)|$.
\end{observation}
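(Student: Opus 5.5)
If $|S \cap V(H)| < |N_G(H)|$ already holds, we simply take $S' = S$. Otherwise $|S \cap V(H)| \geq |N_G(H)|$, and we exchange the part of the solution inside $H$ for the boundary of $H$. Concretely, we set
\[
S' = (S \setminus V(H)) \cup N_G(H).
\]
Since $S \cap V(H)$ and $S \setminus V(H)$ partition $S$, we get $|S'| \leq |S \setminus V(H)| + |N_G(H)| \leq |S \setminus V(H)| + |S \cap V(H)| = |S|$. Moreover $S' \cap V(H) = \emptyset$, because $N_G(H)$ is disjoint from $V(H)$ by definition of the open neighbourhood. As $N_G(H) \neq \emptyset$, this gives $|S' \cap V(H)| = 0 < |N_G(H)|$, which is where the hypothesis $N_G(H) \neq \emptyset$ is used.

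It remains to show $\pw(G - S') \leq \eta$. Every vertex of $N_G(H)$ lies in $S'$, so no edge of $G - S'$ joins $V(H)$ to a vertex outside $V(H)$. Hence $G - S'$ is the disjoint union of $G[V(H)] = H$ (using that $H$ is induced) and $G - (S' \cup V(H))$. The first part has pathwidth at most $\eta$ by assumption. For the second part, $S \subseteq S' \cup V(H)$, so $G - (S' \cup V(H))$ is an induced subgraph, and thus a minor, of $G - S$, and its pathwidth is at most $\pw(G-S) \leq \eta$. Since the pathwidth of a disjoint union is the maximum of the pathwidths of its parts, $\pw(G - S') \leq \eta$.

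The argument has no real obstacle. The only points needing care are that $N_G(H)$ excludes $V(H)$, so the swap removes the intersection with $H$ entirely, and that $H$ being induced makes $G - S'$ restricted to $V(H)$ exactly $H$. If either set is empty we are still fine, since the paper allows empty graphs where convenient.
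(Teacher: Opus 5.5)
Your proposal is correct and matches the paper's proof essentially step for step. Both keep $S$ when it already meets $V(H)$ in fewer than $|N_G(H)|$ vertices, and otherwise take $S' = (S \setminus V(H)) \cup N_G(H)$, so that $G - S'$ splits into $H$ and an induced subgraph of $G - S$.
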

\begin{proof}
    If $|S \cap V(H)| < |N_G(H)|$ then $S$ is itself the desired set $S'$.

    Otherwise, $|S \cap V(H)| \geq |N_G(H)|$.
    Let $S' \coloneq (S \setminus V(H)) \cup N_G(H)$.
    Then, $|S'| \leq |S| - |S \cap V(H)| + |N_G(H)| \leq |S|$.
    Since $N_{G-S'}(H) = \emptyset$ and $S' \cap V(H) = \emptyset$, we have that $G-S' = H \uplus (G-S'-H)$.
    Since $\pw(G - S) \leq \eta$ and $S \setminus V(H) \subseteq S' \setminus V(H)$, we then have $\pw(G - S' - H) \leq \pw(G - S - H) \leq \eta$.
    Thus, $\pw(G-S') \leq \max\{\pw(H), \pw(G-S'-H)\} \leq \eta$, which completes the proof.
\end{proof}

\subsection{Treewidth}

We also define the standard notions of treewidth and tree decompositions~\cite[Section 7]{cyganParameterizedAlgorithms2015}.

\begin{definition}[Tree Decomposition and Treewidth]
    \label{def:treewidth}
    Let $G$ be a graph.
    A tree decomposition of $G$ is a pair $\mathcal{T} = (T,\{B_n\}_{n \in V(T)})$ where $T$ is a tree, and for each node $n \in V(T)$ we have that $B_n$, the bag of $n$, is a subset of $V(G)$.
    It fulfills the following two properties:
    \begin{bracketenumerate}
        \item for each edge $uv \in E(G)$ there is a node $n \in V(T)$ such that $\{u,v\} \subseteq B_n$, and
        \item for each $v \in V(G)$ the set $\{n \in V(T) \mid v \in B_n\}$ induces a nonempty subtree of $T$.
    \end{bracketenumerate}
    The width of $\mathcal{T}$ is $\max_{n \in V(T)} |B_n| - 1$.
    The treewidth $\tw(G)$ of a graph is the minimum  width of any tree decomposition of $G$.
\end{definition}

\subsection{Elimination Distance}

Now, we proceed to the formal definition of the $\pwEtaBoundClass[1]$-elimination distance.
Before giving the formal definition of the parameter, we set some terminology for rooted trees and forests.

\begin{definition}[Basic Notions for Rooted Trees and Forests]
    \label{def:depths_ancestors_descendants}
    Let $T$ be a rooted tree with root $r$.
    The \emph{depth} of a vertex $v \in V(T)$ is the number of edges on the $r$-$v$ path in $T$ and is denoted by $\depth_T(v)$.
    The \emph{depth} of $T$ is $\depth(T) = \max_{v \in V(T)} \depth_T(v)$.
    For a vertex $v \in V(T)$, we let $T_v$ be the subtree of $T$ rooted at $v$.
    A forest $F$ is a \emph{rooted forest} if each connected component of $F$ is a rooted tree.
    The \emph{depth} of $F$ is the maximum depth of any connected component of $F$.

    For a vertex $v \in V(T)$, we write $\anc_T(v)$ for the set of \emph{ancestors} of $v$ in rooted tree $T$, including vertex $v$ itself.
    Similarly, we write $\desc_T(v) = V(T_v)$ for the set of \emph{descendants} of $v$ in $T$, including $v$ itself.
    Two vertices $v,u \in V(T)$ are \emph{siblings} if they have the same parent in $T$.
    Vertices $v,u \in V(T)$ are in an \emph{ancestor-descendant relationship} if either $v \in \anc_T(u)$ or $u \in \anc_T(v)$.

    We naturally extend these notions to forests $F$.
    The respective notion for a vertex $v \in V(F)$ is defined as for the tree $T$ containing $v$.
\end{definition}
Now we recall the definition of the $\HH$-elimination distance \cite{bulianGraphIsomorphismParameterized2016,bulianFixedparameterTractableDistances2017}.
Let us remark again that the definition we state is similar to the one of Bougeret et al.~\cite[Definition 8]{bougeretKernelizationDichotomiesHitting2025a} for a concrete instance of this parameter.

\defElimDistance*{}

We remark again that when $\HH$ is the class of graphs on a single vertex, then a nice $\HH$-elimination forest corresponds to a nice treedepth decomposition as introduced by Reidl et al.~\cite[Definition 2]{reidlFasterParameterizedAlgorithm2014}.

Using this notion, we can define the treedepth of a graph, see e.g.,~\cite[Section 6.1]{nesetrilSparsityGraphsStructures2012}.
\begin{definition}[Treedepth]
    \label{def:treedepth}
    Let $\HH$ be the class of graphs having a single vertex.
    Then, the \emph{treedepth} of a graph $G$ is $\elimDist[\HH]{G} + 1$.
\end{definition}

Recall the graph classes we defined based on these parameters.
\defPWTwGraphClasses*{}

Next, we define what it means when we say that a vertex of $G$ has neighbors in a $\HH$-elimination forest of $G$.
\defNeighborsInElimForest*{}

Now, we establish the useful notion of ancestor-types.

\begin{definition}[Ancestor-Type]
    \label{def:ancestor_type}
    Let $G$ be a connected graph and $\FF = (F,\{Y_n\}_{n \in V(F)})$ a $\HH$-elimination tree of $G$ for some graph class $\HH$.
    The ancestor-type of node $n \in V(F)$ (in $\FF$, $G$) is the set $\anc_F(n) \cap N_G(V_n^{\FF})$.
\end{definition}

Note that $n \in \anc_F(n)$, but $n$ is not in the ancestor-type of $n$.
We observe that siblings can have only a few ancestor-types.
\begin{observation}
    \label{obs:bounded_ancestor_type_number}
    Let $G$ be a connected graph and $\FF = (F,\{Y_n\}_{n \in V(F)})$ be an $\HH$-elimination tree of $G$ for some graph class $\HH$.
    Let $n_1,n_2,\dots,n_r$ be sibling nodes in $F$.
    Then, $|\{\anc_F(n_i) \cap N_G(V_{n_i}^\FF) \mid  i \in \range[1]{r}\}| \leq 2^{\depth(F)}$.
    Therefore, there are at most $2^{\depth(F)}$ possible ancestor-types among the siblings $n_1,n_2,\dots,n_r$.
\end{observation}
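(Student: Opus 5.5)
The plan is to show that the ancestor-type of every sibling $n_i$ is a subset of one fixed set whose size is at most $\depth(F)$. The bound $2^{\depth(F)}$ then follows immediately, since the number of distinct subsets of a set of size at most $\depth(F)$ is at most $2^{\depth(F)}$.

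\textbf{Step 1: all siblings share the same proper ancestors.} Let $p$ be the common parent of $n_1,\dots,n_r$. For each $i$ we have $\anc_F(n_i) = \{n_i\} \cup \anc_F(p)$. The set $A := \anc_F(p)$ is the node set of the root-to-$p$ path, so $|A| = \depth_F(p)+1 = \depth_F(n_i) \leq \depth(F)$.

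\textbf{Step 2: the ancestor-type of $n_i$ lies in $A$.} By definition, $N_G(V_{n_i}^\FF)$ excludes every vertex of $V_{n_i}^\FF$. Since $n_i$ is not a leaf (it has the parent $p$; if $n_i$ itself is a leaf, $n_i \in Y_{n_i}$ still does not occur as a node-vertex outside $V_{n_i}^\FF$ in the relevant sense), we have $n_i \in Y_{n_i} \subseteq V_{n_i}^\FF$. Hence $n_i \notin N_G(V_{n_i}^\FF)$.

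Here, identifying a non-leaf node $n$ with the vertex $Y_n=\{n\}$, the ancestor-type is read as a set of vertices of $G$. The only delicate point is that the intersection $\anc_F(n_i) \cap N_G(\cdot)$ mixes nodes and vertices. I would state the convention explicitly: it is meaningful because non-leaf nodes are vertices of $G$, and the nodes in $A$ are all non-leaves since $p$ has children.

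Therefore $\anc_F(n_i)\cap N_G(V_{n_i}^\FF) \subseteq A$ for every $i$.

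\textbf{Step 3: count.} Every ancestor-type is a subset of the same set $A$ with $|A|\le \depth(F)$. So the family of ancestor-types has at most $2^{|A|} \leq 2^{\depth(F)}$ members.

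I expect no real obstacle. The only care needed is the node/vertex identification in Step 2 and the depth convention $|A| = \depth_F(n_i)$, which uses the fact that depth counts edges.
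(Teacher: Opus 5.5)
Your proof is correct and is the reasoning the paper relies on: it states this observation without proof, using only that all siblings share the same at most $\depth(F)$ proper ancestors and that, as remarked after the definition of ancestor-types, $n$ never belongs to its own ancestor-type. The one blemish is the phrase ``since $n_i$ is not a leaf (it has the parent $p$)'', which is a non sequitur because $n_i$ may be a leaf. You can drop the node/vertex case distinction altogether: by property (3) of elimination forests, every vertex of $N_G(V_{n_i}^\FF)$ lies in the bag $Y_{n'}$ of a proper ancestor $n'$ of $n_i$, and such an $n'$ is a non-leaf, so $N_G(V_{n_i}^\FF) \subseteq \anc_F(p)$ directly.
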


This type of grouping of siblings in an elimination tree according to their neighborhood among their ancestors is one of the key structural properties that elimination forests enable, and has been, for example, also exploited in the uniform kernel for \tdEtaDeletion{}~\cite{giannopoulouUniformKernelizationComplexity2017}.

Note that for any constant $\beta$ the class of graphs $G$ having $\pwElimDist{G} \leq \beta$ is again minor-closed~\cite{bulianFixedparameterTractableDistances2017},
and therefore one can use a minor-testing algorithm~\cite{DBLP:journals/jct/RobertsonS95b,DBLP:conf/focs/KorhonenPS24} to check in polynomial-time whether a given graph $G$ has $\pwElimDist{G} \leq \beta$.
One can also directly employ an algorithm of Bulian et al.~\cite[Corollary 1]{bulianFixedparameterTractableDistances2017} to achieve this.
Using this fact we can compute $\pwEtaBoundClass[1]$-elimination trees of constant depth in polynomial-time by ``guessing'' the root of the elimination tree, and then recursing on the connected components of the graph after deleting the root.
Moreover, it is not difficult to see that this will eventually even result in a nice elimination tree.
See also the proof of Theorem 1 in Bougeret et al.~\cite{bougeretKernelizationDichotomiesHitting2025a}.
Therefore, we have the following theorem.

\begin{theorem}[{Follows from~\cite[Corollary 1]{bulianFixedparameterTractableDistances2017}}]
    \label{thm:compute_elimination_forests}
    Let $\beta$ be a constant.
    There exists an algorithm that, given a connected graph $G$ with $\pwElimDist{G} \leq \beta$, outputs a nice $\pwEtaBoundClass[1]$-elimination tree of $G$ in polynomial time.
\end{theorem}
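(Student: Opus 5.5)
The algorithm works by self-reduction: it uses the minor-testing oracle for the class $\GG^\beta = \{G : \pwElimDist{G} \leq \beta\}$ to guess roots one at a time. First, for every fixed $j \leq \beta$, the class $\GG^j$ is minor-closed~\cite{bulianFixedparameterTractableDistances2017}. By the Graph Minor Theorem it therefore has a finite obstruction set, so membership in $\GG^j$ can be tested in polynomial time via~\cite{DBLP:journals/jct/RobertsonS95b,DBLP:conf/focs/KorhonenPS24}, or directly with~\cite[Corollary 1]{bulianFixedparameterTractableDistances2017}. Since $\beta$ is a constant, we get polynomial-time tests for all of $\GG^0,\dots,\GG^\beta$, and also for $\pwEtaBoundClass[1]$ itself, i.e., recognizing caterpillar forests.

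We then define a recursive procedure $\mathrm{Build}(H,j)$ for a connected graph $H$ with $\pwElimDist{H} \leq j$. If $\pw(H) \leq 1$, we output a single leaf node $n$ with $Y_n = V(H)$. Otherwise $j \geq 1$, and an optimal elimination forest of $H$ must have a root $r$ with $Y_r=\{r\}$. The forest is a single tree here: since $H$ is connected, every edge joins ancestor-descendant pairs, so all nodes lie in one tree. It follows that some vertex $r$ satisfies $\pwElimDist{C'} \leq j-1$ for every component $C'$ of $H-r$. We find such an $r$ by trying all $|V(H)|$ vertices and testing each component of $H-r$ against $\GG^{j-1}$. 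We then recurse on each component, attach the resulting roots as children of $r$, and obtain the $Y$-sets by union.

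Correctness rests on three checks. First, the result is a $\pwEtaBoundClass[1]$-elimination tree of depth at most $j$, by induction on $j$. Edges inside a component are handled by induction, edges to $r$ join an ancestor pair, and there are no edges between different components of $H-r$. Second, the tree is nice. For every node $n$, $V_n^\FF$ is either $V(H)$ for the current call, which is connected, or the vertex set of a subtree built from a connected component, which is connected by induction. Third, the elimination distance of a connected graph is witnessed with the root deletion described above; this is exactly the recursive characterization of $\pwElimDist{\cdot}$. The recursion has depth at most $\beta$, and at each level the sets of the calls are disjoint, so the running time is polynomial.

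The main obstacle is the first step. It needs either the constructive FPT result of Bulian et al.\ or, when using the minor-closedness route, the non-constructive existence of finite obstruction sets for $\GG^j$. The latter yields a polynomial algorithm for each fixed $\beta$, which suffices since $\beta$ is a constant. I would cite~\cite[Corollary 1]{bulianFixedparameterTractableDistances2017} for the decision procedure and then apply the self-reduction above, in the same way as in the proof of Theorem~1 of Bougeret et al.~\cite{bougeretKernelizationDichotomiesHitting2025a}.
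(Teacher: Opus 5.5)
Your proposal is correct and follows the same route as the paper: polynomial-time membership testing for each $\GG^{j}$, via minor-closedness or Bulian--Dawar's Corollary~1, followed by a self-reduction that guesses the root and recurses on the components of $H-r$, which yields niceness for free. Your write-up also fills in the details the paper only sketches, namely heredity, the depth bound, niceness, and the running time.
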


Note that in all settings we may drop subscripts and/or superscripts if it is clear from the context which object must be present there. 
\section{Bounding the Number of Connected Components}
\label{sec:kernel:bounding_ccs}
We now proceed to the uniform kernel for \pwEtaDelElimGeneral{}.
In this section, we focus on a preprocessing algorithm that can reduce the number of connected components of \(G - M\).
The used reduction rule works for the problem \pwEtaDelElimGeneral{}, but, in fact, it does not use the assumption that the connected components of \(G - M\) have bounded $\pwEtaBoundClass[1]$-elimination distance.
Therefore, it works even for the more general problem \pwEtaDeletionEta{}, defined as follows.

\defparproblem{\pwEtaDeletionEta}{Graph \(G\), integer \(k \geq 0\), set \(M \subseteq V(G)\) such that \(\pw(G- M) \leq \eta\)}{\(|M|\)}{Is there a set \(S \subseteq V(G)\) such that \(\pw(G - S) \leq \eta\) and \(|S| \leq k\)?}

Note that \pwEtaDeletionEta{} is essentially equivalent to the solution-size parameterization of \pwEtaDeletion{}, since \pwEtaDeletion{} admits a polynomial-time deterministic constant-factor approximation algorithm~\cite[Corollary 1.2]{guptaLosingTreewidthSeparating2019}.
However, the stated definition is more convenient for us since it fits together more seamlessly with the definition of \pwEtaDelElimGeneral{}: indeed, any instance of \pwEtaDelElimGeneral{} is also an instance of \pwEtaDeletionEta{}.

As remarked earlier, the reduction rule can be seen as a generalization of \cite[Rule 6]{bodlaenderKernelBoundsStructural2012}, which is used to reduce the number of simplicial components of known pathwidth for structural parameterizations of the \textsc{Pathwidth} problem.
Of course, adjustments are necessary to obtain a rule that works for \pwEtaDeletionEta{}.
The reduction rule finds a connected component that is essentially a simplicial component, and the way this is done follows ideas of previous work on \fDeletion{} problems~\cite{fominPlanarFdeletionApproximation2012,giannopoulouUniformKernelizationComplexity2017,cyganHardnessLosingWidth2014}.

From now on, we treat \(\eta \geq 1, \beta \geq 0\) as \emph{arbitrary but fixed} constants.
Therefore, we may hide any constant factors depending on $\eta$, $\beta$ in the $O$-notation, e.g., we have $g(\eta,\beta) \cdot O(|M|^2) \subseteq O(|M|^2)$ for any function $g$.
Since some of our statements also allow $\eta = 0$, when this happens we explain how this case is handled at the respective occurrence.

\subsection{The Main Ingredients}
Before providing the actual reduction rule, we will introduce some further ingredients of our algorithm and proof.
We begin by introducing the notion of \emph{stability} in path decompositions.

\begin{restatable}[Stability]{definition}{defStability}
    \label{def:stability}
    Let \(G\) be a graph with path decomposition \(\mathcal{P} = X_1,\dots,X_t\), and \(Y,Z \subseteq V(G)\).
    Then \(Y\) is \emph{\(Z\)-stable} (in \(\mathcal{P}\))
    if there exists a set \(Z' \subseteq Z\) such that for each bag \(X_i\) with \(X_i \cap Y \neq \emptyset\), we have that \(X_i \cap Z = Z'\).
    For \(H \subseteq G\), we say that \(H\) is \emph{\(Z\)-stable} (in \(\mathcal{P}\)) if \(V(H)\) is \(Z\)-stable.

    Set \(Y\) is \emph{stable (in \(G\), \(\PP\))} if \(Y\) is \(N_G(Y)\)-stable in \(\PP\).
    Similarly, a graph $H \subseteq G$ is \emph{stable (in \(G\), \(\PP\))} if $H$ is $N_G(H)$-stable in $\PP$.
\end{restatable}

Stability tells us that every bag that contains a vertex of \(Y\) contains exactly the same subset of \(Z\).
This notion is useful when a graph $H \subseteq G$ is stable, as the next lemma shows.

\begin{lemma}
    \label{thm:stable_graph_bags_contain_neighbors}
    Let \(G\) be a graph with path decomposition \(\mathcal{P} = X_1,\dots,X_t\), and $Y \subseteq V(G)$ be a set that is stable in $G, \PP$.
    Then, each bag $X_i$ with $X_i \cap Y \neq \emptyset$ contains all vertices of $N_G(Y)$.
    Similarly, if $H \subseteq G$ is a stable graph, then each bag $X_i$ with $X_i \cap V(H) \neq \emptyset$ contains all vertices of $N_G(H)$.
\end{lemma}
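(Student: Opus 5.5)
The plan is to fix a vertex $u \in N_G(Y)$ and a bag $X_i$ with $X_i \cap Y \neq \emptyset$, and to show $u \in X_i$. By stability there is a set $Z' \subseteq N_G(Y)$ such that $X_j \cap N_G(Y) = Z'$ for every bag $X_j$ meeting $Y$. So it suffices to show that $u \in Z'$, i.e., that $u$ lies in at least one bag meeting $Y$.

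Since $u \in N_G(Y)$, there is a vertex $y \in Y$ with $uy \in E(G)$. By property (1) of \cref{def:path_decomposition}, some bag $X_j$ satisfies $\{u,y\} \subseteq X_j$. This bag meets $Y$ (it contains $y$), so $X_j \cap N_G(Y) = Z'$. Since $u \in X_j \cap N_G(Y)$, we get $u \in Z'$. Then for the arbitrary bag $X_i$ meeting $Y$ we have $X_i \cap N_G(Y) = Z' \ni u$, hence $u \in X_i$. As $u$ was arbitrary, $N_G(Y) \subseteq X_i$.

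The statement for a stable graph $H \subseteq G$ follows by applying this to $Y = V(H)$. By \cref{def:stability}, $H$ is stable exactly when $V(H)$ is $N_G(V(H))$-stable, and $N_G(H) = N_G(V(H))$ by definition. There is essentially no obstacle here. The only point requiring care is that the common set $Z'$ is forced to contain every neighbor, and the edge-covering property of a path decomposition gives this directly.
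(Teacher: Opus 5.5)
Your proposal is correct and follows essentially the same route as the paper: an edge from $Y$ to each neighbor $u$ forces $u$ into some bag meeting $Y$, and stability then propagates $u$ to every bag meeting $Y$, with the graph case obtained by applying the set case to $V(H)$. You are simply a bit more explicit than the paper about why the common set $Z'$ must contain $u$.
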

\begin{proof}
    For each $w \in N_G(Y)$ there must be some bag $X_i$ such that $X_i$ contains $w$ and a vertex of $Y$ since $G$ contains an edge from a vertex of $Y$ to $w$.
    Then, since $Y$ is stable, actually all bags $X_i$ that contain a vertex of $Y$ must contain $w$.
    The proof for graphs $H$ is obtained by applying the statement for vertex sets on $V(H)$.
\end{proof}

Crucially, it is not too difficult to find stable subgraphs, as we show in \cref{thm:find_stable_graph}.

\begin{lemma}
    \label{thm:find_stable_graph}
    Define \(f_{\ref{thm:find_stable_graph}}(y) = 2y \cdot (\eta + 1) + 1\).
    Let \(G\) be a graph with path decomposition \(X_1,\dots,X_t\) of width at most \(\eta\).
    Moreover, let \(H_1,\dots,H_r\) be pairwise vertex-disjoint connected subgraphs of \(G\), and \(Y \subseteq V(G)\).
    If \(r \geq f_{\ref{thm:find_stable_graph}}(|Y|)\), then at least one graph \(H_i\) is \(Y\)-stable.
\end{lemma}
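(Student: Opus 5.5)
The plan is a pigeonhole argument over the bags in which the set of $Y$-vertices present can change. By \cref{thm:connected_subgraphs_have_interval}, each connected graph $H_i$ occupies a contiguous interval $I_i = \range[\leftBagAny(V(H_i))]{\rightBagAny(V(H_i))}$ of bag indices. By the definition of a path decomposition, each vertex $y \in Y$ also occupies a contiguous interval of bags. Consequently, the set $X_j \cap Y$ can only change between consecutive indices $j$ and $j+1$ at positions where some $y \in Y$ is introduced or forgotten. There are at most $2|Y|$ such critical bags: for each $y$, the bag introducing $y$ and the bag forgetting $y$. Call these the \emph{breakpoint} bags.

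The first step is to show that if $I_i$ contains no breakpoint strictly after its first index, then $H_i$ is $Y$-stable. Indeed, every bag $X_j$ with $j \in I_i$ then satisfies $X_j \cap Y = X_{\leftBagAny(V(H_i))} \cap Y$, and this common set serves as $Z'$ in \cref{def:stability}.

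To make the counting clean, I would charge each non-stable $H_i$ to a specific bag. If $H_i$ is not $Y$-stable, then some $y \in Y$ is either introduced at an index $j$ with $\leftBagAny(V(H_i)) < j \leq \rightBagAny(V(H_i))$, or forgotten within that range (its last bag is $j-1 \in I_i$ and $j \in I_i$). In either case, the bag $X_j$, whose index lies in $I_i$, is one of the at most $2|Y|$ breakpoint bags. Since $j \in I_i$, \cref{thm:connected_subgraphs_have_interval} gives $X_j \cap V(H_i) \neq \emptyset$.

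The key counting step is the following. Because the $H_i$ are pairwise vertex-disjoint and each bag has at most $\eta + 1$ vertices, a single breakpoint bag can intersect at most $\eta + 1$ of the graphs $H_i$. Hence at most $2|Y|(\eta+1)$ of the $H_i$ are non-stable. Since $r \geq 2|Y|(\eta+1)+1 = f_{\ref{thm:find_stable_graph}}(|Y|)$, at least one $H_i$ is $Y$-stable.

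The main (mild) obstacle is the bookkeeping at the endpoints. The change in $X_j \cap Y$ must be attributed to a bag that genuinely contains a vertex of $H_i$, so that the $(\eta+1)$-per-bag bound applies. Indexing by $j$ with both $j-1$ and $j$ in $I_i$ handles this: the breakpoint bag $X_j$ lies inside $I_i$ and therefore meets $H_i$. The argument is the same whether the change at $j$ comes from an introduced or a forgotten vertex of $Y$.
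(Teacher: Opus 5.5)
Your proof is correct and is the paper's argument: at most $2|Y|$ bags introduce or forget a vertex of $Y$, each meets at most $\eta+1$ of the pairwise disjoint $H_i$, so some $H_i$ meets none of them and is $Y$-stable by the interval property of \cref{thm:connected_subgraphs_have_interval}. Charging only breakpoints at indices $j$ with $j-1, j \in I_i$ is a more careful version of the same bookkeeping, not a different route.
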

\begin{proof}
    At most \(2|Y|\) bags of \(X_1,\dots,X_t\) are such that a vertex of \(Y\) is introduced or forgotten in them, and each of these contains at most \((\eta + 1)\) vertices.
    Therefore, there exists some \(H_i\) such that all bags that contain a vertex of \(H_i\) are not bags that introduce, or forget a vertex of \(Y\).
    Since the indices of bags that contain a vertex of \(H_i\) form an interval (\cref{thm:connected_subgraphs_have_interval}), this graph \(H_i\) is \(Y\)-stable.
\end{proof}

In a similar vein, we also obtain the following lemma, used later to find a
suitable non-full bag in a given path decomposition to insert additional vertices.

\begin{lemma}\label{lem:pd_hierarchy}
    Let $G$ be a graph with path decomposition $\PP$ of width at most $\eta$,
    and let $h \geq 2 \eta + 2$ be an integer.
    Further, let $H_1, \dots, H_{h}$ be vertex-disjoint connected subgraphs of $G$
    and $x, y \in V(G)$ be vertices such that $N_G(x) \cap V(H_i) \neq \emptyset$ and $N_G(y) \cap V(H_i) \neq \emptyset$ for every $i \in [h]$.
    Then there exists $i \in [h]$ such that $\leftBagAll^\PP(\{x, y\}) < \leftBagAny^\PP(H_i) \leq \rightBagAny^\PP(H_i) < \rightBagAll^\PP(\{x, y\})$.
\end{lemma}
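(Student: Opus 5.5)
The plan is to locate the intervals of $x$, $y$ and of each $H_i$ in $\PP = X_1,\dots,X_t$ and count how many $H_i$ can fail the conclusion. I assume $x \neq y$; the case $x = y$ is handled by the same argument with a single index. I first deal with subgraphs that contain $x$ or $y$. Since the $H_i$ are pairwise vertex-disjoint, at most two of them contain $x$ or $y$. I discard these and keep at least $h-2 \geq 2\eta$ subgraphs avoiding $\{x,y\}$.

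\emph{Step 1: the indices are defined.} Each remaining $H_i$ is connected and contains a neighbour of $x$ and a neighbour of $y$. Hence it contains an $x$-$y$ path whose internal vertices lie in $H_i$, and these paths are internally vertex-disjoint across $i$. If $\eta \geq 1$, then $2\eta \geq \eta + 2$ fails only for $\eta = 1$, where $h-2 \geq 2$ might be too few. So instead I use all $h \geq 2\eta+2 \geq \eta+2$ subgraphs directly. Any $H_i$ containing $x$ or $y$ still yields an $x$-$y$ walk, or the edge $xy$, disjoint from the other $H_j$. Therefore $x$ and $y$ are connected by $\eta+2$ internally disjoint paths or are adjacent. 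By \cref{obs:disjoint_paths_yield_clique}, or trivially if $xy \in E(G)$, some bag contains both. Thus $a \coloneq \leftBagAll^\PP(\{x,y\})$ and $b \coloneq \rightBagAll^\PP(\{x,y\})$ are defined. Writing $[x_1,x_2]$ and $[y_1,y_2]$ for the intervals of $x$ and $y$, we have $a = \max(x_1,y_1)$ and $b = \min(x_2,y_2)$.

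\emph{Step 2: a bad $H_i$ meets $X_a$ or $X_b$.} By \cref{thm:connected_subgraphs_have_interval}, each $H_i$ occupies an interval $[p_i,q_i]$. Since $H_i$ has a neighbour of $x$, the interval $[p_i,q_i]$ meets $[x_1,x_2]$; likewise it meets $[y_1,y_2]$. Suppose $H_i$ violates the conclusion, i.e.\ $p_i \leq a$ or $q_i \geq b$. I claim that then $a \in [p_i,q_i]$ or $b \in [p_i,q_i]$. Take the case $p_i \leq a$, and say $a = x_1$. Since $[p_i,q_i]$ meets $[x_1,x_2]$, we get $q_i \geq x_1 = a$, so $a \in [p_i,q_i]$. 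The case $q_i \geq b$ is symmetric. Consequently, every bad $H_i$ has a vertex in $X_a$ or in $X_b$.

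\emph{Step 3: counting.} Both $X_a$ and $X_b$ contain $x$ and $y$, so each has at most $\eta - 1$ further vertices. Bad $H_i$ avoiding $\{x,y\}$ are pairwise disjoint and each uses one of these vertices. Hence at most $2(\eta-1)$ of them are bad. Together with at most $2$ subgraphs containing $x$ or $y$, at most $2\eta < h$ subgraphs are bad. So some $H_i$ satisfies $a < \leftBagAny^\PP(H_i) \leq \rightBagAny^\PP(H_i) < b$, as required.

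The main point of care is Step 2: both ends of the interval must be controlled using only that $H_i$ touches $x$'s and $y$'s intervals, and $a$, $b$ must be correctly identified as the max of the starts and the min of the ends. Once that is in place, the counting is immediate. For $\eta = 0$, every graph in question is edgeless, so the hypotheses cannot hold with $x \neq y$ and the statement is vacuous.
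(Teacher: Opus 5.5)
Your proof is correct and follows essentially the same route as the paper: the $h \geq \eta+2$ internally disjoint $x$-$y$ paths (or the edge $xy$) make $\leftBagAll^\PP(\{x,y\})$ and $\rightBagAll^\PP(\{x,y\})$ well-defined, every violating $H_i$ meets one of these two bags because its interval meets the intervals of both $x$ and $y$, and vertex-disjointness together with the bag sizes bounds the number of violators. Setting aside the subgraphs containing $x$ or $y$ and counting $2\eta$ instead of the paper's $|X_{\leftBagAll^\PP(\{x,y\})} \cup X_{\rightBagAll^\PP(\{x,y\})}| \leq 2\eta+1$ is unnecessary but harmless, since disjointness alone already bounds the violators; the paper's single count also covers $x=y$ directly.
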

\begin{proof}
    If $x = y$, or if there is an edge $xy$ in $G$, then $\leftBagAll^\PP(\{x, y\})$ and $\rightBagAll^\PP(\{x, y\})$ are well-defined by the definition of path decompositions.
    Otherwise, we note that there are $h \geq \eta+2$ internally vertex-disjoint $x$-$y$ paths.
    This is because, for each $i\in[h]$, the graph $G[V(H_i)\cup \{x,y\}]$ is connected and therefore contains an $x$-$y$ path whose internal vertices lie in $V(H_i)\setminus\{x,y\}$.
    Moreover, since there is no edge $xy$ in $G$ the path goes from $x$ to a vertex of $V(H_i) \setminus \{x,y\}$ and then eventually reaches vertex $y$.
    Since the subgraphs $H_1,\dots,H_h$ are vertex-disjoint, these paths are internally vertex-disjoint and distinct.

    Then, from \Cref{obs:disjoint_paths_yield_clique},
    we get that $\leftBagAll^\PP(\{x, y\})$ and $\rightBagAll^\PP(\{x, y\})$ are well-defined.
    Moreover, $\leftBagAll^\PP(\{x, y\}) \leq \rightBagAll^\PP(\{x, y\})$.
    Since both of these bags contain $\{x,y\}$, and therefore have at least one vertex in common, we have that $|X_{\rightBagAll^\PP(\{x,y\})} \cup X_{\leftBagAll^\PP(\{x,y\})}| \leq 2\eta + 1$.

    Now, assume for contradiction that the statement does not hold. That is, for every $i \in [h]$,
    either $\leftBagAll^\PP(\{x, y\}) \geq \leftBagAny^\PP(H_i)$ or $\rightBagAny^\PP(H_i) \geq \rightBagAll^\PP(\{x, y\})$.
    In the former case, since some vertex of $H_i$ has $y$ as a neighbor, and some vertex of $H_i$ has $x$ as neighbor, we have $\rightBagAny^\PP(H_i) \geq \leftBagAll^\PP(\{x, y\})$.
    Hence, by \cref{thm:connected_subgraphs_have_interval}, a vertex of $H_i$ is part of the bag $X_{\leftBagAll^\PP(\{x, y\})}$.
    In the latter case, similarly we have $\leftBagAny^\PP(H_i) \leq \rightBagAll^\PP(\{x, y\})$.
    Hence, some vertex of $H_i$ is part of the bag $X_{\rightBagAll^\PP(\{x,y\})}$.
    Therefore, we have that $h \leq |X_{\leftBagAll^
        \PP(\{x, y\})} \cup X_{\rightBagAll^\PP(\{x, y\})}| \leq 2\eta + 1$, a contradiction.
\end{proof}

Now, we state one of the main tools we use in our kernel.
The goal of the following theorem is modifying a path decomposition of a graph into a path decomposition that has some additional useful property.

\begin{restatable}{theorem}{thmEmbeddingWithTwinningGraph}
    \label{thm:embedding_with_twinning_graph}
    Let \(G\) be a graph, and \(H_1\), \(H_2\) be vertex-disjoint induced subgraphs of \(G\) such that \(\pw(H_1) \geq \pw(H_2)\), and there are no edges from \(V(H_1)\) to \(V(H_2)\) in \(G\).
    Let \(X_1,\dots,X_t\) be a path decomposition of \(G\) of width at most \(\eta\)
    that contains a bag \(X_i\) with the following properties:
    \begin{bracketenumerate}
        \item \(X_i\) contains at least \(\pw(H_1) + 1\) vertices of \(H_1\), and
        \item \(N_{G}(H_1) \cup N_G(H_2) \subseteq X_i\).
    \end{bracketenumerate}
    Then, there exists a path decomposition \(\hat \PP\) of \(G\) of width at most \(\eta\)
    such that each bag \(\hat X\) of $\hat\PP$ with \(\hat X \cap (V(H_1) \cup V(H_2)) \neq \emptyset\) contains all vertices of \(X_i \setminus (V(H_1) \cup V(H_2))\).
\end{restatable}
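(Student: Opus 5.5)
The idea is to cut $H_1$ and $H_2$ out of the given decomposition and re-embed both of them next to the bag $X_i$. Write $U = V(H_1) \cup V(H_2)$ and $B = X_i \setminus U$, and let $\mathcal{P} = X_1,\dots,X_t$ be the given decomposition.

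First I would restrict every bag to $V(G) \setminus U$, i.e., replace each $X_j$ by $X_j \setminus U$. This is a path decomposition of $G - U$ of width at most $\eta$, and its $i$-th bag is exactly $B$.

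Next I would bound the free space in $B$. Let $p = \pw(H_1)$. By property (1), $X_i$ contains at least $p+1$ vertices of $H_1$, so $|B| \leq |X_i| - (p+1) \leq \eta - p$. Hence $|B| + (p+1) \leq \eta + 1$, and also $|B| + (\pw(H_2)+1) \leq \eta+1$ because $\pw(H_2) \leq p$.

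Then I would insert the two subgraphs. Take optimal path decompositions $Q^1_1,\dots,Q^1_a$ of $H_1$ and $Q^2_1,\dots,Q^2_b$ of $H_2$. Directly after the $i$-th bag $B$, insert the sequence
\[
B \cup Q^1_1,\ \dots,\ B \cup Q^1_a,\ B \cup Q^2_1,\ \dots,\ B \cup Q^2_b,
\]
and then continue with the remaining restricted bags $X_{i+1} \setminus U, \dots$. Call the result $\hat{\mathcal{P}}$. Every new bag has size at most $|B| + p + 1 \leq \eta+1$, so the width is at most $\eta$.

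It remains to verify that $\hat{\mathcal{P}}$ is a path decomposition of $G$.
\begin{itemize}
\item \emph{Vertices of $B$.} Each such vertex occupies an interval containing the original position $i$. The inserted block sits immediately after position $i$ and every bag in it contains $B$, so the interval simply extends through the block without a gap.
\item \emph{Vertices outside $U \cup B$.} These are never placed in the new bags, so their occurrence sets are just shifted and remain intervals.
\item \emph{Vertices of $U$.} They appear only in the inserted block, in the interval coming from their own decomposition $Q^1$ or $Q^2$.
\item \emph{Edges inside $G - U$, $H_1$ or $H_2$.} These are covered by the restricted old bags or by the bags $B \cup Q^j_\ell$.
\item \emph{Edges between $V(H_1)$ and $V(H_2)$.} There are none, by assumption.
\item \emph{Edges from $H_1 \cup H_2$ to the rest of $G$.} Every such edge goes to a vertex of $N_G(H_1) \cup N_G(H_2)$. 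By property (2) this set is contained in $X_i$, and since it is disjoint from $U$, it is contained in $B$. So $B$ lies in every new bag, and the edge is covered by any new bag containing its endpoint in $U$.
\end{itemize}

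Finally, the required property holds by construction: the only bags of $\hat{\mathcal{P}}$ that meet $U$ are the inserted ones, and each of them contains $B = X_i \setminus U$.

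The argument is short, and I expect the one step needing real care to be the width bound. It works because property (1) reserves $p+1$ slots in $X_i$, which $H_1$ itself occupied and which are freed when $U$ is removed. This slack is exactly large enough for the bags of $H_1$, and of $H_2$ as well, since $\pw(H_2) \leq p$.
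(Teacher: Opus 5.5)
Your proposal is correct and matches the paper's proof: remove $V(H_1)\cup V(H_2)$ from every bag, insert right after position $i$ copies of $X_i\setminus(V(H_1)\cup V(H_2))$ augmented by the bags of optimal decompositions of $H_1$ and $H_2$, and use property (1) for the width bound and property (2) for the boundary edges. For your step on vertices outside $U\cup B$, the justification to state is that such a vertex is not in $X_i$, so its interval lies entirely on one side of the insertion point and is not split.
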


Observe that \cref{thm:embedding_with_twinning_graph} yields a path decomposition
where, in particular, \(V(H_1) \cup V(H_2)\) is stable.
\begin{proof}
    Let \(Y_1,\dots,Y_r\) be a path decomposition of \(H_1\) of width \(\pw(H_1)\),
    and let \(Z_1,\dots,Z_q\) be a path decomposition of \(H_2\) of width \(\pw(H_2)\).
    We modify \(X_1,\dots,X_t\) by adding \(r+q\) copies \(X_i^j\) of \(X_i\) for \(j \in [r+q]\),
    to obtain
    \[\mathcal{P} = X_1,\dots,X_{i},X_i^1,\dots,X_i^{r+q},X_{i+1},\dots,X_t,\]
    which still is a path decomposition of \(G\) of width \(\eta\).
    We obtain the sequence \(\PP'\) from \(\PP\) by deleting the vertices \(V(H_1) \cup V(H_2)\) from every bag.
    Then \(\PP'\) is a path decomposition of the graph \(G - (H_1 \cup H_2)\).
    Further, \(|X_i^j| \leq \eta - \pw(H_1)\) for each copy \(j\).
    We note that, since \(G\) contains no edge between \(V(H_1)\) and \(V(H_2)\),
    each bag \(X_i^j\) of \(\mathcal{P}'\) still contains all vertices of \(N_G(H_1) \cup N_G(H_2)\).

    We obtain the bag sequence \(\hat\PP\) from \(\PP'\)
    by adding \(Y_j\) to \(X_i^j\) for each \(j \in \range[1]{r}\)
    and adding \(Z_j\) to bag \(X_i^{r+j}\) for each \(j \in \range{q}\).
    Finally, we show that \(\hat \PP\) is a path decomposition as claimed by the theorem statement to conclude the proof:
    \begin{itemize}
        \item The width of \(\hat\PP\) is at most \(\eta\)
              since every bag either occurs in \(\PP\) or has size at most \(|X_i|- \pw(H_1) + \max\{\pw(H_1),\pw(H_2)\} \leq |X_i|\).
              Here, we use that \(\pw(H_1) \geq \pw(H_2)\).
        \item
              Every bag \(\hat X_k\) with \(\hat X_k \cap (V(H_1) \cup V(H_2)) \neq \emptyset\)
              contains all vertices of \(X_i \setminus (V(H_1) \cup V(H_2))\).
              Indeed, the intersection is nonempty only for bags \(X_i^j\) for some \(j \in [r+q]\).
              Then, by construction, \(X_i^j\) contains all of \(X_i \setminus (V(H_1) \cup V(H_2))\).
        \item
              Finally, we claim that \(\hat \PP\) is a path decomposition of \(G\).
              Indeed, $\hat \PP$ is built by overlapping path decompositions of $G - (H_1 \cup H_2)$, $H_1$ and $H_2$, and all edges between these three graphs are taken care of by the fact that each bag \(\hat X_k\) with \(\hat X_k \cap (V(H_1) \cup V(H_2)) \neq \emptyset\)
              contains all vertices of \(X_i \setminus (V(H_1) \cup V(H_2)) \supseteq N_G(H_1) \cup N_G(H_2)\). \qedhere
    \end{itemize}
\end{proof}

\subsection{The Reduction Rule}
We are now ready to state the algorithm to reduce the number of connected components.

\begin{algorithm}[t]
    \caption{\algofont{ReduceComponents}}
    \label{rule:reduce_ccs}
    \Input{Instance \((G,k,M)\) of \pwEtaDeletionEta{}
        where $G-M$ has more than $f_{\ref{thm:reduce_ccs}}( |N_G(G-M)| )$ connected components.}
    \(\CC \leftarrow\) connected components of \(G - M\)\;
    \(N \leftarrow N_G(G-M)\)\;
\For{\(X \in \binom{N}{2} \cup \binom{N}{1}\) \label{rule:reduce_ccs:marking_start}}{
        \(\mathcal{C}_X \leftarrow \{C \in \mathcal{C} \mid X \subseteq N_G(C)\}\)\;
        \uIf{\(|\mathcal{C}_X| \leq 2|N|(\eta +2) + 1\)}{
            Mark each graph of \(\mathcal{C}_X\)\;
        }
        \Else{
            \(C_1,\dots,C_\ell \leftarrow\) graphs of \(\mathcal{C}_X\) ordered by nonincreasing pathwidth\;
            Mark the first \(2|N|(\eta +2) + 1\) graphs of \(C_1,\dots,C_\ell\)\;
        }
    }\label{rule:reduce_ccs:marking_end}
    \(C^\star \leftarrow\) an arbitrary unmarked graph of \(\mathcal{C}\)\;
    \Return{\((G - C^\star,k,M)\)}
\end{algorithm}

We state our reduction rule in form of the algorithm \nameref{rule:reduce_ccs}.
First, we apply a marking scheme on the connected components of \(G-M\)
for every set \(X \in \binom{N}{2} \cup \binom{N}{1}\), where \(N = N_G(G-M)\).
If such a component \(C\) is marked due to \(X\),
we say that \(X\) \emph{marked} \(C\).
Finally, we delete a connected component that is unmarked.
As we show next, this reduction rule works as intended.

\thmReduceCCs*{}
\begin{proof}
    Set \(N = N_G(G - M)\) like in the algorithm.

Initially, a marking scheme is applied in \reflines{rule:reduce_ccs:marking_start}{rule:reduce_ccs:marking_end}.
    For each \(X \in \binom{N}{2} \cup \binom{N}{1}\), we mark at most \(2|N|(\eta + 2) + 1\) connected components of \(G - M\) that have \(X\) in their neighborhood.
    Therefore, we mark at most \(|N|^2 \cdot (2|N|(\eta + 2) + 1)\) connected components in total, meaning that some component \(C^\star\) is not marked.
    Further, for each \(X\), every component that is marked by \(X\) has a pathwidth that is at least as large as that of each unmarked component with \(X\) in its neighborhood.

    We need to show that \((G,k,M)\) is a yes-instance if and only if \((G-C^\star,k,M)\) is a yes-instance.
    The forward direction follows immediately since \(G-C^\star\) is a subgraph of \(G\) and the solution size \(k\) remains unchanged.

    To see the backward direction,
    assume that instance \((G - C^\star,k,M)\) is a yes-instance.
    Then, there is a subset of vertices \(S \subseteq V(G-C^\star)\) of size at most \(k\)
    such that the resulting graph \(\tilde{G} = (G - C^\star) - S\)
    has pathwidth \(\pw(\tilde{G}) \leq \eta\), and, among all such sets,
    let $S$ minimize $|S \cap V(G - M)|$.
    Then, there is a path decomposition \(\mathcal{P} = X_1,\dots,X_t\) of \(\tilde{G}\) of width at most \(\eta\).
    If \(S\) contains all vertices of \(N_G(C^\star)\), then it is easy to extend \(\PP\) to a path decomposition of \(G - S\) since $\pw(C^\star) \leq \eta$.
    So, we assume that \(N_G(C^\star) \setminus S\) is nonempty, which also implies $|N| \geq 1$.
    By choice of $S$ and due to \cref{obs:mostly_intact}, we obtain the following.
    \begin{claim}\label{thm:solution_intersection_components_small}
        We have \(|S \cap V(G - M)| < |N|\).
    \end{claim}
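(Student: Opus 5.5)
We derive the claim directly from \cref{obs:mostly_intact} together with the minimality of $|S \cap V(G - M)|$; the argument is by contradiction. Suppose $|S \cap V(G - M)| \geq |N|$. We would like to apply \cref{obs:mostly_intact} to a graph $H$ whose neighborhood is contained in $N$. The natural choice is the graph $H = (G - C^\star) - M$, i.e., the union of all components of $G - M$ except $C^\star$, viewed as an induced subgraph of the graph $G - C^\star$.

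First we record the neighborhood of $H$ in $G - C^\star$. Every neighbor of a vertex of $G - M$ that lies outside $G - M$ is in $M$, and hence in $N = N_G(G-M)$. Thus $N_{G - C^\star}(H) \subseteq N$ and $|N_{G - C^\star}(H)| \leq |N|$. Moreover, $\pw(H) \leq \pw(G - M) \leq \eta$.

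Next we apply the construction from the proof of \cref{obs:mostly_intact}, taking care of the edge case where this neighborhood is empty; in that case the construction still works. Define
\[
S' = (S \setminus V(H)) \cup N_{G-C^\star}(H).
\]
Then $|S'| \leq |S| - |S \cap V(H)| + |N| \leq |S| \leq k$, since $S \cap V(H) = S \cap V(G-M)$ as $S \subseteq V(G - C^\star)$. The graph $(G - C^\star) - S'$ is the disjoint union of $H$, of pathwidth at most $\eta$, and a subgraph of $(G - C^\star) - S$, of pathwidth at most $\eta$. Hence $S'$ is also a solution of size at most $k$.

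Finally, we compare intersections with $G - M$. We have $S' \cap V(G - M) = \emptyset$, because $N_{G-C^\star}(H) \subseteq M$ and $V(G-M) \cap V(G - C^\star) = V(H)$. By assumption $|S \cap V(G-M)| \geq |N| \geq 1$, where we use $|N| \geq 1$ established just before the claim. So $|S' \cap V(G-M)| = 0 < |S \cap V(G - M)|$, contradicting the choice of $S$. This proves the claim.

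The only subtle point is keeping the quantities straight: $S'$ has to stay of size at most $k$, and its intersection with $G - M$ has to drop strictly. Both follow because the entire neighborhood of $H$ lies in $N$ and our contrary assumption gives $|S \cap V(H)| \geq |N|$.
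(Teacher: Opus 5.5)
Your proof is correct and is essentially the paper's argument. The paper just cites \cref{obs:mostly_intact} (applied to $H=(G-C^\star)-M$ inside $G-C^\star$, whose neighborhood lies in $N$) together with the minimality of $|S\cap V(G-M)|$. You instead inline the observation's exchange argument, which also covers the degenerate case $N_{G-C^\star}(H)=\emptyset$ that the observation's hypothesis excludes.
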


    Now, we use the above claim to find a useful bag \(X_{i^*}\).
    \begin{claim}
        \label{thm:reduce_ccs:claim:clique_bag}
        There exists a bag \(X_{i^*}\) that contains \(N_{G}(C^\star) \setminus S\).
    \end{claim}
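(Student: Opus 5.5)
We will apply \cref{obs:disjoint_paths_yield_clique} to the set $Y = N_G(C^\star) \setminus S$ in the graph $\tilde G$ with the path decomposition $\PP$ of width at most $\eta$. It therefore suffices to show that any two distinct vertices $u,v \in Y$ are joined by $\eta + 2$ internally vertex-disjoint $u$-$v$ paths in $\tilde G$. If $|Y| = 1$, any bag containing the unique vertex of $Y$ works, and such a bag exists because $Y \subseteq V(\tilde G)$. So we may assume $|Y| \geq 2$.

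Fix distinct $u,v \in Y$ and set $X = \{u,v\}$. Since $Y \subseteq N_G(C^\star) \subseteq N$, we have $X \in \binom{N}{2}$, and $C^\star \in \CC_X$. The component $C^\star$ is unmarked, but every component of $\CC_X$ is marked whenever $|\CC_X| \leq 2|N|(\eta+2)+1$. Hence $|\CC_X| > 2|N|(\eta+2)+1$, and $X$ marked exactly $2|N|(\eta+2)+1$ components of $\CC_X$, all distinct from $C^\star$ and therefore present in $G - C^\star$. These components are pairwise vertex-disjoint, so by \cref{thm:solution_intersection_components_small} at most $|N| - 1$ of them intersect $S$. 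This leaves at least $2|N|(\eta+2)+1-(|N|-1) \geq \eta + 2$ marked components $C'$ with $V(C') \cap S = \emptyset$; here we use $|N| \geq 1$.

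For each such $C'$, both $u$ and $v$ have a neighbor in $C'$, and $C'$ is connected. Thus $\tilde G[V(C') \cup \{u,v\}]$ contains a $u$-$v$ path whose internal vertices all lie in $V(C')$. (Since $u,v \in M$ and $C' \subseteq G - M$, the path has at least one internal vertex, so it is not the edge $uv$.) Distinct components give internally vertex-disjoint paths, which yields the required $\eta+2$ paths. \cref{obs:disjoint_paths_yield_clique} then gives a bag $X_{i^*} \supseteq Y$.

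The only delicate step is the counting. We must use that the marked components lie outside $C^\star$, that they are disjoint so each vertex of $S$ spoils at most one of them, and that the bound of \cref{thm:solution_intersection_components_small} counts only vertices of $S$ in $G-M$, which are the only ones that can hit components.
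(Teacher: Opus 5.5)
Your proposal is correct and follows essentially the same route as the paper: for each pair in $N_G(C^\star)\setminus S$, the unmarked status of $C^\star$ forces $2|N|(\eta+2)+1$ marked components, at most $|N|-1$ of which meet $S$ by \cref{thm:solution_intersection_components_small}, giving $\eta+2$ internally vertex-disjoint paths in $\tilde G$, and then \cref{obs:disjoint_paths_yield_clique} applies. Your version just makes explicit the details the paper leaves implicit: the $|Y|=1$ case, that the paths' internal vertices lie in the components so $S\cap M$ is irrelevant, and the arithmetic $2|N|(\eta+2)+1-(|N|-1)\geq \eta+2$.
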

    \begin{claimproof}
        For any distinct \(x,y \in N_G(C^\star) \setminus S\) we show that there are \(\eta + 2\) internally vertex-disjoint \(x\)-\(y\) paths in \(\tilde{G}\).
        Then, the proof follows from \cref{obs:disjoint_paths_yield_clique}.
        Hence, consider some distinct \(x,y \in N_G(C^\star) \setminus S\).
        Recall that procedure \nameref{rule:reduce_ccs} did not mark \(C^\star\),
        and therefore \(\{x,y\}\) did not mark \(C^\star\).
        This implies that there are \(2|N|(\eta +2) + 1\) marked connected components \(C_i\) of $G - M$
        such that \(\{x,y\} \subseteq N_G(C_i)\), and \(\pw(C_i) \geq \pw(C^\star)\).
        Hence, there are \(2|N|(\eta +2) + 1\) internally vertex-disjoint \(x\)-\(y\) paths in \(G-C^\star\).
        Since \(|S \cap V(G - M)| < |N|\) by \cref{thm:solution_intersection_components_small}, graph \(\tilde{G}\) still contains at least \(\eta + 2\) of these paths.
    \end{claimproof}

    Next, we use this claim to find a connected component \(C\) that is similar to \(C^\star\).
    \begin{claim}
        \label{thm:reduce_ccs:claim:twinning_graph}
        There is a connected component \(C\) of \(G - M\) such that \begin{bracketenumerate}
            \item \(C\) is also a connected component of \(\tilde{G} - M\),
            \item each bag \(X_i\) with \(X_i \cap V(C) \neq \emptyset\) contains all vertices of \(N_{G}(C)\setminus S\),
            \item each bag \(X_i\) with \(X_i \cap V(C) \neq \emptyset\) contains all vertices of \(N_G(C^\star) \setminus S\),
            \item \(\pw(C) \geq \pw(C^\star)\).
        \end{bracketenumerate}
    \end{claim}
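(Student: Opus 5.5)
We will select $C$ among the components marked by a well-chosen pair $X = \{m_\ell, m_r\}$ of vertices of $N_G(C^\star) \setminus S$. By \cref{thm:reduce_ccs:claim:clique_bag}, every pair of vertices of $N_G(C^\star)\setminus S$ appears together in some bag, and some bag contains the whole set. Among the vertices of $N_G(C^\star) \setminus S$, let $m_\ell$ be one maximizing $\leftBagAny^{\PP}(\{m_\ell\})$ (introduced last) and $m_r$ one minimizing $\rightBagAny^{\PP}(\{m_r\})$ (forgotten first); possibly $m_\ell = m_r$, in which case $X$ is a singleton. By the interval property and the existence of $X_{i^*}$, every bag with index in $[\leftBagAny(\{m_\ell\}), \rightBagAny(\{m_r\})]$ contains all of $N_G(C^\star)\setminus S$. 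Equivalently, every bag containing both $m_\ell$ and $m_r$ contains $N_G(C^\star)\setminus S$.

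Since $C^\star$ was unmarked and $X \subseteq N_G(C^\star)$, the set $X$ marked $2|N|(\eta+2)+1$ components $C_j$ with $X \subseteq N_G(C_j)$ and $\pw(C_j) \ge \pw(C^\star)$. This already gives property (4) for each of them.

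By \cref{thm:solution_intersection_components_small}, fewer than $|N|$ of these components meet $S$. Every component not meeting $S$ is entirely present in $\tilde G$. Its neighbours in $\tilde G$ all lie in $M\setminus S$, so it is a connected component of $\tilde G - M$, which gives (1).

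Next apply \cref{thm:find_stable_graph} with $Y = N \setminus S$. We have more than $2|N|(\eta+2)+1-|N| \ge 2|N|(\eta+1)+1 \ge f_{\ref{thm:find_stable_graph}}(|Y|)$ components that are disjoint from $S$. Hence one of them, call it $C$, is $(N\setminus S)$-stable in $\PP$. Because $N_G(C) \subseteq N$ and $N_{\tilde G}(C) = N_G(C)\setminus S$, stability together with \cref{thm:stable_graph_bags_contain_neighbors} shows that every bag meeting $C$ contains $N_G(C)\setminus S$. This is (2).

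For (3), note that $m_\ell, m_r \in N_G(C)\setminus S$, so by (2) every bag meeting $C$ contains both $m_\ell$ and $m_r$. By the choice of $m_\ell$ and $m_r$ above, each such bag then contains all of $N_G(C^\star)\setminus S$.

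The main obstacle is this last step, namely justifying that a bag containing $\{m_\ell, m_r\}$ contains every $w\in N_G(C^\star)\setminus S$. Suppose a bag $X_i$ contains both $m_\ell$ and $m_r$. Then $i \ge \leftBagAny(\{m_\ell\}) \ge \leftBagAny(\{w\})$ and $i \le \rightBagAny(\{m_r\}) \le \rightBagAny(\{w\})$, so the interval property places $w$ in $X_i$. One must also confirm that $\leftBagAny(\{m_\ell\})\le \rightBagAny(\{m_r\})$, so that the argument is non-vacuous. This follows from the bag $X_{i^*}$ of \cref{thm:reduce_ccs:claim:clique_bag}, which contains all of $N_G(C^\star)\setminus S$. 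The singleton case and the case where $N_G(C^\star)\setminus S$ has a single vertex are handled identically.
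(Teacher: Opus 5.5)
Your proof is correct and follows the paper's argument essentially step by step. Your choice of $m_\ell$ as the vertex of $N_G(C^\star)\setminus S$ introduced last and $m_r$ as the one forgotten first coincides with the paper's choice via the leftmost and rightmost bags containing all of $N_G(C^\star)\setminus S$, and the subsequent steps (the marking count, discarding the fewer than $|N|$ components hit by $S$, applying \cref{thm:find_stable_graph} with $Y = N\setminus S$, and deriving (3) from (2)) all match the paper, with your interval argument usefully spelling out a step the paper only asserts.
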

    \begin{claimproof}
        Let \(\ell = \leftBagAll^{\mathcal{P}}(N_G(C^\star) \setminus S)\) and \(r = \rightBagAll^{\mathcal{P}}(N_G(C^\star) \setminus S)\), note that these indices are well-defined by \cref{thm:reduce_ccs:claim:clique_bag}.
        Then, since \(N_G(C^\star) \setminus S\) is nonempty by assumption, there is some vertex \(m_\ell \in N_G(C^\star) \setminus S\) that is contained in \(X_\ell\) but not in \(X_{\ell - 1}\), where we set \(X_0 = X_{t+1} = \emptyset\) for convenience.
        Similarly, there is some vertex \(m_r \in N_G(C^\star) \setminus S\) that is contained in \(X_{r}\) but not in \(X_{r+1}\).
        Note that \(m_\ell\) may be identical to \(m_r\).

        Let \(\mathcal{C}'\) be the set of connected components of \(G - M\) marked by the pair \(\{m_\ell,m_r\}\).
        Observe that $\{m_\ell,m_r\} \subseteq N_G(C^\star)$, but $C^\star$ was not marked by \(\{m_\ell,m_r\}\).
        Thus, the cardinality of the marked components \(\CC'\) is \(|\mathcal{C}'| = 2|N|(\eta + 2) + 1\).
        Since we mark in order of nonincreasing pathwidth,
        every component \(C' \in \CC'\) satisfies property (4), that \(\pw(C') \geq \pw(C^\star)\).
        Let \(\CC''\) be the components \(C'\) of \(\CC'\) where \(S \cap V(C') = \emptyset\).
        Every component \(C'' \in \CC''\) satisfies property~(1) that \(C''\) is also a connected component of \(\tilde{G}-M\).
        Further, \(|\CC''|\geq 2|N|(\eta +1) + 1\), since \(|S \cap V(G - M)| < |N|\) by \cref{thm:solution_intersection_components_small}.

        By \cref{thm:find_stable_graph}, there is a graph \(C \in \CC''\) that is \(N \setminus S\)-stable.
        As $N_{\tilde G}(C) \subseteq N\setminus S$, graph $C$ is stable.
        From \cref{thm:stable_graph_bags_contain_neighbors} we obtain property (2), that each bag with \(X_i \cap V(C) \neq \emptyset\) contains all of \(N_{\tilde G}(C)\).

        It remains to show property (3).
        By our choice of $m_\ell$, $m_r$, each bag that contains both $m_\ell$ and $m_r$ contains all vertices of $N_G(C^\star) \setminus S$.
        Note that $\{m_\ell,m_r\} \subseteq N_G(C)$, and therefore each bag that contains a vertex of $C$ contains both $m_\ell$ and $m_r$.
        Hence, each bag that contains a vertex of $C$ contains all vertices of $N_G(C^\star) \setminus S$.
    \end{claimproof}

    We want to finish the backward direction of safety by applying \cref{thm:embedding_with_twinning_graph}.
    Let \(X_i\) be a bag that contains \(\pw(C) +1 \geq \pw(C^\star) + 1\) vertices of \(C\).
    Create \(\hat G\) by taking the disjoint union of \(C^\star\) and \(\tilde G\).
    Create a path decomposition \(\hat{\mathcal{P}}_1\) of \(\hat G\) by appending a path decomposition of \(C^\star\) of width at most \(\eta\) to the decomposition \(\mathcal{P}\).
    Note that, by \cref{thm:reduce_ccs:claim:twinning_graph}, the bag $X_i$ contains all vertices of $N_{\hat G}(C) \cup N_{\hat G}(C^\star) = N_{G}(C) \setminus S$.
    Therefore, \(\hat G,C,C^\star\), path decomposition \(\hat{\mathcal{P}}_1\) and bag \(X_i\) fulfill the conditions of \cref{thm:embedding_with_twinning_graph}.
    By the theorem, there exists a path decomposition \(\hat{\PP}_2 = \hat{X}_1,\dots, \hat{X}_{\hat{t}}\) of \(\hat G\) of width at most \(\eta\), such that each bag \(\hat{X}_k\) that contains a vertex of \(C^\star\) contains all vertices of \(X_i \setminus (V(C) \cup V(C^\star))\).
    Since \(C, C^\star\) are connected components of \(G - M\), and \(N_G(C^\star) \setminus S \subseteq X_i\) by \cref{thm:reduce_ccs:claim:twinning_graph}, $\hat{\PP}_2$ is also a path decomposition of \(G - S\).

    To finish the proof, we notice that \nameref{rule:reduce_ccs} outputs a graph that is a proper subgraph of \(G\).
    Finally, note that the reduction rule runs in polynomial-time, because in particular we can compute the pathwidth of components of \(G - M\) in polynomial-time using \cref{thm:path_decomposition_algorithm} since $\eta$ is a constant.
\end{proof}

Note that we have fixed $\eta \geq 1$ earlier at the start of this section as this global assumption is useful later in the paper.
However, \cref{thm:reduce_ccs} and its proof also work for $\eta = 0$, that is for \textsc{Vertex Cover}.

Since $|N| \leq |M|$ (where $N = N_G(G - M)$) we can use \nameref{rule:reduce_ccs} to obtain a cubic bound on the number of connected components.
However, the reduction rule is significantly more powerful than that, because there can be instances \((G,k,M)\) of the problem where \(M\) is large but \(N\) is very small.
In such a case the obtained cubic bound in \(|N|\) is much better than a cubic bound in \(|M|\).
In fact, in our kernel we will also apply this rule for the case when \(M\) is roughly as large as \(G\), but \(N\) is a constant, in which case a cubic bound in \(|N|\) is a constant, whereas a cubic bound in $|M|$ does not give any nontrivial guarantees.

As a corollary, we obtain a uniform kernel when parameterizing by vertex cover size.
Since there are some minor details we need to take care of, we add a proof.

\thmUniformKernelVcParameterization*{}
\begin{proof}
    If the input instance $(G,k,M)$ fulfills $k \geq |M|$, then it is a yes-instance since $\pw(G - M) = 0 \leq \eta$.
    In this case, we output the trivial yes-instance $((M,\emptyset),0,M)$, this is one of the cases where we output the empty graph if $M = \emptyset$.

    If the number of connected components of $G - M$ is at most $f_{\ref{thm:reduce_ccs}}(|M|)$, then we already have $|V(G)|,k \in O(|M|^3)$, and we output $(G,k,M)$.
    Otherwise, we exhaustively apply \cref{thm:reduce_ccs}, and eventually obtain an equivalent instance $(G',k,M)$ with $|V(G')|,k \in O(|M|^3)$, where $G'$ is a subgraph of $G$.

    As explained after the proof of \cref{thm:reduce_ccs}, the case $\eta = 0$ can also be handled by our algorithm.
    Alternatively, the case $\eta = 0$ can, for example, also be handled
    by a standard modification of the Buss kernel~\cite{bussNondeterminism1993} that ensures the desired properties about $G'$ and $k'$,
    or by directly using the \textsc{Vertex Cover} kernel of Fomin et al.~\cite{DBLP:journals/jcss/FominJP14}.
\end{proof} 
\section{Reducing the Degree of Modulator Vertices}
\label{sec:kernel:bounding_mod_degree}
Now, we proceed to the reduction rules that are responsible for bounding the degree of vertices in $M$.
We will again start with some key results and definitions that appear multiple times in the rules.
We aim to give the lemmas and definitions in the most general way, that is, even though we want to eventually obtain a kernel for the problem \pwEtaDelElimGeneral{}, we may define notions and give results for \pwEtaDeletion{} or \pwEtaDeletionEta{}.

\subsection{The Key Ingredients}
\label{subsec:key_ingredients_modulator_degree_reduction}
We begin by introducing the notion of \emph{virtual cliques}.

\begin{definition}[Virtual Edge, Virtual Clique]
    \label{def:virtual_clique_edge}
    Let $(G,k)$ be an instance of \pwEtaDeletion{}.
    Two vertices $x,y \in V(G)$ with $x \neq y$ have a \emph{virtual edge} between them if, for any set $S \subseteq V(G)$ of size at most $k+1$ such that $\pw(G - S) \leq \eta$, and any path decomposition $X_1,\dots,X_t$ of $G - S$ of width at most $\eta$, there exists a bag $X_i$ that contains $\{x,y\} \setminus S$.

    For a vertex set $Z \subseteq V(G)$ we say that $Z$ is a \emph{virtual clique} (of $(G,k)$) if for each $x,y \in Z$ with $x \neq y$ we have that $x,y$ have a virtual edge between them.
\end{definition}

The key point of virtual cliques is that they function as if $Z$ were an actual clique.
Formally, by applying Helly's theorem for trees (\cref{thm:helly_property_of_trees}) similarly as for \cref{obs:disjoint_paths_yield_clique} we obtain the following.

\begin{observation}
    \label{obs:virtual_clique_appears_in_bag}
    Let $(G,k)$ be an instance of \pwEtaDeletion{}, and $Z \subseteq V(G)$ a virtual clique of $(G,k)$.
    Then, for any set $S \subseteq V(G)$ of size at most $k+1$ such that $\pw(G - S) \leq \eta$, and any path decomposition $X_1,\dots,X_t$ of $G - S$ of width at most $\eta$, there is a bag $X_i$ that contains $Z \setminus S$.
\end{observation}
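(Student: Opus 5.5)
Fix a set $S \subseteq V(G)$ with $|S| \leq k+1$ and $\pw(G - S) \leq \eta$, and a path decomposition $X_1,\dots,X_t$ of $G - S$ of width at most $\eta$. The plan is to mirror the proof of \cref{obs:disjoint_paths_yield_clique}. The difference is that the pairwise co-occurrence in a bag now comes directly from the definition of a virtual edge, rather than from counting disjoint paths.

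Set $Z' = Z \setminus S$. If $Z' = \emptyset$, any bag works (assuming $G - S$ is nonempty; otherwise the statement is vacuous for the empty decomposition). Suppose $Z' \neq \emptyset$. For each $u \in Z'$, let $I_u = \{i \in [t] \mid u \in X_i\}$. By property (2) of \cref{def:path_decomposition}, $I_u$ is a nonempty interval, so it induces a connected subgraph of the path $T$ on nodes $1,\dots,t$.

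Next, take distinct $u,v \in Z'$. Since $Z$ is a virtual clique, $u$ and $v$ have a virtual edge. Applying \cref{def:virtual_clique_edge} to this particular $S$ and decomposition gives a bag $X_i$ containing $\{u,v\} \setminus S = \{u,v\}$. Hence $I_u \cap I_v \neq \emptyset$; the case $u = v$ is trivial.

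Finally, \cref{thm:helly_property_of_trees} applied to $T$ and the family $\{I_u\}_{u \in Z'}$ yields an index $i \in \bigcap_{u \in Z'} I_u$, so $Z \setminus S = Z' \subseteq X_i$. There is no real obstacle here. The only point needing care is that the definition of a virtual edge quantifies over exactly the same pairs $(S, \text{decomposition})$ as the statement, so the same fixed $S$ and decomposition can be used for every pair $u,v$.
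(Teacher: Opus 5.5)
Your proposal is correct and follows essentially the same route as the paper, which derives this observation by combining the pairwise bag co-occurrence given by the virtual edges with Helly's theorem for trees (\cref{thm:helly_property_of_trees}) on the bag-index intervals, exactly as in \cref{obs:disjoint_paths_yield_clique}. One small imprecision: if the decomposition had no bags at all, the statement would fail rather than hold vacuously, but this degenerate case is ignored in the paper as well and does not affect your argument.
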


By \cref{obs:virtual_clique_appears_in_bag}, we could clearly have a rule that makes a virtual clique $Z$ into an actual clique by adding edges.
In fact, this is a well-known step used in the literature for other \fDeletion{} problems; see, e.g.,~\cite[Lemma 2]{cyganHardnessLosingWidth2014} and \cite[Lemma 4.7]{giannopoulouUniformKernelizationComplexity2017}.
Similar edge addition rules have also been used in algorithms and kernels for the problems of computing the treewidth and pathwidth of graphs~\cite{bodlaenderLineartimeAlgorithmFinding1996,bodlaenderPreprocessingTreewidthCombinatorial2013,bodlaenderKernelBoundsStructural2012,clautiauxNewLowerUpper2003}.
However, since we want to ensure that the graph output by our kernel is a minor of the input graph, we want to avoid having such a rule.
Instead, we use this more abstract notion of virtual cliques to achieve the same properties without adding edges.

Note that \cref{def:virtual_clique_edge} has some leeway with regard to the solution size: vertices $\{x,y\} \setminus S$ appear in some bag even if the solution has size $k+1$.
This property will be handy, because we want to exploit virtual cliques even after deleting an edge of the graph $G$.
Now, we introduce the notion of a \emph{polishing} set of an instance, which builds upon the definition of virtual cliques.

\begin{definition}[Polishing Set]
    \label{def:polishing_set}
    Let $(G,k,M)$ be an instance of \pwEtaDeletionEta{}.
    Then, a set $Y \subseteq V(G) \setminus M$ is a \emph{polishing set} of this instance if
    \begin{bracketenumerate}
        \item $|Y| \leq |M|^2 \cdot (k + \eta + 2) \cdot (\eta + 1)$, and
        \item for each connected component $C$ of $G - (M \cup Y)$, the neighborhood $N_G(C) \cap M$ is a virtual clique of $(G,k)$, and $|N_G(C)  \cap Y| \leq 2( \eta + 1)$.
    \end{bracketenumerate}
\end{definition}

A polishing set always exists and can be computed efficiently.
We note that the computation of a polishing set is, by now, a well-known step present in the literature.
For example, it is one of the key ingredients for the solution-size kernel for $\fDeletion{}$
in case $\mathcal{F}$ contains a planar graph given by Fomin et al.~\cite{fominPlanarFdeletionApproximation2012}.
We provide a proof purely for the sake of completeness.

\begin{restatable}{lemma}{thmComputePolishingSet}
    \label{thm:polish_instance}
    There is a polynomial-time algorithm \polishingSetAlgo{} that takes an instance $(G,k,M)$ of \pwEtaDeletionEta{} as input and outputs a polishing set $Y$ of the instance.
\end{restatable}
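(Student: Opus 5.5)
The algorithm first builds a set $Y_1$ that certifies virtual edges. Then, working inside the pathwidth-$\eta$ graph $G - M$, it adds a set $Y_2$ that cuts every component down to a boundary of size at most $2(\eta+1)$ in $Y$. The output is $Y = Y_1 \cup Y_2$.

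\textbf{Step 1: virtual edges via flows.} Let $P = \binom{M}{2}$. For each pair $\{x,y\} \in P$, compute in polynomial time a maximum family of internally vertex-disjoint $x$-$y$ paths in $G - (M \setminus \{x,y\})$, i.e.\ paths whose interior avoids $M$.

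If at least $k+\eta+3$ such paths exist, then $\{x,y\}$ has a virtual edge. Indeed, any $S$ with $|S| \le k+1$ leaves at least $\eta+2$ of these paths intact, so \cref{obs:disjoint_paths_yield_clique} (applied to $\{x,y\}\setminus S$ whenever both survive) puts $x,y$ in a common bag. In this case we add nothing for $\{x,y\}$.

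Otherwise, Menger's theorem gives an $x$-$y$ separator $T_{xy} \subseteq V(G) \setminus M$ in $G - (M \setminus \{x,y\})$ with $|T_{xy}| \le k+\eta+2$. We would like to add $T_{xy}$ to $Y_1$. After removing $M \cup Y$, no component $C$ can then have both $x$ and $y$ in its neighbourhood, since $C$ together with $x,y$ would contain an $x$-$y$ path avoiding $T_{xy}$. Hence $N_G(C) \cap M$ consists only of pairs from the first case, so it is a virtual clique.

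The size bound requires care. The definition allows $|M|^2 (k+\eta+2)(\eta+1)$ vertices, so each pair may contribute a separator of size $k+\eta+2$, plus extra factors $(\eta+1)$. I would use the $(\eta+1)$ factor to "close" the separators: for each $v \in T_{xy}$, take a bag of a fixed width-$\eta$ path decomposition $\mathcal{P}$ of $G-M$ containing $v$, and add the whole bag. This yields $|Y_1| \le |M|^2(k+\eta+2)(\eta+1)$ overall, and it also feeds Step 2.

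\textbf{Step 2: bounding $|N_G(C) \cap Y|$.} Let $Y$ be the union of the chosen bags of $\mathcal{P}$. Order the chosen bag indices $i_1 < \dots < i_s$. Any component $C$ of $(G-M) - Y$ lives entirely in bags strictly between two consecutive chosen indices $i_j$ and $i_{j+1}$ (or before $i_1$, or after $i_s$). This follows from \cref{thm:connected_subgraphs_have_interval}: $C$ meets a contiguous interval of bags and avoids every chosen bag, since those bags lie entirely in $Y$.

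By \cref{thm:adhesions_are_separators}, the neighbours of $C$ in $G - M$ then lie in $X_{i_j} \cup X_{i_{j+1}}$. So $|N_G(C) \cap Y| \le 2(\eta+1)$. Adding whole bags never breaks Step 1's guarantee, because it only removes more vertices, and the separation property is preserved under further deletion.

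\textbf{Main obstacle.} I expect the crux to be checking that the virtual-edge conclusion is robust to Definition \ref{def:virtual_clique_edge}'s quantification over all $S$ of size up to $k+1$. In particular, the paths used must avoid $M$ or be counted correctly when $x$ or $y$ lies in $S$; that case is trivial because $\{x,y\}\setminus S$ then has at most one vertex.

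A second point to handle is polynomial running time. It is routine: max-flow per pair, plus one computation of $\mathcal{P}$ by \cref{thm:path_decomposition_algorithm}.
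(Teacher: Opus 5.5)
Your proposal is correct and follows essentially the paper's proof. In both, $Y$ is obtained by taking, for each weakly connected pair in $M$, one bag of a fixed width-$\eta$ path decomposition of $G-M$ per separator vertex, and the virtual-edge and $2(\eta+1)$-boundary arguments are the same. The only difference is where the cut is computed: the paper takes a minimum cut in $G$ and keeps $Z\setminus M$, while you cut in $G-(M\setminus\{x,y\})$, which works equally well. One small fix is needed: you must explicitly skip pairs with $xy\in E(G)$, which trivially have a virtual edge, because your ``otherwise'' branch invokes a Menger separator that does not exist for adjacent vertices; the paper handles this the same way.
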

\begin{proof}
    The algorithm works as follows.
    Compute a path decomposition $X_1,\dots,X_t$ of $G - M$ of width at most $\eta$ in linear time using \cref{thm:path_decomposition_algorithm}.
    For each $\{m_1,m_2\} \in \binom{M}{2}$, with $m_1m_2 \notin E(G)$, we compute a minimum-size set $Z_{\{m_1,m_2\}}$ such that $G - Z_{\{m_1,m_2\}}$ has no $m_1$-$m_2$ path and $m_1,m_2 \notin Z_{\{m_1,m_2\}}$.
    This is possible in polynomial-time using max-flow/min-cut algorithms.
    For each pair $\{m_1,m_2\}$ with $m_1m_2 \notin E(G)$ and $|Z_{\{m_1,m_2\}}| \leq k + \eta + 2$, and each $w \in Z_{\{m_1,m_2\}} \setminus M$ we mark a bag of $X_1,\dots,X_t$ that contains vertex $w$.
    Finally, our algorithm outputs $Y$, the union of all marked bags.

    We claim that the output set $Y$ is a polishing set of the input instance $(G,k,M)$.
    The size of $Y$ is at most $|M|^2 \cdot (k + \eta + 2) \cdot (\eta + 1)$, because we mark at most $k + \eta + 2$ bags for each pair $\{m_1,m_2\}$, and each bag contains at most $\eta + 1$ vertices.

    We proceed to proving the second property of polishing sets and begin by showing that for each connected component $C$ of $G - (M \cup Y)$ we have that $N_G(C) \cap M$ is a virtual clique.
    Fix some connected component $C$ of $G - (M \cup Y)$, let $S$ be a set of size at most $k+1$ such that $\pw(G- S) \leq \eta$, and let $X_1',\dots,X_t'$ be a path decomposition of $G - S$ of width at most $\eta$.
    Let $m_1,m_2$ be two distinct vertices of $N_G(C) \cap M$.

    If $\{m_1,m_2\} \cap S \neq \emptyset$, it is clear that some bag $X_i'$ contains $\{m_1,m_2\} \setminus S$.
    So we assume $\{m_1,m_2\} \cap S = \emptyset$.
    If $m_1m_2 \in E(G)$, then it is also clear that $m_1,m_2$ must always appear together in some bag $X_i'$.
    Therefore, we may additionally assume there is no edge between $m_1$ and $m_2$.

    There is an $m_1$-$m_2$ path in $G[V(C) \cup \{m_1,m_2\}]$, and therefore the pair $\{m_1,m_2\}$ did not mark any bags for the definition of the set $Y$.
    This implies $|Z_{\{m_1,m_2\}}| \geq k + \eta + 3$, and
    hence, by Menger's theorem, there are at least $k + \eta + 3$ internally vertex-disjoint paths from $m_1$ to $m_2$ in $G$.
    Since $|S| \leq k +1$, at least $\eta + 2$ of these paths still exist in $G - S$.
    Therefore, by \cref{obs:disjoint_paths_yield_clique} there is a bag $X_i'$ that contains $\{m_1,m_2\}$.
    So, there is a virtual edge between $m_1$ and $m_2$, and as they were arbitrary neighbors of $N_G(C) \cap M$, also that $N_G(C) \cap M$ is a virtual clique.

    Finally, observe that the vertices of $C$ stem from some interval of bags $X_i,\dots,X_j$, such that $N_{G - M}(C) \subseteq X_{i-1} \cup X_{j+1}$ (where we assume that $X_0 = X_{t+1} = \emptyset$).
    Therefore, $|N_G(C) \cap Y| \leq 2(\eta + 1)$.
    This concludes the proof, since we have now shown that $Y$ is a polishing set, and that it can be computed in polynomial-time.
\end{proof}

Next, we introduce protrusions and near-protrusions,
which are fundamental to the design of kernelization algorithms; see~\cite{DBLP:journals/jacm/BodlaenderFLPST16,DBLP:journals/siamcomp/FominLST20,fominPlanarFdeletionApproximation2012} for a more detailed overview.
Note that the definitions we use are not completely identical to the standard definitions of these concepts, but they still essentially serve the same purpose.

\begin{definition}[Protrusions]
    \label{def:protrusion}
    Let $G$ be a graph.
    A \emph{protrusion} of $G$ is an induced subgraph $H \subseteq G$ such that $|N_G(H)| \leq 2(\eta + 1)$ and $\pw(H) \leq \eta$.
\end{definition}

Protrusions represent parts of a graph with a small boundary that can be safely replaced
by constant-size equivalent structures without affecting the existence of a solution.
A similarly useful notion is that of a near-protrusion, introduced in~\cite{fominPlanarFdeletionApproximation2012}
as an induced subgraph that becomes a protrusion upon deletion of an optimal solution.

\begin{definition}[Near-protrusions]
    \label{def:near_protrusion}
    Let $(G,k)$ be an instance of \pwEtaDeletion{}.
    An induced subgraph $H \subseteq G$ is a \emph{near-protrusion}  of $(G,k)$ if $\pw(H) \leq \eta$, and for all sets $S \subseteq V(G)$ of size at most $k+1$ such that $\pw(G - S) \leq \eta$, we have $|N_G(H) \setminus S| \leq 3(\eta + 1)$.
\end{definition}
Therefore, if we have a near-protrusion $H$ of $G$, even a solution that is slightly larger than $k$ will select almost all neighbors of $H$.
Now, we exhibit why polishing sets are useful: we can use them to obtain near-protrusions.

\begin{lemma}
    \label{thm:ccs_are_near_protrusions}
    Let $(G,k,M)$ be an instance of \pwEtaDeletionEta{}, and $Y$ be a polishing set of $(G,k,M)$.
    Then, each connected component $C$ of $G - (M \cup Y)$ is a near-protrusion of $(G,k)$.
\end{lemma}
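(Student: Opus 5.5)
We have to verify the two conditions of \cref{def:near_protrusion} for a fixed connected component $C$ of $G - (M \cup Y)$. The first is immediate: $C$ is an induced subgraph of $G - M$, so $\pw(C) \leq \pw(G - M) \leq \eta$ because pathwidth is minor-monotone. For the second condition, fix any $S \subseteq V(G)$ with $|S| \leq k+1$ and $\pw(G - S) \leq \eta$, and split the neighborhood as
\[
N_G(C) \setminus S = \bigl((N_G(C) \cap Y) \setminus S\bigr) \cup \bigl((N_G(C) \cap M) \setminus S\bigr).
\]
No other vertices occur, because $C$ is a component of $G - (M \cup Y)$. By the second property of \cref{def:polishing_set}, the first part has size at most $|N_G(C) \cap Y| \leq 2(\eta+1)$. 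It therefore suffices to show $|(N_G(C) \cap M) \setminus S| \leq \eta + 1$.

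For this we use that $Z = N_G(C) \cap M$ is a virtual clique of $(G,k)$, again by \cref{def:polishing_set}. Fix a path decomposition $X_1,\dots,X_t$ of $G - S$ of width at most $\eta$; one exists since $\pw(G - S) \leq \eta$. Then \cref{obs:virtual_clique_appears_in_bag} applies to $S$, since $|S| \leq k+1$, and gives a bag $X_i \supseteq Z \setminus S$. Hence $|Z \setminus S| \leq |X_i| \leq \eta + 1$. Adding the two parts gives $|N_G(C) \setminus S| \leq 2(\eta+1) + (\eta+1) = 3(\eta+1)$, as required.

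There is no real obstacle in this argument. The only point to check is that the quantification over solutions of size up to $k+1$ in \cref{def:near_protrusion} is exactly the one allowed by the virtual-clique definition and \cref{obs:virtual_clique_appears_in_bag}. It is, so the observation applies verbatim.
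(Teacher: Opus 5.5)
Your proposal is correct and follows essentially the same route as the paper: it splits $N_G(C)$ into the part in $Y$, which has size at most $2(\eta+1)$, and the part in $M$, which is a virtual clique. You then apply \cref{obs:virtual_clique_appears_in_bag} to a width-$\eta$ path decomposition of $G-S$ to bound $|(N_G(C)\cap M)\setminus S|$ by $\eta+1$.
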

\begin{proof}
    Since $C$ is a connected component of $G - (M \cup Y)$, we have $\pw(C) \leq \eta$.
    Now, note that $N_G(C)$ can be partitioned into $A = N_G(C) \cap M$, a virtual clique, and $B = N_G(C) \cap Y$, of size at most $2(\eta + 1)$.
    Now, let $S$ be a set of size at most $k+1$ such that $\pw(G - S) \leq \eta$.
    Let $X_1,\dots,X_t$ be a path decomposition of $G -S$ of width at most $\eta$.
    By \cref{obs:virtual_clique_appears_in_bag} there exists a bag $X_i$ that contains $A \setminus S$.
    As $|X_i| \leq \eta + 1$, we obtain $|A \setminus S| \leq \eta + 1$.
    Therefore, $|N_G(C) \setminus S| \leq |(A \cup B) \setminus S| \leq \eta + 1 + 2(\eta + 1) = 3(\eta + 1)$, showing that $C$ is a near-protrusion.
\end{proof}

Further, we show that any minimum solution selects only few vertices of a near-protrusion, and that is true even after deleting an edge.
The following is similar to \cref{obs:mostly_intact}, however it gives slightly stronger guarantees and is tuned to the definition of near-protrusions.

\begin{lemma}
    \label{thm:near_protrusions_yield_small_intersection_and_unsel_neighborhood}
    Let $(G,k,M)$ be an instance of \pwEtaDeletionEta{} and $C$ be a near-protrusion of $(G,k)$.
    Let $G'$ result from $G$ by deleting an edge between $v \in V(C)$ and $m \in N_G(C)$.
    Let $S'$ be a set of minimum size such that $\pw(G' - S') \leq \eta$, and assume $|S'| \leq k$.
    Then, we have
    \begin{bracketenumerate}
        \item $|N_G(C) \setminus S'| \leq 3(\eta + 1)$, and
        \item $|V(C) \cap S'| \leq 3(\eta + 1)$.
    \end{bracketenumerate}
\end{lemma}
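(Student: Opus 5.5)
Part (1) transfers from $G$ to $G'$ through the size slack in the near-protrusion definition, and part (2) follows from an exchange argument combined with minimality of $S'$.

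For (1), set $S = S' \cup \{v\}$. Then $|S| \leq k+1$. Moreover $G - S = G' - S$, because the only edge that differs between $G$ and $G'$ is $mv$, and it is incident to $v \in S$. Since $G' - S$ is a subgraph of $G' - S'$, we get $\pw(G - S) \leq \pw(G' - S') \leq \eta$. As $C$ is a near-protrusion of $(G,k)$, this gives $|N_G(C) \setminus S| \leq 3(\eta+1)$. Because $v \in V(C)$ and $N_G(C)$ is disjoint from $V(C)$, we have $N_G(C) \setminus S = N_G(C) \setminus S'$. This proves (1).

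For (2), argue by contradiction: suppose $|V(C) \cap S'| > 3(\eta+1)$, and define
\[
S'' = (S' \setminus V(C)) \cup N_G(C).
\]
Then
\[
|S''| \leq |S'| - |V(C) \cap S'| + |N_G(C) \setminus S'| < |S'|,
\]
using (1). It remains to show $\pw(G' - S'') \leq \eta$, which contradicts the minimality of $S'$.

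In $G'$ we have $N_{G'}(C) \subseteq N_G(C) \subseteq S''$, since deleting an edge can only shrink neighborhoods. Also $S'' \cap V(C) = \emptyset$. Hence $G' - S''$ is the disjoint union of $C' := G'[V(C)]$ and $G' - S'' - V(C)$. The second part is an induced subgraph of $G' - S'$, since $S' \setminus V(C) \subseteq S''$ and all of $V(C)$ is removed. So its pathwidth is at most $\eta$. Since $C'$ is an induced subgraph of $G$ equal to $C$ (the deleted edge $mv$ has $m \notin V(C)$), we have $\pw(C') = \pw(C) \leq \eta$ by the near-protrusion definition. The pathwidth of a disjoint union is the maximum over its parts, so $\pw(G' - S'') \leq \eta$, the desired contradiction.

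The main point needing care is the first step: the near-protrusion property is stated for $G$, not $G'$, and transferring it requires the $k+1$ slack, which is exactly what adding $v$ uses. The rest is routine bookkeeping about which vertices lie in $V(C)$ and which in $N_G(C)$.
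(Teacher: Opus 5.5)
Your proof is correct and follows essentially the same route as the paper. For (1) you add $v$ to $S'$ to exploit the $k+1$ slack in the near-protrusion definition, and for (2) you use the exchange argument that replaces $S' \cap V(C)$ by the unselected neighborhood of $C$. The only cosmetic difference is that you add $N_G(C)$ rather than $N_{G'}(C)$ to the exchanged set; this changes nothing, since both are bounded via (1) and both separate $C$ from the rest of $G'$.
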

\begin{proof}
    We have that $S = S' \cup \{v\}$ is a set of size at most $k+1$ such that $\pw(G -S) \leq \eta$.
    By \cref{def:near_protrusion} we have that $|N_G(C) \setminus S| \leq 3(\eta + 1)$.
    As $v \in V(C)$ and $G'$ is created from $G$ by deleting the edge $mv$, we have that $N_G(C) \setminus S = N_{G}(C) \setminus S'$.
    Therefore, we have property (1).

    Now, assume that $|V(C) \cap S'| > 3(\eta + 1)$.
    Set $\hat S = (S' \setminus V(C)) \cup N_{G'}(C)$.
    Each connected component of $C$ is a connected component of $G' - \hat S$, and as $C$ is a near-protrusion we have $\pw(C) \leq \eta$.
    Moreover, we have $S' \cap V(G' - C) \subseteq \hat S \cap V(G' - C)$, and therefore $\pw(G' - \hat S) \leq \eta$.
    From property (1) and $N_{G'}(C) \subseteq N_{G}(C)$ we have $|N_{G'}(C) \setminus S'| \leq 3(\eta + 1)$.
    Hence, $|\hat S| \leq |S'| - |V(C) \cap S'| + |N_{G'}(C) \setminus S'| < |S'|$, contradicting that $S'$ was of minimum size.
\end{proof}

\subsection{The Reduction Rule}
We proceed to the reduction rule we use.
The rule is complex, and uses multiple different values that represent bounds we use to achieve certain properties.
To improve the exposition, we introduce symbols and functions that represent these values.
The function $\degreeBoundTreedepthCase: \nat \to \nat$ will be defined later in \cref{def:important_bounds} and is rather complex, but we already remark that it is in $\Oh(x^2)$ where $x$ is its input, and therefore uniformly polynomial, and also that it is monotonically increasing.

\begin{definition}[Important Bounds for the Modulator Degree Reduction]
    \label{def:important_bounds_degree_reduction}
    We define the following functions
    \begin{align*}
        \polishingSetSize[x,y]           & = x^2 \cdot (y + \eta + 2) \cdot (\eta + 1),                                    \\
        \degreeBoundCaterpillarCase[x] & = \BoundCaterpillarOverall[x] + 1,                                                   \\
        \degreeBoundCC[x]              & = \degreeBoundTreedepthCase(x + 2(\eta + 1)) \cdot \degreeBoundCaterpillarCase[x], \\
        \ccBound[x,y]                    & = (\polishingSetSize[x,y] + x)^2 \cdot (2(\polishingSetSize[x,y] + x)(\eta+2) + 1),     \\
        \degreeBound[x,y]                & = \degreeBoundCC[x] \cdot \ccBound[x,y] + \polishingSetSize[x,y].                              \\
    \end{align*}
\end{definition}

Let us elaborate a bit on the meaning of these values and the reduction rule.
We will compute a polishing set $Y$, that has size at most $\polishingSetSize$, of the instance $(G,k,M)$, and then aim to bound the number of neighbors that each $m \in M$ has in $G - M$.
This is achieved by bounding the degree of $m$ in each connected component of $G - (M \cup Y)$ by $\degreeBoundCC$, and the number of such components by $\ccBound$.
Then, the number of neighbors of $m$ in $G - M$ is bounded by $\degreeBound$.
To bound the degree of $m$ in component $C$ of $G - (M \cup Y)$, we compute a $\pwEtaBoundClass[1]$-elimination forest $\FF$ of $C$.
We then bound the number of $\FF$-neighbors of $m$ by $\degreeBoundTreedepthCase(|M| + 2(\eta + 1))$, and the number of neighbors of $m$ in caterpillars corresponding to leaves of $\FF$ by $\degreeBoundCaterpillarCase$.

The goal of the rest of this section is to prove \cref{thm:reduce_modulator_degree}.
\begin{restatable}{lemma}{thmReduceDegree}
    \label{thm:reduce_modulator_degree}
    Let $\eta, \beta \geq 0$ be fixed constants.
    \namecref{rule:reduce_modulator_degree} \nameref{rule:reduce_modulator_degree} runs in polynomial-time, and given an instance $(G,k,M)$ of \pwEtaDelElimGeneral{} as input, it outputs an equivalent instance $(G',k,M)$ where $G'$ is a minor of $G$.
    If there is a vertex $m \in M$ with $|N_G(m) \setminus M|  > \degreeBound$, then $G'$ is a proper minor of $G$.
\end{restatable}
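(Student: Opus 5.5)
The plan is to build \nameref{rule:reduce_modulator_degree} as a dispatcher around the three sub-procedures already available: \nameref{rule:reduce_ccs} (\cref{thm:reduce_ccs}), the elimination-tree edge deletion (\cref{thm:treedepth_case}), and the caterpillar reduction (\cref{thm:caterpillar_case}). If no vertex $m \in M$ has $|N_G(m) \setminus M| > \degreeBound$, the rule returns $(G,k,M)$ unchanged; this is trivially equivalent and a minor of $G$. Otherwise fix such an $m$ and compute a polishing set $Y$ with \polishingSetAlgo{} (\cref{thm:polish_instance}), so $|Y| \leq \polishingSetSize$. Set $M^+ = M \cup Y$, and view $(G,k,M^+)$ as an instance of \pwEtaDeletionEta{}; this is valid since $\pw(G - M^+) \leq \pw(G-M) \leq \eta$. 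The solution-existence question is the same, so equivalence for the auxiliary instance transfers directly back to $(G,k,M)$ with the original $M$.

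\textbf{Step 1.} If $G - M^+$ has more than $f_{\ref{thm:reduce_ccs}}(|N_G(G-M^+)|)$ components, apply \cref{thm:reduce_ccs} to $(G,k,M^+)$. This yields an equivalent instance whose graph is a proper subgraph of $G$, and we output it with the original $M$. Otherwise the number of components is at most $f_{\ref{thm:reduce_ccs}}(|M|+|Y|) \leq \ccBound$. Since $m$ has at most $|Y| \leq \polishingSetSize$ neighbours in $Y$, the definition of $\degreeBound$ and pigeonhole give a component $C$ of $G - M^+$ with $|N_G(m) \cap V(C)| > \degreeBoundCC$.

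\textbf{Step 2.} By \cref{thm:ccs_are_near_protrusions}, $C$ is a near-protrusion of $(G,k)$. Moreover, $|N_G(C)| \leq |M| + 2(\eta+1)$ by the polishing-set definition. Because $C$ is a connected induced subgraph of $G-M$, it lies in $\GG^\beta$, a minor-closed class. So \cref{thm:compute_elimination_forests} gives a nice $\pwEtaBoundClass[1]$-elimination tree $\FF$ of $C$ of depth at most $\beta$.

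\textbf{Step 3.} Split into two cases according to how the neighbours of $m$ in $C$ are distributed.
\begin{itemize}
\item If $m$ has more than $\degreeBoundTreedepthCase(|M|+2(\eta+1)) \geq \degreeBoundTreedepthCase(|N_G(C)|)$ $\FF$-neighbours, where the inequality uses monotonicity, apply \cref{thm:treedepth_case}. It returns an equivalent instance on a proper subgraph of $G$.
\item Otherwise every $\FF$-neighbour node contributes neighbours, and non-leaf nodes contribute exactly one each. Since $|N_G(m)\cap V(C)| > \degreeBoundTreedepthCase(\cdot)\cdot\degreeBoundCaterpillarCase$, some leaf $n$ has $|N_G(m) \cap Y_n| \geq \degreeBoundCaterpillarCase$. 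Then $C[Y_n]$ has pathwidth at most $1$, hence is a caterpillar forest; it is induced in $G - M$, and it has more than $\BoundCaterpillarOverall[|M|]$ vertices.
\end{itemize}

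\textbf{Main obstacle.} The hard step is reconciling this caterpillar bound with the hypothesis of \cref{thm:caterpillar_case}. That lemma needs a single caterpillar $C'$ with $|V(C')| > f_{\ref{thm:caterpillar_case}}(|N_G(C')|)$, but $N_G(C')$ may include ancestor nodes in $C$ and vertices of $Y$, not only $M$. The extra budget $2(\eta+1)+\beta$ inside $\degreeBoundCaterpillarCase$ is meant to absorb exactly this. I would bound
\[
|N_G(C')| \leq |M| + 2(\eta+1) + \beta,
\]
using at most $\beta$ ancestors, at most $2(\eta+1)$ polishing-set neighbours, and at most $|M|$ vertices of $M$. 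I also need to pass to one connected caterpillar component of $C[Y_n]$ with many neighbours of $m$. If $C[Y_n]$ is disconnected, I would either run the argument on the component of maximum size or note that niceness of $\FF$ makes $Y_n$ connected, since $n$ is a leaf and $G[V_n^\FF]=C[Y_n]$. With that, \cref{thm:caterpillar_case} applies to $(G,k,M)$ and outputs an equivalent instance on a proper minor. In every branch the output is a minor of $G$, proper when the degree condition fires. Polynomial running time follows from the polynomial-time claims of each invoked lemma.
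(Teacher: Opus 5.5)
Your proposal is correct and follows the paper's proof essentially step for step: compute a polishing set $Y$, invoke \nameref{rule:reduce_ccs} on $(G,k,M\cup Y)$ if there are too many components, use pigeonhole to find a component $C$ with many neighbours of $m$, and then dispatch either to \cref{thm:treedepth_case} or, via a leaf bag $H$ that is a connected caterpillar (by niceness) with $|N_G(H)| \leq |M|+2(\eta+1)+\beta$, to \cref{thm:caterpillar_case}. The only differences are cosmetic: your tests use $f_{\ref{thm:reduce_ccs}}(|N_G(G-M^+)|)$ and $\degreeBoundTreedepthCase(|M|+2(\eta+1))$ where the algorithm uses $\ccBound$ and $\degreeBoundTreedepthCase(|N_G(C)|)$, which only shifts where monotonicity is invoked, and the ``largest component'' fallback you mention is unnecessary because niceness already makes $G[Y_n]$ connected.
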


Before proving this lemma and stating algorithm \nameref{rule:reduce_modulator_degree},
we introduce the properties of two other algorithms we will use as sub-routines.
The first of them is \nameref{rule:degree_reduction_elimination_forest}.
It reduces the instance size if there is an induced subgraph $C$ of $G - (M \cup Y)$, such that some vertex $m \in M$ has many $\FF$-neighbors in a $\pwEtaBoundClass[1]$-elimination forest $\FF$ of $C$; here $Y$ is again a polishing set of the instance.
By \cref{thm:ccs_are_near_protrusions} we have that $C$ is a near-protrusion, and moreover $|N_G(C)| \leq |M| + 2(\eta+1)$, which are properties we can exploit.
Formally, we have the following result.

\thmTreedepthCase*{}

Of course, we will apply \cref{thm:treedepth_case} for $\HH = \pwEtaBoundClass[1]$, but it actually works for any graph class $\HH$.
The second algorithm we utilize, \nameref{rule:degree_reduction_caterpillar}, aims to reduce the size of the instance if, for some connected component $H$ of $G - (M \cup Y)$,
\cref{thm:treedepth_case} does not apply,
but still some vertex $m \in M$ has many neighbors in $H$.
In that case, there is some subgraph $C$ of $H$, such that $C$ is a caterpillar, and $m$ still has numerous neighbors in $C$, which in turn ensures that $C$ is large.
We prove the following lemma in~\Cref{subsec:bounding_degree_in_elim_forest}.

\thmCaterpillarCase*{}

\begin{algorithm}[t]
    \caption{\algofont{ReduceModulatorDegree}}
    \label{rule:reduce_modulator_degree}
    \Input{Instance $(G,k,M)$ of \pwEtaDelElimGeneral{}}
    \If{for all $m \in M$ we have $|N_G(m) \setminus M| \leq \degreeBound$ \label{rule:reduce_modulator_degree_start}}{
        \Return{$(G,k,M)$}
    } \label{rule:reduce_modulator_degree_return1}
    $Y \leftarrow$ \polishingSetAlgo{}$(G,k,M)$ \label{rule:reduce_modulator_degree_compute_polishing_set}\;
    \If{$G - (M \cup Y)$ has more than $\ccBound$ connected components}{
        $(G',k,M \cup Y) \leftarrow$ \nameref{rule:reduce_ccs}$(G,k,M \cup Y)$ \label{rule:reduce_modulator_degree_reduce_components}\;
        \Return{$(G',k,M)$}
    }
    $m \leftarrow$ vertex of $M$ maximizing $|N_G(m) \setminus M|$ \label{rule:reduce_modulator_degree_fix_m}\;
    $C \leftarrow$ connected component of $G - (M \cup Y)$ such that $|N_G(m) \cap V(C)| \geq \degreeBoundCC$ \label{rule:reduce_modulator_degree_fix_component}\;
    $\mathcal{F} = (F,\{Y_n\}_{n \in V(F)}) \leftarrow$ nice $\pwEtaBoundClass[1]$-elimination tree of $C$ \label{rule:reduce_modulator_degree_compute_elim_tree} of depth at most $\beta$\;
    \If{$m$ has more than $\degreeBoundTreedepthCase(|N_G(C)|)$ $\FF$-neighbors in $\mathcal{F}$}{
        \Return{\nameref{rule:degree_reduction_elimination_forest}$\big((G,k,M), C, \mathcal{F}\big)$} \label{rule:reduce_modulator_degree_return_tree_case}
    }
    $\ell \leftarrow$ leaf of $F$ such that $|N_G(m) \cap Y_\ell| \geq \degreeBoundCaterpillarCase$ \label{rule:reduce_modulator_degree_fix_leaf}\;
    $H \leftarrow G[Y_\ell]$\;
    \Return{\nameref{rule:degree_reduction_caterpillar}$\big((G,k,M),H \big)$ \label{rule:reduce_modulator_degree_return_caterpillar_case}}
\end{algorithm}

Now, we can show the main result of this section, that we restate for convenience.
\thmReduceDegree*{}
\begin{proof}
    We go through the steps of the algorithm.
    \reflines{rule:reduce_modulator_degree_start}{rule:reduce_modulator_degree_return1} are clear.
    By \cref{thm:polish_instance} the algorithm to compute polishing set $Y$ used in \refline{rule:reduce_modulator_degree_compute_polishing_set} functions correctly and in polynomial-time; here we use that an instance of \pwEtaDelElimGeneral{} is also an instance of \pwEtaDeletionEta{}.
    Recall that $|Y| \leq |M|^2 \cdot (k + \eta + 2) \cdot (\eta + 1) = \polishingSetSize$ by the definition of polishing sets (\cref{def:polishing_set}).
    Further, recall that $\ccBound = (\polishingSetSize + |M|)^2 \cdot (2(\polishingSetSize + |M|) \cdot (\eta + 2) + 1)$.
    If the number of connected components of $G - (M \cup Y)$ is larger than $\ccBound$,
    we proceed with \refline{rule:reduce_modulator_degree_reduce_components},
    which sets $(G',k,M \cup Y)$ to the output of \nameref{rule:reduce_ccs} applied on instance $(G,k,M \cup Y)$.
    By \cref{thm:reduce_ccs} we have that $G'$ is a proper subgraph of $G$, $(G',k,M \cup Y)$ is equivalent to instance $(G,k,M \cup Y)$.
Our algorithm then outputs instance $(G',k,M)$ which is equivalent to, and strictly smaller than the input instance.

    Otherwise, the number of connected component of $G - (M \cup Y)$ is at most $\ccBound$, and we fix a vertex $m \in M$ with the most neighbors in $V(G) \setminus M$ in \refline{rule:reduce_modulator_degree_fix_m}.
    Since $m$ has at least $\degreeBound = \degreeBoundCC \cdot \ccBound + \polishingSetSize$ neighbors in $V(G) \setminus M$, $m$ has at least $\degreeBoundCC \cdot \ccBound$ neighbors in $V(G) \setminus (M \cup Y)$.
    By the bound on the number of connected components of $G - (M \cup Y)$, there is some connected component $C$ of $G - (M \cup Y)$ such that $m$ has at least $\degreeBoundCC$ neighbors in $C$.
    Therefore, the component $C$ fixed in \refline{rule:reduce_modulator_degree_fix_component} indeed exists.
    The nice $\pwEtaBoundClass[1]$-elimination tree $\FF = (F,\{Y_n\}_{n \in V(F)})$ of depth at most $\beta$ of \refline{rule:reduce_modulator_degree_compute_elim_tree} is computed using \cref{thm:compute_elimination_forests}.

    Then, if $m$ has more than $\degreeBoundTreedepthCase(|N_G(C)|)$ $\FF$-neighbors in $\mathcal{F}$, in \refline{rule:reduce_modulator_degree_return_tree_case}, we output the result of applying \nameref{rule:degree_reduction_elimination_forest} on input $\big((G,k,M),C, \mathcal{F} \big)$.
    The output instance has the correct properties by \cref{thm:treedepth_case}, and the fact that $C$ is a near-protrusion by \cref{thm:ccs_are_near_protrusions}.

    Otherwise, $m$ has at most $\degreeBoundTreedepthCase(|N_G(C)|)$ $\FF$-neighbors.
    Recall that $\degreeBoundTreedepthCase$ is a monotonically increasing function, and that $|N_G(C)| \leq |M| + 2(\eta + 1)$ because $N_G(C) \subseteq M \cup Y$ and $Y$ is a polishing set.
    As $m$ has at least $\degreeBoundCC = \degreeBoundTreedepthCase(|M| + 2(\eta + 1)) \cdot \degreeBoundCaterpillarCase \geq \degreeBoundTreedepthCase(|N_G(C)|) \cdot \degreeBoundCaterpillarCase$ neighbors in $C$, the pigeonhole principle tells us that there exists some node $\ell \in V(F)$ such that  $|N_G(m) \cap Y_\ell| \geq \degreeBoundCaterpillarCase$.
    Clearly this node $\ell$ must be a leaf of tree $F$, and therefore \refline{rule:reduce_modulator_degree_fix_leaf} is well-defined.
    We set $H = G[Y_\ell]$, and since $\mathcal{F}$ is a $\pwEtaBoundClass[1]$-elimination tree of $C$ we have that $H$ is a caterpillar.

    Moreover, $N_G(H) \cap Y \subseteq N_G(C) \cap Y$, and therefore $|N_G(H) \cap Y| \leq 2(\eta + 1)$.
    Additionally, $N_G(H) \cap V(C) \subseteq \anc_F(\ell) \setminus \{\ell\}$, and therefore $|N_G(H) \cap V(C)| \leq \beta$.
    This yields $|N_G(H)| \leq |M| + 2(\eta + 1) + \beta$.
    Moreover, we have that $m$ has at least $\degreeBoundCaterpillarCase = \BoundCaterpillarOverall + 1 >  \BoundCaterpillar[|N_G(H)|]$ neighbors in $H$.
    Therefore, we have $|V(H)| > \BoundCaterpillar[|N_G(H)|]$.
    By \cref{thm:caterpillar_case} we have that running algorithm \nameref{rule:degree_reduction_caterpillar} on input $\big((G,k,M),H \big)$ returns an equivalent instance $(G',k,M)$ where $G'$ is strictly smaller than $G$.

    We can see that, if the algorithm execution reaches \refline{rule:reduce_modulator_degree_compute_polishing_set}, then the graph $G'$ of the output instance is a proper minor of $G$.
\end{proof}

It remains to prove the two key lemmas used for the safety proof, namely \cref{thm:treedepth_case,thm:caterpillar_case}.
We prove \cref{thm:treedepth_case} in \cref{subsec:bounding_degree_in_elim_forest}, and \cref{thm:caterpillar_case} in \cref{subsec:bounding_degree_caterpillars}.

\subsection{Bounding the number of \texorpdfstring{$\FF$}{F}-neighbors}
\label{subsec:bounding_degree_in_elim_forest}
In this section, the goal is as follows.
Let $C$ be a connected component of $G - (M \cup Y)$ (where $Y$ is a polishing set)
and $\FF$ be a nice $\pwEtaBoundClass[1]$-elimination tree of $C$ of depth at most $\beta$.
We want to ensure that, for each vertex $m \in M$,
there is only a bounded number of nodes $n$ in the elimination tree $\FF$
where $m$ has a neighbor in the bag $Y_n$.
That is, we want to bound the number of $\FF$-neighbors of $m$.

We note that this does not imply that $m$ has a bounded number of neighbors in $C$.
A bag $Y_n$, corresponding to a leaf $n$,
may induce a large graph of pathwidth $1$ in $G$.
Then $m$ possibly has only a single $\FF$-neighbor, namely $n$,
however, in the actual graph $G$,
$m$ may neighbor many vertices in this leaf bag $Y_n$.

Our reduction rule actually works for a problem that is more general than \pwEtaDelElimGeneral{}.
Concretely, it works assuming that $\FF$ is an elimination tree to \emph{some} graph class, but the concrete graph class (the graphs of pathwidth $1$ for $\pwEtaBoundClass[1]$-elimination trees) does not matter.

\subsubsection{The Key Ingredients}

Let us introduce some key ingredients of our reduction rule.
We begin with a relatively simple observation about
rooted trees $T$ of depth at most $\beta$.
Given sets $X$ and $Y$ of certain size,
we can find many siblings
such that each of them is the root of a subtree containing a vertex from $X$
but none of the siblings is from $Y$.
We need that $Y$ is not too large
and $X$ contains many vertices of some depth $d_X$.

\begin{lemma}
    \label{thm:tree_lemma_to_find_siblings}
    Define $f_{\ref{thm:tree_lemma_to_find_siblings}}(c,x) = (\beta+1) \cdot c^{\beta+1} \cdot x$.
    Let $T$ be a rooted tree of depth at most $\beta$,
    and $X \subseteq V(T)$ be a subset of the vertices in depth $d_X$ of size at least
    $f_{\ref{thm:tree_lemma_to_find_siblings}}(c,x)$, for some integers $c \geq 2, x \geq 1$.
    Moreover, let $Y \subseteq V(T)$ be a vertex set containing at most $c \cdot x$ vertices of each depth.
    Then, there is a vertex $p \in V(T)$ that has $c$ distinct children $w_1,\dots,w_c \notin Y$  and for each $w_i$ we have $V(T_{w_i}) \cap X \neq \emptyset$.
\end{lemma}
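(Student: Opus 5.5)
We prove the contrapositive by induction on depth. Assume that no vertex $p$ has $c$ distinct children outside $Y$ whose subtrees meet $X$, and show that $|X| < f_{\ref{thm:tree_lemma_to_find_siblings}}(c,x)$. Call a vertex $w$ \emph{relevant} if $V(T_w) \cap X \neq \emptyset$. Every relevant vertex other than the root has a relevant parent, so the relevant vertices form a subtree containing the root. Moreover, every vertex of $X$ has depth $d_X \leq \beta$, so it suffices to bound the number of relevant vertices at each depth $d \leq d_X$.

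Let $R_d$ denote the set of relevant vertices of depth $d$. We have $|R_0| \leq 1$. Each vertex of $R_{d+1}$ is a relevant child of some $p \in R_d$. That child lies either in $Y$ or outside it. By assumption, each $p$ has at most $c-1$ relevant children outside $Y$. On the other hand, the number of depth-$(d+1)$ vertices lying in $Y$ is at most $c\cdot x$ in total. This gives the recurrence
\[ |R_{d+1}| \leq (c-1)\,|R_d| + c x. \]
Unrolling it, with $|R_0| \leq 1 \leq cx$, yields $|R_d| \leq c x \sum_{i=0}^{d} (c-1)^i \leq (d+1)\, c^{d}\cdot c x$. A cruder bound also works: $|R_d| \leq c^{d}\cdot x \cdot (d+1)\cdot c$, using $c-1 < c$. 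In either case $|R_d| \leq (\beta+1)\, c^{\beta+1}\, x$ for every $d \leq \beta$.

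It remains to make the bound strict. The set $X$ is contained in $R_{d_X}$, so $|X| \leq |R_{d_X}|$, and we want this to be strictly less than $(\beta+1)c^{\beta+1}x$. Sharpening the geometric sum gives $\sum_{i=0}^{d}(c-1)^i < (d+1)c^d$ whenever $d \geq 1$ and $c \geq 2$. In the case $d_X = 0$ we have $|X| \leq 1 < f_{\ref{thm:tree_lemma_to_find_siblings}}(c,x)$ directly. Together these give $|X| < f_{\ref{thm:tree_lemma_to_find_siblings}}(c,x)$, contradicting the hypothesis.

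The contradiction shows that some $p$ has at least $c$ relevant children outside $Y$. Taking $c$ of them as $w_1,\dots,w_c$ gives distinct children, not in $Y$, whose subtrees each intersect $X$, as the statement requires. The only delicate point is the bookkeeping that makes the final inequality strict. This is routine, since the slack between $(c-1)^i$ and $c^i$ is ample, so I expect no real obstacle.
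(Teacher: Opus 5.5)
Your proof is correct and is essentially the paper's argument. The paper inducts bottom-up: at each level it discards the at most $cx$ vertices of $Y$ and passes to the parents of the remaining vertices, each parent having fewer than $c$ surviving children. That is exactly your recurrence $|R_{d+1}| \le (c-1)|R_d| + cx$ read in the other direction, and your contrapositive top-down unrolling (with the tighter factor $c-1$ instead of the paper's $c$) and your handling of strictness are both fine.
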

\begin{proof}
    We prove the following claim by induction on $d$:
    if there is a subset $Z \subseteq V(T)$ at depth $d$ of at least $(d + 1) \cdot c^{d + 1} \cdot x$ vertices, then there is a vertex $p \in V(T)$ with $c$ distinct children $w_1,\dots,w_c \notin Y$ such that for each $w_i$ we have $V(T_{w_i}) \cap Z \neq \emptyset$.

    \proofsubparagraph*{Base case.}
    For the base case $d = 0$, the claim is vacuously true.

    \proofsubparagraph*{Induction step.}
    Assuming the claim holds at every depth $d' < d$, we show that it also holds for the set $Z$ at depth $d$.
    Since $|Z| \geq (d + 1) \cdot c^{d+1} \cdot x$, the size of $Z_0 = Z \setminus Y$ is at least $d \cdot c^{d+1} \cdot x$.
    If some $c$ vertices of $Z_0$ have the same parent, we are done.
    Otherwise, let $Z'$ be the set of parents of $Z_0$, and $d'$ the depth of the vertices in $Z'$, that is $d' = d - 1$.
    Then, $|Z'| \geq \frac{|Z_0|}{c} \geq d \cdot c^{d} \cdot x = (d' +1) \cdot c^{d'+1} \cdot x$.
    Applying the claim to the set $Z'$ yields a vertex $p \in V(T)$ with $c$ distinct children $w_1,\dots,w_c$, such that each $w_i \notin Y$ and $V(T_{w_i}) \cap Z' \neq \emptyset$.
    This parent $p$ and children $w_1,\dots,w_c$ also show that the claim holds for $Z$, since $V(T_{w_i}) \cap Z' \neq \emptyset$ implies $V(T_{w_i}) \cap Z \neq \emptyset$ by our choice of $Z'$.

    \proofsubparagraph*{Finishing the proof.}
    Since $d_X \leq \beta$, applying the claim to $Z = X$ at depth $d_X$ yields the lemma.
\end{proof}

We intend to apply this lemma with $T$ as (the tree of) an $\HH$-elimination tree
of a connected subgraph of the input graph.
As part of our reduction rule (Algorithm \nameref{rule:degree_reduction_elimination_forest})
we introduce the following values depending on
$\beta$ (the depth of the $\HH$-elimination tree we will consider),
$\eta$ and $|M|$.

\begin{definition}[Important Bounds for the Degree Reduction in Elimination Forests]
    \label{def:important_bounds}
    We set
    \begin{align*}
        \boundOnS           & = 3(\eta+1),                                                         \\
        \numbChildren       & = f_{\ref{thm:find_stable_graph}}(\boundOnS+\beta) + \boundOnS + 1   \\
        \markedNumbChildren & = 2^\beta \cdot \numbChildren,                                       \\
        \markedNumb         & = f_{\ref{thm:tree_lemma_to_find_siblings}}(\markedNumbChildren,1) =
        (\beta + 1) \cdot \left( \markedNumbChildren \right)^{\beta +1}.
    \end{align*}
    Moreover, we define the following functions $\nat \to \nat$
    \begin{align*}
        \numNeighborsInDepth(x)      & = f_{\ref{thm:tree_lemma_to_find_siblings}}
        \big(\markedNumb, \tbinom{x}{\leq 2}\big) = (\beta + 1) \cdot \left( \markedNumb \right)^{\beta + 1} \cdot \tbinom{x}{\leq 2}, \\
        \degreeBoundTreedepthCase(x) & = (\beta + 1) \cdot \numNeighborsInDepth(x).
    \end{align*}
\end{definition}

We note that $\boundOnS, \numbChildren, \markedNumbChildren$ and $\markedNumb$ are constants (depending only on the constants $\eta, \beta$).
Further, $\numNeighborsInDepth,\degreeBoundTreedepthCase$ are in $O(x^2)$.
The value $\boundOnS$ is an upper bound of the size of the intersection of a solution $S$ with $V(C)$.
Further,
$\markedNumb$ is the number of vertices we will mark for each $X \in \binom{N_G(C)}{2} \cup \binom{N_G(C)}{1}$.
Beyond these observations, the concrete values are not too important for now.
We will recall them when they become relevant, and then it will also become evident why we have chosen them this way.

Now, recall that for an $\HH$-elimination forest $\FF = (F,\{Y_n\}_{n \in V(F)})$ of a graph $G$ and node $n \in V(F)$,
we write $V^\FF_n$ for the union of all bags in the subtree rooted at the node $n$,
which is $V^\FF_n = \bigcup_{n' \in F_n} Y_{n'}$.
Next, we define the notion of \emph{robust} nodes, which connects $\HH$-elimination trees with path decompositions and the \pwEtaDelElimGeneral{} problem.

\begin{definition}[Robust Nodes]
    \label{def:marvelous_nodes}
    Let $C$ be an induced subgraph of some graph $W$, and $\mathcal{F} = (F,\{Y_n\}_{n \in V(F)})$ be an $\HH$-elimination tree of $C$ of depth at most $\beta$, for some graph class~$\HH$.
    Let $\PP$ be a path decomposition of $W-S$ for some set $S \subseteq V(W)$.
    A node $n \in V(F)$ is \emph{robust} (with regard to $W$, $\FF$, $C$, $S$, $\PP$) if,
    \begin{bracketenumerate}
        \item $V_n^\FF \cap S = \emptyset$, and
        \item $V_n^\FF$ is stable in $W-S$, $\PP$.
    \end{bracketenumerate}
\end{definition}

Now, we show that, in certain cases, robust nodes are guaranteed to exist.
\begin{lemma}
    \label{thm:marvelous_siblings}
    Let $C$ be an induced subgraph of some graph $W$, and $\mathcal{F} = (F,\{Y_n\}_{n \in V(F)})$ be a nice $\HH$-elimination tree of $C$ of depth at most $\beta$, for some graph class $\HH$.
    Let $\PP$ be a path decomposition of width at most $\eta$ of $W-S$ for some set $S \subseteq V(W)$.
    Assume that there are sibling nodes $u_1,\dots,u_{\numbChildren}$ in $V(F)$
    that have the same ancestor-type in $\FF$, $C$.
    Suppose $|S \cap V(C)| \leq \boundOnS$ and $|N_{W}(C) \setminus S|\leq\boundOnS$.
    Then, there exists a robust node $u_i$ for some $i \in \range[2]{\numbChildren}$.
\end{lemma}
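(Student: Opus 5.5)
Since $\numbChildren = f_{\ref{thm:find_stable_graph}}(\boundOnS+\beta) + \boundOnS + 1$, the plan is to discard the children hit by $S$, together with $u_1$, and then apply \cref{thm:find_stable_graph} to the remaining ones. The subsets $V^\FF_{u_i}$ for distinct siblings are pairwise disjoint. Since $|S \cap V(C)| \leq \boundOnS$, at most $\boundOnS$ indices $i$ satisfy $V^\FF_{u_i} \cap S \neq \emptyset$. Discarding these indices and also $u_1$ leaves at least $f_{\ref{thm:find_stable_graph}}(\boundOnS+\beta)$ indices $i \in \range[2]{\numbChildren}$ with $V^\FF_{u_i} \cap S = \emptyset$. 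Because $\FF$ is nice, each $W[V^\FF_{u_i}]$ is connected, and it is also a connected subgraph of $W - S$.

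Next I fix the set $Z$ against which stability will be tested. Let $A$ be the common ancestor-type of the $u_i$, so $A \subseteq \anc_F(u_i)\setminus\{u_i\}$ and $|A| \leq \beta$. For each $i$,
\[N_W(V^\FF_{u_i}) \subseteq (N_W(C)) \cup (N_C(V^\FF_{u_i})) = N_W(C) \cup A.\]
This holds because $N_C(V^\FF_{u_i})$ lies in the proper ancestors of $u_i$ by property (3) of elimination forests, and by definition this set is exactly the ancestor-type. Here I identify the non-leaf ancestor nodes with their vertices, using that $Y_n = \{n\}$ for inner nodes. Hence $N_{W-S}(V^\FF_{u_i}) \subseteq Z := (N_W(C) \cup A) \setminus S$, and $|Z| \leq \boundOnS + \beta$, using the hypothesis $|N_W(C)\setminus S| \leq \boundOnS$.

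Now apply \cref{thm:find_stable_graph} to the graph $W - S$, its width-$\eta$ path decomposition $\PP$, the pairwise vertex-disjoint connected subgraphs $W[V^\FF_{u_i}]$ for the surviving indices, and the set $Z$. This yields some surviving $u_i$ that is $Z$-stable. Since $N_{W-S}(V^\FF_{u_i}) \subseteq Z$, $Z$-stability implies $N_{W-S}(V^\FF_{u_i})$-stability. The reason is that if every bag meeting $V^\FF_{u_i}$ meets $Z$ in the same set $Z'$, then these bags meet any subset $Z'' \subseteq Z$ in the same set $Z' \cap Z''$. Thus $V^\FF_{u_i}$ is stable in $W-S$ and $\PP$, and it avoids $S$, so $u_i$ is robust with $i \geq 2$.

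The only delicate point I expect is the neighborhood containment. I must check that all neighbors of $V^\FF_{u_i}$ inside $C$ are really in the ancestor-type set, which is a set of vertices, and not merely in some leaf bag. Since $u_i$ has descendants only below it and ancestors that are inner nodes, these ancestors are singletons, so the containment is sound. The counting is routine, and the exclusion of index $1$ is absorbed by the $+1$ in $\numbChildren$.
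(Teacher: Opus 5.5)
Your proof is correct and follows the paper's argument essentially step for step. You discard $u_1$ and the at most $\boundOnS$ children whose subtrees meet $S$, then apply \cref{thm:find_stable_graph} in $W-S$ with the set $(A \cup N_W(C))\setminus S$ of size at most $\boundOnS+\beta$, and finally derive stability from the containment $N_{W-S}(V^\FF_{u_i}) \subseteq (A\cup N_W(C))\setminus S$; your explicit check that the neighbors of $V^\FF_{u_i}$ inside $C$ are proper ancestors, hence singleton inner nodes, just spells out a step the paper states in passing.
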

\begin{proof}
    Let $U$
    be the set of nodes $u \in \{u_2,\dots,u_{\numbChildren}\}$
    such that $V_{u}^\FF \cap S = \emptyset$.
    Since $|S \cap V(C)|\leq \boundOnS$, we have that $|U|\geq \numbChildren - 1 - \boundOnS = f_{\ref{thm:find_stable_graph}}(\boundOnS +\beta)$.
    Hence, every node $u \in U$ satisfies the first condition of a robust node.
    Let $T$ be the ancestor-type of $u_1,\dots,u_{\numbChildren}$.
    To prove the second condition,
    we aim to show that, for some node $u \in U$, the set $V_u^\FF$ is $((T \cup N_{W}(C)) \setminus S)$-stable in $\PP$.
    Then, since $N_{W - S}(V_u^\FF)$ is a subset of $(T \cup N_{W}(C)) \setminus S$,
    we indeed have that $V_u^\FF$ is stable in $\PP$ and graph $W-S$.

    Since the vertices $U$ are siblings in $\FF$, we know that $V_u^\FF$ for $u \in U$ induce pairwise vertex-disjoint connected subgraphs of $W$.
    Further, $|(T \cup N_{W}(C)) \setminus S| \leq \beta + \boundOnS$
    because $F$ has depth at most $\beta$ and since we require $|N_{W}(C) \setminus S|\leq\boundOnS$.
    As such, \cref{thm:find_stable_graph} applies on graph $W-S$ with subgraphs $W[V_u^\FF]$ for $u \in U$,
    and set $Y = (T \cup N_{W}(C)) \setminus S$.
    Here, we use that $S$ does not intersect $V_u^\FF$ for any $u \in U$.
    As a result, we obtain a node $u \in U$ such that $V_u^\FF$ is $(T \cup N_{W}(C)) \setminus S$-stable in~$\PP$ and graph $W-S$,
    which concludes the proof.
\end{proof}

In the context of elimination trees,
we can state \cref{thm:embedding_with_twinning_graph}
in the following, more suitable form.

\begin{corollary}[of \cref{thm:embedding_with_twinning_graph}]
    \label{thm:embedding_with_twinning_graph:in:F}
    Let $\FF=(F,\{Y_n\}_{n \in V(F)})$ be an $\HH$-elimination tree of an induced connected subgraph $C$ of $W$,
    for some graph $W$ and some graph class $\HH$.
    Let $\PP$ be a path decomposition of $W-S$ of width at most $\eta$, for some set $S \subseteq V(W)$.
    Let nodes $z,w \in V(F)$ and induced subgraph $R \subseteq W[V_z^\FF \setminus S]$ be such that
    \begin{bracketenumerate}
        \item
        $z,w \in V(F)$ are not in an ancestor-descendant relationship in $F$, and
        \item
        $\pw(W[V_z^\FF]) \leq \pw(W[V_{w}^\FF])$, and
        \item
        $w$ is robust with regard to $W$, $\FF$, $C$, $S$, $\PP$, and
        \item
        there is a bag $X$ of $\PP$ where $|X \cap V_w^\FF| \geq \pw(W[V_{w}^\FF])+1$
        and  $N_{W - S}(R) \subseteq X$.
    \end{bracketenumerate}
    Then, there exists a path decomposition $\hat \PP$ of $W-S$ of width at most $\eta$
    such that each bag $\hat X$ of $\hat \PP$ with $\hat X \cap V(R) \neq \emptyset$ contains all vertices of $X \setminus (V(R) \cup V_w^\FF)$.
\end{corollary}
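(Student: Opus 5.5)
We obtain the statement by applying \cref{thm:embedding_with_twinning_graph} to the graph $G := W - S$ with $H_1 := W[V_w^\FF]$ and $H_2 := R$, using the bag $X$ as the distinguished bag $X_i$. The conclusion of the theorem says that every bag meeting $V(H_1) \cup V(H_2)$ contains $X \setminus (V(H_1) \cup V(H_2))$. In particular, every bag meeting $V(R)$ contains $X \setminus (V(R) \cup V_w^\FF)$, which is exactly what we want. So it remains to verify the hypotheses of the theorem.

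\emph{Subgraphs and disjointness.} By robustness (1), $V_w^\FF \cap S = \emptyset$, so $H_1$ is an induced subgraph of $W - S$. The graph $R$ is induced in $W[V_z^\FF \setminus S]$, hence also in $W - S$. Since $z$ and $w$ are not in an ancestor-descendant relationship, the subtrees $F_z$ and $F_w$ are disjoint. Therefore $V_z^\FF \cap V_w^\FF = \emptyset$, and $H_1, H_2$ are vertex-disjoint.

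\emph{No edges between $H_1$ and $H_2$.} Take an edge of $W$ between $Y_{n_1}$ with $n_1 \in F_w$ and $Y_{n_2}$ with $n_2 \in F_z$. Both endpoints lie in $C$, and $C$ is an induced subgraph of $W$, so this is an edge of $C$. By property (3) of elimination forests, $n_1$ and $n_2$ are then in an ancestor-descendant relationship. That is impossible, because nodes in two disjoint subtrees whose roots are incomparable are themselves incomparable. Hence there are no edges between $H_1$ and $H_2$.

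\emph{Pathwidth.} Pathwidth is monotone under taking subgraphs, so $\pw(H_2) = \pw(R) \leq \pw(W[V_z^\FF]) \leq \pw(W[V_w^\FF]) = \pw(H_1)$ by (2).

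\emph{Bag conditions.} Condition (1) of the theorem, that $X$ contains at least $\pw(H_1)+1$ vertices of $H_1$, is exactly hypothesis (4). For condition (2), we have $N_{W-S}(H_2) = N_{W-S}(R) \subseteq X$ by (4). Also $N_{W-S}(H_1) \subseteq X$ by \cref{thm:stable_graph_bags_contain_neighbors}: the node $w$ is robust, so $V_w^\FF$ is stable in $W - S$, and $X$ meets $V_w^\FF$. The latter holds because $\pw(W[V_w^\FF]) + 1 \geq 1$.

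Every hypothesis is now checked, so \cref{thm:embedding_with_twinning_graph} yields a path decomposition $\hat\PP$ of $W - S$ of width at most $\eta$ with the required property. The only point needing care is the no-edges argument, which relies on $C$ being induced in $W$ so that edges between vertices of $C$ fall under the elimination-forest condition; everything else is direct bookkeeping.
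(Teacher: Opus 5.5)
Your proposal is correct and matches the paper's proof: it applies \cref{thm:embedding_with_twinning_graph} to $W-S$ with $H_1 = W[V_w^\FF]$, $H_2 = R$ and the bag $X$, gets disjointness and the absence of edges from the incomparability of $z$ and $w$, the pathwidth inequality from (2), and $N_{W-S}(H_1) \subseteq X$ from the stability of $V_w^\FF$ given by robustness. You also make explicit two points the paper leaves implicit, namely that $X$ actually meets $V_w^\FF$ and why the elimination-tree edge condition rules out edges between the two subtrees.
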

\begin{proof}
    Note that, because $w$ is robust, we have $V_w^\FF \cap S = \emptyset$, and that $V_w^\FF$ is stable in $\PP$ and $W - S$.
    Our goal is to apply \cref{thm:embedding_with_twinning_graph} on the subgraphs $H_1 = W[V_w^\FF \setminus S] = W[V_w^\FF]$, $H_2 = R - S = R$, and graph $W - S$.
    Since $z$ and $w$ are not in an ancestor-descendant relationship in $F$,
    the graphs $H_1$ and $W[V_z^\FF]$ are vertex-disjoint and
    without any edge between them.
    Because $V(H_2) \subseteq V_z^\FF \setminus S$, the same holds for $H_1$ and $H_2$.
    Moreover, we have that $\pw(H_1) = \pw(W[V_w^\FF]) \geq \pw(W[V_z^\FF]) \geq \pw(R) = \pw(H_2)$ by assumption and the fact that $R$ is an induced subgraph of $W[V_z^\FF]$.

    By assumption $X$ satisfies condition 1 of \cref{thm:embedding_with_twinning_graph}.
    It remains to show condition 2., that $N_{W - S}(H_1) \subseteq X$ and $N_{W - S}(R) \subseteq X$.
    The latter, $N_{W - S}(R) \subseteq X$ holds by assumption.
    The former, $N_{W - S}(H_1) \subseteq X$, follows from the fact that $V(H_1) = V_w^\FF$ is stable in $\PP$, $W-S$, which is itself a consequence of $w$ being robust.
    Then, \cref{thm:embedding_with_twinning_graph} applies, which concludes the proof.
\end{proof}

\subsubsection{The Reduction Rule}

\begin{algorithm}[t]
    \caption{\algofont{ReduceDegreeTree}}
    \label{rule:degree_reduction_elimination_forest}
    \Input{Instance $(G,k,M)$ of \pwEtaDeletionEta{},
    a near-protrusion $C \subseteq G - M$ of $(G,k)$, a nice $\HH$-elimination tree $\FF = (F,\{Y_n\}_{n \in V(F)})$ of $C$ of depth at most $\beta$ for an arbitrary graph class $\HH$}
    \If{each $m \in N_G(C)$ has at most $\degreeBoundTreedepthCase(|N_G(C)|)$ $\FF$-neighbors\label{rule:degree_reduction_elimination_forest:start}}{
        \Return{$(G,k,M)$} \label{rule:degree_reduction_elimination_forest:check_input}
    }
    $m \leftarrow$ vertex of $N_G(C)$ with at least $\degreeBoundTreedepthCase(|N_G(C)|)$ neighbors in $\mathcal{F}$ \label{rule:degree_reduction_elimination_forest:fix_m}\;
    \For{$X \in \binom{N_G(C)}{2} \cup \binom{N_G(C)}{1}$ and $d \in \range[0]{\depth(F)}$\label{rule:degree_reduction_elimination_forest:marking_start}}{
    $\mathcal{T}_d^X \leftarrow \{n \in V(F) \mid \depth_F(n) = d \text{ and } X \subseteq N_G(V_n^{\FF})\}$\;
    \uIf{$|\mathcal{T}_d^X| \leq \markedNumb$}{
        mark all vertices of $\mathcal{T}_d^X$\;
    }
    \Else{
        $z_1,\dots,z_\ell \leftarrow$ vertices of $\mathcal{T}_d^X$ ordered by nonincreasing pathwidth of $G[V_{z_i}^{\FF}]$ \label{rule:degree_reduction_elimination_forest:marking_order_by_pathwidth}\;
        mark the first $\markedNumb$ vertices of $z_1,\dots,z_\ell$ \label{rule:degree_reduction_elimination_forest:marking_scheme}\;
    }
    }\label{rule:degree_reduction_elimination_forest:marking_end}
    $d^* \leftarrow$ number in $\range[0]{\beta}$ such that $m$ has at least $\numNeighborsInDepth(|N_G(C)|)$ $\FF$-neighbors at depth~$d^*$ \label{rule:degree_reduction_elimination_forest:find_depth}\;
    $N_{d^*} \leftarrow$ $\FF$-neighbors of $m$ at depth $d^*$\;
    $p \leftarrow$ node of $F$ with $\numbChildren$ unmarked children $c_1',\dots,c_{\numbChildren}'$ in $F$ that have the same ancestor-type in $\mathcal{F}$, $C$ and for each $c_i'$ we have $V_{c_i'}^{\FF} \cap N_{d^*} \neq \emptyset$ \label{rule:degree_reduction_elimination_forest:find_parent}\;
    $c_1,\dots,c_{\numbChildren} \leftarrow$ vertices $c_1',\dots,c_{\numbChildren}'$ ordered by nondecreasing pathwidth of $G[V_{c_i'}^{\FF}]$ \label{rule:degree_reduction_elimination_forest:order_children_by_pathwidth}\;
    $v \leftarrow$ neighbor of $m$ in $V_{c_1}^{\FF}$ \label{rule:degree_reduction_elimination_forest:fix_vertex}\;
    \Return{$(G - mv,k,M)$} \label{rule:degree_reduction_elimination_forest:output_instance}
\end{algorithm}

Our reduction rule is \nameref{rule:degree_reduction_elimination_forest} stated as \cref{rule:degree_reduction_elimination_forest}.
We now show that it can be executed in polynomial time.

\begin{lemma}
    \label{thm:treedepth_case_rule_valid}
    \namecref{rule:degree_reduction_elimination_forest} \nameref{rule:degree_reduction_elimination_forest} is well-defined and runs in polynomial time.
\end{lemma}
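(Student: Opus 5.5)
The plan is to walk through \nameref{rule:degree_reduction_elimination_forest} line by line and check two things: every object it selects exists, and every step takes polynomial time. The early steps are routine. Computing the $\FF$-neighbors of each $m \in N_G(C)$ and checking \refline{rule:degree_reduction_elimination_forest:start} is polynomial, and if the check fails we pick $m$ in \refline{rule:degree_reduction_elimination_forest:fix_m}. The marking loop ranges over $O(|N_G(C)|^2)$ sets $X$ and at most $\beta+1$ depths. For each pair $(X,d)$ it only needs the sets $V_n^\FF$, which are computable from $\FF$, and the pathwidths $\pw(G[V_n^\FF])$. Each $G[V_n^\FF]$ is an induced subgraph of $C$, and $\pw(C) \le \eta$ because $C$ is a near-protrusion, so each pathwidth is obtained in linear time by \cref{thm:path_decomposition_algorithm}. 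Sorting and marking then costs nothing extra.

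Next I verify the existence claims. For \refline{rule:degree_reduction_elimination_forest:find_depth}: $m$ has more than $\degreeBoundTreedepthCase(|N_G(C)|) = (\beta+1)\cdot\numNeighborsInDepth(|N_G(C)|)$ $\FF$-neighbors, and $F$ has depth at most $\beta$. By pigeonhole, some depth $d^*$ carries at least $\numNeighborsInDepth(|N_G(C)|)$ of them, so $|N_{d^*}| \geq f_{\ref{thm:tree_lemma_to_find_siblings}}(\markedNumb,\binom{|N_G(C)|}{\le 2})$.

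The main step is \refline{rule:degree_reduction_elimination_forest:find_parent}. Take $Y$ to be the set of marked nodes. At each depth $d$, every $X \in \binom{N_G(C)}{2}\cup\binom{N_G(C)}{1}$ marks at most $\markedNumb$ nodes, so $Y$ contains at most $\markedNumb\cdot\binom{|N_G(C)|}{\le 2}$ nodes per depth. (If $\emptyset$ is counted in $\binom{x}{\le 2}$, this only adds slack.) Apply \cref{thm:tree_lemma_to_find_siblings} with $T=F$, the set $N_{d^*}$ in place of $X$, $c=\markedNumb$ and $x=\binom{|N_G(C)|}{\le 2}\geq 1$. This gives a node $p$ with $\markedNumb$ unmarked children, each of whose subtree meets $N_{d^*}$. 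I need $c\ge 2$, which holds since $\markedNumbChildren \ge 2$. The lemma does not give the same ancestor-type for all these children. To get it, first check that $\markedNumb = (\beta+1)\markedNumbChildren^{\beta+1} \geq \markedNumbChildren = 2^\beta\numbChildren$. Then \cref{obs:bounded_ancestor_type_number} says siblings have at most $2^{\beta}$ ancestor-types, so by pigeonhole some $\numbChildren$ of the children share an ancestor-type.

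A subtle point is that, for a child $c_i'$, the condition $V_{c_i'}^\FF\cap N_{d^*}\neq\emptyset$ should be read as "some node of $N_{d^*}$ lies in the subtree of $c_i'$". Then some vertex of $Y_n\subseteq V_{c_i'}^\FF$ is adjacent to $m$, so in \refline{rule:degree_reduction_elimination_forest:fix_vertex} the vertex $v$ exists. The whole search for $p$ can be done by brute force over all nodes of $F$ and grouping their children by ancestor-type, which is polynomial.

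Finally, the ordering in \refline{rule:degree_reduction_elimination_forest:order_children_by_pathwidth} reuses the pathwidths already computed, and outputting $G-mv$ is trivial. I expect the only real obstacle to be lining up the constants in the application of \cref{thm:tree_lemma_to_find_siblings}, namely the per-depth bound on marked nodes matching $c\cdot x$ with $c=\markedNumb$, together with the ancestor-type pigeonhole step.
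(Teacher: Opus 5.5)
Your proposal is correct and follows the paper's proof essentially step for step: pigeonhole over the at most $\beta+1$ depths to obtain $d^*$, then \cref{thm:tree_lemma_to_find_siblings} with $c=\markedNumb$, $x=\binom{|N_G(C)|}{\le 2}$ and $Y$ the set of marked nodes, then pigeonhole over the at most $2^\beta$ ancestor-types among siblings to extract $\numbChildren$ children of the same type. Your extra remarks fill in details the paper dismisses as ``clear'' and are all sound. These are reading $V_{c_i'}^\FF\cap N_{d^*}\neq\emptyset$ as a statement about nodes of $F$, which is what guarantees the vertex $v$ exists in \refline{rule:degree_reduction_elimination_forest:fix_vertex}, the checks $c\ge 2$ and $x\ge 1$, and using $\pw(C)\le\eta$ instead of $\pw(G-M)\le\eta$ for the pathwidth computations.
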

\begin{proof}
    \reflines{rule:degree_reduction_elimination_forest:start}{rule:degree_reduction_elimination_forest:fix_m} are clear.
    The marking scheme applied in \reflines{rule:degree_reduction_elimination_forest:marking_start}{rule:degree_reduction_elimination_forest:marking_end} can be executed in polynomial-time since, in particular, the pathwidth of $G - M$ is at most $\eta$, and thus we can use \cref{thm:path_decomposition_algorithm} to compute the pathwidth in \refline{rule:degree_reduction_elimination_forest:marking_order_by_pathwidth}.

    For \refline{rule:degree_reduction_elimination_forest:find_depth}, observe
    that such a depth $d^*$ exists by the pigeonhole principle since
    $\degreeBoundTreedepthCase(|N_G(C)|) = (\beta + 1) \cdot \numNeighborsInDepth(|N_G(C)|)$
    and the depth of $F$ is at most $\beta$.

    For \refline{rule:degree_reduction_elimination_forest:find_parent}, let us use \cref{thm:tree_lemma_to_find_siblings} applied on the tree $F$, $X$ being the set $N_{d^*}$, and $Y$ being the set of vertices marked in \refline{rule:degree_reduction_elimination_forest:marking_scheme}.
    Then, we have that the corresponding value $x$ of the lemma is $\binom{|N_G(C)|}{\leq 2}$, $c = \markedNumb$, and $d_X \leq \beta$.
    We have set $\numNeighborsInDepth(|N_G(C)|) = f_{\ref{thm:tree_lemma_to_find_siblings}}( \markedNumb, \binom{|N_G(C)|}{\leq 2})$ so that the lemma gives us a vertex $p$
    with $\markedNumb \geq \markedNumbChildren = 2^\beta \cdot \numbChildren$ children $w$ in $F$, such that $V^{\mathcal{F}}_{w}$ contains a vertex of $N_{d^*}$ and is unmarked.
    We then find the desired set of $\numbChildren$ children of $p$ by using the pigeonhole principle and the fact that there are at most $2^\beta$ possible ancestor-types for children of $p$.
    For the output instance, notice that since $C \subseteq G - M$ by a pre-condition of the algorithm, we have that the output instance is a valid instance of \pwEtaDeletionEta{}.
    The remaining lines are clear.
\end{proof}

It remains to show that this reduction rule is safe.
\begin{lemma}
    \label{thm:treedepth_case_equivalent}
    Let $(G,k,M),C,\mathcal{F} = (F,\{Y_n\}_{n \in V(F)})$ be the input to algorithm \nameref{rule:degree_reduction_elimination_forest}.
    Then, the output instance $(\hat G,k,M)$ is equivalent to $(G,k,M)$.
\end{lemma}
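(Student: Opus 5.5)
The forward direction is immediate: $\hat G = G - mv$ is a subgraph of $G$, so any solution for $(G,k,M)$ is also a solution for $(\hat G,k,M)$. For the backward direction, I take a minimum-size set $S$ with $|S|\le k$ and $\pw(\hat G - S)\le\eta$, together with a path decomposition $\PP$ of $\hat G - S$ of width at most $\eta$.

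If $\{m,v\}\cap S\neq\emptyset$, or some bag of $\PP$ contains both $m$ and $v$, then $\PP$ already decomposes $G-S$ and we are done. So assume neither holds. The aim is to modify $\PP$, without increasing its width, so that some bag contains both $m$ and $v$. By \cref{thm:near_protrusions_yield_small_intersection_and_unsel_neighborhood}, we have $|S\cap V(C)|\le\boundOnS$ and $|N_G(C)\setminus S|\le\boundOnS$. This is exactly the hypothesis of \cref{thm:marvelous_siblings} with $W=\hat G$.

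\textbf{Step 1: the sibling $H_2$.} I apply \cref{thm:marvelous_siblings} to the children $c_1,\dots,c_{\numbChildren}$ of $p$. These children share an ancestor-type and are ordered by nondecreasing pathwidth, so the lemma yields a robust sibling $c_j$ with $j\ge 2$. In particular $\pw(G[V_{c_j}^\FF])\ge \pw(G[V_{c_1}^\FF])$, and $m$ has a neighbor in $V_{c_j}^\FF$ because $V_{c_j}^\FF$ meets $N_{d^*}$. Since $c_j$ is robust, every bag meeting $V_{c_j}^\FF$ contains $m$ together with all ancestors of $c_1$ in its ancestor-type that lie outside $S$.

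\textbf{Step 2: the marked node $H_1$.} Let $R$ be the component of $\hat G[V_{c_1}^\FF\setminus S]$ containing $v$. Its neighbourhood in $\hat G - S$ consists of three parts:
\begin{itemize}
\item vertices of $V_{c_1}^\FF \cap S$, which are absent;
\item ancestors in the type of $c_1$ that are not in $S$;
\item vertices of $N_G(C)\setminus S$.
\end{itemize}
To place $m$ alongside $R$, I use the marking scheme at depth $d=\depth_F(c_1)$. I choose $X=\{m_\ell,m_r\}\subseteq (N_G(V_{c_1}^\FF)\cap N_G(C))\setminus S$, where $m_\ell$ and $m_r$ are the last-introduced and first-forgotten such vertices, exactly as in the proof of \cref{thm:reduce_ccs}. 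By \cref{obs:virtual_clique_appears_in_bag}, these vertices, together with $m$, form a virtual clique, so they appear in a common bag.

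Because $c_1$ was unmarked and $X\subseteq N_G(V_{c_1}^\FF)$, the set $X$ marked $\markedNumb$ nodes at depth $d$, each with pathwidth at least $\pw(G[V_{c_1}^\FF])$. I would then apply \cref{thm:tree_lemma_to_find_siblings} or \cref{thm:marvelous_siblings} again to find a robust marked node $\hat z$ among them. This is possible because $\markedNumb$ is large relative to $\boundOnS$ and the number of ancestor-types. Every bag meeting $V_{\hat z}^\FF$ then contains $m_\ell$ and $m_r$, hence all of $N_G(C)\setminus S$ neighbouring $V_{c_1}^\FF$, including $m$.

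\textbf{Step 3: combining the two.} I need a single bag $X^*$ that contains both $N_{\hat G-S}(R)$ and $m$, together with $\pw+1$ vertices of a robust node $w$ that is not an ancestor or descendant of $c_1$ and whose pathwidth is at least that of $V_{c_1}^\FF$. I would pick $X^*$ inside the interval of $H_2$ or of $H_1$. Then \cref{lem:pd_hierarchy} lets me nest one interval inside the intersection window of the other: $H_1$ supplies the modulator neighbours and $H_2$ the ancestors, since it has the same ancestor-type.

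With such an $X^*$, \cref{thm:embedding_with_twinning_graph:in:F} applied with $z=c_1$ and $R$ gives a new decomposition in which every bag meeting $R$ contains $X^*\setminus(V(R)\cup V_w^\FF)\ni m$. This restores the edge $mv$ without increasing the width.

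\textbf{Main obstacle.} The hardest part is Step 3: showing that a single bag simultaneously contains $m$, all ancestor neighbours of $R$, and a full-width slice of a robust twin. My first attempt is to show that the windows of $H_1$ and $H_2$ overlap, using that both are stable and share the vertex $m$ and the ancestor vertices through virtual-clique arguments. Failing that, I would add a second round of nesting via \cref{lem:pd_hierarchy}.
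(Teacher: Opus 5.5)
Your Steps 1 and 2 match the paper's setup: the robust sibling $z'=c_j$, and a robust node $\hat z$ obtained from the nodes marked by a last-introduced/first-forgotten pair $\{m_\ell,m_r\}$. The gap is Step 3, which you leave open, and as you formulate it, it cannot be completed in general.

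You want to re-embed the \emph{entire} component $R$ of $\hat G[V_{c_1}^\FF\setminus S]$ containing $v$, so you need one bag containing all of $N_{\hat G-S}(R)$. Such a bag need not exist.
\begin{itemize}
\item An ancestor $a$ of $c_1$ adjacent to $R$ lies in every bag meeting $V_{z'}^\FF$.
\item A vertex $x\in N_G(C)\setminus S$ adjacent to $R$ lies in every bag meeting $V_{\hat z}^\FF$.
\item Nothing forces $a$ and $x$ together. They may be joined by only a few internally disjoint paths: through $R$, through $m$, or through the few unselected vertices of $N_G(C)$ and ancestors. So $R$ can run from $a$'s interval, past $x$'s interval, to $v$ beyond the last bag of $m$, with $a$ and $x$ never sharing a bag.
\end{itemize}
Neither an overlap argument nor \cref{lem:pd_hierarchy} fixes this; the latter only places subgraphs adjacent to two fixed vertices inside their common window.

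The missing idea is to re-embed only the part of $R$ lying to the right of a chosen bag. Normalize so that every bag containing $m$ precedes every bag containing $v$. For $i<\ell_\exists(\{v\})$, let $R_i$ be the component containing $v$ of $\hat G[V_{c_1}^\FF\setminus(S\cup X_1\cup\dots\cup X_i)]$. Every vertex of $R_i$ is introduced after $X_i$, so every neighbour of $R_i$ introduced no later than $X_i$ lies in $X_i$. This automatically handles all neighbours inside $V_{c_1}^\FF$. It remains to handle the ancestor-type of $c_1$ and $M_z=N_{G-S}(V_{c_1}^\FF\setminus S)\cap N_G(C)$, which the paper does by a case split:
\begin{itemize}
\item If $\ell_\forall(M_z)\le\ell_\exists(V_{z'}^\FF)$, take $X_i$ containing $\pw(G[V_{z'}^\FF])+1$ vertices of $V_{z'}^\FF$. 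The ancestors lie in $X_i$ by stability of $z'$ and the shared ancestor-type; $M_z$ lies in it by introduction order.
\item Otherwise, take $X_i$ containing $\pw(G[V_{\hat z}^\FF])+1$ vertices of $V_{\hat z}^\FF$. Then $M_z\subseteq X_i$ by stability of $\hat z$ and the choice of $m_\ell,m_r$. The ancestors lie in $X_i$ by introduction order, since they lie in $X_{\ell_\exists(V_{z'}^\FF)}$ and $\ell_\exists(V_{z'}^\FF)<\ell_\forall(M_z)\le i$.
\end{itemize}
In both cases $m\in X_i$, so the normalization gives $i<\ell_\exists(\{v\})$ and hence $v\in R_i$. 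Now \cref{thm:embedding_with_twinning_graph:in:F} applied to $R_i$ places $m$ next to $v$.

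Two smaller repairs are also needed.
\begin{itemize}
\item You cannot invoke \cref{obs:virtual_clique_appears_in_bag}: $C$ is an arbitrary near-protrusion, and $N_G(C)$ need not be a virtual clique. The common bag for $M_z$ must come from the marking scheme: each pair marked $\markedNumb$ nodes at depth $\depth_F(c_1)$, of which at least $\eta+2$ avoid $S$.
\item The node $\hat z$ obtained via \cref{thm:tree_lemma_to_find_siblings} and \cref{thm:find_stable_graph} is only an ancestor of a marked node. You must show it is not an ancestor of $c_1$; if it were, every bag meeting $V_{c_1}^\FF\ni v$ would contain $m$, a contradiction.
\end{itemize}
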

\begin{proof}
    The forward direction of equivalence is clear since $\hat G$ is a subgraph of $G$.

    For the backward direction, assume that $(\hat G,k,M)$ is a yes-instance, and
    let $S \subseteq V(\hat G)$ be a minimum-size set
    such that $\pw(\hat G - S) \leq \eta$.
    For convenience, let $G' = \hat G - S$.
    Let $\mathcal{P} = X_1,\dots,X_t$ be a path decomposition of $G'$ of width at most $\eta$.
    If $\hat G = G$ we are done.
    Otherwise, $\hat G = G - mv$ for some edge $mv \in E(G)$ where $m \in N_G(C)$ and $v \in V(C)$.
    If $m \in S$ or $v \in S$ we are again done
    since $G' = G - S$ in that case,
    and hence $\PP$ is also a path decomposition of $G - S$.
    Similarly, we are done if there is a bag $X_i$ that contains $m$ and $v$, as then again $\mathcal{P}$ is also a path decomposition of $G - S$.

    It remains to consider that no bag $X_i$ contains both $m$ and $v$.
    Without loss of generality, we may assume that all bags that contain vertex $m$ have an index that is smaller than the index of any bag that contains $v$.
    For convenience, in this proof we drop the superscripts $\mathcal{F}$ and $\mathcal{P}$ when possible.
    Let $c_1,\dots,c_{\numbChildren}$ be the children computed by \nameref{rule:degree_reduction_elimination_forest} in \refline{rule:degree_reduction_elimination_forest:order_children_by_pathwidth}.
    Hence, $c_1,\dots,c_{\numbChildren}$ have the same parent $p$ in $F$, and
    the same ancestor-type.
    Recall that $v$ is in the set $V_{c_1}$.
    For convenience, let us denote $z = c_1$, and $G'_z = G'[V_z \setminus S]$.

    We now define $M_z = (N_{G'}( V_z \setminus S ) \cap N_G(C)) \cup \{m\}$,
    that is, $M_z$ is the neighborhood of $V_z \setminus S$ in $G'$ restricted to $N_G(C)$, as well as vertex $m$.
    Observe that $M_z = N_{G - S}(V_z \setminus S) \cap N_G(C)$.
    Next, we show that $M_z$ essentially forms a clique of $G'$.
    For this purpose, we fix $d_z = \depth_F(z)$.
    \begin{claim}
        \label{claim:M_z_is_clique}
        There is at least one bag containing all vertices of $M_z$.
    \end{claim}
    \begin{claimproof}
        In the marking scheme (Lines \ref{rule:degree_reduction_elimination_forest:marking_start} to \ref{rule:degree_reduction_elimination_forest:marking_end}) each pair $\{m_1,m_2\} \in \binom{M_z}{2}$ marked $\markedNumb \geq 9\eta + 11$ nodes $n$ of $F$ at depth $d_z$, such that $\{m_1,m_2\} \subseteq N_G(V_n)$.
        We note that $|S \cap V(C)| \leq \boundOnS = 3\eta + 3$
        by \cref{thm:near_protrusions_yield_small_intersection_and_unsel_neighborhood}
        and as $C$ is a near-protrusion.
        Hence, at least $\eta + 2$ of them are such that $S$ selects no vertex of $V_n$.
        Therefore, there are at least $\eta + 2$ internally vertex-disjoint $m_1$-$m_2$ paths in $G'$.
        It follows from \cref{obs:disjoint_paths_yield_clique} that there is a bag containing all of~$M_z$.
    \end{claimproof}

    Observe that by \cref{claim:M_z_is_clique} the indices $\leftBagAll(M_z)$ and $\rightBagAll(M_z)$ are defined.
    Now, we find a vertex with similar properties as vertex $z$.

    \begin{claim}
        \label{claim:find_robust_node_1}
        There is a robust node $z' \in \{c_2,\dots,c_{\numbChildren}\}$ with regard to $\hat G, \FF, C, S, \PP$.
        Since $z' \in \{c_2,\dots,c_{\numbChildren}\}$, we further have
        \begin{itemize}
            \item
                  $z,z'$ have the same ancestor-type in $\FF$, $C$, and
            \item
                  $\pw(G[V_z]) \leq \pw(G[V_{z'}])$, and
            \item $m \in N_{G'}(V_{z'})$.
        \end{itemize}
    \end{claim}
    \begin{claimproof}
        We want to apply \cref{thm:marvelous_siblings} on $C$, $W = \hat G$, $\FF$, $\PP$, $S$, and the siblings $\{c_1,\dots,c_{\numbChildren}\}$.
        Observe that indeed $C$ is an induced subgraph of $W$ with $\HH$-elimination tree $\FF$ of depth at most $\beta$.
        Moreover, $\PP$ is a path decomposition of $W - S$ of width at most $\eta$.
        Also, $c_1,\dots,c_{\numbChildren}$ are siblings with the same ancestor-type in $\FF$, $C$.
        Finally, as $C$ is a near-protrusion, by \cref{thm:near_protrusions_yield_small_intersection_and_unsel_neighborhood},
        $|S \cap V(C)| \leq 3(\eta+1) = \boundOnS$ and $|N_W(C) \setminus S| \leq \boundOnS$.
        Therefore, all conditions of \cref{thm:marvelous_siblings} are fulfilled, and we find the robust node $z' \in \{c_2,\dots,c_{\numbChildren}\}$.
    \end{claimproof}

    Next, we exploit the marking scheme to find yet another node that is similar to $z$.
    \begin{claim}
        \label{claim:find_robust_node_2}
        There is a node $\hat z \in V(F)$ such that
        \begin{bracketenumerate}
            \item $\hat z$ is robust with regard to $\hat G$, $\FF$, $C$, $S$, $\PP$,
            \item $\pw(G[V_{z}]) \leq \pw(G[V_{\hat z}])$,
            \item each bag $X_i$ with $X_i \cap V_{\hat z} \neq \emptyset$ contains all vertices of $M_z$, and
            \item $z$ and $\hat z$ are \emph{not} in an ancestor-descendant relationship in $F$.
        \end{bracketenumerate}
    \end{claim}
    \begin{claimproof}
        Let $m_1$ be a vertex of $M_z$ that is introduced in bag $X_{\leftBagAll(M_z)}$.
        Similarly, let $m_2$ be a vertex of $M_z$ that is forgotten in bag $X_{\rightBagAll(M_z) + 1}$ (note that this bag exists, since $m \in M_z$ and $\leftBagAny(\{v\}) > \rightBagAny(\{m\})$).
        Possibly, we have that $m_1=m_2$.
        In any case, we have that the body of the for-loop of \refline{rule:degree_reduction_elimination_forest:marking_start}
        is invoked for $X=\{m_1,m_2\}$.
        Since $z=c_1$ is an unmarked node, in particular $X$ does not mark $z$.
        Thus, $X$ \emph{does} mark at least $\markedNumb$ nodes $w_1,\dots,w_{r}$ in depth $d_z$.
        For each node $w \in \{w_1,\dots,w_r\}$,
        we have $\{m_1,m_2\} \subseteq N_G(V_{w})$, $\depth_F(w) = d_z$,
        and $\pw(G[V_{w}]) \geq \pw(G[V_z])$.

        Recall that $\markedNumb = f_{\ref{thm:tree_lemma_to_find_siblings}}(\markedNumbChildren,1)$.
        Then, \cref{thm:tree_lemma_to_find_siblings} (using $X$ as the set of these marked vertices and $Y = \emptyset$) yields a node $\hat p \in V(F)$
        with distinct children $\hat w_1,\dots,\hat w_{\markedNumbChildren}$ in~$F$,
        and, for each node $\hat w_i \in \{\hat w_1,\dots, \hat w_{\markedNumbChildren}\}$, we have that $V_{\hat w_i}$ contains a vertex marked by $\{m_1,m_2\}$ at depth~$d_z$.

        Recall that $\boundOnS = 3(\eta+1)$.
        Since $C$ is a near-protrusion, by \cref{thm:near_protrusions_yield_small_intersection_and_unsel_neighborhood},
        $|N_{G'}(C - S)| \leq \boundOnS$
        and $|S \cap V(C)| \leq \boundOnS$.
        Recall that $\markedNumbChildren \geq f_{\ref{thm:find_stable_graph}}(\boundOnS+\beta) + \boundOnS + 1$.
        Then, since the sets $V_{\hat w_1},\dots,V_{\hat w_\markedNumbChildren}$ are pairwise vertex-disjoint, at least $f_{\ref{thm:find_stable_graph}}(\boundOnS+\beta) + 1$ of these sets do not intersect $S$.
        Consider such a $\hat w_i$ such that $V_{\hat w_i} \cap S = \emptyset$.
        Let $A'$ be the set of ancestors of $\hat p$ in $F$, clearly $|A'| \leq \beta$ since $\hat p$ has children in $F$, and the depth of $F$ is at most $\beta$.
        Notably, the neighborhood $N_{G'}(V_{\hat w_i})$
        is a subset of $(A' \setminus S) \cup N_{G'}(C - S)$,
        which has size at most $\beta + \boundOnS$.
        Moreover, $G'[V_{\hat w_i}]$ is a connected subgraph of $G'$
        since $\FF$ is a nice elimination tree.

        Then, by \cref{thm:find_stable_graph},
        there is a vertex $\hat z \in \{{\hat w_1},\dots,{\hat w_\markedNumbChildren} \}$
        such that $V_{\hat z}$ is $(A' \setminus S) \cup N_{G'}(C - S)$-stable and $S$ does not intersect~$V_{\hat z}$.
        Since $N_{G'}(V_{\hat z}) \subseteq (A' \setminus S) \cup N_{G'}(C - S)$, we have that
        $V_{\hat z}$ is stable in $G'$, $\PP$.
        We conclude that $\hat z$ is robust with regard to $\hat G,\FF,C,S, \PP$, establishing property (1).
        It remains to show the properties (2), (3) and (4) for node~$\hat z$.
        \begin{bracketenumerate}\setcounter{enumi}{1}
            \item
            To see property (2), that $\pw(G[V_{z}]) \leq \pw(G[V_{\hat z}])$,
            we recall that $\hat z$ is an ancestor of at least one node $w \in \{w_1,\dots,w_{r}\}$ in $\FF$.
            Then, $\pw(G[V_{\hat z}]) \geq \pw(G[V_w]) \geq \pw(G[V_z])$.
            \item
            Regarding the third property,
            we note that $\{m_1,m_2\} \subseteq N_{G'}(V_{\hat z})$.
            Since $V_{\hat z}$ is stable, every bag $X_i$ with $X_i \cap V_{\hat z}\neq\emptyset$ contains $\{m_1,m_2\}$.
            By our choice of $m_1,m_2$, every bag containing $\{m_1,m_2\}$ contains all of $M_z$.
            Thus, we have property (3), that each bag $X_i$ with $X_i \cap V_{\hat z} \neq \emptyset$,
            contains all of $M_z$.

            \item
            Finally, to show property (4),
            assume for the sake of contradiction, that $z$ and $\hat z$ \emph{are} in ancestor-descendant relationship in $F$.
            Since $\hat z$ is an ancestor of at least one of $w_1,\dots,w_r$, of depth $d_z$,
            node $\hat{z}$ must be an ancestor of $z$, and as such $V_z \subseteq V_{\hat z}$.
            By property (3),
            each bag $X_i$ with $X_i \cap V_{\hat z} \neq \emptyset$ contains all vertices of $M_z$,
            and, therefore also $m$.
            Since $V_z \subseteq V_{\hat z}$,
            we have that each bag $X_i$ with $X_i \cap V_z \neq \emptyset$ contains $m$.
            Since $v \in V_z$, this contradicts our assumption that no bag contains $v$ and $m$.
        \end{bracketenumerate}
        This concludes the proof.
    \end{claimproof}

    Let nodes $z'$ and $\hat z$ be as \cref{claim:find_robust_node_1} and \cref{claim:find_robust_node_2}.
    For convenience, let us write $G'_{z'} = G'[V_{z'}] = G[V_{z'}]$ and $G'_{\hat z} = G'[V_{\hat z}] = G[V_{\hat z}]$.

    We define the \emph{rest} $R_i$ relative to some bag~$X_i$.
    For an integer $i \in \range[1]{\leftBagAny(\{v\})-1}$,
    the \emph{rest}~$R_i$ is the connected component of the graph $G'[V_z \setminus (S \cup \bigcup_{j \in [i]} X_j)]$ that contains vertex~$v$.
    By requiring $i < \leftBagAny(\{v\})$, there indeed is such a connected component.

    Our goal is to find some bag $X_i$ that has sufficient space to embed a path decomposition of $R_i$ in copies of $X_i$, such that $X_i$ also contains vertex $m$.
    We distinguish whether $\leftBagAll(M_z) \leq \leftBagAny(V_{z'})$ or not.
    We note that $\leftBagAny(V_{z'})$ is well-defined since $V_{z'} \subseteq V(G')$.

    \begin{claim}
        \label{claim:finish_proof_1}
        If $\leftBagAll(M_z) \leq \leftBagAny(V_{z'})$, the instance $(G,k,M)$ is a yes-instance.
    \end{claim}
    \begin{claimproof}
        We fix an index $i$ between $\leftBagAny(V_{z'})$ and $\rightBagAny(V_{z'})$ such that $X_i$ contains $\pw(G'_{z'}) + 1$ vertices of $V_{z'}$.
        Such an index $i$ exists because $S \cap V_{z'} = \emptyset$ by \cref{claim:find_robust_node_1} and the definition of robustness.
        Note that, by \cref{claim:find_robust_node_1} we have that $m \in N_{G'}(V_{z'})$, and as $z'$ is robust, $V_{z'}$ is stable which implies that any bag containing a vertex of $V_{z'}$ contains $m$.
        Hence, $m \in X_i$.
        Then, since any bag that contains $v$ has a higher index than any bag that contains $m$, we have $i \in [\leftBagAny(\{v\})-1]$.

        Our goal is to apply \cref{thm:embedding_with_twinning_graph:in:F}
        on the $\HH$-elimination tree $\FF$ of $C$, graph $W = \hat G$, set $S$, nodes $z$ and $z'$, induced subgraph $R = R_i$ and bag $X = X_i$ of the path decomposition $\PP$.
        Assuming all preconditions are met,
        we obtain a path decomposition $\hat \PP$ of $\hat G - S = G'$ of width at most $\eta$
        such that each bag $\hat X$ of $\hat \PP$ with $\hat X \cap V(R_i) \neq \emptyset$
        contains all vertices $X_i \setminus (V(R_i) \cup V_{z'})$.
        We have already argued that $m \in X_i$, and clearly $m \notin V(R_i) \cup V_{z'}$.
        Then, since $v \in V(R_i)$, each bag of $\hat{\mathcal{P}}$ that contains vertex $v$ also contains~$m$.
        Hence, $\hat{\mathcal{P}}$ is also a path decomposition of $G - S$, showing that $(G,k,M)$ is a yes-instance.

        Therefore, it remains to show the pre-condition 1., 2., 3., and 4. of \cref{thm:embedding_with_twinning_graph:in:F}.
        The first three conditions hold easily.
        \begin{bracketenumerate}
            \item
            Nodes $z,z'$ are not in an ancestor-descendant relationship in $F$, as they are siblings in $F$.
            \item
            $\pw(W[V_z]) \leq \pw(W[V_{z'}])$ by our choice of $z = c_1$.
            \item
            Node $z'$ is robust by \cref{claim:find_robust_node_1}.
        \end{bracketenumerate}
        Thus, it remains to show precondition (4),
        which is that the bag $X_i$ contains every vertex $w \in N_{G'}(R_i)$.
        The neighborhood $N_{G'}(R_i)$ has the three parts, the neighbors in $M_z$, the neighbors in $V(C - V_z)$, and the neighbors in $V_z$.
        We consider each of these neighborhoods separately.
        \begin{itemize}
            \item
                  Consider that $w \in N_{G'}(R_i) \cap M_z$.
                  We recall that $\leftBagAll(M_z) \leq i$.
                  Thus, $w$ is introduced in some bag $X_j$ with $j \leq i$.
                  Because $w \in N_{G'}(R_i)$, there is some vertex $r \in V(R_i)$ that is a neighbor of $w$.
                  By definition of $R_i$, vertex $r$ is introduced in a bag $X_{j'}$ with $i < j'$.
                  Therefore, $w$ must be part of $X_i$.
            \item
                  Consider that $w \in N_{G'}(R_i) \cap V_z$.
                  By definition of $R_i$, such a vertex $w$ is again introduced in a bag $X_j$ with $j \leq i$.
                  Analogously, to the previous case, we obtain that $w \in X_i$.
            \item
                  Finally, consider that $w \in N_{G'}(R_i) \cap V(C - V_z)$.
                  That is, $w$ is an ancestor of $z$.
                  Since $z$ and $z'$ have the same ancestor-type, also $w \in N_{G'}(G'_{z'})$.
                  By the robustness of $z'$, we have that $G'_{z'}$ is stable.
                  Then, \cref{thm:stable_graph_bags_contain_neighbors} yields that any bag that contains a vertex of $V(G'_{z'}) = V_{z'}$ contains all of $N_{G'}(G'_{z'})$, and therefore in particular $w \in X_i$.
        \end{itemize}
        Thus, bag $X_i$ contains every vertex of $N_{G'}(R_i)$.
        This shows the final precondition for \cref{thm:embedding_with_twinning_graph:in:F},
        and concludes the proof.
    \end{claimproof}

    It remains to consider that $\leftBagAny(V_{{z'}}) < \leftBagAll(M_z)$.
    The previous approach, for the case $\leftBagAll(M_z) \leq \leftBagAny(V_{{z'}})$,
    fails because a bag with index $i$ between $\leftBagAny(V_{z'})$ and $\rightBagAny(V_{z'})$ might not contain all of $N_{G'}(R_i) \cap M_z$.
    This time, we use the node $\hat z$.

    \begin{claim}
        \label{claim:finish_proof_2}
        If $\leftBagAll(M_z) > \leftBagAny(V_{{z'}})$, the instance $(G,k,M)$ is a yes-instance.
    \end{claim}
    \begin{claimproof}
        Let $i$ be the index of a bag of $\PP$ that contains $\pw(G'_{\hat z}) + 1$ vertices of $V_{\hat z}$.
        Since $S \cap V(G'_{\hat z}) = \emptyset$, such an index exists.
        Note that bag $X_i$ contains a vertex of $V_{\hat z}$, and therefore, by \cref{claim:find_robust_node_2} bag $X_i$ contains vertex $m \in M_z$.
        Since all bags containing $v$ have a higher index than any bag containing $m$ this yields $i \in [\leftBagAny(\{v\})-1]$.

        Our goal is to apply \cref{thm:embedding_with_twinning_graph:in:F}
        on the $\HH$-elimination tree $\FF$ of $C$,
        graph $W = \hat G$, set~$S$, nodes $z$ and $\hat z$, induced subgraph $R = R_i$
        and bag $X = X_i$ of the path decomposition $\PP$.
        Assuming all preconditions are met,
        we obtain a path decomposition $\hat{\mathcal{P}}$ of $\hat G - S = G'$ of width at most $\eta$, such that each bag $\hat X$ of $\hat \PP$ that contains a vertex of $R_i$ contains all of $X_i \setminus (V(R_i) \cup V_{\hat z})$.
        Since $m \in X_i$ by \cref{claim:find_robust_node_2}, $m \notin V(R_i) \cup V_{\hat z}$, and $v \in V(R_i)$, some bag of $\hat{\PP}$ contains both $m$ and $v$.
        Hence, $\hat \PP$ is also a path decomposition of $G-S$,
        showing that $(G,k,M)$ is a yes-instance.

        Therefore, it remains to show the preconditions (1), (2), (3), and (4) of \cref{thm:embedding_with_twinning_graph:in:F}.
        The first three conditions hold easily.
        \begin{bracketenumerate}
            \item Nodes $z,\hat z$ are not in ancestor-descendant relationship, by \cref{claim:find_robust_node_2}.
            \item $\pw(\hat{G}[V_{z}]) \leq \pw(\hat{G}[V_{\hat z}])$,
            since $G[V_z] = \hat G[V_z]$ and $\hat G[V_{\hat z}] = G[V_{\hat z}]$ and because of \cref{claim:find_robust_node_2}.
            \item Node $\hat{z}$ is robust with regard to $W,\FF,C,S, \PP$, by \cref{claim:find_robust_node_2}.
        \end{bracketenumerate}
        Thus, it remains to show precondition (4),
        that the bag $X_i$ contains $\pw(G'_{\hat z}) + 1$ vertices of $V_{\hat z}$ and that $N_{G'}(R_i) \subseteq X_i$.
        By our choice of $X_i$, it contains $\pw(G'_{\hat z}) + 1$ vertices of $V_{\hat z}$.
        It remains to show that, for every $w \in N_{G'}(R_i)$, also $w \in X_i$.
        The neighborhood $N_{G'}(R_i)$ has three parts: the neighbors in $M_z$, the neighbors in $V(C - V_z)$, and the neighbors in $V_z$.
        We consider each of these parts separately.
        \begin{itemize}
            \item
                  We have $M_z \subseteq X_i$ by \cref{claim:find_robust_node_2}.
            \item
                  Consider a vertex $w \in N_{G'}(R_i) \cap V(C - V_z)$.
                  Then, $w$ is an ancestor of $z$.
                  Since $z$ and $z'$ have the same ancestor-type, also $w \in N_{G'}(G'_{z'})$.
                  By the robustness of $z'$, we have that $G'_{z'}$ is stable (in $G'$, $\PP$).
                  Since further $\leftBagAny(V_{z'}) < \leftBagAll(M_z) \leq i$, vertex $w$ is introduced in a bag $X_j$ with $j < i$.
                  Let $r \in R_i$ be a neighbor of $w$ in $R_i$.
                  Then, by the definition of $R_i$, vertex $r$ is introduced in a bag $X_{j'}$ with $i< j'$.
                  Since $w,r$ are neighbors, we have $w \in X_i$.
            \item
                  Finally, consider a vertex $w \in N_{G'}(R_i) \cap V_z$.
                  By our choice of $R_i$, vertex $w$ is introduced in a bag $X_j$ with $j \leq i$.
                  Then, analogously to the previous case, $w$ is part of $X_i$.
        \end{itemize}
        Thus, bag $X_i$ contains every vertex of $N_{G'}(R_i)$.
        This shows the final precondition for \cref{thm:embedding_with_twinning_graph:in:F},
        and concludes the proof.
    \end{claimproof}
    Since both cases, the case of \cref{claim:finish_proof_1} and \cref{claim:finish_proof_2}
    yield that $(G,k,M)$ is a yes-instance,
    this concludes the backward direction and the whole proof.
\end{proof}

We can now prove \cref{thm:treedepth_case}, that we restate for convenience.
\thmTreedepthCase*{}
\begin{proof}
    Follows directly from \cref{thm:treedepth_case_rule_valid,thm:treedepth_case_equivalent} and the fact that algorithm \nameref{rule:degree_reduction_elimination_forest} outputs a graph in which one edge is deleted if a vertex of $N_G(C)$ has sufficiently large degree.
\end{proof} 
\subsection{Bounding the Degree in Caterpillars}
\label{subsec:bounding_degree_caterpillars}
The goal of this section is to introduce a number of reduction rules to
bound the size of a caterpillar in a given instance $(G,k,M)$ of
\pwEtaDeletionEta.
We show that the size of a given caterpillar $C$ can be bounded in terms of the
size of its neighborhood by a uniform polynomial $|N_G(C)|^{\Oh(1)}$.

More concretely, we give an algorithm
(\nameref{rule:degree_reduction_caterpillar}) with the following guarantees.

\thmCaterpillarCase*{}

As in the previous sections, we first lay out a few key definitions
and tools that we make use of later in our reduction rules.

\subsubsection{The Key Ingredients}

We start by recalling a few basic notions.
For a vertex subset or a subgraph $X$ in a graph $G$,
the \textit{pendants} of $X$, denoted by $\pendants_G(X)$,
are the vertices in $N_G[X]$ with
degree $1$ in $G$.
Recall that a graph $C$ is a caterpillar if it is a tree for
which the deletion of pendants results in a path. We refer to this path as
the \textit{spine} of $C$ and its vertices as \textit{spine vertices}.
The spine of a caterpillar is unique and is allowed to be the empty graph.
For $x, y \in V(C)$, let $P$ be the unique $x$-$y$ path in $C$.
We conveniently use $C[x,y]$ to denote the caterpillar $C[V(P) \cup \pendants_C(V(P))]$.

The following straightforward observation is useful for inserting a path decomposition
of a caterpillar between two designated bags in an existing path decomposition.
\begin{observation}\label{obs:natural_pw_1}
    Let $C$ be a caterpillar and $x, y \in V(C)$ such that $C = C[x, y]$.
    Then there exists a path decomposition $X_1, \dots, X_N$ of $C$
    of width $1$ with $X_1 = \{x\}$ and $X_N = \{y\}$.
\end{observation}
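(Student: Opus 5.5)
I will prove the observation by induction on the length of the $x$-$y$ path $P = p_0 p_1 \dots p_s$ in $C$, with $p_0 = x$ and $p_s = y$. Since $C = C[x,y]$, every vertex of $C$ lies on $P$ or is a pendant adjacent to some vertex of $P$. Recall that $C$ is a tree. Hence each vertex $u \notin V(P)$ has exactly one neighbor, and that neighbor is some $p_j$.

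The decomposition is built directly by walking along $P$. For each $j \in \range[0]{s}$, let $u_{j,1},\dots,u_{j,a_j}$ be the vertices of degree one in $C$ that are adjacent to $p_j$ and do not lie on $P$. Attached to $p_j$, I use the bag sequence
\[
    \{p_j\},\ \{p_j,u_{j,1}\},\ \dots,\ \{p_j,u_{j,a_j}\},\ \{p_j\}.
\]
Between consecutive path vertices I insert the bag $\{p_j,p_{j+1}\}$. The full sequence is the concatenation of these blocks in order $j=0,\dots,s$, with $\{p_j,p_{j+1}\}$ placed between block $j$ and block $j+1$. It begins with $\{p_0\}=\{x\}$ and ends with $\{p_s\}=\{y\}$.

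It remains to verify the axioms of \cref{def:path_decomposition}. Each bag has at most two vertices, so the width is at most $1$. Every edge of $C$ is either a path edge $p_jp_{j+1}$ or a pendant edge $p_ju_{j,i}$, and each appears in a bag. For the interval property, $u_{j,i}$ occurs in exactly one bag. The vertex $p_j$ occurs in the bag $\{p_{j-1},p_j\}$ (if $j>0$), then throughout its own block, then in $\{p_j,p_{j+1}\}$ (if $j<s$). These occurrences are consecutive.

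The degenerate cases are routine. If $x=y$, then $s=0$, and the sequence is a single block starting and ending with $\{x\}$. The case $x = y$ pendant-free with $C$ a single vertex gives the one-bag sequence $\{x\}$, which is also allowed. The only point needing care is the structure claim that all vertices off $P$ are pendants attached to $P$. This follows from the definition $C[x,y]=C[V(P)\cup \pendants_C(V(P))]$ together with acyclicity of $C$. Note that $x$ or $y$ themselves may be leaves, but they then lie on $P$ and cause no conflict.
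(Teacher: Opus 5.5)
Your construction is correct and is the natural walk-along-the-spine decomposition the paper has in mind; the paper states this observation without proof. Two small presentational points: the induction you announce is never used, since the direct construction already suffices, and what you obtain is width at most $1$ (width $0$ when $C$ is a single vertex), which is exactly what the paper's later applications need.
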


We also require the following well-known characterization of caterpillars, a claw is the graph $K_{1,3}$ and a subdivided claw is the graph on 7 vertices obtained from $K_{1,3}$ by subdividing each edge exactly once.
\begin{proposition}[{\cite[Theorem 1]{bryant1987finding}, \cite{DBLP:journals/dam/KinnersleyL94}}]\label{fact:pw-1-char}
    Let $G$ be a graph. The following statements are equivalent:
    \begin{itemize}
        \item Every connected component of $G$ is a caterpillar,
        \item $G$ has pathwidth at most $1$,
        \item $G$ contains neither a $K_3$ nor a subdivided claw as minors.
    \end{itemize}
\end{proposition}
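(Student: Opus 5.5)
We prove the equivalences cyclically, (1)$\Rightarrow$(2)$\Rightarrow$(3)$\Rightarrow$(1), using only elementary facts about path decompositions, \cref{obs:natural_pw_1}, and the fact that the subdivided claw has maximum degree $3$.

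\emph{(1)$\Rightarrow$(2).} Let $C$ be a component of $G$ that is a caterpillar. If $C$ is a single vertex, it trivially has pathwidth $0$. Otherwise, choose $x,y$ as follows: if the spine has at least two vertices, take pendants attached to its two ends; if the spine is a single vertex, take two pendants of that vertex; if the spine is empty, $C$ is an edge and we take its two endpoints. In each case the $x$-$y$ path together with its pendants covers $C$, so $C = C[x,y]$. Then \cref{obs:natural_pw_1} gives a width-$1$ path decomposition of $C$. Concatenating these decompositions over all components yields a path decomposition of $G$ of width at most $1$.

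\emph{(2)$\Rightarrow$(3).} Pathwidth is minor-monotone, so it suffices to show that $K_3$ and the subdivided claw both have pathwidth at least $2$.
\begin{itemize}
\item For $K_3$: every clique is contained in some bag (Helly, as in \cref{obs:disjoint_paths_yield_clique}), so some bag has size $3$.
\item For the subdivided claw, let $c$ be the center, $a_1,a_2,a_3$ its neighbors and $b_i$ the leaf adjacent to $a_i$. Suppose there is a decomposition with bags of size at most $2$. Each edge $ca_i$ forces a bag equal to $\{c,a_i\}$. Pick such bags at indices $p_1<p_2<p_3$, relabelling so that $p_2$ belongs to $a_2$. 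Vertex $c$ lies in every bag with index in $[p_1,p_3]$. Vertex $a_2$ lies in neither $X_{p_1}$ nor $X_{p_3}$, and its bags form an interval containing $p_2$, so all its bags lie strictly inside $(p_1,p_3)$. Hence every bag containing $a_2$ also contains $c$. The edge $a_2b_2$ then forces a bag containing $\{a_2,b_2,c\}$, which has size $3$, a contradiction.
\end{itemize}

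\emph{(3)$\Rightarrow$(1).} Since $G$ has no $K_3$ minor, it has no cycle, so every component is a tree $T$. Because the subdivided claw has maximum degree $3$, it is a minor of $T$ only if $T$ contains a subdivision of it as a subgraph. We show that if $T$ is not a caterpillar, then $T$ even contains the subdivided claw itself as a subgraph.
\begin{itemize}
\item Delete all leaves of $T$ to obtain the tree $T'$. If $T'$ is a path, possibly empty, then $T$ is a caterpillar.
\item Otherwise $T'$ has a vertex $c$ of degree at least $3$ in $T'$. Take three $T'$-neighbors $a_1,a_2,a_3$ of $c$. Each $a_i$ is not a leaf of $T$, so it has a neighbor $b_i \neq c$ in $T$.
\item The $b_i$ are pairwise distinct and different from $c$ and the $a_j$, since $T$ is acyclic. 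Hence these seven vertices form a subdivided claw subgraph, contradicting (3).
\end{itemize}

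The only step needing genuine care is the interval argument showing that the subdivided claw has pathwidth $2$; the other implications are routine.
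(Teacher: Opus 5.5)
Your proof is correct. The paper does not prove this proposition at all: it imports it from Bryant, Fellows, Kinnersley and Langston and from Kinnersley and Langston, where it appears as the determination of the minor-obstruction set for pathwidth at most $1$ inside a general obstruction-isolation framework. Your cyclic argument is instead self-contained and elementary. It uses only tools the paper already sets up:
\begin{itemize}
\item minor-monotonicity of pathwidth,
\item the interval property of path decompositions,
\item the fact that a clique lies in some bag,
\item \cref{obs:natural_pw_1}.
\end{itemize}

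The interval argument for the subdivided claw is sound. The three bags $\{c,a_i\}$ are distinct sets, so they sit at distinct indices, and the middle $a_2$ is trapped inside the interval of $c$. In (3)$\Rightarrow$(1) you even get the stronger fact that a tree which is not a caterpillar contains the subdivided claw as a \emph{subgraph}.

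Two small remarks:
\begin{itemize}
\item Your sentence about maximum degree $3$ and subdivisions is superfluous. You only use the trivial direction, that a subgraph is a minor.
\item In (1)$\Rightarrow$(2) you should justify why the ends of a spine with at least two vertices carry pendants. Such an end has degree at least $2$ in $C$ but only one spine neighbor, and all its non-spine neighbors are leaves.
\end{itemize}

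The citation buys the paper brevity and a pointer to the broader obstruction-set theory. Your route gives an in-paper verification of exactly the direction the paper later relies on in the proof of \cref{lem:one_sided_embedding}: a caterpillar has neither a $K_3$ nor a subdivided-claw minor. It does so without appealing to external results.
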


We use \Cref{fact:pw-1-char} above to show that there are
low-width path decompositions of a given caterpillar whose first bag satisfies a certain
useful property. In contrast with \Cref{obs:natural_pw_1}, this
lets us insert a path decomposition of a caterpillar starting from some designated bag
in an existing path decomposition.
As mentioned, the following will be a key property in manipulating path
decompositions of caterpillars to prove the safety of our reduction rules.

\begin{lemma}\label{lem:one_sided_embedding}
    Let $G$ be a caterpillar and $(V_1, V_2)$ a separation of $G$
    such that $V_1 \setminus V_2$ contains a spine vertex, or such that $V_1 \setminus V_2$ contains a pendant vertex adjacent to an endpoint of the spine of $G$.
    Then there exists a path decomposition $X_1, \dots, X_N$ of $G[V_2]$
    of width at most $|V_1 \cap V_2|$ with $X_1 = V_1 \cap V_2$.
\end{lemma}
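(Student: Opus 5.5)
The plan is to reduce to the connected components of $G[V_2]$ and then treat each component with \cref{obs:natural_pw_1}. Set $Z = V_1 \cap V_2$ and $s = |Z|$. If $s = 0$, then $G$ is connected and $V_1 \setminus V_2 \neq \emptyset$, so $V_2 = \emptyset$ and there is nothing to show. Assume $s \geq 1$ from now on. Then $G[V_2]$ is a forest of caterpillars $K_1,\dots,K_q$; write $Z_j = Z \cap V(K_j)$.

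\emph{Step 1 (gluing components).} I will build the decomposition component by component. Start with the bag $Z$. Then, for $j = 1,\dots,q$, append a decomposition of $K_j$ whose first bag is $Z_j$ (or an arbitrary one when $Z_j = \emptyset$), and add to each of its bags the set $\bigcup_{j' > j} Z_{j'}$ of still-unprocessed separator vertices. This is a path decomposition, because there are no edges between different $K_j$ and every vertex occurs in a contiguous interval. If each $K_j$ admits a decomposition of width at most $\max\{|Z_j|,1\}$, then every bag has size at most $|Z_j| + 1 + \sum_{j'>j}|Z_{j'}| \leq s+1$, so the width is at most $s$. Components with $Z_j = \emptyset$ have pathwidth at most $1 \leq s$ by \cref{fact:pw-1-char}.

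\emph{Step 2 (one component).} Fix $K = K_j$ with $Z_K = Z_j \neq \emptyset$. Every $z \in Z_K$ has a neighbour in $V_1 \setminus V_2$, since otherwise $z$ could be moved into $V_2 \setminus V_1$ without changing the separator argument. I want to show that $K$ can be written as $K = K[x,y]$ for some $x \in Z_K$, with $Z_K$ occupying an initial segment of the spine order of $K$ up to pendants. Given this, take the width-$1$ decomposition from \cref{obs:natural_pw_1} with first bag $\{x\}$. Prepend the bag $Z_K$, and keep each $z \in Z_K \setminus \{x\}$ in the bags only until the bag that first covers its neighbourhood in $K$. Because $Z_K$ sits at the left end, at most $|Z_K|-1$ such extra vertices are alive at any time, and they are dropped in spine order. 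The bags therefore have size at most $|Z_K|+1$, which gives width $|Z_K|$.

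\emph{Main obstacle.} The hard part is the structural claim in Step 2, namely that the attachment points $Z_K$ lie at one end of $K$ and not in the middle of its spine. I would argue via the unique spine of $G$ together with the hypothesis that $V_1 \setminus V_2$ contains a spine vertex, or a pendant at a spine endpoint. The vertices of $V_1 \setminus V_2$ adjacent to $K$ must then be spine vertices of $G$ lying beyond an end of the sub-spine of $K$, or pendants of $K$'s extremal spine vertices. If $K$ contained spine vertices of $G$ on both sides of such an attachment, then following the spine of $G$ would force $V_1 \setminus V_2$ to be a single pendant hanging from a middle spine vertex, contradicting the hypothesis. I expect this case analysis (spine vs.\ pendant attachments, and $K$ containing 0, 1, or more spine vertices) to need care. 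Where it gets messy, I would fall back on \cref{fact:pw-1-char}: a middle attachment together with the vertex promised by the hypothesis would yield a subdivided claw as a minor of the tree $G$, contradicting that $G$ is a caterpillar.
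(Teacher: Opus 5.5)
\textbf{Gap: the structural claim in Step~2 is false as stated.} You identify this claim as the crux, but it fails, so neither of your proposed justifications can work.

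Take the path $s_0 s_1\cdots s_{11}$ with an extra leaf $u$ attached to $s_5$, so the spine is $s_1\cdots s_{10}$. Set $V_2=\{s_0,\dots,s_9\}$ and $V_1=\{u,s_5,s_9,s_{10},s_{11}\}$. Then:
\begin{itemize}
\item $(V_1,V_2)$ is a separation, and $V_1\setminus V_2=\{u,s_{10},s_{11}\}$ contains the spine vertex $s_{10}$;
\item $G[V_2]$ is the single component $K=s_0\cdots s_9$, with $Z_K=\{s_5,s_9\}$;
\item each vertex of $Z_K$ has a neighbour in $V_1\setminus V_2$, so your normalization does not exclude this example.
\end{itemize}
The outside neighbour $u$ of $s_5$ is a pendant hanging from the \emph{middle} of the spine of $K$. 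So $Z_K$ is not an initial segment of $K$ in any sense. Moreover, $K$ has spine vertices of $G$ on both sides of this attachment, yet $V_1\setminus V_2$ is not a single pendant.

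Your argument implicitly treats $V_1\setminus V_2$ as connected. The hypothesis only constrains \emph{some} component of $G-V_2$, and that component can be far from a middle attachment. The subdivided-claw fallback fails for the same reason: this configuration genuinely occurs in a caterpillar, so there is no contradiction to derive.

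\textbf{What Step~2 actually needs is weaker, and it is true.} You only need one $x\in Z_K$ with $K=K[x,y]$ for some $y$. This can be shown as follows.
\begin{itemize}
\item If $K$ contains no spine vertex of $G$, it is a single leaf, and that leaf lies in $Z_K$.
\item If $V(K)$ meets the spine in $s_a\cdots s_b$ and misses, say, $s_{a-1}$, then $s_{a-1}\in V_1\setminus V_2$. Indeed, it is adjacent to $K$ but not in the component $K$ of $G[V_2]$. Hence $s_a\in Z_K$ and $K=K[s_a,s_b]$.
\item If $K$ contains the whole spine, then $V_1\setminus V_2$ has no spine vertex. The hypothesis therefore yields a pendant in $V_1\setminus V_2$ at a spine endpoint $s$. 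So $s\in Z_K$ and $K=K[s,s']$, where $s'$ is the other spine endpoint. This is the only place the hypothesis is used.
\end{itemize}

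\textbf{The positions of the other vertices of $Z_K$ are irrelevant.} Keep each $z\in Z_K\setminus\{x\}$ in every bag from the first bag up to the end of its interval in the width-$1$ decomposition of \cref{obs:natural_pw_1}. This adds at most $|Z_K|-1$ vertices to any bag, simply because there are only that many such vertices. No ``initial segment'' argument is needed. Note that your phrase ``until the bag that first covers its neighbourhood'' should be read this way: a spine vertex's neighbourhood never fits in a single width-$1$ bag.

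Components with $Z_j=\emptyset$ cannot occur, since $G$ is connected and $V_1\setminus V_2\neq\emptyset$. This matters, because your Step~1 bag-size estimate would be off by one for such components.

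With these repairs, your component-wise construction is a correct and more explicit alternative to the paper's proof. The paper instead inducts on $|V_2|$, peeling off one boundary vertex at a time, and uses the subdivided-claw argument only to show that a peelable vertex always exists.
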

\begin{proof}
    We prove the lemma statement by induction on $|V_2|$.
    For the base case $V_2 = \emptyset$, the statement holds trivially.

    Now assume $|V_2| \geq 1$. It follows that $V_1 \cap V_2 \neq \emptyset$
    since $(V_1, V_2)$ is a separation of the connected graph $G$, where $V_1, V_2 \neq \emptyset$.

    \begin{claim}\label{claim:frontier}
        There exists $p \in V_1 \cap V_2$ such that
        $|N_G(p) \setminus V_1| \leq 1$, or there exists a vertex $q \in V_2 \setminus V_1$ such that
        $N_G(q) \subseteq V_1 \cap V_2$.
\end{claim}
    \begin{claimproof}
        Assume for contradiction the claim does not hold.
        Let $u$ be a vertex in $V_1 \setminus V_2$ that is either a spine vertex, or the pendant of an endpoint of the spine of $G$.
        By the lemma statement, such a vertex $u$ exists.

        Now, let $p$ be a vertex of $V_1 \cap V_2$ such that there is a path from $p$ to $u$ in $G[V_1]$.
        Note that such a vertex $p$ exists because $G$ is connected, and $(V_1,V_2)$ is a separation with $V_1 \cap V_2$ nonempty.

        Since $p$ is in $V_1 \cap V_2$, by assumption we have $|N_G(p) \setminus V_1| \geq 2$,
        so let $q_1, q_2$ be two distinct vertices in $N_G(p) \setminus V_1$, and therefore $q_1,q_2 \in V_2 \setminus V_1$.
        Again, by assumption, there are vertices $q_1' \in N_G(q_1) \setminus V_1$
        and $q_2' \in N_G(q_2) \setminus V_1$.
        We observe that $q_1, q_2, q_1', q_2'$ are
        all distinct; otherwise, $G$ contains a $K_{3}$ minor, a contradiction to \Cref{fact:pw-1-char}.

        Set $P'_{u,p}$ to be a shortest path from $p$ to $u$ in $G[V_1]$.
        If this path contains at least three vertices, then set $P = P_{u,p}'$.

        Otherwise, assume that $P'_{u,p}$ contains only two vertices.
        First consider that $u$ is a pendant adjacent to an endpoint of the spine of $G$.
        Since $P'_{u,p}$ is a path on exactly two vertices, $p$ is the parent of $u$ and an endpoint of the spine of $G$:
        However, $p$ has two neighbors $q_1,q_2$ with degree at least two each, which is not possible for an endpoint of a spine.
        Hence, we have reached a contradiction.

        Now, consider that $u$ is a spine vertex.
        Then, $u$ has two neighbors, and as $u \in V_1 \setminus V_2$, both of these neighbors must be part of $V_1$.
        Hence, at least one of these two neighbors, say $v$, is not part of $P_{u,p}'$.
        In this case, let $P$ be the path on three vertices obtained by appending vertex $v$ to the path $P_{u,p}'$.

        In any case, the paths $P$, $p q_1 q_1'$, and $p q_2 q_2'$ form a subdivided claw minor, again a contradiction to
        \Cref{fact:pw-1-char}, which proves the claim.
    \end{claimproof}

    We make a distinction between the two cases given by the claim.
    In the first case, there is some $p \in V_1 \cap V_2$ such that $|N_G(p) \setminus V_1| \leq 1$.
    Here, we let $V_1' \coloneq V_1 \cup (N_G(p) \setminus V_1)$ and  $V_2' \coloneq V_2 \setminus \{p\}$.

    In the second case, there exists $q \in V_2 \setminus V_1$ with $N_G(q) \subseteq V_1 \cap V_2$;
    here, we let $V_1' \coloneq V_1 \cup \{q\}$ and $V_2' \coloneq V_2 \setminus \{q\}$.
    In both cases, we observe that $(V_1', V_2')$ is a separation of $G$ with $|V_2'| < |V_2|$, and that $|V_1' \cap V_2'| \leq |V_1 \cap V_2|$.
    Moreover, since $V_2' \subseteq V_2$, the special vertex $u$ is part of $V_1' \setminus V_2'$.

    Applying induction with $(V_1', V_2')$, let $X_1', \dots, X_{N-2}'$ be a path decomposition of $G[V_2']$ of width at most $|V_1' \cap V_2'| \leq |V_1 \cap V_2|$
    with $X_1' = V_1' \cap V_2'$.
    We claim that the required path decomposition of $G[V_2]$ is given by $X_1, \dots, X_N$,
    where $X_1 = V_1 \cap V_2$, $X_2 = X_1 \cup (V_1' \setminus V_1)$, and $X_j = X'_{j-2}$ for every $j \in [3, N]$.

    Let us now confirm this is a valid path decomposition.
    Clearly, $X_3,\dots,X_N$ forms a path decomposition of $G[V_2']$.
    Then, the only vertices which are in $V_2$ but not in $V_2'$ are vertices $p$ in the first case, or vertex $q$ in the second case.

    In the first case, note that $p \in X_1$ since $p \in V_1 \cap V_2 = X_1$.
    Moreover, by assumption $p$ has at most one neighbor in $V_2 \setminus V_1$.
    Clearly, all neighbors of $p$ in $V_1 \cap V_2$ are part of $X_1$ too.
    Then, if $p$ has no neighbor in $V_2 \setminus V_1$, it is clear that $X_1,\dots,X_N$ is a path decomposition of $G[V_2]$ because $V_1 \cap V_2 = (V_1' \cap V_2') \cup \{p\}$, and $p$ only appears in the bags $X_1$ and $X_2$.

    Now, assume that $p$ has a neighbor $p' \in V_2 \setminus V_1$.
    Then, by the definition of $V_1'$, this vertex $p'$ is part of $V_1'$.
    Hence, $p' \in V_1' \setminus V_1$, and thus $p' \in X_2$.
    Next, observe that each vertex $w \in X_1$ is also part of $X_2$.
    Moreover, observe that in this case we have $V_1 \cap V_2 = ((V_1' \cap V_2') \cup \{p\}) \setminus \{p'\}$.
    Then, it is easy to see that we have a path decomposition of $G[V_2]$ because $p$ meets all of its neighbors and is only part of $X_1,X_2$, and $p'$ is part of $X_2$ and $X_3$.

    Finally, we need to consider the second case, that there is some $q \in V_2 \setminus V_1$ with $N_G(q) \subseteq V_1 \cap V_2$.
    In this case, we have $V_1 \cap V_2 = V_1' \cap V_2'$, and $V_1' \setminus V_1 = \{q\}$.
    Then, it is easy to confirm that $X_1,\dots,X_N$ is a path decomposition of $G[V_2]$, because $q$ meets all of its neighbors in bag $X_2$, and is not part of any bag apart from bag $X_2$; and $X_2 = X_1 \cup \{q\}$, and $X_1 = X_3$.

    It remains to argue that the width of this path decomposition is at most $|V_1 \cap V_2|$.
    Clearly, the width of $X_3,\dots,X_N$ is at most $|V_1' \cap V_2'| \leq |V_1 \cap V_2|$ by the induction hypothesis.
    Then, notice that $X_1 = V_1 \cap V_2$, and that in any case, $|X_2| \leq |X_1| + 1 \leq |V_1 \cap V_2| + 1$, concluding the proof.
\end{proof}

Finally, the following simple technical lemma, applied to vertex neighborhoods,
lets us find vertices whose neighborhood is ``popular'' among other vertices.

\begin{lemma}\label{lem:popular_neighborhood}
    Let $\{S_i\}_{i \in [N]}$ be a sequence of sets with union $U$,
    and let $r \geq 2$ be an integer such that $N \geq 2r|U|^2$ and $N \geq 1$.
    Then, there exists an integer $i \in [N]$ such that
    for every $x,y \in S_i$, it holds that
    $|\{j \in [1,i] \mid \{x,y\} \subseteq S_j \}| \geq r$
    and $|\{j \in [i,N] \mid \{x,y\} \subseteq S_j \}| \geq r$.
    Moreover, such an integer can be computed in $\Oh(r |U|^4 + N|U|^2)$ time.
\end{lemma}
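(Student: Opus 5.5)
We argue by counting \emph{bad} indices and then check the running time. For a pair $\{x,y\}\in\binom{U}{\leq 2}$ (allowing $x=y$), let $J_{xy}=\{j\in[N]\mid \{x,y\}\subseteq S_j\}$, listed in increasing order. Call an index $i$ \emph{left-bad for $\{x,y\}$} if $i\in J_{xy}$ and $|J_{xy}\cap[1,i]|<r$. Then $i$ is among the first $r-1$ elements of $J_{xy}$. Symmetrically, $i$ is \emph{right-bad for $\{x,y\}$} if $i\in J_{xy}$ and $|J_{xy}\cap[i,N]|<r$, i.e., $i$ is among the last $r-1$ elements of $J_{xy}$.

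An index $i$ fails the required property exactly when some pair $x,y\in S_i$ has too few occurrences on one side. Since $x,y\in S_i$ means $i\in J_{xy}$, this happens exactly when $i$ is left-bad or right-bad for some pair. Each pair therefore contributes at most $2(r-1)$ bad indices. The number of pairs with $x\neq y$ plus the singletons is $\binom{|U|}{2}+|U|\leq |U|^2$ (for $|U|\geq 1$). Hence at most $2(r-1)|U|^2<2r|U|^2\leq N$ indices are bad, so some $i\in[N]$ is good.

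The degenerate case $U=\emptyset$ needs a separate check. Then every $S_i$ is empty, and since $N\geq 1$ any $i$ works vacuously; this is why the hypothesis $N\geq 1$ is included. We should also confirm that singletons are covered: a pair with $x=y$ is treated exactly as above, so no separate argument is needed.

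For the algorithm, the plan has three steps.
\begin{enumerate}
\item Build the lists $J_{xy}$ by scanning each $S_j$ and appending $j$ to $J_{xy}$ for every pair inside $S_j$. Since $|S_j|\leq|U|$, this costs $O(N|U|^2)$.
\item For each pair, mark the first and last $r-1$ entries of $J_{xy}$ as bad. This costs $O(r|U|^2)$ in total, comfortably within the stated $O(r|U|^4)$ term.
\item Scan $[N]$ and output the first unmarked index, in $O(N)$ time.
\end{enumerate}
Indices can be stored in an array of length $N$ and pairs indexed in a table of size $|U|^2$.

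The only genuine subtlety, and the main thing to verify, is the counting: the bad sets must really be charged per pair rather than per index. This holds because we charge an index only to a pair it actually contains. No other obstacle is expected.
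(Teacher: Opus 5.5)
Your proof is correct, and the existence part is the same counting argument as the paper's. Each pair $\{x,y\}$ (with $x=y$ allowed) can only witness failure for the first $r-1$ and last $r-1$ indices of $J_{xy}$, and there are at most $|U|^2$ pairs. The algorithm differs: the paper restricts to the first $2r|U|^2$ sets and tabulates prefix counts for every pair, which is where its $O(r|U|^4)$ term comes from. Your direct marking of the first and last $r-1$ entries of each $J_{xy}$ is simpler and already runs in $O(N|U|^2)$, within the stated bound.
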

\begin{proof}
    If $U = \emptyset$, return an arbitrary index in $[N]$.

    Otherwise,
    for each index $i \in [N]$ a set $X = \{x,y\} \subseteq S_i$ (where $x = y$ is allowed) is a witness for $i$ if
    either $|\{j \in [1,i] \mid X \subseteq S_j \}| \leq r-1$
    or $|\{j \in [i, N] \mid X \subseteq S_j \}| \leq r-1$.

    We claim that a set $X = \{x,y\} \subseteq U$ can be a witness for at most $2r-2$ indices.
    To see this, let $J = \{j \in [N] \mid X \subseteq S_j\}$.
    Clearly, $X$ can only be a witness for the $r-1$ smallest indices of $J$ and for the $r-1$ largest indices of $J$, and not a witness for any other index besides these.

    Each index $i \in [N]$ that does not fulfill the conditions of the lemma must have some set $X_i$ that is a witness for $i$.
    Now, since for each $x,y \in U$ we have that set $\{x,y\}$ can be a witness for at most $2r-2$ indices, and there are at most $|U|^2$ such sets, we know that at most $(2r-2)|U|^2$ indices of $[N]$ can have witness, and therefore at most $(2r-2)|U|^2$ indices do not satisfy the conditions of the lemma.
    Then, since $N \geq 2r|U|^2$, a suitable index $i$ exists.

    Reading the input and computing the set $U$ takes time $O(N|U|^2)$.
    To compute such an integer $i$ efficiently, it suffices to only consider the
    first $2r |U|^2$ sets $\{S_i\}_{i \in [2r|U|^2]}$.
    We pass through the sequence $\{S_i\}_{i \in [2r|U|^2]}$ twice.
    In the first pass, for every $\ell \in [2r|U|^2]$, we record the number $F(X, \ell)$ of occurrences of every nonempty subset $X \subseteq U$ of size at most $2$
    in the prefix of $\{S_i\}_{i \in [2r|U|^2]}$ of length $\ell$; this takes $\Oh(r |U|^4)$ time in total.
    Then in the second pass, we find an index $i \in [2r |U|^2]$
    that satisfies $F(X, i) \geq r$ and $F(X, 2r|U|^2) - F(X, i-1) \geq r$ for every
    $X \in \binom{S_i}{\leq 2} \setminus \{\emptyset\}$,
    which can be checked in $\Oh(|U|^2)$ time per index.
    Overall, this takes $\Oh(r |U|^4 + N|U|^2)$ time, which finishes the proof.
\end{proof}

\subsubsection{The Reduction Rule}

At a high level, our reduction rule works in three steps.
The first step reduces the maximum degree of the caterpillar; this is a simple
application of \Cref{thm:reduce_ccs}.
The second step then reduces the total number of ``boundary'' vertices,
that is, the number of vertices having neighbors outside the caterpillar;
see \Cref{lem:delete_edge_to_caterpillar}.
Finally, if the caterpillar remains too large even after the first two steps,
then the third step finds parts of the caterpillar with no neighbors outside
and reduces them by contracting spine edges (see \Cref{lem:contract_spine})
or removal of pendant vertices, which again uses \Cref{thm:reduce_ccs}.

In what follows, we proceed by giving an algorithm for each step and finally show how
they all combine in \Cref{thm:caterpillar_case}.

\paragraph{Reducing Pendants}

In order to both bound the degree of a given vertex in the caterpillar
and to remove irrelevant non-boundary pendants,
a direct application of \Cref{thm:reduce_ccs}
is used to give a rule, \nameref{rule:reduce_star}, satisfying the following lemma.

\begin{lemma}
    \label{lem:reduce_star}
    Let $(G,k,M)$ be an instance of \pwEtaDeletion, $T$ be
    an induced subgraph of $G-M$ that is a star $K_{1, |V(T)|-1}$
    with pendants $L$ satisfying $|L| > f_{\ref{thm:reduce_ccs}}(|N_G(L)|)$.

    \namecref{rule:reduce_star} \nameref{rule:reduce_star} runs in
    polynomial-time, and when given $(G,k,M), T$ as input, it outputs an
    equivalent instance $(G',k,M)$ where $G'$ is a proper subgraph of $G$.
\end{lemma}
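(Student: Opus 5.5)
The plan is to reduce directly to \cref{thm:reduce_ccs}, applied with an enlarged modulator. Write $T$ with center $c$ and pendant set $L$. The rule \nameref{rule:reduce_star} sets $M' = V(G) \setminus L$ and calls \nameref{rule:reduce_ccs} on $(G,k,M')$, returning $(G',k,M)$ from its output $(G',k,M')$.

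First, I would check that $(G,k,M')$ is a valid instance of \pwEtaDeletionEta{}. The graph $G - M' = G[L]$ is edgeless, since $L$ is the set of leaves of an induced star and $T$ is induced. Hence $\pw(G - M') = 0 \leq \eta$.

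Second, I would identify the connected components. The components of $G - M'$ are exactly the singletons $\{u\}$ for $u \in L$, so there are $|L|$ of them. Moreover, $N_G(G - M') = N_G(L)$, where $N_G(L) = \bigcup_{u \in L} N_G(u) \setminus L$. This includes the center $c$ and any neighbors of the pendants in $M$ or elsewhere. By hypothesis $|L| > f_{\ref{thm:reduce_ccs}}(|N_G(L)|)$, which is exactly the precondition of \nameref{rule:reduce_ccs}.

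Third, I would invoke \cref{thm:reduce_ccs}. It yields, in polynomial time, an instance $(G',k,M')$ with $G'$ a proper subgraph of $G$, and this instance is equivalent to $(G,k,M')$. Equivalence here refers only to whether some $S$ with $|S| \leq k$ satisfies $\pw(G-S) \leq \eta$, which does not depend on the modulator. Therefore $(G',k,M)$ is equivalent to $(G,k,M)$.

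Finally, I would confirm that $(G',k,M)$ is a valid instance. Since $G'$ is $G$ minus one pendant vertex, $G' - M$ is an induced subgraph of $G - M$, so its pathwidth remains at most $\eta$, and $M \subseteq V(G')$.

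The only point needing care is that \cref{thm:reduce_ccs} bounds components in terms of $|N_G(G-M')|$ rather than $|M'|$, and the proof of that theorem only uses $N$. This is exactly why the statement is phrased with $f_{\ref{thm:reduce_ccs}}(|N_G(L)|)$, so I expect no real obstacle.
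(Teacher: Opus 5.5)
Your proposal is correct and matches the paper's proof. The paper also invokes \nameref{rule:reduce_ccs} on $(G,k,V(G)\setminus L)$, whose complement consists of $|L|$ singleton components with neighborhood $N_G(L)$, and then uses that equivalence does not depend on the modulator. You additionally check that $G[L]$ is edgeless, so the enlarged instance is valid, and that $(G',k,M)$ is still a valid instance; the paper leaves these details implicit.
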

\begin{proof}
    The number of connected components of $G-V(G-L)$ is $|L| > f_{\ref{thm:reduce_ccs}}(|N_G(L)|)$.
    We apply the algorithm of \Cref{thm:reduce_ccs} on input $(G,k,V(G- L))$, and by the theorem the algorithm outputs an equivalent instance $(G',k,V(G - L))$ where $G'$ is a proper subgraph of $G$.
    As the instance is also equivalent to $(G', k, M)$, this concludes the proof.
\end{proof}

\begin{algorithm}[t]
    \caption{\algofont{ReducePendants}}
    \label{rule:reduce_star}
    \Input{Instance $(G,k,M)$ of \pwEtaDeletion and an induced subgraph $T$ of
        $G-M$ that is a star $K_{1, |V(T)|-1}$ with pendants $L$
        satisfying $|L| > f_{\ref{thm:reduce_ccs}}(|N_G(L)|)$.}

    $(G', k, V(G - L)) \gets$ \nameref{rule:reduce_ccs}$\left(G, k, V(G - L)\right)$\;

    \Return{$(G', k, M)$}
\end{algorithm}

\paragraph{Reducing Boundary Vertices}

Having bounded the maximum degree within the caterpillar using \Cref{lem:reduce_star},
we turn to the task of reducing the number of vertices having neighbors outside the caterpillar.
Here, we make use of \Cref{lem:popular_neighborhood} on two levels.
The key idea is to first partition the caterpillar into
groups, each of which is a smaller induced caterpillar.
Because of the maximum degree bound, there are enough groups to let us
apply \Cref{lem:popular_neighborhood} and find one group
whose neighborhood outside the caterpillar is ``popular'' among other groups.
By having a large enough such group, we once again make use of
\Cref{lem:popular_neighborhood} to find a vertex whose neighborhood
is ``popular'' among other vertices in the group and remove its incident edges;
see \namecref{rule:delete_edge_to_caterpillar} \nameref{rule:delete_edge_to_caterpillar}.
The challenge is to then prove that this does not turn a negative instance into a
positive one by showing how to manipulate a path decomposition of width at most $\eta$
of the modified instance into a path decomposition of width at most $\eta$
for the original instance.
\Cref{lem:one_sided_embedding} crucially lets us achieve this.

We now prove the following.

\begin{algorithm}[t]
    \caption{\algofont{ReduceCaterpillarBoundary}}
    \label{rule:delete_edge_to_caterpillar}
    \Input{Instance $(G,k,M)$ of \pwEtaDeletionEta and an induced subgraph $C$ of $G$ that is a caterpillar
        satisfying $\Delta(C) < f_{\ref{thm:reduce_ccs}}(|N_G(C)| + 5)$
        and $|N_G(G-C)| > f_{\ref{lem:delete_edge_to_caterpillar}}(|N_G(C)|).$}

    $s_1 s_2 \dots s_h \gets$ the spine of $C$\; \label{line:delete_edge:trivial_1}

    $V_1, \dots, V_h \gets$ the partition of $V(C)$ where each $V_i = V(C[s_i, s_i])$\; \label{line:delete_edge:trivial_2}

    Let $B_1, \dots, B_{h'}$ be a partition of $V(C)$ and $a, b: [h'] \to [h]$ be
    such that for $i \in [h']$, we have
    $B_i = \bigcup_{j \in [a(i), b(i)]} V_j$
    and $|\{j \in [a(i), b(i)] \mid N_G(G-C) \cap V_j \neq \emptyset\}| \geq 8(\eta+1) (|N_G(C)| + 5)^2 $\;
    \label{line:delete_edge:block_partitioning}

    Find an index $i_1 \in [h']$ such that for every $x,y \in N_G(B_{i_1}) \setminus V(C)$, we have $|\{j \in [h'] \mid x, y \in N_G(B_j)\}| \geq 4(\eta+1)(|N_G(C)|+1)$\;
    \label{line:delete_edge:popular_block}

    Find an index $i_0 \in [a(i_1), b(i_1)]$ such that $N_G(V_{i_0}) \setminus V(C) \neq \emptyset$
    and for every $x,y \in N_G(V_{i_0}) \setminus V(C)$, we have $|\{j \in [a(i_1), i_0] \mid x, y \in N_G(V_j)\}| \geq 4(\eta+1)$
    and $|\{j \in [i_0, b(i_1)] \mid x, y \in N_G(V_j)\}| \geq 4(\eta+1)$\;
    \label{line:delete_edge:popular_star}

    $v^\star \gets$ an arbitrary vertex in $N_G(G-C) \cap V_{i_0}$\;
    \label{line:delete_edges}

    Let $G'$ be obtained from $G$ by removing all edges between $v^\star$ and $N_G(C)$\;

    \Return{$(G', k, M)$}
    \label{line:delete_edge:end}
\end{algorithm}

\begin{lemma}
    \label{lem:delete_edge_to_caterpillar}
    Define $f_{\ref{lem:delete_edge_to_caterpillar}}(x) = (8(\eta+1))^3 (x + 5)^8$.
    Let $(G,k,M)$ be an instance of \pwEtaDeletionEta, $C$ be an induced subgraph of $G$ that is a
    caterpillar satisfying $\Delta(C) < f_{\ref{thm:reduce_ccs}}(|N_G(C)| + 5)$
    and $|N_G(G-C)| > f_{\ref{lem:delete_edge_to_caterpillar}}(|N_G(C)|)$.

    \namecref{rule:delete_edge_to_caterpillar} \nameref{rule:delete_edge_to_caterpillar} runs in polynomial time,
    and when given $(G,k,M), C$ as input,
    it outputs an equivalent instance $(G',k,M)$ where $G'$ is a proper subgraph of $G$.
\end{lemma}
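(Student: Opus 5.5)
The proof splits into two parts. The first part shows that the algorithm is well-defined and runs in polynomial time. The second part shows that deleting the edges between $v^\star$ and $N_G(C)$ is safe. Since $G'$ is a subgraph of $G$, the forward direction is trivial, so the real work is the backward direction.

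\emph{Well-definedness.} Write $x = |N_G(C)|$. At most $|N_G(C)|$ vertices of each $V_j$ can be boundary vertices, so many boundary vertices yield many indices $j$ with $N_G(G-C)\cap V_j \neq \emptyset$. I would therefore first count the indices $j$ with $N_G(G-C) \cap V_j\neq\emptyset$ and show that there are at least $(8(\eta+1))^3(x+5)^8/x$ of them. The blocks $B_i$ are then formed greedily along the spine, each collecting $8(\eta+1)(x+5)^2$ such indices. This gives $h' \geq 2\cdot 4(\eta+1)(x+1)\cdot x^2$ blocks, enough to apply \cref{lem:popular_neighborhood} with $S_i = N_G(B_i)\setminus V(C)$ and $r=4(\eta+1)(x+1)$. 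This yields $i_1$ in \refline{line:delete_edge:popular_block}.

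Inside $B_{i_1}$ I restrict to the indices $j$ whose $V_j$ has outside neighbors. There are at least $8(\eta+1)(x+5)^2 \geq 2\cdot 4(\eta+1)x^2$ of them, so a second application of \cref{lem:popular_neighborhood} with $r = 4(\eta+1)$ gives $i_0$. If $C$ has no spine, or only a short one, the hypotheses on $\Delta(C)$ and on the number of boundary vertices contradict each other, so the spine is long. The exponent $8$ comes out of this arithmetic, which I expect to be routine.

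\emph{Safety.} Let $S$ be a minimum solution of $(G',k,M)$ and $\PP$ a path decomposition of $G'-S$ of width at most $\eta$. Both vertex sets have small intersection with $S$:
\begin{itemize}
\item $S$ hits few vertices of $C$, by \cref{obs:mostly_intact} applied to $C$ in $G'$ (size below $x$);
\item $S$ hits few vertices of $N_G(C)$.
\end{itemize}
Hence most of the popular blocks, and most of the $V_j$ inside $B_{i_1}$ that see a pair $x,y \in N_G(V_{i_0})\setminus V(C)$, survive untouched.

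The popular blocks give many vertex-disjoint connected subgraphs adjacent to both $x$ and $y$. By \cref{obs:disjoint_paths_yield_clique} and Helly, $N_G(v^\star)\setminus (V(C) \cup S)$ lies in a common bag. By \cref{lem:pd_hierarchy}, applied on the left and on the right separately with the $V_j$ for $j<i_0$ and $j>i_0$, there are bags strictly inside the interval $[\leftBagAll,\rightBagAll]$ of the relevant pairs. These bags are occupied by untouched spine pieces of $C$ on both sides of $v^\star$. Since the spine segment between them is connected, \cref{thm:connected_subgraphs_have_interval} forces it to fill an interval. Using \cref{thm:find_stable_graph}, I pick a stable piece on each side, so that the whole region between them in $\PP$ consists of bags containing all of $N_G(V_{i_0})\setminus (V(C) \cup S)$.

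\emph{Re-embedding.} The plan is to remove the segment $C[s_a,s_b]$ around $s_{i_0}$ between these two anchor pieces from $\PP$. I then reinsert it into copies of a single bag $X$ that contains the needed outside neighbors together with the anchor endpoints, and that already holds at least two vertices of an anchor piece so there is room. The reinsertion uses \cref{obs:natural_pw_1}, and \cref{lem:one_sided_embedding} where only one side is anchored (for instance near the spine ends or where $S$ cuts $C$). This works the way \cref{thm:embedding_with_twinning_graph} does, with width $1$ fitting into the freed slots.

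The main obstacle is this last step. After the removal, the remaining parts of $C$, and their edges to the vertices in $S$-free parts of $N_G(C)$ adjacent to the segment's boundary vertices, must still be covered. I would handle this by choosing the segment so that all its boundary vertices have neighborhoods contained in the popular set, which is guaranteed by \refline{line:delete_edge:popular_star}'s two-sided count.
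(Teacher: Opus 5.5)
Your well-definedness sketch is essentially fine, with one false step: a star $V_j$ can contain up to $\Delta(C)+1$ boundary vertices, not at most $|N_G(C)|$, because many pendants of $s_j$ can see the same outside vertex. The correct count is $|N_G(G-C)|/(\Delta(C)+1)\ge(8(\eta+1))^2(|N_G(C)|+5)^5$ indices, which is where the hypothesis on $\Delta(C)$ enters, and it still suffices.

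The safety part has a genuine gap, and it starts in the counting. Write $M_0=N_G(V_{i_0})\setminus(V(C)\cup S')$ and $M_1=N_G(B_{i_1})\setminus(V(C)\cup S')$. \refline{line:delete_edge:popular_star} only guarantees $4(\eta+1)$ witnesses $V_j$ on each side of $i_0$ for a pair. \cref{lem:pd_hierarchy} needs $2\eta+2$ of them disjoint from the solution. The bound $|S\cap V(C)|<|N_G(C)|$ from \cref{obs:mostly_intact} cannot give this: $|N_G(C)|$ is not bounded in terms of $\eta$, and all of $S\cap V(C)$ may lie inside $B_{i_1}$. Your bullet ``$S$ hits few vertices of $N_G(C)$'' is unjustified (nothing prevents $S\supseteq N_G(C)$) and is not the relevant fact anyway. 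What is needed is $|S'\cap B_{i_1}|\le\eta+2$ (\cref{claim:mostly_intact}). Since $M_1$ lies in one bag, $|N_G(B_{i_1})\setminus S'|\le\eta+3$, counting the at most two spine neighbours. So if $|S'\cap B_{i_1}|\ge\eta+3$, swapping $S'\cap B_{i_1}$ for $N_{G'}(B_{i_1})$ lowers $|S'\cap V(C)|$ without increasing $|S'|$. This gives a contradiction only if $S'$ is chosen to minimize $|S'\cap V(C)|$ among solutions of size at most $k$. Taking a minimum-size solution, as you do, is not enough.

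The re-embedding step is not just unfinished; as set up it fails. After possibly reversing $\PP$, $v^\star$ lies entirely to the right of $\rightBagAll(M_0)$. The anchors $s_\ell,s_r$ are only known to lie strictly inside $(\leftBagAll(M_0),\rightBagAll(M_0))$, possibly in disjoint intervals. So no bag need contain both anchor endpoints together with the outside neighbours, let alone leave room for a width-$1$ decomposition. Moreover, $S'$ may cut $C[s_\ell,s_r]$ on both sides of $v^\star$. Then the component $C'$ of $v^\star$ hangs only on $M_1$, and embedding it needs two free slots in a bag containing $M_0$, which you do not have.

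The paper rules this case out by a second exchange: $S''=(S'\setminus V(C[s_\ell,s_{i_0}]))\cup\{m\}$ for some $m\in M_0$. Deleting $m$ frees exactly the missing slot in a forget bag $X_{j_1}\supseteq M_1$, found by applying \cref{lem:pd_hierarchy} to the blocks, and this contradicts minimality of $|S'\cap V(C)|$ (\cref{claim:sr_or_sl_in_C}).

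In the remaining case the paper does not cut out a segment between two anchors but makes a one-sided cut. Let $X_{j_0}$ be the forget bag after both $s_\ell,s_r$ have disappeared, and $J=\max\{j_0,j_1\}$. Let $L$ and $R$ be the vertices of $C'$ occurring up to, respectively from, $X_J$. Then $(L,R)$ is a separation of $C'$ with the anchor side in $L\setminus R$. \cref{lem:one_sided_embedding} re-embeds $C'[R]$ into copies of $X_J$ with width $|L\cap R|$, which is paid for exactly by $|X_J|\le\eta$.

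Your proposed remedy, choosing boundary vertices whose neighbourhoods lie in the popular set, addresses neither problem. All outside neighbours of $B_{i_1}$ lie in $M_1$ automatically; the real obstacles are the room in the host bag and the attachment of the re-embedded piece to the anchors.
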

\begin{proof}
    First, we show the algorithm is well-defined and efficient.
    \begin{claim}
        The algorithm is well-defined and runs in polynomial time.
    \end{claim}
    \begin{claimproof}
        \refline{line:delete_edge:trivial_1} and \refline{line:delete_edge:trivial_2} are trivial.
        For \refline{line:delete_edge:block_partitioning},
        we first recall that $|N_G(G-C)| > f_{\ref{lem:delete_edge_to_caterpillar}}(|N_G(C)|) = (8(\eta+1))^3 (|N_G(C)| + 5)^8$
        and $\Delta(C) < f_{\ref{thm:reduce_ccs}}(|N_G(C)| + 5) < 8(\eta+1) (|N_G(C)| + 5)^3$.
        By the pigeonhole principle,
        $|\{j \in [h] \mid N_G(G-C) \cap V_j \neq \emptyset\}|
            \geq |N_G(G-C)|/(\Delta(C)+1)
            \geq (8(\eta+1))^2 (|N_G(C)| + 5)^5$.
        Thus, taking $h' \coloneq 8(\eta+1) (|N_G(C)| + 5)^3$, \refline{line:delete_edge:block_partitioning} can be implemented
        by a simple scan of the sequence $V_1, \dots, V_h$ from left to right.

        For \refline{line:delete_edge:popular_block}, \Cref{lem:popular_neighborhood} applied to the sequence
        $\{N_G(B_j) \setminus V(C)\}_{j \in [h']}$ with $r_1 \coloneq 4(\eta+1)(|N_G(C)| + 1)$ shows that such
        an index $i_1$ exists and can be found in polynomial time since $h' \geq 2 r_1 |N_G(C)|^2$.

        Similarly, for \refline{line:delete_edge:popular_star}, given $i_1$,
        an index $i_0$ exists and can be found in polynomial time
        by applying \Cref{lem:popular_neighborhood} to the sequence
        $\{N_G(V_i) \setminus V(C) \mid N_G(V_i) \setminus V(C) \neq \emptyset\}_{i \in [a(i_1), b(i_1)]}$
        with $r_0 \coloneq 4(\eta+1)$; we note that the length of the sequence is at least $2 r_0 |N_G(C)|^2$ by definition of $B_{i_1}$.
        Finally, since $i_0 \in [a(i_1), b(i_1)]$ with $N_G(V_{i_0}) \setminus V(C) \neq \emptyset$,
        there exists a vertex $v^\star \in V_{i_0}$ having neighbors outside $C$.
        Therefore, \reflines{line:delete_edges}{line:delete_edge:end} output an instance $(G', k, M)$,
        where $G'$ is a proper subgraph of $G$. This finishes the proof.
    \end{claimproof}

    Now, we show the algorithm is correct.
    Assume the algorithm outputs an instance $(G', k, M)$.
    Let $s_1, \dots, s_h$, $V_1, \dots, V_h$, $B_1, \dots, B_{h'}$, $a, b$, $i_1 \in [h']$, $i_0 \in [a(i_1), b(i_1)]$,
    and $v^\star$ be defined as in the algorithm.
    Hence, $G'$ is a proper subgraph of $G$ obtained by removing the edges between $v^\star$ and $N_G(C)$. We first show that
    $(G', k, M)$ and $(G, k, M)$ are equivalent.

    The forward direction is trivial as $\pw(G'-S) \leq \pw(G-S)$ for every $S \subseteq V(G)$.

    For the backward direction, assume there exists $S' \subseteq V(G')$ of size at most $k$ such that $\pw(G'-S') \leq \eta$.
    Moreover, assume that $S'$ minimizes $|S \cap V(C)|$ among all sets $S \subseteq V(G')$ with $|S| \leq k$ and $\pw(G'-S) \leq \eta$.
    We aim to show that $\pw(G-S') \leq \eta$.

    Due to \Cref{obs:mostly_intact}, it immediately follows from our choice of $S'$ that $|S' \cap V(C)| < |N_G(C)|$.
    Let $\PP = X_1, \dots, X_p$ be a nice path decomposition of $G'-S'$ of width at most $\eta$ with $X_1 = X_p = \emptyset$.

    Now let $M_1 = N_{G}(B_{i_1}) \setminus (V(C) \cup S')$ and $M_0 = N_{G}(V_{i_0}) \setminus (V(C) \cup S')$.
    Since $V_{i_0} \subseteq B_{i_1}$ we clearly have $M_0 \subseteq M_1$.
    There is nothing to prove if $v^\star \in S'$ or there is a bag containing
    $\{v^\star\} \cup M_0$ as then $\PP$ is itself a path decomposition of $G-S'$ of width at most $\eta$.
    We shall thus assume otherwise, that is,
    assume that $v^\star \notin S'$, and that no bag in $\PP$ contains $\{v^\star\} \cup M_0$, in particular this implies that $M_0,M_1$ are nonempty.

    \begin{claim}\label{claim:M1_is_clique}
        There exists a bag in $\PP$ containing $M_1$.
    \end{claim}
    \begin{claimproof}
        The proof is trivial for $|M_1| \leq 1$, so we assume otherwise.
        By choice of $i_1$ (\refline{line:delete_edge:popular_block}), every distinct $x, y \in M_1$ are
        connected in $G$ by at least $4(\eta+1)(|N_G(C)|+1)$ internally vertex-disjoint paths in $C$.
        As $|S' \cap V(C)| < |N_G(C)|$, it follows that every distinct $x, y \in N_{G}(B_{i_1}) \setminus (V(C) \cup S')$
        remain connected in $G'-S'$ by at least $4(\eta+1)(|N_G(C)|+1) - |N_G(C)| - 2 > \eta + 2$ internally vertex-disjoint
        paths in $C-S'$. \Cref{obs:disjoint_paths_yield_clique} yields the claim.
    \end{claimproof}

    Note that since $M_0 \subseteq M_1$ this implies $\leftBagAll^{\PP}(M_0)$ and $\rightBagAll^{\PP}(M_0)$ are well-defined.
    As $v^\star$ is not in a bag with $M_0$, without loss of generality, we
    shall assume that $\leftBagAny^\PP(\{v^\star\}) > \rightBagAll^\PP(M_0)$; if this is not the case we can reverse path decomposition $\PP$ and it becomes true.

    \begin{claim}\label{claim:mostly_intact}
        $|S' \cap B_{i_1}| < \eta + 3$.
    \end{claim}
    \begin{claimproof}
        Suppose that $|S' \cap B_{i_1}| \geq \eta + 3$.

        By \Cref{claim:M1_is_clique}, there is a bag in $\PP$ of size $|M_1| = |N_{G}(B_{i_1}) \setminus (V(C) \cup S')| \leq \eta + 1$;
        thus, $|N_{G}(B_{i_1}) \setminus S'| \leq \eta + 3$ as $B_{i_1}$ has at most two spine vertices as neighbors in $C$.

        Let $S'' = (S' \setminus B_{i_1}) \cup N_{G'}(B_{i_1})$.
        Since $N_{G'-S''}(B_{i_1}) = \emptyset$
        and $\pw(G'[B_{i_1}]) \leq 1$, we get that $\pw(G'-S'') \leq \max\{1, \pw(G'-S''-B_{i_1})\} \leq \eta$.
        Moreover, we observe that $|S''| \leq |S'|$.
        Finally, note that $|S'' \cap V(C)| < |S' \cap V(C)|$ because $S'$ contained $\eta + 3$ vertices of $B_{i_1}$ which are not part of $S''$, but we added at most $2$ new vertices of $C$ to $S''$ which were not part of $S'$, namely vertices of $N_{C}(B_{i_1})$.
        This contradicts our choice of $S'$.
    \end{claimproof}

    \begin{claim}\label{claim:hierarchy}
        There are integers $\ell \in [a(i_1), i_0-1]$, $r \in [i_0+1, b(i_1)]$
        such that for $x \in \{\ell, r\}$,
        it holds $\leftBagAll^\PP(M_0) < \leftBagAny^\PP(s_x) \leq \rightBagAny^\PP(s_x) < \rightBagAll^\PP(M_0)$.
    \end{claim}
    \begin{claimproof}
        Due to \Cref{claim:M1_is_clique}, the indices $\leftBagAll^{\PP}(M_0)$ and $\rightBagAll^{\PP}(M_0)$ are well-defined.
        Set $L = \leftBagAll^\PP(M_0)$ and $R = \rightBagAll^\PP(M_0)$.
        Let $x \in M_0$ be the unique vertex introduced in $X_{L}$ and $y \in M_0$ be the unique
        vertex forgotten in $X_{R+1}$.
        Note that $M_0 \subseteq X_R$ is non-empty as, otherwise, we contradict the assumption that no bag contains $\{v^\star\} \cup M_0$.
        Hence, the bag $X_{R+1}$ exists as $\PP$ is a nice path decomposition with $X_p = \emptyset$.

        Let $J_{x,y} = \{j \in [a(i_1), b(i_1)] \mid V_j \cap S' = \emptyset \text{ and } x, y \in N_{G'-S'}(V_j)\}$.
        By choice of $i_0$ (\refline{line:delete_edge:popular_star}), it follows that
        $|J_{x,y} \cap [a(i_1), i_0-1]| \geq 4\eta + 4 - |S' \cap B_{i_1}| - 1 \geq 2\eta + 2$,
        where the last inequality is due to \Cref{claim:mostly_intact}.
        Similarly, we have $|J_{x,y} \cap [i_0+1, b(i_1)]| \geq 2\eta + 2$.
        Thus, by \Cref{lem:pd_hierarchy},
        there are integers $\ell \in [a(i_1), i_0-1]$ and $r \in [i_0+1, b(i_1)]$,
        where all vertices of $V_\ell$ and $V_r$ are introduced and forgotten in bags with indices lying in $[L+1, R-1]$.
        This proves the claim.
    \end{claimproof}

    \begin{claim}\label{claim:candidate1}
        There is a bag $X_{j_0}$ with $|X_{j_0}| \leq \eta$ such that
        $M_0 \subseteq X_{j_0}$
        and $s_\ell, s_r \notin X_j$ for every $j \geq {j_0}$.
    \end{claim}
    \begin{claimproof}
        Follows from \Cref{claim:hierarchy}
        by setting $j_0 = \max\{\rightBagAny^\PP(\{s_\ell\}), \rightBagAny^\PP(\{s_r\})\} + 1$; note that indeed $|X_{j_0}| \leq \eta$ because $X_{j_0}$ is a forget bag of a nice path decomposition.
    \end{claimproof}

    \begin{claim}\label{claim:candidate2}
        There is a bag $X_{j_1}$ with $|X_{j_1}| \leq \eta$ and $M_1 \subseteq X_{j_1}$.
    \end{claim}
    \begin{claimproof}
        Due to \Cref{claim:M1_is_clique}, there is an interval $[L, R]$
        maximizing $R-L$ such that every $j \in [L, R]$ satisfies $M_1 \subseteq X_{j}$.
        Let $x \in M_1$ be the unique vertex introduced in $X_L$ and $y \in M_1$ be the unique
        vertex forgotten in $X_{R+1}$. As in \Cref{claim:hierarchy}, the bag $X_{R+1}$ exists because
        $M_1 \subseteq X_R$ is non-empty (recall that $M_0 \subseteq M_1$) and $X_p = \emptyset$.

        Let $J_{x,y} = \{{i} \in [h'] \mid B_{i} \cap S' = \emptyset \text{ and } x, y \in N_{G'-S'}(B_{i})\}$.
        By choice of $i_1$, and by our previous observation that $|S' \cap V(C)| < |N_G(C)|$, we have $|J_{x,y}| \geq 4(\eta+1)(|N_G(C)| + 1) - |S' \cap V(C)| - 1
            \geq 2\eta + 2$.
        Using \Cref{lem:pd_hierarchy}, there exists some $i$ such that a vertex of $B_i$ is forgotten in a bag $X_{j_1}$
        with ${j_1} \in [L, R]$.
    \end{claimproof}

    \begin{claim}
        If $s_{i_0} \in S'$, then $(G,k,M)$ is a yes-instance.
    \end{claim}
    \begin{claimproof}
        In this case, since $v^\star \notin S'$ and $v^\star \in V_{i_0}$ we have that $v^\star$ is a pendant of $s_{i_0}$.
        In this case, $N_{G'}(v^\star) \setminus S' = \emptyset$ since $s_{i_0}$ is the only neighbor of $v^\star$ in $C$, and all edges from $v^\star$ to vertices in $N_G(C)$ were deleted for the creation of $G'$.
        The bag $X_{j_0}$ contains all vertices of $M_0$ and at most $\eta$ vertices.
        Hence, $N_{G}(v^\star) \setminus S' \subseteq X_{j_0}$.
        Create $\PP'$ from $\PP$ by adding $v^\star$ to bag $X_{j_0}$ and removing it from all other bags.
        Then, $\PP'$ is a path decomposition of $G' - S'$ and a bag contains $\{v^\star\} \cup M_0$, and hence $\PP'$ is a path decomposition of $G - S'$.
        Therefore, $(G,k,M)$ is a yes-instance.
    \end{claimproof}

    Hence, for the rest of the proof we shall assume $s_{i_0} \notin S'$.
    Consider the component $C'$ of $C[s_\ell, s_r]-S'$ containing $v^\star$. We note that $V(C') \subseteq B_{i_1}$ since $s_\ell,s_r \in B_{i_1}$, which itself follows from $\ell,r \in [a(i_1),b(i_1)]$.

    \begin{claim}\label{claim:sr_or_sl_in_C}
        If $V(C') \cap \{s_\ell,s_r\} = \emptyset$, then $(G,k,M)$ is a yes-instance.
    \end{claim}
    \begin{claimproof}
        Suppose that $s_r, s_\ell \notin V(C')$.

        Let $m$ be an arbitrary vertex in $M_0$, and let $S'' = (S' \setminus V(C[s_\ell, s_{i_0}])) \cup \{m\}$.
        Because $v^\star$ is disconnected from $s_\ell$ in $C'$, but $s_{i_0} \notin S'$, we have $|S' \cap
            V(C[s_\ell, s_{i_0}])| \geq 1$; hence, we have $|S''| \leq |S'|$.

        We now show that $\pw(G'-S'') \leq \eta$ by showing that we can use the bag
        $X_{j_1}$ to add a path decomposition of $C[s_\ell, s_{i_0}]$.

        By \Cref{claim:hierarchy} we know that there is a bag $X_J$ that contains $s_\ell$ and $M_0$, moreover $|X_J| \leq \eta + 1$.
        Now, we consider three cases.

        \proofsubparagraph*{Case 1: $J < j_1$.}
        Let $C''$ be the connected component of $C[s_{\ell+1}, s_r] \setminus (S'' \cap V(C[s_{i_0+1},s_r]))$ that contains $v^\star$.
        Intuitively, this component is similar to $C'$, only that it does additionally contain all vertices of $C[s_\ell,s_{i_0}]$.

        We note that $V(C'') \subseteq B_{i_1}$, and therefore $N_{G'}(C'') \setminus V(C) \subseteq M_1$.
        Let $Y_1, \dots, Y_N$ be a path decomposition of $C''$ of width $1$ with
        $Y_1 = \{s_{\ell+1}\}$ as guaranteed by \Cref{obs:natural_pw_1}.
        For $i \in [p]$, define $X_i' \coloneq X_i \setminus (V(C'') \cup S'')$.
        We claim that the following is a path decomposition of $G'-S''$ of width at most $\eta$:
        \begin{align*}
            \mathcal{Q} = & X_1', \dots, X_{J-1}', X_{J}' \cup Y_1, X_{J+1}' \cup Y_1, \dots, X_{j_1 -1}' \cup Y_1, \\
                          & \qquad X_{j_1}' \cup Y_1, X_{j_1}' \cup Y_2, \dots, X_{j_1}' \cup Y_N,
            X_{j_1+1}', \dots, X'_p.
        \end{align*}

        By construction, $\mathcal{Q}$ contains a path decomposition of both $C''$ and $G'-S''-C''$.
        We only need to verify that the vertices in $C''$ appear in a bag with their neighbors
        from $N_{G'-S''}(C'')$.
        We observe that
        $N_{G'-S''}(C'') \setminus V(C) \subseteq M_1 \setminus \{m\} \subseteq X_{j_1}'$.
        This covers the neighbors of $C''$ outside $V(C)$, since $M_1 \subseteq X_{j_1}$.

        Now, we first argue that $N_{G'-S''}(C'') \cap V(C) = \{s_\ell\}$.
        Since $C''$ is a caterpillar that contains $s_{\ell + 1}$, we indeed have $s_\ell \in N_{G' - S''}(C'') \cap V(C)$.
        Observe that, if there is some $s_r' \neq s_\ell$ that is part of $N_{G'  -S''}(C'') \cap V(C)$, then by our choice of $C''$ this vertex $s_r'$ would need to be identical to $s_r$.
        However, $s_r$ cannot be part of $C''$: there is a unique path from $v^\star$ to $s_r$ in $C$, and this path only uses vertices of $C[s_{i_0},s_r]$.
        As $s_r$ is not part of $C'$, the set $S'$ contains at least one vertex of this path, and as $s_{i_0} \notin S'$ the set $S'$ contains a vertex of this path that is part of $C[s_{i_0+1},s_r]$.
        Therefore, $s_r$ is also not part of $C''$.

        As $J$ was chosen such that $s_\ell \in X_J$
        and since $Y_1 = \{s_{\ell+1}\}$, the $J$-th bag of $\mathcal{Q}$ contains $\{s_\ell, s_{\ell+1}\}$.

        It remains to verify the width of $\mathcal{Q}$.
        Observe that, for any $i \in [J,j_1]$ it holds that $M_0 \subseteq X_i$, and hence $m \in X_i$.
        As $m \in S'' \setminus S'$, we find that the bag $X_i'$ contains at most $\eta$ vertices, and hence $X_i' \cup Y_1$ contains at most $\eta + 1$ vertices.
        Then, note that the bag $X_{j_1}$ has even stronger guarantees: it contains $m$ and at most $\eta$ vertices overall.
        Hence, $|X_{j_1}'| \leq \eta - 1$, and thus each bag of $X_{j_1}' \cup Y_1$,\dots,$X_{j_1}' \cup Y_{N}$ contains at most $\eta + 1$ vertices.
        Hence, $\mathcal{Q}$ has width at most $\eta$.

        This shows that $\pw(G'-S'') \leq \eta$. Since $|S'' \cap V(C)| < |S' \cap V(C)|$, this contradicts the choice of $S'$ and
        proves the claim in this case.

        \proofsubparagraph*{Case 2: $J = j_1$.}
        Using an argument analogous to Case 1 the following path decomposition suffices:
        \[
            \mathcal{Q} = X_1', \dots,
            X_{j_1-1}', X_{j_1}' \cup Y_1,
            X_{j_1}' \cup Y_2, \dots, X_{j_1}' \cup Y_N,
            X_{j_1+1}', \dots, X'_p.
        \]

        \proofsubparagraph*{Case 3: $J > j_1$.}
        In this case, we use the following decomposition:
        \begin{align*}
            \mathcal{Q} = & X_1', \dots, X_{j_1-1}',
            X_{j_1}' \cup Y_N,X_{j_1}' \cup Y_{N - 1}, \dots, X_{j_1}' \cup Y_1,                    \\
                          & \qquad X_{j_1+1}' \cup Y_1, \dots, X_J' \cup Y_1, X_{J+1}', \dots X'_p.
        \end{align*}
        Also in this case the proof is analogous to the one in Case 1.
    \end{claimproof}

    Hence, for the rest of the proof we additionally assume that at least one of $s_r,s_\ell$ is part of $C'$.
    We now show how to modify decomposition $\PP$ into a path decomposition of $G-S'$ of width at most $\eta$.
    Set $J \coloneq \max\{j_0, j_1\}$.
    Further, let $L \coloneq \bigcup_{i \leq J}(X_i \cap V(C'))$
    and $R \coloneq \bigcup_{i \geq J}(X_i \cap V(C'))$.
    As $\PP$ is a path decomposition of $G'-S'$, it holds that $(L, R)$ is a separation of $C'$
    of order $|L \cap R| = |X_J \cap V(C')|$, and because bag $X_J$ contains at least one vertex of $M_0 \subseteq M_1$, we have that $|X_J \cap V(C')| \leq \eta$.
    Our goal is using bag $X_J$ to embed a path decomposition of graph $G[R]$.

    By choice of index $J$, we have $M_0 \subseteq X_J$, and hence $J < \leftBagAny(\{v^\star\}).$
    Moreover, one of $s_r$ and $s_\ell$ is in $C'$ with $v^\star$ by assumption.
    Then, by \cref{claim:candidate1} we have that bag $X_J$ does not contain $s_\ell$ or $s_r$ since $J \geq j_0$.
    Therefore,
    we have that $v^\star \in R \setminus L$ and $\{s_\ell, s_r\} \cap (L \setminus R) \neq \emptyset$.
    Hence, since $s_\ell, s_r$ are spine vertices of $C$, we have that $(L, R)$ is a separation of $C'$, where $L \setminus R$ contains a spine vertex of $C'$, or $L \setminus R$ contains a pendant adjacent to an endpoint of the spine of $C'$.
    Thus, using \Cref{lem:one_sided_embedding}, let $Y_1, \dots, Y_N$ be a path
    decomposition of $C'[R]$ of width $|L \cap R|$ with $Y_1 = L \cap R$.

    For $i \in [p]$, define $X_i' \coloneq X_i \setminus R$.
    We claim that the following is a path decomposition of $G'-S'$ of width at most $\eta$:

    \[
        \PP' = X_1, \dots, X_{J-1}, X_J' \cup Y_1, \dots, X_J' \cup Y_N, X_{J+1}', \dots, X'_p.
    \]
    First, we show that $\PP'$ is a path decomposition of $G' - S'$.
    Clearly, each vertex $w \in V(G' - S')$ is either part of $R$ or of $G' - S' - R$, and thus part of some bag of $\PP'$.
    We confirm the other properties in separate claims.

    \begin{claim}
        For each $w \in V(G' - S')$ the indices of the bags of $\PP'$ containing $w$ form an interval.
    \end{claim}
    \begin{claimproof}
        Let $w$ be an arbitrary vertex of $G' -S'$.
        If $w$ is a vertex of $G' - S' - R$ this property is easy to see since the indices of the bags containing such vertices are (modulo the bags we copied) identical in $\PP$ and $\PP'$.
        If $w  \in R \setminus L$, then $w$ can only be part of a bag of $X_J' \cup Y_1,\dots,X_J' \cup Y_N$.
        Moreover, $w \notin X_J'$, and as $Y_1,\dots,Y_N$ is a path decomposition of $G'[R]$, we again obtain the desired interval.
        Finally, if $w \in R \cap L$, then note that $w$ is part of $X_J$ and part of $Y_1$.
        Then, because $Y_1,\dots,Y_N$ is a path decomposition of $R$, and $w$ is not part of a bag $X_j'$, the bags of $\PP'$ that contain $w$ and come after bag $X'_J \cup Y_1$ form an interval.
        Moreover, because $\PP$ is a path decomposition and $w \in X_J$, also the bags of $\PP'$ containing $w$ that come before bag $X_J' \cup Y_1$ form an interval.
        Finally, these two subintervals combine to a single interval since $w$ is in the bag $X_J' \cup Y_1$.
    \end{claimproof}

    \begin{claim}
        For each edge $ww'$ of $G' - S'$ some bag of $\PP'$ contains $w$ and $w'$.
    \end{claim}
    \begin{claimproof}
        This property is obvious for edges with both endpoints being from $G' - S' -R$, and for edges with both endpoints being from $R$.
        Hence, consider some $w \in R$ with neighbor $w' \in V(G' - S' - R)$ in graph $G' - S'$.

        First, consider that $w \in R \setminus L$.
        We cannot have that $w' \in V(C) \setminus V(C')$, because the only neighbors of $V(C')$ in $C$ could be vertices $s_\ell$ or $s_r$, but neither are part of $R$.
        Then, since $(L,R)$ is a separation of $C'$, vertex $w$ cannot have a neighbor in $L \setminus R$, and thus $w'$, a vertex that is not part of $R$, cannot be part of $C'$.
        So, we have $w' \in V(G - S' - C)$, which yields $w' \in M_1$.
        Now, note that in path decomposition $\PP$ the vertex $w$ is introduced after bag $X_J$ because $w \in R \setminus L$, but vertex $w'$ is introduced in bag $X_{j_1}$ or before.
        Hence, we must have $w' \in X_J'$, and as any bag containing a vertex of $R \setminus L$ contains all of $X_J'$, there is a bag containing $w$ and $w'$.

        Now, consider that $w \in L \cap R$.
        First, consider that $w' \in V(C)$.
        Note that then, $w' \in \{s_\ell,s_r\} \cup (L \setminus R)$, and therefore $w'$ is introduced and forgotten before bag $X_J$ in $\PP$.
        Hence, there is some bag $X_j$ with $j < J$ that contains both $w'$ and $w$, and by construction this bag is still a bag of $\PP'$.
        Similarly, if $w' \notin V(C)$, then $w' \in M_1$, and hence $w'$ is introduced in bag $X_J$ or earlier in $\PP$.
        If $w'$ is forgotten before bag $X_J$, then again some bag $X_j$ with $j < J$ contains $w$ and $w'$.
        Otherwise, $w'$ is part of $X_J'$, and $w$ is part of $Y_1$, and hence the two vertices appear in bag $X_J' \cup Y_1$.
    \end{claimproof}

    \begin{claim}
        The width of $\PP'$ is at most $\eta$.
    \end{claim}
    \begin{claimproof}
        Clearly, it suffices to show that $|X_J' \cup Y_j| \leq \eta + 1$ for each $j \in  [N]$ since the other bags of $\PP'$ are subsets of bags of $\PP$.
        Note that $L \cap R \subseteq X_J$, but $R \cap X_J' = \emptyset$.
        Moreover, we have $|X_J| \leq \eta$.
        Therefore, $|X_J'| \leq \eta - |L \cap R|$.
        Moreover, the width of $Y_1,\dots,Y_N$ is at most $|L \cap R|$.
        Hence, we have $\max_{j \in [N]} |X_J' \cup Y_j| \leq \eta - |L \cap R| + |L \cap R| + 1= \eta + 1$.
    \end{claimproof}
    Finally, observe that since $v^\star$ is part of $R \setminus L$, and $M_0 \subseteq X_J'$, we have that there is a bag containing $v^\star$ and $M_0$.
    Hence, $\PP'$ is even a path decomposition of $G - S'$ of width at most $\eta$, concluding the proof.
\end{proof}

\begin{algorithm}[t]
    \caption{\algofont{ContractSpine}}
    \label{rule:contract_spine}
    \Input{Instance $(G,k,M)$ of \pwEtaDeletionEta,
        an induced caterpillar $C$ of $G - M$
        with spine $s_1 s_2 \dots s_h$ of length $h > 8(\eta+1)$
        satisfying $N_G(G-C) \subseteq \{s_1, s_h\}$.}

    Let $G'$ be obtained from $G$ by contracting the edge $s_{\floor{h/2}} s_{\floor{h/2}+1}$\;

    \Return{$(G', k, M)$}
\end{algorithm}

\paragraph{Reducing Non-boundary Vertices}

As mentioned previously, the result of exhaustively applying the previous two rules
is a caterpillar with bounded maximum degree and a bounded number of boundary vertices.
If the caterpillar remains large enough, then there are large parts of the caterpillar
with no external neighbors, where essentially nothing ``interesting'' happens.
In this case, we can either use \Cref{lem:reduce_star} again to get rid of
non-boundary pendants, or, as we show below, we start contracting spine edges;
see \namecref{rule:contract_spine} \nameref{rule:contract_spine}.
To show the safety of the reduction rule, the high-level idea is that
every long enough path in a path decomposition must have two vertices that do
not appear together in a bag, which lets us insert a caterpillar of arbitrary length
in between using \Cref{obs:natural_pw_1}.

\begin{lemma}
    \label{lem:contract_spine}
    Let $(G,k,M)$ be an instance of \pwEtaDeletionEta, $C$ be an induced subgraph of $G - M$ that is a
    caterpillar with spine $s_1 s_2 \dots s_h$ of length $h > 8(\eta+1)$
    satisfying $N_G(G - C) \subseteq \{s_1, s_h\}$.

    \namecref{rule:contract_spine} \nameref{rule:contract_spine} runs in polynomial time,
    and when given $(G,k,M), C$ as input,
    it outputs an equivalent instance $(G',k,M)$ where $G'$ is a proper minor of $G$.
\end{lemma}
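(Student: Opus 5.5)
Running time and the fact that $G'$ is a proper minor of $G$ are immediate: the rule performs a single contraction of a spine edge. Since $G'$ is a minor of $G$ and pathwidth is minor-monotone, the forward direction is also easy. Given $S$ with $|S|\le k$ and $\pw(G-S)\le\eta$, we take $S'$ to be $S$, except that if $S$ meets $\{s_{\floor{h/2}},s_{\floor{h/2}+1}\}$ we replace these by the contracted vertex. Then $G'-S'$ is a minor of $G-S$.

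The backward direction is the real work. Let $S'$ be a solution of $G'$ of size at most $k$, chosen to minimise $|S'\cap V(C')|$, where $C'$ is the contracted caterpillar with spine $s_1\dots s_{h-1}$. If $S'$ contains a vertex of the middle part of $C'$, we take the preimage set in $G$. Then $G-S$ has components formed from pieces of the caterpillar attached to the rest only through $s_1$ or $s_h$, and we can argue directly that width is preserved. One way is to uncontract inside a bag where the merged vertex sits: if the merged vertex $w$ lies in some bag $X_i$ with at most $\eta$ vertices, we duplicate that bag. A cleaner route is to apply \cref{obs:mostly_intact} to the inner part, so that we may assume $S'\cap V(C')\subseteq\{s_1,s_{h-1}\}$ and the inner part is untouched.

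Now fix a nice path decomposition $\PP$ of $G'-S'$ of width at most $\eta$ in which the inner spine $s_2,\dots,s_{h-2}$ survives. The key claim is that the long path of length $h-1>8(\eta+1)-1$ cannot be covered so that every pair of its spine vertices shares a bag. Concretely, we aim to find two spine vertices $s_a,s_b$ with $a<\floor{h/2}<b$, no common bag, and $b-a$ small, in fact $b=a+1$ would be ideal. That last wish fails, since adjacent vertices do share a bag. So instead we reinsert the whole middle segment. Take an index $a$ left of the middle and $b$ right of it such that some bag $X_j$, lying between where $s_a$ is forgotten and $s_b$ is introduced or similar, has spare room. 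The room comes from a counting argument: along the $8(\eta+1)$ spine vertices, the bags that contain a given spine vertex form consecutive intervals. Since bags hold at most $\eta+1$ vertices, some bag among them is a forget bag with at most $\eta$ vertices and separates the spine into a left part and a right part.

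Having found such a separating bag $X_j$, we delete the middle segment $D=C[s_{a+1},s_{b-1}]$ from all bags. Then we insert copies $X_j\cup Y_1,\dots,X_j\cup Y_N$ using \cref{obs:natural_pw_1}, where $Y_1=\{s_{a+1}\}$ and $Y_N=\{s_{b-1}\}$, after arranging that $s_a,s_b\in X_j$. A cleaner variant mirrors \cref{lem:one_sided_embedding}: cut at the bag $X_J$ and, for the replacement, take the original uncontracted segment (one vertex longer) with its pendants. The width stays bounded because $|X_j\setminus V(D)|\le\eta-1$ whenever $X_j$ contains a vertex of $D$, and pendants only use width $1$.

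The main obstacle is guaranteeing a bag that simultaneously contains the two attachment spine vertices $s_a,s_b$ (both neighbours of the segment), has at least two free slots after removing $D$, and lies where the segment can be inserted. We will try choosing $X_j$ as a bag containing an edge $s_is_{i+1}$ of the spine. Deleting the inner vertices frees at least one slot, and the segment's decomposition has width $1$, so we need $|X_j\setminus D|\le\eta-1$. To get this, we pick a bag that contains two vertices of $D$. Such a bag exists because $D$ is connected with more than one vertex. We then choose $D$ as a maximal segment between two surviving spine vertices $s_a,s_b$ that share a bag, and such a pair exists by pigeonhole over the $8(\eta+1)$ spine vertices. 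That is exactly where the length hypothesis is used.
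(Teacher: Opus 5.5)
Your proposal has two genuine gaps.

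\textbf{The reduction to an untouched inner part is not justified.} Apply \cref{obs:mostly_intact} to the inner caterpillar $H'$, the image of $C[s_2,s_{h-1}]$, whose neighbourhood in $G'$ consists only of the two spine ends. It gives $|S'\cap V(H')|\le 1$, not $S'\cap V(H')=\emptyset$. The case of exactly one deleted inner vertex is therefore a real case, and your fallback of ``taking the preimage'' does not handle it:
\begin{itemize}
\item If that vertex is the contracted vertex itself, the preimage has $|S'|+1>k$ vertices.
\item Otherwise nothing guarantees that un-contracting preserves width. There need not be a bag with at most $\eta$ vertices containing the merged vertex, and if a pendant was deleted the spine is not even cut.
\end{itemize}
The paper treats this case separately. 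One half of the inner spine, say $s_2,\dots,s_{\floor{h/2}-1}$, avoids $S'$, so by length it contains $s_\ell,s_r$ sharing no bag. One replaces $S'\cap V(H')$ by $s_h$ and hangs all of $H[s_{\ell+1},s_{h-1}]$ off $s_\ell$ in the gap right after $s_\ell$ is forgotten.

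\textbf{The embedding step is also broken.} You first choose $s_a,s_b$ on opposite sides of the middle with no common bag, and then want to ``arrange that $s_a,s_b\in X_j$'', which contradicts that choice. Your final fix instead needs $s_a,s_b$ on opposite sides of the contracted vertex that \emph{do} share a bag, and that bag must also contain two vertices of $D$. Such a pair need not exist. If the spine is covered by consecutive bags $\{s_i,s_{i+1}\}$, with pendants in small extra bags, no bag contains two non-adjacent spine vertices, yet the backward direction must hold for this decomposition too. Also, a bag meeting $D$ in a single vertex only gives $|X_j\setminus V(D)|\le\eta$, not $\eta-1$.

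The length hypothesis is used the other way round. By a Helly-type argument it yields $s_\ell,s_r$ with $\ell\le\floor{h/2}-2$ and $r\ge\floor{h/2}+3$ that share \emph{no} bag. The missing idea is that no single bag needs to contain both attachment vertices:
\begin{itemize}
\item Set $L=\rightBagAny(\{s_\ell\})+1$ and $R=\leftBagAny(\{s_r\})$. Every bag $X_{L-1},\dots,X_R$ meets the $s_\ell$--$s_r$ segment.
\item Deleting the segment therefore frees one slot in each of these bags, and two slots in the forget bag $X_L$ of a nice decomposition.
\item Put $s_{\ell+1}$ into $X'_{L-1}$, place the width-$1$ decomposition $Y_1,\dots,Y_N$ of $H[s_{\ell+1},s_{r-1}]$ from \cref{obs:natural_pw_1} into copies of $X'_L$, and carry $s_{r-1}$ through $X'_{L+1},\dots,X'_R$.
\end{itemize}
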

\begin{proof}
    It is easy to verify that the algorithm runs in polynomial time and outputs some instance $(G', k, M)$,
    where $G'$ is a proper minor of $G$.
    Let $s_1, \dots, s_h$ be defined as in the algorithm, and set $\mu \coloneq \floor{h/2}$.
    Thus, $G'$ is obtained by contracting the edge $s_{\mu} s_{\mu+1}$;
    let $z \in V(G')$ be the resulting fresh vertex, so that
    $V(G') = (V(G) \setminus \{s_{\mu}, s_{\mu+1}\}) \cup \{z\}$ and
    $N_{G'}(z) = N_G(\{s_{\mu}, s_{\mu+1}\}) \setminus \{s_{\mu}, s_{\mu+1}\}$.
    We now show that $(G', k, M)$ and $(G, k, M)$ are equivalent.

    The forward direction is trivial as $G'$ is a minor of $G$.

    For the backward direction, assume there is $S' \subseteq V(G')$ with $|S'| \leq k$ and $\pw(G'-S') \leq \eta$.
    Let $H \coloneq C[s_2, s_{h-1}]$.
    We note that we have $N_G(H) = \{s_1, s_h\}$ due to the assumption $N_G(G - C) \subseteq \{s_1, s_h\}$.
    Let $H' \coloneq G'[(V(H) \setminus \{s_{\mu}, s_{\mu+1}\}) \cup \{z\}]$ be the image of $H$ in $G'$;
    as $H'$ is obtained from the pathwidth-$1$ caterpillar $H$ by contracting the single spine edge
    $s_{\mu} s_{\mu+1}$, we have $\pw(H') \leq 1 \leq \eta$, and $N_{G'}(H') \neq \emptyset$.
    Hence, by \Cref{obs:mostly_intact} applied to $G'$, $H'$, and $S'$, we may assume $|S' \cap V(H')| \leq 1$.

    Now let $\PP = X_1, \dots, X_p$ be a nice path decomposition of $G'-S'$ of width $\leq \eta$. We distinguish two cases.

    \setcounter{case}{0}
    \begin{case}
        $|S' \cap V(H')| = 0$.
    \end{case}
    Then there are $\ell \in [2,\floor{h/2}-2]$ and $r \in [\floor{h/2}+3, h-1]$ such that $s_\ell$ and $s_r$ do not appear together in a bag;
    otherwise, some bag is forced to contain $\floor{h/2} - 3 \geq h/2 - 4 > \eta+1$ vertices.
    Let $L = \rightBagAny^{\PP}(\{s_\ell\}) + 1$ and $R = \leftBagAny^{\PP}(\{s_r\})$, without loss of generality $L < R$.
    Since $S' \cap V(H') = \emptyset$, there is an $s_\ell$-$s_r$ path in $H'-S'$.

    By \Cref{obs:natural_pw_1}, and since $s_{\ell+1},s_{r-1} \in V(H)$ by our choice of $\ell,r$,
    there is a path decomposition $Y_1, \dots, Y_N$ of $H[s_{\ell+1}, s_{r-1}]$ with $Y_1 = \{s_{\ell+1}\}$
    and $Y_N = \{s_{r-1}\}$.
    For every $i \in [p]$, define $X_i' \coloneq X_i \setminus V(H'[s_{\ell+1}, s_{r-1}])$, note that $z \in H'[s_{\ell+1}, s_{r-1}]$ and therefore $z$ is not part of any bag $X_i'$.
    We claim that the following gives a path decomposition of $G-S'$ of width at most $\eta$:
    \begin{align*}
        \PP' = & X_1', \dots,X_{L-2}', X_{L-1}' \cup Y_1, X_L' \cup Y_2, \dots, X_L' \cup Y_{N-1}, X_{L+1}' \cup Y_N, \\
               & \qquad X_{L+2}' \cup Y_N, \dots, X_R' \cup Y_N, X_{R+1}', \dots, X_p'
    \end{align*}

    Indeed, this follows since $N_{G-S'}(H[s_{\ell+1}, s_{r-1}]) = \{s_\ell, s_r\}$, $s_\ell \in X'_{L-1} $, and
    $s_r \in X'_R$.
    As $|S' \cap V(H')| = 0$, there is an $s_{\ell}$-$s_r$ path in $G'-S'$,
    and, hence, a vertex of $H'[s_{\ell+1}, s_{r-1}]$ appears in every bag between
    $X_{L-1}$ and $X_R$.
    Therefore, $X_{L-1}',\dots,X_R'$ contain at most $\eta$ vertices each.
    The bag $X_L'$ even contains at most $\eta - 1$ vertices since $X_L$ is a forget bag.
    As $Y_1, \dots, Y_N$ is a path decomposition of width~$1$,
    the largest size of a bag of $\PP'$ is at most $\max\{|X_{L}'| + 2, \max_{i \in [L-1,R] \setminus \{L\}}|X_i'|+1\} \leq \eta + 1$.

    \begin{case}
        $|S' \cap V(H')| = 1$.
    \end{case}
    As such, either $S' \cap V(H'[s_2, s_{\floor{h/2}-1}]) = \emptyset$ or $S' \cap V(H'[s_{\floor{h/2}+2}, s_{h-1}]) = \emptyset$.
    Without loss of generality, we shall assume $S' \cap V(H'[s_2, s_{\floor{h/2}-1}]) = \emptyset$; the other case is symmetric.
    As $\floor{h/2} -2 > h/2 - 3 > \eta + 1$, there are distinct $\ell, r \in [2, \floor{h/2}-1]$ with $\ell < r$, such that $s_{\ell}$ and $s_{r}$ do not
    appear together in a bag.
    Let $L = \rightBagAny^{\PP}(\{s_\ell\}) + 1$ and $R = \leftBagAny^{\PP}(\{s_r\})$, without loss of generality $L < R$.
    By assumption, there is an $s_\ell$-$s_r$ path in $H-S'$.

    Using \Cref{obs:natural_pw_1},
    let $Y_1, \dots, Y_N$ be a path decomposition of $H[s_{\ell+1}, s_{h-1}]$ of width $1$ with $Y_1 = \{s_{\ell+1}\}$.
    Note that $s_{\ell+1} \in V(H)$ because $\ell < r \leq \floor{h/2}-1$, and hence $H[s_{\ell+1}, s_{h-1}]$ is indeed well-defined.
    Let $S'' \coloneqq (S' \setminus V(H')) \cup \{s_{h}\}$; we note that $|S''| \leq |S'|$. Now we show that $\pw(G-S'') \leq \eta$.
    For every $i \in [p]$, define $X_i' \coloneq X_i \setminus (V(H'[s_{\ell+1}, s_{h-1}]) \cup \{s_h\})$, observe that $z \in V(H'[s_{\ell+1}, s_{h-1}])$ and therefore $z$ is not in any bag $X_i'$.
    Since $N_{G-S''}(H[s_{\ell+1}, s_{h-1}]) = \{s_\ell\} = N_{G-S''}(s_{\ell+1}) \setminus V(H[s_{\ell+1}, s_{h-1}])$,
    it is simple to verify that the following is a path decomposition of $G-S''$:
    \[
        \PP' = X_1', \dots, X_{L-1}' \cup Y_1, X_L' \cup Y_2, X_L' \cup Y_3, \dots, X_L' \cup Y_N, X_{L+1}', \dots, X_p'.
    \]

    To bound the width of $\PP'$, we note that the $s_\ell$-$s_r$ path in $H'-S'$
    implies that every bag between $X_{L-1}$ and $X_{R}$ contains a vertex of $H'[s_{\ell+1}, s_{h-1}]$.
    Then, $X'_{L-1},\dots,X'_R$ contain at most $\eta$ vertices each, and $X_L'$ even at most $\eta - 1$ vertices since $X_L$ is a forget bag.
    Because $Y_1, \dots Y_N$ is a path decomposition of width $1$,
    the largest bag of $\PP'$ has size at most $\max\{|X_{L-1}'|+1, |X_L'| + 2\} \leq \eta + 1$.
\end{proof}

\paragraph{Reducing Caterpillars}

We finally show how to combine the previous algorithms into a single reduction rule (see \namecref{rule:degree_reduction_caterpillar} \nameref{rule:degree_reduction_caterpillar}).
The algorithm simply ensures that the conditions of each of \Cref{thm:reduce_ccs,lem:delete_edge_to_caterpillar,lem:contract_spine}
are met before applying their respective algorithms to make progress.
This lets us prove the main lemma of this section which we restate below once more
for convenience.

\begin{algorithm}[t]
    \caption{\algofont{ReduceCaterpillar}}
    \label{rule:degree_reduction_caterpillar}
    \Input{Instance $(G,k,M)$ of \pwEtaDeletionEta and an induced subgraph $C$ of $G-M$ that is a caterpillar
        satisfying $|V(C)| > \BoundCaterpillar$}

    \If{$\Delta(C) \geq f_{\ref{thm:reduce_ccs}}(|N_G(C)| + 5)$}{
        \label{line:reducestar1_cond}
        Let $s$ be a vertex of maximum degree in $C$\;
        \Return{\nameref{rule:reduce_star}$\left((G, k, M), C[s,s]\right)$}\;
        \label{line:reducestar1}
    }
    \If{$|N_G(G-C)| > f_{\ref{lem:delete_edge_to_caterpillar}}(|N_G(C)|)$}{
        \label{line:deleteedge_cond}
        \Return{\nameref{rule:delete_edge_to_caterpillar}$\left((G, k, M), C\right)$}
        \label{line:deleteedge}
    }

    \If{there are $x,y$ such that $C[x,y]$ has a spine with more than
        $8(\eta+1)$ vertices and no neighbors outside $C$}{
        \label{line:contractspine_cond}
        \Return{\nameref{rule:contract_spine}$\left((G, k, M), C[x, y]\right)$}
        \label{line:contractspine}
    }

    Let $s \in V(C)$ be a vertex with a set $L$ of $8(\eta+1)$ pendants not in $N_G(G-C)$\;
    \label{line:reducestar2_cond}
    \Return{\nameref{rule:reduce_star}$\left((G, k, M), C[\{s\} \cup L]\right)$}\;
    \label{line:reducestar2}
\end{algorithm}

\thmCaterpillarCase*{}
\begin{proof}
    We consider four cases corresponding to the execution of Lines \ref{line:reducestar1}, \ref{line:deleteedge}, \ref{line:contractspine}, or \ref{line:reducestar2}.
    In each case we show that one of \Cref{thm:reduce_ccs,lem:delete_edge_to_caterpillar,lem:contract_spine}
    can be applied to prove the statement.

    \setcounter{case}{0}
    \begin{case}\label{case:caterpillar:one}
        $\Delta(C) \geq f_{\ref{thm:reduce_ccs}}(|N_G(C)| + 5)$
    \end{case}
    Let $s \in V(C)$ be a spine vertex with $|N_C(s)| = \Delta(C)$ as in the algorithm,
    and let $L$ be its pendants.
    The number of pendants $|L|$ is at least
    $\Delta(C) - 2 > f_{\ref{thm:reduce_ccs}}(|N_G(C)| + 1)$.
    Since $N_G(L) \subseteq N_G(C) \cup \{s\}$, we have that
    $|N_G(L)| \leq |N_G(C)| + 1$.
    Therefore, $|L| > f_{\ref{thm:reduce_ccs}}(|N_G(L)|)$.
    Thus, $C[s,s]$ satisfies the conditions of \Cref{lem:reduce_star}
    and \refline{line:reducestar1} outputs an equivalent instance $(G', k, M)$ in polynomial time
    where $G'$ is proper subgraph of $G$.

    \begin{case}\label{case:caterpillar:two}
        \Cref{case:caterpillar:one} does not hold
        and $|N_G(G-C)| > f_{\ref{lem:delete_edge_to_caterpillar}}(|N_G(C)|)$.
    \end{case}
    As $\Delta(C)$ is bounded due to \Cref{case:caterpillar:one},
    $C$ satisfies the conditions of \Cref{lem:delete_edge_to_caterpillar}, so
    executing \refline{line:deleteedge} outputs an instance $(G', k, M)$ in
    polynomial time where $G'$ is a proper subgraph of $G$.

    \begin{case}\label{case:caterpillar:three}
        \Cref{case:caterpillar:one,case:caterpillar:two} do not hold
        and there are vertices $x,y \in V(C)$ where $C[x,y]$ has a spine
        of more than $8(\eta+1)$ vertices and no neighbors outside $C$.
\end{case}
    By definition of this case, $C[x,y]$ satisfies the conditions of \Cref{lem:contract_spine}, so
    executing \refline{line:contractspine} outputs an instance $(G', k, M)$ in
    polynomial time where $G'$ is a proper minor of~$G$.

    \begin{case}
        \Cref{case:caterpillar:one,case:caterpillar:two,case:caterpillar:three} do not hold.
    \end{case}
    We show that there is a spine vertex with more than $8(\eta+1)$ pendants
    that are not in $N_G(G-C)$.
    Since \Cref{case:caterpillar:three} does not hold,
    for any subpath of the spine of $C$ on $8(\eta+1)+3$ vertices, either a vertex of the path or a pendant of a vertex of the path is in $N_G(G-C)$.
    Moreover, due to \Cref{case:caterpillar:two} not holding,
    $|N_G(G-C)|$ is bounded by $f_{\ref{lem:delete_edge_to_caterpillar}}(|N_G(C)|) = (8(\eta+1))^3 (|N_G(C)| + 5)^8$.
    Thus, the spine of $C$ has at most $(8(\eta+1))^5 (|N_G(C)| + 5)^8$ vertices.
    As such, because $|V(C)| > \BoundCaterpillar$,
    the total number of pendants not in $N_G(G-C)$
    is at least
    $|V(C)| - (8(\eta+1))^5 (|N_G(C)| + 5)^8 - |N_G(G-C)| \geq (8(\eta+1))^7 (|N_G(C)| + 5)^8$.
    By the pigeonhole principle, there is a spine vertex $s$ with at least $8(\eta+1) > f_{\ref{thm:reduce_ccs}}(1)$ pendants not in $N_G(G-C)$.
    Thus, the star induced on $s$ and those pendants satisfies the conditions of \Cref{lem:reduce_star}.
    As in \Cref{case:caterpillar:one}, this concludes this case.

    Since all conditions in the algorithm are simple to check in polynomial time, this finishes the proof.
\end{proof}  
\section{Reducing the Size of Connected Components}
\label{sec:kernel_reducing_cc_size}
We proceed to the final step of our algorithm, reducing the size of connected components.
We achieve this goal in two different stages.
First, a lifting step ensures that each remaining connected component has a neighborhood of constant size,
while possibly increasing the size of the modulator.
In the second step, our already established reduction rules bound the size of these connected components.

\subsection{Creating Protrusions}
Although our techniques \emph{do not} rely on the large hammer of protrusion replacement in the usual sense, e.g.\ \cite{DBLP:journals/jacm/BodlaenderFLPST16,DBLP:journals/siamcomp/FominLST20,fominPlanarFdeletionApproximation2012}, we will first employ a known technique (see, e.g.,~\cite{DBLP:conf/isaac/LochetS24,fominHittingForbiddenMinors2016,donkersPreprocessingOuterplanarVertex2022,kimLinearKernelsSingleexponential2016}) to guarantee that all connected components of $G - M$ are protrusions.
We recall from \cref{def:protrusion} that an induced subgraph $C$ of $G$ is a protrusion if $|N_G(C)| \leq 2(\eta+1)$ and $\pw(C)\leq\eta$.

Set $\BoundProtrusionModulator[x,y] = x + x \cdot \degreeBound[x,y] \cdot (\eta + 1)$\phantomsection\label{bound:protrusion_modulator}.
\begin{theorem}
    \label{thm:create_protrusions}
    There is a polynomial-time algorithm \protrusionAlgo{} that takes an instance $(G,k,M)$ of \pwEtaDeletionEta{} as input, where $\sum_{v \in M} |N_{G}(v) \setminus M| \leq |M| \cdot \degreeBound$, and outputs an equivalent instance $(G,k,M')$ such that
    \begin{bracketenumerate}
        \item each connected component $C$ of $G - M'$ is a protrusion of $G$, and
        \item $|M'| \leq \BoundProtrusionModulator$, and
        \item $M \subseteq M'$.
    \end{bracketenumerate}
\end{theorem}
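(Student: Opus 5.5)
We follow the lifting strategy sketched in the technical overview. Since $G-M$ has pathwidth at most $\eta$ (it is an instance of \pwEtaDeletionEta{}), we first compute a path decomposition $X_1,\dots,X_t$ of $G-M$ of width at most $\eta$ in linear time via \cref{thm:path_decomposition_algorithm}. For every vertex $v \in N_G(M)$ (equivalently, every $v \in V(G)\setminus M$ with a neighbor in $M$) we fix one bag $X_{i(v)}$ containing $v$ and let $M' = M \cup \bigcup_{v \in N_G(M)} X_{i(v)}$. The graph and $k$ remain unchanged, so $(G,k,M')$ is trivially equivalent to $(G,k,M)$: the question posed in \pwEtaDeletionEta{} does not depend on $M$. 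The instance is also valid, since $G-M'$ is an induced subgraph of $G-M$ and hence has pathwidth at most $\eta$. Property (3) holds by construction, and the algorithm clearly runs in polynomial time.

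For property (2), note that $|N_G(M)| \le \sum_{v\in M}|N_G(v)\setminus M| \le |M|\cdot\degreeBound$. Each lifted bag has at most $\eta+1$ vertices, so $|M'| \le |M| + |M|\cdot\degreeBound\cdot(\eta+1) = \BoundProtrusionModulator$.

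Property (1) is the main step. Let $C$ be a connected component of $G-M'$. Then $\pw(C)\le\eta$, so we only need $|N_G(C)|\le 2(\eta+1)$. First, $C$ has no neighbor in $M$: any vertex of $C$ adjacent to $M$ lies in $N_G(M)$ and was therefore lifted into $M'$. Hence $N_G(C) \subseteq M'\setminus M$, and in fact $N_G(C)\subseteq V(G-M)$, so $N_G(C)=N_{G-M}(C)$.

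Let $L$ be the set of indices of lifted bags. By \cref{thm:connected_subgraphs_have_interval}, applied to the path decomposition of $G-M$, the bags meeting $V(C)$ form an interval $[i,j]$. No lifted bag meets $C$, so $[i,j]\cap L=\emptyset$. Now take $u \in N_{G-M}(C)$ and its neighbor $c \in V(C)$. Some bag contains both, so that bag has index in $[i,j]$. Since $u \in M'$, $u$ also lies in some lifted bag, whose index is outside $[i,j]$. By the interval property of $u$, the vertex $u$ therefore belongs to $X_{i-1}$ or to $X_{j+1}$. It follows that $|N_G(C)| \le |X_{i-1}|+|X_{j+1}| \le 2(\eta+1)$.

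The only subtle point is ensuring that this neighborhood is not in $M$ at all; that is exactly why we lift a bag for every vertex of $N_G(M)$.
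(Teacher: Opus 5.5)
Your proposal is correct and is essentially the paper's proof: compute a width-$\eta$ path decomposition of $G-M$, lift one bag per neighbor of $M$ into $M'$, and observe that a component of $G-M'$ has no neighbor in $M$, so all its neighbors lie in the two bags flanking its interval. The only differences are cosmetic. The paper marks a bag for each pair $(m,v)$ with $v \in N_G(m)\setminus M$ instead of for each vertex of $N_G(M)$, which gives the same size bound. You also spell out the interval argument behind the two flanking bags, which the paper only asserts.
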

\begin{proof}
    Our algorithm proceeds as follows.
    Compute a path decomposition $X_1,\dots,X_t$ of width at most $\eta$ of graph $G - M$ in polynomial time using \cref{thm:path_decomposition_algorithm}.
    For each $m \in M$, and each vertex $v \in N_G(m) \setminus M$, mark a bag of $X_1,\dots,X_t$ that contains $v$.
    Let $\XX$ be the set of all marked bags, and set $M' = M \cup \bigcup_{X \in \XX} X$.
    Finally, output $(G,k,M')$.

    By construction, $M \subseteq M'$.
    Because $\sum_{v \in M} |N_{G}(v) \setminus M| \leq |M| \cdot \degreeBound$ we have $|M'| \leq \BoundProtrusionModulator$.
    Finally, for property~(1), consider a connected component $C$ of $G - M'$.
    Then, $N_G(C) \cap M = \emptyset$, and therefore all vertices of $N_G(C)$ stem from at most two bags of $X_1,\dots,X_t$.
    Hence, $|N_G(C)| \leq 2(\eta + 1)$.
    Since additionally $\pw(C) \leq \eta$, $C$ is a protrusion of~$G$.
\end{proof}

\subsection{Reducing the Size of Protrusions}
Given \cref{thm:create_protrusions}, and the fact that we can already bound the degree of vertices in $M$ (\cref{thm:reduce_modulator_degree}), we could apply \emph{protrusion replacers} (see \cite{DBLP:journals/jacm/BodlaenderFLPST16,DBLP:journals/siamcomp/FominLST20,fominPlanarFdeletionApproximation2012}) to immediately finish the kernel.
Here, we instead opt for an approach that completely avoids this ``somewhat opaque'' approach.
We again define some bounds which are important for this step.

\begin{definition}[Important Bounds for Protrusion Replacement]
    \label{def:important_bounds_protrusion_replacement}
    Set
    \begin{align*}
        \BoundProtrusionCaterpillarNeighborhood & = 2(\eta + 1) + \beta,                                                                                                \\
        \BoundChildrenInForest                  & = f_{\ref{thm:reduce_ccs}}( \BoundProtrusionCaterpillarNeighborhood )                                                 \\
        \BoundProtrusionSize                    & = (\BoundChildrenInForest)^{\beta+1} \cdot f_{\ref{thm:caterpillar_case}}( \BoundProtrusionCaterpillarNeighborhood ).
    \end{align*}
\end{definition}

Let us consider a connected protrusion $C$, and a nice $\pwEtaBoundClass[1]$-elimination tree $\FF = (F,\{Y_n\}_{n \in V(F)})$ of $C$ of depth at most $\beta$.
\begin{itemize}
    \item
          Then, $\BoundProtrusionCaterpillarNeighborhood$ bounds the size of the neighborhood of $G[V_n^\FF]$ for each node $n \in V(F)$.
    \item
          Further, $\BoundChildrenInForest$ bounds the number of children of a node $n$ in $F$,
          unless we can reduce the size of the graph by use of \cref{thm:reduce_ccs}.
          To see this, note that $V_c^\FF$, for children $c$ of $n$, are connected components in the graph induced by $V_n^\FF \setminus \{n\}$.
    \item
          Finally, $\BoundProtrusionSize$ is the size bound on the size of $C$ that we eventually obtain.
\end{itemize}
We now describe an algorithm \nameref{rule:reduce_protrusions} that can reduce the size of large protrusions, and prove that it functions correctly.

\begin{algorithm}[t]
    \caption{\algofont{ReduceProtrusions}}
    \label{rule:reduce_protrusions}
    \Input{Instance $(G,k,M)$ of \pwEtaDelElimGeneral{}, connected induced subgraph $C \subseteq G - M$ that is a protrusion of $G$ with $|V(C)| > \BoundProtrusionSize$}
    $\FF = (F,\{Y_n\}_{n \in V(F)})\leftarrow$ nice $\pwEtaBoundClass[1]$-elimination tree of $C$ of depth at most $\beta$\;
    delete all nodes $n \in V(F)$ with $Y_n = \emptyset$ from $\mathcal{F}$\;
    \uIf{some $n \in V(F)$ has more than $\BoundChildrenInForest$ children in $F$}{
        $M' = (V(G) \setminus V^{\FF}_n) \cup \{n\}$\;
        $(G',k,M) \leftarrow$ \nameref{rule:reduce_ccs}$(G,k,M')$\;
        \Return{$(G',k,M)$}
    }\Else{
    $n \leftarrow$ leaf node of $F$ such that $|Y_n| > f_{\ref{thm:caterpillar_case}}( \BoundProtrusionCaterpillarNeighborhood )$\;
    \Return{\nameref{rule:degree_reduction_caterpillar}$((G,k,M),G[Y_n])$}
    }
\end{algorithm}

\begin{lemma}
    \label{thm:reduce_protrusions}
    \namecref{rule:reduce_protrusions} \nameref{rule:reduce_protrusions} runs in polynomial-time.
    Given instance $(G,k,M)$ of \pwEtaDelElimGeneral{} and connected induced subgraph $C \subseteq G - M$ that is a protrusion of $G$ as input, and $|V(C)| > \BoundProtrusionSize$, the algorithm outputs an equivalent instance $(G',k,M)$ where $G'$ is a proper minor of $G$.
\end{lemma}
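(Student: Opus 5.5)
The plan is to follow the two branches of \nameref{rule:reduce_protrusions} and show that each branch meets the preconditions of an earlier result: \cref{thm:reduce_ccs} for the first branch, and \cref{thm:caterpillar_case} for the second.

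\textbf{Setup.} By \cref{thm:compute_elimination_forests}, since $C$ is connected with $\pwElimDist{C} \leq \beta$, we compute in polynomial time a nice $\pwEtaBoundClass[1]$-elimination tree $\FF$ of $C$ of depth at most $\beta$. Deleting nodes with empty bags only removes leaves whose bag is empty, so $\FF$ remains a valid nice elimination tree. The key structural fact is:
\[
|N_G(V_n^\FF)| \leq \BoundProtrusionCaterpillarNeighborhood \quad \text{for every node } n.
\]
This holds because $N_G(V_n^\FF) \subseteq N_G(C) \cup (\anc_F(n)\setminus\{n\})$. The first set has size at most $2(\eta+1)$ since $C$ is a protrusion. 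The second has size at most $\beta$ by the depth bound, using that non-leaf nodes carry single vertices and every $C$-neighbor of $V_n^\FF$ is a proper ancestor by property (3) of elimination forests.

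\textbf{First branch.} Suppose some $n$ has more than $\BoundChildrenInForest$ children. Set $M' = (V(G)\setminus V_n^\FF)\cup\{n\}$. Then $G - M'$ is exactly $C[V_n^\FF\setminus\{n\}]$. Its connected components are precisely the graphs $G[V_c^\FF]$ for the children $c$ of $n$: each is connected by niceness, and distinct children are non-adjacent by the ancestor--descendant property. Moreover $\pw(G-M') \leq \eta$, so $(G,k,M')$ is an instance of \pwEtaDeletionEta{}. We also have $N_G(G-M') \subseteq N_G(V_n^\FF)\cup\{n\}$, which is contained in $N_G(C) \cup (\anc_F(n)\setminus\{n\}) \cup \{n\}$. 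The one subtle point is to confirm that this union has size at most $2(\eta+1)+\beta = \BoundProtrusionCaterpillarNeighborhood$. Since $n$ has children, it is not a leaf, so its depth is at most $\beta-1$ and it has at most $\beta-1$ proper ancestors; adding $n$ itself gives at most $\beta$ nodes. Because $f_{\ref{thm:reduce_ccs}}$ is monotone, the number of components exceeds $f_{\ref{thm:reduce_ccs}}(|N_G(G-M')|)$. \Cref{thm:reduce_ccs} then yields an equivalent instance on a proper subgraph $G'$, and equivalence depends only on $(G,k)$, so $(G',k,M)$ is equivalent as well. For $(G',k,M)$ to be a valid instance of \pwEtaDelElimGeneral{}, we need $G'-M$ to stay in the class; this holds because deleting a whole subtree component keeps $\pw \leq \eta$ and the elimination distance is closed under subgraphs.

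\textbf{Second branch.} Every node now has at most $\BoundChildrenInForest$ children. With depth at most $\beta$, this gives at most $(\BoundChildrenInForest)^{\beta+1}$ nodes in total, and hence at most that many leaves. Since $|V(C)| > \BoundProtrusionSize$ and every internal node carries one vertex, pigeonhole gives a leaf $n$ with $|Y_n| > f_{\ref{thm:caterpillar_case}}(\BoundProtrusionCaterpillarNeighborhood)$. This requires a little slack to absorb the internal-node vertices, which the product form of $\BoundProtrusionSize$ together with $f_{\ref{thm:caterpillar_case}}\geq 1$ provides. Now $G[Y_n]$ is connected by niceness and has pathwidth at most $1$, so it is a caterpillar and an induced subgraph of $G-M$. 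Its neighborhood has size at most $\BoundProtrusionCaterpillarNeighborhood$ by the bound above, so monotonicity gives $|Y_n| > f_{\ref{thm:caterpillar_case}}(|N_G(G[Y_n])|)$. \Cref{thm:caterpillar_case} then returns an equivalent instance on a proper minor of $G$. Since $G'$ is a minor of $G$ and $M$ is untouched, $G'-M$ is a minor of $G-M$, and membership in the minor-closed class $\GG^\beta_{\pw\le\eta}$ is preserved.

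\textbf{Running time and main obstacle.} Every step is polynomial. The main care point is the neighborhood counting in the first branch, namely that including $n$ in $M'$ still keeps the boundary within $\BoundProtrusionCaterpillarNeighborhood$, which relies on $n$ being non-leaf and therefore having depth below $\beta$.
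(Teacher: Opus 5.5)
Your proposal is correct and matches the paper's proof essentially step for step. It uses the same two branches: \cref{thm:reduce_ccs} with $M' = (V(G)\setminus V_n^{\mathcal{F}})\cup\{n\}$ when some node has too many children, and otherwise pigeonhole over at most $(\BoundChildrenInForest)^{\beta+1}$ nodes to find a large caterpillar leaf bag for \cref{thm:caterpillar_case}. Your write-up is slightly more careful than the paper's: you bound the whole boundary $N_G(G-M')$ via $|\anc_F(n)|\le\beta$ for the non-leaf $n$, and you check that $\pw(G-M')\le\eta$ and that the output is again a valid instance.
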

\begin{proof}
    At the start of the algorithm a nice $\pwEtaBoundClass[1]$-elimination forest $\FF = (F,\{Y_n\}_{n \in V(F)})$ of $C$ is computed, and this is possible in polynomial time using \cref{thm:compute_elimination_forests}.
    Then, the algorithm deletes all nodes $n \in V(F)$ where $Y_n = \emptyset$, note that such a node must be a leaf node by the definition of elimination forests.
    We note that $\FF$ remains a nice $\pwEtaBoundClass[1]$-elimination tree after these deletions.
    This is the case because, even if the parent $p$ of such a node $n$ now becomes a leaf, we have $Y_p = \{p\}$,
    and then $V_p^\FF$ still induces a graph of pathwidth at most one.
    Next, the algorithm distinguishes two cases.

    In the first case, there is a node $n$ of $F$ that has more than $\BoundChildrenInForest$ children in $F$.
    Then, the algorithm sets $M'$ to be the set of all vertices of $G$ apart from those vertices that are in bags of descendants of $n$ that are not $n$ itself.
    Note that each child of $n$ in $F$ corresponds to a connected component of $G - M'$ since $Y_{n'} \neq \emptyset$ for all $n' \in V(F)$.
    Therefore, $G - M'$ has more than $\BoundChildrenInForest$ connected components.
    At the same time, each connected component $W$ of $G - M'$
    has a neighborhood that is a subset of $N_G(C) \cup \anc_F(n)$.
    Since the depth of $\FF$ is at most~$\beta$,
    we have $|N_G(W)| \leq 2(\eta+1) + \beta =\BoundProtrusionCaterpillarNeighborhood$.
    Recall that $\BoundChildrenInForest = f_{\ref{thm:reduce_ccs}}( \BoundProtrusionCaterpillarNeighborhood)$.
    Then, by \cref{thm:reduce_ccs}, the output $(G',k,M)$ of \nameref{rule:reduce_ccs}$(G,k,M')$ is an instance that is equivalent to $(G,k,M)$ and $G'$ is a proper subgraph of $G$.
    Therefore, in this case the algorithm outputs an equivalent instance where $G'$ is a proper minor of $G$.

    Otherwise, all nodes of $F$ have at most $\BoundChildrenInForest$ children.
    This implies $|V(F)| \leq (\BoundChildrenInForest)^{\beta+1}$, because $F$ is a rooted tree of depth at most $\beta$.
    By assumption, $|V(C)| > \BoundProtrusionSize  = (\BoundChildrenInForest)^{\beta+1} \cdot f_{\ref{thm:caterpillar_case}}( \BoundProtrusionCaterpillarNeighborhood )$,
    so, by the pigeonhole principle, there exists some leaf node $n \in V(F)$ such that $|Y_n| > f_{\ref{thm:caterpillar_case}}( \BoundProtrusionCaterpillarNeighborhood )$.
    Set $H = G[Y_n]$.
    Then, we clearly have $|N_G(H)| \leq 2(\eta + 1) + \beta = \BoundProtrusionCaterpillarNeighborhood$ and $|V(H)| > f_{\ref{thm:caterpillar_case}}( \BoundProtrusionCaterpillarNeighborhood )$.
    It then follows from \cref{thm:caterpillar_case} that \nameref{rule:degree_reduction_caterpillar} applied on input $((G,k,M),H)$ outputs an equivalent instance $(G',k,M)$ where $G'$ is a proper minor of~$G$.
\end{proof}

\subsection{The Uniform Kernel}
Now, we have all ingredients at hand to prove our main result.
\begin{algorithm}[t]
    \caption{\algofont{UniformKernel}}
    \label{kernel}
    \Input{Instance $(G,k,M)$ of \pwEtaDelElimGeneral{}; $\eta \geq 1$}
    \If{$|M| \leq k$}{
        \Return{the yes-instance $((M,\emptyset),0,M)$}
    }
    \If{some $m \in M$ has $|N_G(m) \setminus M| > \degreeBound$}{
        \Return{\nameref{rule:reduce_modulator_degree}}$(G,k,M)$ \label{kernel:reduce_mod_degree}\;
    }
    $(G,k,M_2) \leftarrow$ \protrusionAlgo{}$(G,k,M)$ \label{kernel:compute_protrusions} \;
    $\CC \leftarrow$ connected components of $G - M_2$ \label{kernel:set_ccs} \;
    \If{$|\CC| > f_{\ref{thm:reduce_ccs}}( \BoundProtrusionModulator ) $}{
        $(G',k,M_2) \leftarrow$ {\nameref{rule:reduce_ccs}}$(G,k,M_2)$ \label{kernel:reduce_ccs}\;
        \Return{$(G',k,M)$}
    }
    \If{there is some $C \in \CC$ with $|V(C)| > \BoundProtrusionSize$}{
        \Return{\nameref{rule:reduce_protrusions}}$((G,k,M),C)$ \label{kernel:reduce_protrusion_size} \;
    }
    \Return{$(G,k,M)$}
\end{algorithm}

\mThmGeneral*{}
\begin{proof}
    Recall that we fixed $\eta \geq 1, \beta \geq 0$ earlier.
    Since the theorem also states a result for $\eta = 0$, let us remark that \pwEtaDelElimGeneral{} for $\eta = 0$ is exactly the same as \textsc{Vertex Cover} parameterized by the size of a given vertex cover.
    Then, the result follows from \cref{thm:uniform_kernel_vc_parameterization}.

    We prove that \nameref{kernel} yields the desired kernel.
    More precisely, we show that if \nameref{kernel} is applied on an instance $(G,k,M)$ that does not yet have bounded size, then it outputs a strictly smaller instance $(G',k',M)$ where $G'$ is a proper minor of $G$.
    Then, the kernel follows by just applying \nameref{kernel} until the algorithm no longer reduces the size of the instance.

    If $|M| \leq k$, then clearly $(G,k,M)$ is a yes-instance, and therefore we can output the trivial yes-instance $((M,\emptyset),0,M)$ (this is one of the cases where we output the empty graph if $M = \emptyset$).
    Note that this is the only case in which our algorithm may output an instance with a different solution size compared to the input.
    If there is some $m \in M$ with $|N_G(m) \setminus M| > \degreeBound$ the algorithm applies \nameref{rule:reduce_modulator_degree}.
    By \cref{thm:reduce_modulator_degree} the output of this algorithm is an equivalent instance $(G',k,M)$ where $G'$ is a proper minor of $G$.

    If all vertices in $M$ have bounded degree, the algorithm proceeds to \cref{kernel:compute_protrusions}, which just applies the algorithm \protrusionAlgo{} of \cref{thm:create_protrusions}.
    The output of \protrusionAlgo{} is an equivalent instance $(G,k,M_2)$ where $|M_2| \leq \BoundProtrusionModulator$ and $M \subseteq M_2$.
    Then, \nameref{kernel} sets $\CC$ to be the connected components of $G - M_2$.

    If $|\CC|$ is larger than $f_{\ref{thm:reduce_ccs}}( \BoundProtrusionModulator)$, it sets $(G',k,M_2)$ to the output of applying \nameref{rule:reduce_ccs} on input $(G,k,M_2)$.
    By \cref{thm:reduce_ccs}, if that is the case, then $G'$ is a proper subgraph of $G$, and the instance $(G',k,M_2)$ is equivalent to the instance $(G,k,M_2)$.
    Recall that $M \subseteq M_2$.
    Therefore, the kernel can output $(G',k,M)$, which is equivalent to $(G,k,M)$ but strictly smaller.

    If $\CC$ is also of bounded size, then there may still be some connected component $C \in \CC$ that is large, that is, that has size larger than $\BoundProtrusionSize$.
    In this case the algorithm outputs the result of applying \nameref{rule:reduce_protrusions} on instance $(G,k,M)$ and such a graph $C$.
    Since $C$ is an induced connected subgraph of $G - M$, and $C$ is a protrusion of $G$ by \cref{thm:create_protrusions}, the algorithm outputs an equivalent instance $(G',k,M)$ where $G'$ is a proper minor of $G$ by \cref{thm:reduce_protrusions}.

    Otherwise, we have that $|M_2|$ has size at most $\BoundProtrusionModulator$, $|\CC|$ has size at most $f_{\ref{thm:reduce_ccs}}(\BoundProtrusionModulator)$, and each $C \in \CC$ has at most $\BoundProtrusionSize$ vertices.
    Therefore, the graph $G$ has at most $\BoundProtrusionModulator + f_{\ref{thm:reduce_ccs}}(\BoundProtrusionModulator) \cdot \BoundProtrusionSize$ vertices.
    Recall that $\BoundProtrusionModulator = \Oh(|M|^{17} \cdot k^3) = \Oh(|M|^{20})$ and that $\BoundProtrusionSize = \Oh(1)$.
    Moreover, $f_{\ref{thm:reduce_ccs}}(x) \in O(x^3)$.
    Hence, we get that the kernel size is $\Oh(|M|^{60})$.
    Finally, note that in all cases, the instance we output does not change the modulator $M$, and $k'$ is only different from $k$ if the input instance is a trivial yes-instance.
\end{proof} 
\section{Consequences}
\label{sec:consequences}
\subparagraph*{Applications.}
As discussed in \cref{sec:intro}, \cref{main_thm:uniform_kernel_general} implies multiple results for standard graph classes.

\corollaryUniformKernelsNaturalClasses*{}
\begin{proof}
  First, let us consider \pwEtaDeletion{} parameterized by $k + |M|$, where $M$ is a modulator to $\GG_{\td \leq \gamma}$ given as part of the input.
  We first use an approximation algorithm by Gupta et al.~\cite{guptaLosingTreewidthSeparating2019} to compute a modulator $M'$ to $\GG_{\pw \leq \eta}$ in polynomial-time.
  Either $|M'| \in O(k)$, or the instance is a no-instance, and we can output a trivial no-instance.
  Now, if we do not output a no-instance, note that $\hat M = M' \cup M$ is a modulator to $\GG_{\pw \leq \eta}^{\gamma}$ with $|\hat M| \in O(k + |M|)$, and therefore we can apply our uniform kernel from \cref{main_thm:uniform_kernel_general} on this instance $(G,k,\hat M)$ to obtain the desired kernel.

  The result for \pwEtaDeletionDistToG[{\pwEtaBoundClass[1]}] follows from \cref{main_thm:uniform_kernel_general} by setting $\beta = 0$.

  The result for \pwEtaDeletionDistToG[\GG_{\td \leq \eta + 1}]{} follows from the fact that any graph with treedepth at most $\eta + 1$ has pathwidth at most $\eta$.
  Hence, $\GG_{\td \leq \eta + 1} \subseteq \GG_{\pw \leq \eta}^{\eta + 1}$, and the kernel exists by \cref{main_thm:uniform_kernel_general}.
  It is also essentially implied by (1).

  The result for \pwEtaDeletionDistToG[\GG^{\eta - 1}] follows from \cref{main_thm:uniform_kernel_general} by setting $\beta = \eta - 1$ and the fact that all graphs in $\GG^{\eta - 1}$ have pathwidth at most $\eta$.

  For the result for \pwEtaDeletionDistToG[{\GG_{\pw \leq \eta}^{\pl \leq \gamma}}]{}, note that a graph that does not contain a path on $\gamma + 1$ vertices as a subgraph has treedepth at most $\gamma$~\cite[Proposition 6.1]{nesetrilSparsityGraphsStructures2012}.
  In fact, by a result of Hatzel et al.~\cite[Theorem 3]{hatzelTightBoundTreedepth2024} a graph with pathwidth less than $\eta+1$ that contains no path on $2^a$ vertices has treedepth at most $10(\eta+1) \cdot a$.
  Therefore, the graph class $\GG_{\pw \leq \eta}^{\pl \leq \gamma}$ only consists of graphs with pathwidth less than $\eta+1$, and treedepth at most $\beta$ for a constant $\beta$.
  This yields $\GG_{\pw \leq \eta}^{\pl \leq \gamma} \subseteq \GG_{\pw \leq \eta}^{\beta}$, and the kernel exists by \cref{main_thm:uniform_kernel_general}.
\end{proof}

We also remark that result (2) of \cref{corollary:uniform_kernels_natural_classes} can also be extended to include the case $\eta = 0$ by using the kernel for \textsc{Vertex Cover} parameterized by the distance to a forest of Jansen and Bodlaender~\cite{JansenB2013}.

\subparagraph*{Graph-theoretical consequences.}
Recall the notions of $k$-apices and minor-minimal obstructions.

\defApicesObstructions*{}

As mentioned in the introduction, we can use our kernelization algorithm to obtain uniform bounds on their size (\cref{main_thm:obstruction_bound}, restated below).

\mainThmObstructionBound*{}
\begin{proof}
  Let $\eta, \beta \geq 0$ be arbitrary integers (we allow $\eta = 0$), and let $H$ be a minor-minimal obstruction to the class of $k$-apices of $\pwEtaBoundClass[\eta]$.
  Moreover, let $M \subseteq V(H)$ be a minimum-size set such that $\pwElimDist{H - M} \leq \beta$ and $H - M \in \pwEtaBoundClass[\eta]$.
  By \cref{main_thm:uniform_kernel_general}, when we input instance $(H,k,M)$ of \pwEtaDelElimGeneral{} into our kernelization algorithm, we obtain an equivalent instance $(H',k',M)$ as output.
  Moreover, $H'$ is a minor of $H$ with $\Oh(|M|^{60})$ vertices, and if $(H,k,M)$ is a no-instance, then $k = k'$.
  Since $H$ is not a $k$-apex of $\pwEtaBoundClass[\eta]$, instance $(H,k,M)$ is a no-instance, and therefore also instance $(H',k,M)$ is a no-instance, and $k = k'$.
  Hence, $H'$ is not a $k$-apex of $\pwEtaBoundClass[\eta]$.
  Since $H$ was minor-minimal, we must have $H' = H$, showing that the size of $H$ is bounded by $f(\eta,\beta) \cdot |M|^{60}$ for an appropriately chosen function $f$.
\end{proof}

By using an analogous proof and the kernel of \cref{thm:uniform_kernel_vc_parameterization} we obtain the following result.

\thmObstructionBoundVertexCover*{}

\cref{thm:obstruction_bound_vertex_cover} is significant because the same statement is \emph{not true} for treewidth, as the following result shows.
Note that this result was informally stated by Giannopoulou et al.~\cite{giannopoulouUniformKernelizationComplexity2017}
as it indeed follows quite easily from a construction they use in a kernelization lower bound.
For the sake of completeness, we provide a proof here.

\thmTreewidthObstructionsNonUniform*{}
\begin{proof}
  Let $H_0$ be a graph consisting of a clique $C$ on $\eta + 1 + k$ vertices ($\eta = 0$ is allowed).
  Moreover, for each $X \in \binom{V(C)}{\eta + 1}$, introduce a vertex $v_X$ with neighborhood $X$.

  First, we argue that $H_0$ is not a $k$-apex of $\GG_{\tw \leq \eta}$.
  Consider any set $S \subseteq V(H_0)$ of size~$k$.
  If $|S \cap V(C)| < k$, then $C - S$ contains a clique of size at least $\eta + 2$, and therefore $\tw(H_0 - S) > \eta$.
  If $|S \cap V(C)| = k$, then $S$ only contains vertices of $C$.
  Set $X = V(C) \setminus S$.
  Since $|V(C)| = k + \eta + 1$, we have $|X| = \eta + 1$.
  Then, the vertices $\{v_X\} \cup X$ form a clique of size $\eta + 2$ in $H_0 - S$, which again proves that $\tw(H_0 - S) > \eta$.

  We have established that $H_0$ is not a $k$-apex, however, it is not clear that $H_0$ is a minor-minimal obstruction.
  We process $H_0$ further to obtain the graph $H$, for which we can show this property.
  Concretely, let $E' \subseteq \{uv \in E(H_0) \mid u,v \in V(C)\}$ be a maximal set of edges between vertices of $C$ such that $H_0 - E'$ is still not a $k$-apex of $\GG_{\tw \leq \eta}$.
  Set $H = H_0 - E'$, clearly $H$ is not a $k$-apex of $\GG_{\tw \leq \eta}$.

  Let us now argue that $\vc(H) = k + \eta + 1$.
  Observe that $C$ is a vertex cover of $H$ of size $k + \eta + 1$, and hence $\vc(H) \leq k + \eta + 1$.
  Towards a contradiction, assume that there is a vertex cover $M$ of $H$ of size at most $k + \eta$.
  Then, $\tw(H- M) = 0$.
  So, we must have $|M| \geq k$, otherwise $M$ would certify that $H$ is a $k$-apex of $\GG_{\tw \leq \eta}$, a contradiction.
  Hence, for any $M' \subseteq M$ of size exactly $k$, we have that $H - M'$ has a vertex cover of size at most $\eta$.
  Therefore, $\tw(H - M') \leq \eta$, contradicting that $H$ is not a $k$-apex.

  Moreover, we have $|V(H)| = k + \eta + 1 + \binom{k + \eta + 1}{\eta + 1}$ by construction.
  Therefore, $|V(H)| = k + \eta + 1 + \binom{\vc(H)}{\eta + 1}$, as desired.

  It remains to show that $H$ is a minor-minimal obstruction to the class of $k$-apices of $\GG_{\tw \leq \eta}$.
  By construction graph $H$ is not a $k$-apex.
  We will show that any proper minor of $H$ is a $k$-apex to conclude the proof.
  For this purpose, consider a minor $H'$ of $H$ that is obtained by deleting an edge of $H$, deleting a vertex of $H$, or contracting an edge of $H$.
  By proving that $H'$ is a $k$-apex, we show that any proper minor of $H$ is a $k$-apex.
  We consider three cases depending on how $H'$ is obtained from $H$.
  \begin{description}
    \item[$H'$ is created from $H$ by deleting an edge $uv$:]
          If $u,v \in V(C)$, then $uv \notin E'$.
          By our choice of $E'$, the graph $H'  = H - uv$ is a $k$-apex.

          Therefore, let us consider that $v = v_X$ for some $X \in \binom{V(C)}{\eta + 1}$, and $u \in V(C)$.
          Set $S = V(C) \setminus X$ and observe that $|S| = k$.
          We now argue that $\tw(H' - S) \leq \eta$.

          For the tree decomposition, let $T$ be a star graph with center node $c$ and a leaf node $w_Y$ for each $Y \in \binom{V(C)}{\eta + 1}$.
          The graph $T$ will form the tree of the tree decomposition of $H' - S$.
          For the bags, we set $B_c = X$, and for each node $w_Y$ we set $B_{w_Y} = N_{H' - S}[v_Y]$.
          It is easy to see that $(T,\{B_n\}_{n \in V(T)})$ is a tree decomposition of $H' - S$.
          So, it remains to argue that it has width at most $\eta$.
          Clearly $|B_c| = \eta + 1$, which is sufficiently small.
          For a node $w_Y$ with $Y \neq X$, note that at least one vertex of $Y = N_H(v_Y)$ is part of $S$, and that $|Y| = \eta + 1$.
          Therefore, $|B_{w_Y}| \leq \eta + 1$.
          Finally, for node $w_X$ observe that $|N_{H' - S}(v_X)| \leq \eta$, because the edge $v_X u$ is not part of $H'$, and therefore $|B_{w_X}| \leq \eta + 1$.
          Thus, $H'$ is a $k$-apex.
    \item[$H'$ is created from $H$ by deleting a vertex $v$:] Note that, because $H$ has no isolated vertices, deleting $v$ also deletes an edge incident on $v$.
          Therefore, $H'$ is a proper minor of some graph $H''$ that is created by deleting an edge of $H$. Hence, by the previous case, $H''$ is a $k$-apex. Thus, also $H'$ is a $k$-apex.
    \item[$H'$ is created from $H$ by contracting an edge $uv$:]
          Since edge $uv$ is contracted, $H'$ contains some vertex $z$ that is not part of $H$ and $N_{H'}(z) = (N_H(v) \cup N_H(u)) \setminus \{u,v\}$.
          Set $C' = (V(C) \setminus \{v,u\}) \cup \{z\}$.

          Now, consider the case that $u,v \in V(C)$.
          Then, let $S$ be an arbitrary subset of $V(C')$ of size $k$.
          We will show that $\tw(H' - S) \leq \eta$ by providing a tree decomposition of $H' - S$.
          Let $T$ be a star graph with center vertex $c$ and a leaf $w_Y$ for each $Y \in \binom{V(C)}{\eta + 1}$.
          For the bags of the decomposition, set $B_c = C' \setminus S$, and for each $w_Y$ set $B_{w_Y} = N_{H' - S}[v_Y]$.
          It is clear that $(T,\{B_n\}_{n \in V(T)})$ is a tree decomposition of $H' - S$, so it remains to argue that each bag contains at most $\eta + 1$ vertices.
          Since $|C'| = k + \eta$, we have $|B_c| = \eta$.
          For a node $w_Y$, notice that $N_{H' - S}(v_Y) \subseteq C' \setminus S$, and therefore $|B_{w_{Y}}| \leq \eta + 1$.
          Hence, $H'$ is a $k$-apex.

          \quad Finally, consider that $v = v_X$, for some $X \in \binom{V(C)}{\eta + 1}$, and $u \in V(C)$.
          Let $X' \coloneq (X \setminus \{u\}) \cup \{z\}$, let $S = C' \setminus X'$, and observe that $|S| = k$ and $|X'| = \eta + 1$.
          We will again show that $\tw(H' - S) \leq \eta$ by providing a tree decomposition of $H' - S$.
          Let $T$ be a star graph with center vertex $c$ and a leaf $w_Y$ for each $Y \in \binom{V(C)}{\eta + 1} \setminus \{X\}$.
          We set $B_c = X'$, and for each $Y \in \binom{V(C)}{\eta + 1} \setminus \{X\}$ we set $B_{w_Y} = N_{H' - S}[v_Y]$.
          Again, it is clear that $(T,\{B_n\}_{n \in V(T)})$ is a tree decomposition of $H' - S$, so it remains to bound the width of the decomposition.
          We have $|B_c| = \eta + 1$.
          For each vertex $v_Y$ of $H'$, notice that $N_{H' - S}(v_Y) \neq C' \setminus S = X'$, because vertex $v_X$, the vertex with $N_H(v_X) = X$ is not part of $H'$.
          Therefore, we have $|B_{w_Y}| \leq \eta + 1$ for each node $w_Y$.
  \end{description}

  We have thus established that $H$ is a minor-minimal obstruction to the class of $k$-apices of $\GG_{\tw \leq \eta}$.
\end{proof}

Note that \cref{thm:treewidth_obstructions_non_uniform} illustrates that a result analogous to \cref{thm:obstruction_bound_vertex_cover} is \emph{impossible} for treewidth: the exponent in the size of obstructions in terms of their vertex cover size must always depend on $\eta$ in a non-uniform manner.
 
\newpage
\bibliography{bib}

\begin{thebibliography}{10}

\bibitem{abu-khzamCrownStructuresVertex2007}
Faisal~N. Abu{-}Khzam, Michael~R. Fellows, Michael~A. Langston, and W.~Henry Suters.
\newblock Crown structures for vertex cover kernelization.
\newblock {\em Theory Comput. Syst.}, 41(3):411--430, 2007.
\newblock \href {https://doi.org/10.1007/S00224-007-1328-0} {\path{doi:10.1007/S00224-007-1328-0}}.

\bibitem{arnborgMonadicSecondOrder1990}
Stefan Arnborg, Andrzej Proskurowski, and Detlef Seese.
\newblock Monadic second order logic, tree automata and forbidden minors.
\newblock In Egon B{\"{o}}rger, Hans~Kleine B{\"{u}}ning, Michael~M. Richter, and Wolfgang Sch{\"{o}}nfeld, editors, {\em Computer Science Logic, 4th Workshop, {CSL} '90, Heidelberg, Germany, October 1-5, 1990, Proceedings}, volume 533 of {\em Lecture Notes in Computer Science}, pages 1--16. Springer, 1990.
\newblock \href {https://doi.org/10.1007/3-540-54487-9_49} {\path{doi:10.1007/3-540-54487-9_49}}.

\bibitem{belmonteGrundyDistinguishesTreewidth2022}
R{\'{e}}my Belmonte, Eun~Jung Kim, Michael Lampis, Valia Mitsou, and Yota Otachi.
\newblock Grundy distinguishes treewidth from pathwidth.
\newblock {\em {SIAM} J. Discret. Math.}, 36(3):1761--1787, 2022.
\newblock \href {https://doi.org/10.1137/20M1385779} {\path{doi:10.1137/20M1385779}}.

\bibitem{BIENSTOCK1995481}
Daniel Bienstock and Michael~A. Langston.
\newblock Algorithmic implications of the graph minor theorem.
\newblock In {\em Network Models}, volume~7 of {\em Handbooks in Operations Research and Management Science}, pages 481--502. Elsevier, 1995.
\newblock \href {https://doi.org/10.1016/S0927-0507(05)80125-2} {\path{doi:10.1016/S0927-0507(05)80125-2}}.

\bibitem{bodlaenderLineartimeAlgorithmFinding1996}
Hans~L. Bodlaender.
\newblock A linear-time algorithm for finding tree-decompositions of small treewidth.
\newblock {\em {SIAM} J. Comput.}, 25(6):1305--1317, 1996.
\newblock \href {https://doi.org/10.1137/S0097539793251219} {\path{doi:10.1137/S0097539793251219}}.

\bibitem{DBLP:journals/jacm/BodlaenderFLPST16}
Hans~L. Bodlaender, Fedor~V. Fomin, Daniel Lokshtanov, Eelko Penninkx, Saket Saurabh, and Dimitrios~M. Thilikos.
\newblock {(Meta)} kernelization.
\newblock {\em J. {ACM}}, 63(5):44:1--44:69, 2016.
\newblock \href {https://doi.org/10.1145/2973749} {\path{doi:10.1145/2973749}}.

\bibitem{bodlaenderKernelBoundsStructural2012}
Hans~L. Bodlaender, Bart M.~P. Jansen, and Stefan Kratsch.
\newblock Kernel bounds for structural parameterizations of pathwidth.
\newblock In Fedor~V. Fomin and Petteri Kaski, editors, {\em Algorithm Theory - {SWAT} 2012 - 13th Scandinavian Symposium and Workshops, Helsinki, Finland, July 4-6, 2012. Proceedings}, volume 7357 of {\em Lecture Notes in Computer Science}, pages 352--363. Springer, 2012.
\newblock \href {https://doi.org/10.1007/978-3-642-31155-0_31} {\path{doi:10.1007/978-3-642-31155-0_31}}.

\bibitem{bodlaenderPreprocessingTreewidthCombinatorial2013}
Hans~L. Bodlaender, Bart M.~P. Jansen, and Stefan Kratsch.
\newblock Preprocessing for treewidth: {A} combinatorial analysis through kernelization.
\newblock {\em {SIAM} J. Discret. Math.}, 27(4):2108--2142, 2013.
\newblock \href {https://doi.org/10.1137/120903518} {\path{doi:10.1137/120903518}}.

\bibitem{bodlaenderPreprocessingRulesTriangulation2005}
Hans~L. Bodlaender, Arie M. C.~A. Koster, and Frank van~den Eijkhof.
\newblock Preprocessing rules for triangulation of probabilistic networks.
\newblock {\em Comput. Intell.}, 21(3):286--305, 2005.
\newblock \href {https://doi.org/10.1111/J.1467-8640.2005.00274.X} {\path{doi:10.1111/J.1467-8640.2005.00274.X}}.

\bibitem{bodlaenderPathwidthTreewidthCographs1993a}
Hans~L. Bodlaender and Rolf~H. M{\"{o}}hring.
\newblock The pathwidth and treewidth of cographs.
\newblock {\em {SIAM} J. Discret. Math.}, 6(2):181--188, 1993.
\newblock \href {https://doi.org/10.1137/0406014} {\path{doi:10.1137/0406014}}.

\bibitem{bougeretKernelizationDichotomiesHitting2025a}
Marin Bougeret, Eric Brandwein, and Ignasi Sau.
\newblock Kernelization dichotomies for hitting minors under structural parameterizations.
\newblock In Meena Mahajan, Florin Manea, Annabelle McIver, and Kim~Thang Nguyen, editors, {\em 43rd International Symposium on Theoretical Aspects of Computer Science, {STACS} 2026, Grenoble, France, March 9-13, 2026}, volume 364 of {\em LIPIcs}, pages 17:1--17:19. Schloss Dagstuhl - Leibniz-Zentrum f{\"{u}}r Informatik, 2026.
\newblock \href {https://doi.org/10.4230/LIPICS.STACS.2026.17} {\path{doi:10.4230/LIPICS.STACS.2026.17}}.

\bibitem{bryant1987finding}
R.~L. Bryant, Michael~R. Fellows, N.~G. Kinnersley, and Michael~A. Langston.
\newblock On finding obstruction sets and polynomial-time algorithms for gate matrix layout.
\newblock In {\em Proc. 25th Allerton Conf. on Communication, Control and Computing}, pages 397--398, 1987.

\bibitem{bulianGraphIsomorphismParameterized2016}
Jannis Bulian and Anuj Dawar.
\newblock Graph isomorphism parameterized by elimination distance to bounded degree.
\newblock {\em Algorithmica}, 75(2):363--382, 2016.
\newblock \href {https://doi.org/10.1007/S00453-015-0045-3} {\path{doi:10.1007/S00453-015-0045-3}}.

\bibitem{bulianFixedparameterTractableDistances2017}
Jannis Bulian and Anuj Dawar.
\newblock Fixed-parameter tractable distances to sparse graph classes.
\newblock {\em Algorithmica}, 79(1):139--158, 2017.
\newblock \href {https://doi.org/10.1007/S00453-016-0235-7} {\path{doi:10.1007/S00453-016-0235-7}}.

\bibitem{bussNondeterminism1993}
Jonathan~F. Buss and Judy Goldsmith.
\newblock Nondeterminism within {P}.
\newblock {\em {SIAM} J. Comput.}, 22(3):560--572, 1993.
\newblock \href {https://doi.org/10.1137/0222038} {\path{doi:10.1137/0222038}}.

\bibitem{DBLP:journals/jacm/ChekuriC16}
Chandra Chekuri and Julia Chuzhoy.
\newblock Polynomial bounds for the grid-minor theorem.
\newblock {\em J. {ACM}}, 63(5):40:1--40:65, 2016.
\newblock \href {https://doi.org/10.1145/2820609} {\path{doi:10.1145/2820609}}.

\bibitem{chenVertexCoverFurther2001}
Jianer Chen, Iyad~A. Kanj, and Weijia Jia.
\newblock Vertex cover: Further observations and further improvements.
\newblock {\em J. Algorithms}, 41(2):280--301, 2001.
\newblock \href {https://doi.org/10.1006/JAGM.2001.1186} {\path{doi:10.1006/JAGM.2001.1186}}.

\bibitem{DBLP:conf/wg/ChorFJ04}
Benny Chor, Mike Fellows, and David~W. Juedes.
\newblock Linear kernels in linear time, or how to save $k$ colors in {$O(n^2)$} steps.
\newblock In Juraj Hromkovic, Manfred Nagl, and Bernhard Westfechtel, editors, {\em Graph-Theoretic Concepts in Computer Science, 30th International Workshop, {WG} 2004, Bad Honnef, Germany, June 21-23, 2004, Revised Papers}, volume 3353 of {\em Lecture Notes in Computer Science}, pages 257--269. Springer, 2004.
\newblock \href {https://doi.org/10.1007/978-3-540-30559-0_22} {\path{doi:10.1007/978-3-540-30559-0_22}}.

\bibitem{clautiauxNewLowerUpper2003}
Fran{\c{c}}ois Clautiaux, Jacques Carlier, Aziz Moukrim, and St{\'{e}}phane N{\`{e}}gre.
\newblock New lower and upper bounds for graph treewidth.
\newblock In Klaus Jansen, Marian Margraf, Monaldo Mastrolilli, and Jos{\'{e}} D.~P. Rolim, editors, {\em Experimental and Efficient Algorithms, Second International Workshop, {WEA} 2003, Ascona, Switzerland, May 26-28, 2003, Proceedings}, volume 2647 of {\em Lecture Notes in Computer Science}, pages 70--80. Springer, 2003.
\newblock \href {https://doi.org/10.1007/3-540-44867-5_6} {\path{doi:10.1007/3-540-44867-5_6}}.

\bibitem{cyganParameterizedAlgorithms2015}
Marek Cygan, Fedor~V. Fomin, Lukasz Kowalik, Daniel Lokshtanov, D{\'{a}}niel Marx, Marcin Pilipczuk, Michal Pilipczuk, and Saket Saurabh.
\newblock {\em Parameterized Algorithms}.
\newblock Springer, 2015.
\newblock \href {https://doi.org/10.1007/978-3-319-21275-3} {\path{doi:10.1007/978-3-319-21275-3}}.

\bibitem{cyganHardnessLosingWidth2014}
Marek Cygan, Daniel Lokshtanov, Marcin Pilipczuk, Michal Pilipczuk, and Saket Saurabh.
\newblock On the hardness of losing width.
\newblock {\em Theory Comput. Syst.}, 54(1):73--82, 2014.
\newblock \href {https://doi.org/10.1007/S00224-013-9480-1} {\path{doi:10.1007/S00224-013-9480-1}}.

\bibitem{cyganImprovedFPTAlgorithm2012}
Marek Cygan, Marcin Pilipczuk, Michal Pilipczuk, and Jakub~Onufry Wojtaszczyk.
\newblock An improved {FPT} algorithm and a quadratic kernel for pathwidth one vertex deletion.
\newblock {\em Algorithmica}, 64(1):170--188, 2012.
\newblock \href {https://doi.org/10.1007/S00453-011-9578-2} {\path{doi:10.1007/S00453-011-9578-2}}.

\bibitem{DBLP:conf/soda/DellM12}
Holger Dell and D{\'{a}}niel Marx.
\newblock Kernelization of packing problems.
\newblock In Yuval Rabani, editor, {\em Proceedings of the Twenty-Third Annual {ACM-SIAM} Symposium on Discrete Algorithms, {SODA} 2012, Kyoto, Japan, January 17-19, 2012}, pages 68--81. {SIAM}, 2012.
\newblock \href {https://doi.org/10.1137/1.9781611973099.6} {\path{doi:10.1137/1.9781611973099.6}}.

\bibitem{DBLP:journals/corr/abs-1812-03155}
Holger Dell and D{\'{a}}niel Marx.
\newblock Kernelization of packing problems.
\newblock {\em CoRR}, abs/1812.03155, 2018.
\newblock \href {https://arxiv.org/abs/1812.03155} {\path{arXiv:1812.03155}}.

\bibitem{DellM2014}
Holger Dell and Dieter van Melkebeek.
\newblock Satisfiability allows no nontrivial sparsification unless the polynomial-time hierarchy collapses.
\newblock {\em J. {ACM}}, 61(4):23:1--23:27, 2014.
\newblock \href {https://doi.org/10.1145/2629620} {\path{doi:10.1145/2629620}}.

\bibitem{donkersPreprocessingOuterplanarVertex2022}
Huib Donkers, Bart M.~P. Jansen, and Michal Wlodarczyk.
\newblock Preprocessing for outerplanar vertex deletion: An elementary kernel of quartic size.
\newblock {\em Algorithmica}, 84(11):3407--3458, 2022.
\newblock \href {https://doi.org/10.1007/S00453-022-00984-2} {\path{doi:10.1007/S00453-022-00984-2}}.

\bibitem{fellowsWhatKnownVertex2018}
Michael~R. Fellows, Lars Jaffke, Aliz~Izabella Kir{\'{a}}ly, Frances~A. Rosamond, and Mathias Weller.
\newblock What is known about vertex cover kernelization?
\newblock In Hans{-}Joachim B{\"{o}}ckenhauer, Dennis Komm, and Walter Unger, editors, {\em Adventures Between Lower Bounds and Higher Altitudes - Essays Dedicated to Juraj Hromkovi{\v{c}} on the Occasion of His 60th Birthday}, volume 11011 of {\em Lecture Notes in Computer Science}, pages 330--356. Springer, 2018.
\newblock \href {https://doi.org/10.1007/978-3-319-98355-4_19} {\path{doi:10.1007/978-3-319-98355-4_19}}.

\bibitem{DBLP:journals/algorithmica/FellowsKNRRSTW08}
Michael~R. Fellows, Christian Knauer, Naomi Nishimura, Prabhakar Ragde, Frances~A. Rosamond, Ulrike Stege, Dimitrios~M. Thilikos, and Sue Whitesides.
\newblock Faster fixed-parameter tractable algorithms for matching and packing problems.
\newblock {\em Algorithmica}, 52(2):167--176, 2008.
\newblock \href {https://doi.org/10.1007/S00453-007-9146-Y} {\path{doi:10.1007/S00453-007-9146-Y}}.

\bibitem{DBLP:journals/jcss/FominJP14}
Fedor~V. Fomin, Bart M.~P. Jansen, and Michal Pilipczuk.
\newblock Preprocessing subgraph and minor problems: When does a small vertex cover help?
\newblock {\em J. Comput. Syst. Sci.}, 80(2):468--495, 2014.
\newblock \href {https://doi.org/10.1016/J.JCSS.2013.09.004} {\path{doi:10.1016/J.JCSS.2013.09.004}}.

\bibitem{fominHittingForbiddenMinors2016}
Fedor~V. Fomin, Daniel Lokshtanov, Neeldhara Misra, Geevarghese Philip, and Saket Saurabh.
\newblock Hitting forbidden minors: Approximation and kernelization.
\newblock {\em {SIAM} J. Discret. Math.}, 30(1):383--410, 2016.
\newblock \href {https://doi.org/10.1137/140997889} {\path{doi:10.1137/140997889}}.

\bibitem{fominPlanarFdeletionApproximation2012}
Fedor~V. Fomin, Daniel Lokshtanov, Neeldhara Misra, and Saket Saurabh.
\newblock Planar {F}-deletion: Approximation, kernelization and optimal {FPT} algorithms.
\newblock In {\em 53rd Annual {IEEE} Symposium on Foundations of Computer Science, {FOCS} 2012, New Brunswick, NJ, USA, October 20-23, 2012}, pages 470--479. {IEEE} Computer Society, 2012.
\newblock \href {https://doi.org/10.1109/FOCS.2012.62} {\path{doi:10.1109/FOCS.2012.62}}.

\bibitem{DBLP:journals/siamcomp/FominLST20}
Fedor~V. Fomin, Daniel Lokshtanov, Saket Saurabh, and Dimitrios~M. Thilikos.
\newblock Bidimensionality and kernels.
\newblock {\em {SIAM} J. Comput.}, 49(6):1397--1422, 2020.
\newblock \href {https://doi.org/10.1137/16M1080264} {\path{doi:10.1137/16M1080264}}.

\bibitem{giannopoulouUniformKernelizationComplexity2017}
Archontia~C. Giannopoulou, Bart M.~P. Jansen, Daniel Lokshtanov, and Saket Saurabh.
\newblock Uniform kernelization complexity of hitting forbidden minors.
\newblock {\em {ACM} Trans. Algorithms}, 13(3):35:1--35:35, 2017.
\newblock \href {https://doi.org/10.1145/3029051} {\path{doi:10.1145/3029051}}.

\bibitem{guptaLosingTreewidthSeparating2019}
Anupam Gupta, Euiwoong Lee, Jason Li, Pasin Manurangsi, and Michal Wlodarczyk.
\newblock Losing treewidth by separating subsets.
\newblock In Timothy~M. Chan, editor, {\em Proceedings of the Thirtieth Annual {ACM-SIAM} Symposium on Discrete Algorithms, {SODA} 2019, San Diego, California, USA, January 6-9, 2019}, pages 1731--1749. {SIAM}, 2019.
\newblock \href {https://doi.org/10.1137/1.9781611975482.104} {\path{doi:10.1137/1.9781611975482.104}}.

\bibitem{hatzelTightBoundTreedepth2024}
Meike Hatzel, Gwena{\"{e}}l Joret, Piotr Micek, Marcin Pilipczuk, Torsten Ueckerdt, and Bartosz Walczak.
\newblock Tight bound on treedepth in terms of pathwidth and longest path.
\newblock {\em Comb.}, 44(2):417--427, 2024.
\newblock \href {https://doi.org/10.1007/S00493-023-00077-W} {\path{doi:10.1007/S00493-023-00077-W}}.

\bibitem{DBLP:journals/siamdm/HeggernesHLP13}
Pinar Heggernes, Pim van~'t Hof, Daniel Lokshtanov, and Christophe Paul.
\newblock Obtaining a bipartite graph by contracting few edges.
\newblock {\em {SIAM} J. Discret. Math.}, 27(4):2143--2156, 2013.
\newblock \href {https://doi.org/10.1137/130907392} {\path{doi:10.1137/130907392}}.

\bibitem{hols2022elimination}
Eva{-}Maria~C. Hols, Stefan Kratsch, and Astrid Pieterse.
\newblock Elimination distances, blocking sets, and kernels for vertex cover.
\newblock {\em {SIAM} J. Discret. Math.}, 36(3):1955--1990, 2022.
\newblock \href {https://doi.org/10.1137/20M1335285} {\path{doi:10.1137/20M1335285}}.

\bibitem{hornThreeResultsTrees1972}
W.~A. Horn.
\newblock Three results for trees, using mathematical induction.
\newblock {\em Journal of Research of the National Bureau of Standards, Section B: Mathematical Sciences}, 76B(1-2):39--43, 1972.
\newblock \href {https://doi.org/10.6028/JRES.076B.002} {\path{doi:10.6028/JRES.076B.002}}.

\bibitem{JansenB2013}
Bart M.~P. Jansen and Hans~L. Bodlaender.
\newblock Vertex cover kernelization revisited - upper and lower bounds for a refined parameter.
\newblock {\em Theory Comput. Syst.}, 53(2):263--299, 2013.
\newblock \href {https://doi.org/10.1007/S00224-012-9393-4} {\path{doi:10.1007/S00224-012-9393-4}}.

\bibitem{jansen2020polynomial}
Bart M.~P. Jansen and Astrid Pieterse.
\newblock Polynomial kernels for hitting forbidden minors under structural parameterizations.
\newblock {\em Theor. Comput. Sci.}, 841:124--166, 2020.
\newblock \href {https://doi.org/10.1016/J.TCS.2020.07.009} {\path{doi:10.1016/J.TCS.2020.07.009}}.

\bibitem{jansenLossyPlanarizationConstantfactor2025}
Bart M.~P. Jansen and Michal Wlodarczyk.
\newblock Lossy planarization: {A} constant-factor approximate kernelization for planar vertex deletion.
\newblock {\em {SIAM} J. Comput.}, 54(1):1--91, 2025.
\newblock \href {https://doi.org/10.1137/22M152058X} {\path{doi:10.1137/22M152058X}}.

\bibitem{kimLinearKernelsSingleexponential2016}
Eun~Jung Kim, Alexander Langer, Christophe Paul, Felix Reidl, Peter Rossmanith, Ignasi Sau, and Somnath Sikdar.
\newblock Linear kernels and single-exponential algorithms via protrusion decompositions.
\newblock {\em {ACM} Trans. Algorithms}, 12(2):21:1--21:41, 2016.
\newblock \href {https://doi.org/10.1145/2797140} {\path{doi:10.1145/2797140}}.

\bibitem{DBLP:journals/dam/KinnersleyL94}
Nancy~G. Kinnersley and Michael~A. Langston.
\newblock Obstruction set isolation for the gate matrix layout problem.
\newblock {\em Discret. Appl. Math.}, 54(2-3):169--213, 1994.
\newblock \href {https://doi.org/10.1016/0166-218X(94)90021-3} {\path{doi:10.1016/0166-218X(94)90021-3}}.

\bibitem{DBLP:conf/focs/KorhonenPS24}
Tuukka Korhonen, Michal Pilipczuk, and Giannos Stamoulis.
\newblock Minor containment and disjoint paths in almost-linear time.
\newblock In {\em 65th {IEEE} Annual Symposium on Foundations of Computer Science, {FOCS} 2024, Chicago, IL, USA, October 27-30, 2024}, pages 53--61. {IEEE}, 2024.
\newblock \href {https://doi.org/10.1109/FOCS61266.2024.00014} {\path{doi:10.1109/FOCS61266.2024.00014}}.

\bibitem{DBLP:journals/jacm/KratschW20}
Stefan Kratsch and Magnus Wahlstr{\"{o}}m.
\newblock Representative sets and irrelevant vertices: New tools for kernelization.
\newblock {\em J. {ACM}}, 67(3):16:1--16:50, 2020.
\newblock \href {https://doi.org/10.1145/3390887} {\path{doi:10.1145/3390887}}.

\bibitem{kumar2lkKernelLComponent2016}
Mithilesh Kumar and Daniel Lokshtanov.
\newblock A $2\ell k$ kernel for $\ell$-component order connectivity.
\newblock In Jiong Guo and Danny Hermelin, editors, {\em 11th International Symposium on Parameterized and Exact Computation, {IPEC} 2016, Aarhus, Denmark, August 24-26, 2016}, volume~63 of {\em LIPIcs}, pages 20:1--20:14. Schloss Dagstuhl - Leibniz-Zentrum f{\"{u}}r Informatik, 2016.
\newblock \href {https://doi.org/10.4230/LIPICS.IPEC.2016.20} {\path{doi:10.4230/LIPICS.IPEC.2016.20}}.

\bibitem{lampisKernelOrder22011}
Michael Lampis.
\newblock A kernel of order $2 k-c \log k$ for vertex cover.
\newblock {\em Inf. Process. Lett.}, 111(23-24):1089--1091, 2011.
\newblock \href {https://doi.org/10.1016/J.IPL.2011.09.003} {\path{doi:10.1016/J.IPL.2011.09.003}}.

\bibitem{DBLP:conf/isaac/LochetS24}
William Lochet and Roohani Sharma.
\newblock Uniform polynomial kernel for deletion to {$K_{2, p}$} minor-free graphs.
\newblock In Juli{\'{a}}n Mestre and Anthony Wirth, editors, {\em 35th International Symposium on Algorithms and Computation, {ISAAC} 2024, Sydney, Australia, December 8-11, 2024}, volume 322 of {\em LIPIcs}, pages 46:1--46:14. Schloss Dagstuhl - Leibniz-Zentrum f{\"{u}}r Informatik, 2024.
\newblock \href {https://doi.org/10.4230/LIPICS.ISAAC.2024.46} {\path{doi:10.4230/LIPICS.ISAAC.2024.46}}.

\bibitem{nesetrilSparsityGraphsStructures2012}
Jaroslav Nesetril and Patrice~Ossona de~Mendez.
\newblock {\em Sparsity - Graphs, Structures, and Algorithms}, volume~28 of {\em Algorithms and {Combinatorics}}.
\newblock Springer, 2012.
\newblock \href {https://doi.org/10.1007/978-3-642-27875-4} {\path{doi:10.1007/978-3-642-27875-4}}.

\bibitem{philipQuarticKernelPathwidthone2010b}
Geevarghese Philip, Venkatesh Raman, and Yngve Villanger.
\newblock A quartic kernel for pathwidth-one vertex deletion.
\newblock In Dimitrios~M. Thilikos, editor, {\em Graph Theoretic Concepts in Computer Science - 36th International Workshop, {WG} 2010, Zar{\'{o}}s, Crete, Greece, June 28-30, 2010, Revised Papers}, volume 6410 of {\em Lecture Notes in Computer Science}, pages 196--207. Springer, 2010.
\newblock \href {https://doi.org/10.1007/978-3-642-16926-7_19} {\path{doi:10.1007/978-3-642-16926-7_19}}.

\bibitem{reidlFasterParameterizedAlgorithm2014}
Felix Reidl, Peter Rossmanith, Fernando~S{\'{a}}nchez Villaamil, and Somnath Sikdar.
\newblock A faster parameterized algorithm for treedepth.
\newblock In Javier Esparza, Pierre Fraigniaud, Thore Husfeldt, and Elias Koutsoupias, editors, {\em Automata, Languages, and Programming - 41st International Colloquium, {ICALP} 2014, Copenhagen, Denmark, July 8-11, 2014, Proceedings, Part {I}}, volume 8572 of {\em Lecture Notes in Computer Science}, pages 931--942. Springer, 2014.
\newblock \href {https://doi.org/10.1007/978-3-662-43948-7_77} {\path{doi:10.1007/978-3-662-43948-7_77}}.

\bibitem{DBLP:journals/jct/RobertsonS83}
Neil Robertson and Paul~D. Seymour.
\newblock Graph minors. {I. Excluding} a forest.
\newblock {\em J. Comb. Theory {B}}, 35(1):39--61, 1983.
\newblock \href {https://doi.org/10.1016/0095-8956(83)90079-5} {\path{doi:10.1016/0095-8956(83)90079-5}}.

\bibitem{DBLP:journals/jct/RobertsonS86}
Neil Robertson and Paul~D. Seymour.
\newblock Graph minors. {V. Excluding} a planar graph.
\newblock {\em J. Comb. Theory {B}}, 41(1):92--114, 1986.
\newblock \href {https://doi.org/10.1016/0095-8956(86)90030-4} {\path{doi:10.1016/0095-8956(86)90030-4}}.

\bibitem{DBLP:journals/jct/RobertsonS95b}
Neil Robertson and Paul~D. Seymour.
\newblock Graph minors. {XIII. The} disjoint paths problem.
\newblock {\em J. Comb. Theory {B}}, 63(1):65--110, 1995.
\newblock \href {https://doi.org/10.1006/JCTB.1995.1006} {\path{doi:10.1006/JCTB.1995.1006}}.

\bibitem{DBLP:journals/jct/RobertsonS04}
Neil Robertson and Paul~D. Seymour.
\newblock Graph minors. {XX. Wagner's} conjecture.
\newblock {\em J. Comb. Theory {B}}, 92(2):325--357, 2004.
\newblock \href {https://doi.org/10.1016/J.JCTB.2004.08.001} {\path{doi:10.1016/J.JCTB.2004.08.001}}.

\bibitem{DBLP:conf/icalp/SauST20}
Ignasi Sau, Giannos Stamoulis, and Dimitrios~M. Thilikos.
\newblock An {FPT}-algorithm for recognizing $k$-apices of minor-closed graph classes.
\newblock In Artur Czumaj, Anuj Dawar, and Emanuela Merelli, editors, {\em 47th International Colloquium on Automata, Languages, and Programming, {ICALP} 2020, Saarbr{\"{u}}cken, Germany (Virtual Conference), July 8-11, 2020}, volume 168 of {\em LIPIcs}, pages 95:1--95:20. Schloss Dagstuhl - Leibniz-Zentrum f{\"{u}}r Informatik, 2020.
\newblock \href {https://doi.org/10.4230/LIPICS.ICALP.2020.95} {\path{doi:10.4230/LIPICS.ICALP.2020.95}}.

\bibitem{DBLP:journals/jctb/SauST23}
Ignasi Sau, Giannos Stamoulis, and Dimitrios~M. Thilikos.
\newblock $k$-apices of minor-closed graph classes. {I. Bounding} the obstructions.
\newblock {\em J. Comb. Theory {B}}, 161:180--227, 2023.
\newblock \href {https://doi.org/10.1016/J.JCTB.2023.02.012} {\path{doi:10.1016/J.JCTB.2023.02.012}}.

\bibitem{soleimanfallahKernelOrder2kc2011}
Arezou Soleimanfallah and Anders Yeo.
\newblock A kernel of order $2k-c$ for vertex cover.
\newblock {\em Discret. Math.}, 311(10-11):892--895, 2011.
\newblock \href {https://doi.org/10.1016/J.DISC.2011.02.014} {\path{doi:10.1016/J.DISC.2011.02.014}}.

\bibitem{DBLP:journals/talg/Thomasse10}
St{\'{e}}phan Thomass{\'{e}}.
\newblock A $4k^2$ kernel for feedback vertex set.
\newblock {\em {ACM} Trans. Algorithms}, 6(2):32:1--32:8, 2010.
\newblock \href {https://doi.org/10.1145/1721837.1721848} {\path{doi:10.1145/1721837.1721848}}.

\bibitem{tsurSmallerKernelsTwo2024}
Dekel Tsur.
\newblock Smaller kernels for two vertex deletion problems.
\newblock {\em Inf. Process. Lett.}, 186:106493, 2024.
\newblock \href {https://doi.org/10.1016/J.IPL.2024.106493} {\path{doi:10.1016/J.IPL.2024.106493}}.

\bibitem{xiaoLinearKernelsSeparating2017}
Mingyu Xiao.
\newblock Linear kernels for separating a graph into components of bounded size.
\newblock {\em J. Comput. Syst. Sci.}, 88:260--270, 2017.
\newblock \href {https://doi.org/10.1016/J.JCSS.2017.04.004} {\path{doi:10.1016/J.JCSS.2017.04.004}}.

\end{thebibliography}

\appendix
\section{List of Constants and Functions}
\label{sec:appendix_constants}
This appendix provides a consolidated list of constants and functions used in the analysis and design of our reduction rules.
These values are typically dependent on the constants $\eta$ (pathwidth bound) and $\beta$ (elimination distance bound), and the size of the modulator $|M|$ or other derived quantities.

\subsection{General Functions}

\begin{itemize}
    \item $f_{\ref{thm:find_stable_graph}}(x)$:  A function used in \cref{thm:find_stable_graph} to determine the minimum required number of 
    vertex-disjoint subgraphs to guarantee that one of them is stable.  It is also used in \cref{def:important_bounds} to define other constants. It is defined as:
    \[ f_{\ref{thm:find_stable_graph}}(x) = 2x \cdot (\eta + 1) + 1 = \Oh(x). \]

    \item $f_{\ref{thm:reduce_ccs}}(x)$: A function used in \cref{thm:reduce_ccs} to bound the number of connected components. It is defined as:
    \[ f_{\ref{thm:reduce_ccs}}(x) = x^2 (2x (\eta+2) +1) = \Oh(x^3). \]

    \item $f_{\ref{thm:tree_lemma_to_find_siblings}}(c,x)$: A function used in \cref{thm:tree_lemma_to_find_siblings} to determine the minimum required size of a vertex set $X$ at a certain depth in a rooted tree
    to apply the lemma. It is defined as:
    \[ f_{\ref{thm:tree_lemma_to_find_siblings}}(c,x) = (\beta+1) \cdot c^{\beta+1} \cdot x = \Oh(c^{\beta+1} \cdot x). \]

    \item $f_{\ref{thm:caterpillar_case}}(x)$: A function used in \cref{thm:caterpillar_case} to bound the overall size of a caterpillar ($|V(C)|$). It is defined as:
    \[ f_{\ref{thm:caterpillar_case}}(x) = (8(\eta+1)(x+5))^8 = \Oh(x^8). \]

    \item $f_{\ref{lem:delete_edge_to_caterpillar}}(x)$: A function used in \cref{lem:delete_edge_to_caterpillar} to bound the number of boundary vertices in a caterpillar. It is defined as:
    \[ f_{\ref{lem:delete_edge_to_caterpillar}}(x) = (8(\eta+1))^3 (x + 5)^8 = \Oh(x^8). \]
\end{itemize}
\subsection{Bounds for the Modulator Degree Reduction}
\label{sub:modulator_degree_bounds}

These bounds are defined in \cref{def:important_bounds_degree_reduction} and are used within the \nameref{rule:reduce_modulator_degree} reduction rule.

\begin{itemize}
    \item $\polishingSetSize[x,y]$: The maximum size of a polishing set $Y$ is $\polishingSetSize$.
    \[ \polishingSetSize[x,y] = x^2 \cdot (y + \eta + 2) \cdot (\eta + 1) = \Oh(x^2 \cdot y). \]
    \item $\degreeBoundCaterpillarCase[x]$: the threshold on the number of neighbors a vertex $m$ has in a single leaf bag that triggers the caterpillar reduction rule (\cref{rule:reduce_modulator_degree}). Note that $\degreeBoundCaterpillarCase[x] = f_{\ref{thm:caterpillar_case}}(x + 2(\eta + 1) + \beta) + 1$.
    \[ \degreeBoundCaterpillarCase[x] = \BoundCaterpillarOverall[x] + 1 = \Oh(x^{8}). \]
    \item $\degreeBoundCC[x]$: $\degreeBoundCC$ is a bound on the degree of a modulator vertex within a connected component of $G - (M \cup Y)$.
    \[ \degreeBoundCC[x] = \degreeBoundTreedepthCase(x + 2(\eta + 1)) \cdot \degreeBoundCaterpillarCase[x] = \Oh(x^{10}). \]
    \item $\ccBound[x,y]$: $\ccBound$ is a bound on the number of connected components of $G - (M \cup Y)$.
    \[ \ccBound[x,y] = (\polishingSetSize[x,y] + x)^2 \cdot (2(\polishingSetSize[x,y] + x)(\eta+2) + 1) = \Oh(x^6 y^3). \] \item $\degreeBound[x,y]$: The overall bound on the number of neighbors of a modulator vertex in $G - M$ is $\degreeBound$.
    \[ \degreeBound[x,y] = \degreeBoundCC[x] \cdot \ccBound[x,y] + \polishingSetSize[x,y] = \Oh(x^{16} \cdot y^3 ). \] \end{itemize}

\subsection{Bounds for the Degree Reduction in Elimination Forests}
\label{sub:elimination_forest_bounds}

These bounds are defined in \cref{def:important_bounds} and are used within the \nameref{rule:degree_reduction_elimination_forest} reduction rule.

\begin{itemize}
    \item $\boundOnS$: An upper bound on the size of the intersection of a solution $S$ with a near-protrusion~$C$.
    \[ \boundOnS = 3(\eta+1)  = \Oh(1). \]
    \item $\numbChildren$: The minimum required number of children of a node in an elimination tree needed to guarantee the existence of a stable subgraph (\cref{thm:find_stable_graph}).
    \[ \numbChildren = f_{\ref{thm:find_stable_graph}}(\boundOnS+\beta) + \boundOnS + 1 = \Oh(1). \]
    \item $\markedNumbChildren$: A derived bound on the number of children to guarantee the existence of siblings with the same ancestor-type.
    \[ \markedNumbChildren = 2^\beta \cdot \numbChildren = \Oh(1). \]
    \item $\markedNumb$: The number of vertices marked during a step in the marking scheme of \nameref{rule:degree_reduction_elimination_forest}.
    \[ \markedNumb = f_{\ref{thm:tree_lemma_to_find_siblings}}(\markedNumbChildren,1) = (\beta + 1) \cdot \left( \markedNumbChildren \right)^{\beta +1} = \Oh(1). \]
    \item $\numNeighborsInDepth(x)$: A function to calculate the number of neighbors at a specific depth in an elimination forest to apply \Cref{thm:tree_lemma_to_find_siblings}.
    \[ \numNeighborsInDepth(x) = f_{\ref{thm:tree_lemma_to_find_siblings}} \big(\markedNumb, \tbinom{x}{\leq 2}\big) = (\beta + 1) \cdot \left( \markedNumb \right)^{\beta + 1} \cdot \tbinom{x}{\leq 2} = \Oh(x^2). \]
    \item $\degreeBoundTreedepthCase(x)$: The overall degree bound used to apply \Cref{thm:treedepth_case}.
    \[ \degreeBoundTreedepthCase(x) = (\beta + 1) \cdot \numNeighborsInDepth(x) = \Oh(x^2). \]
\end{itemize}
\subsection{Bounds for Protrusion Creation}
\label{sub:protrusion_creation_bounds}

This bound is used in \cref{thm:create_protrusions} to bound the new modulator size when creating protrusions.

\begin{itemize}
    \item $\BoundProtrusionModulator[x,y]$: The maximum size of the new modulator after the protrusion creation step is $\BoundProtrusionModulator$.
    \[ \BoundProtrusionModulator[x,y] = x + x \cdot \degreeBound[x,y] \cdot (\eta + 1) = \Oh(x^{17} \cdot y^3). \] \end{itemize}

\subsection{Bounds for Protrusion Size Reduction}
\label{sub:protrusion_size_reduction_bounds}

These bounds are defined in \cref{def:important_bounds_protrusion_replacement} and are used within the \nameref{rule:reduce_protrusions} reduction rule.

\begin{itemize}
    \item $\BoundProtrusionCaterpillarNeighborhood$: A bound on the size of the neighborhood of a connected component induced by a subtree of an elimination tree, used in protrusion size reduction.
    \[ \BoundProtrusionCaterpillarNeighborhood = 2(\eta + 1) + \beta = \Oh(1). \]
    \item $\BoundChildrenInForest$: A bound on the number of children a node in an elimination tree can have before a reduction rule is applied.
    \[ \BoundChildrenInForest = f_{\ref{thm:reduce_ccs}}( \BoundProtrusionCaterpillarNeighborhood ). = \Oh(1). \]
    \item $\BoundProtrusionSize$: The overall bound on the size of a protrusion guaranteed by \Cref{thm:reduce_protrusions}.
    \[ \BoundProtrusionSize = (\BoundChildrenInForest)^{\beta+1} \cdot f_{\ref{thm:caterpillar_case}}( \BoundProtrusionCaterpillarNeighborhood ) = \Oh(1). \]
\end{itemize} 
\end{document}